\newcommand{\isam}[1]{\todo[color=yellow]{}}
\newcommand{\isa}[1]{\todo[inline,color=yellow]{}}
\newcommand{\andym}[1]{\todo[color=purple]{}}
\newcommand{\andy}[1]{\todo[inline,color=purple]{}}
\newcommand{\fram}[1]{\todo[color=green]{}}
\newcommand{\fra}[1]{\todo[inline,color=green]{}}
\newcommand{\lucab}[1]{\todo[inline,color=red]{}}
\newcommand{\R}{\mathbb{R}}
\newtheorem{definition}{Definition}[section]
\newtheorem{lemma}[definition]{Lemma}
\newtheorem{claim}[definition]{Claim}
\newtheorem{fact}[definition]{Fact}
\newtheorem{theorem}[definition]{Theorem}
\newtheorem*{theorem*}{Theorem}
\newtheorem*{lemma*}{Lemma}
\renewcommand{\le}{\leqslant}
\renewcommand{\leq}{\leqslant}
\renewcommand{\ge}{\geqslant}
\renewcommand{\geq}{\geqslant}
\newcommand{\calG}{\mathcal{G}}
\newcommand{\bigO}{\mathcal{O}}
\newcommand{\polylog}{\mathrm{polylog}}
\newcommand{\SDG}{\cal{SDG}}
\newcommand{\SDGE}{\cal{SDGR}}
\newcommand{\PDGE}{\cal{PDGR}}
\newcommand{\PDG}{\cal{PDG}}
\newcommand{\Prc}[1]{\mathbf{Pr} \left( #1 \right)}
\newcommand{\Expcc}[1]{\mathbf{E} \left[ #1 \right]}
\title{Expansion and Flooding in Dynamic Random Networks with Node Churn\thanks{LT's work  on this project has received funding from the European Research Council
(ERC) under the European Union's Horizon 2020 research and innovation programme (grant agreement No.
834861). LB's work on this project was partially supported by the ERC Advanced Grant 788893 AMDROMA,
the EC H2020RIA project ``SoBigData++'' (871042),
the MIUR PRIN project ALGADIMAR.}}
\author{Luca Becchetti\\
    {\footnotesize{}Sapienza Università di Roma}\\
    {\footnotesize{} Rome, Italy}\\
    {\footnotesize{}\texttt{becchetti@dis.uniroma1.it}} 
    \and Andrea Clementi \\ 
    {\footnotesize{}Università di Roma Tor Vergata}\\
    {\footnotesize{} Rome, Italy}\\
    {\footnotesize{}\texttt{clementi@mat.uniroma2.it}} 
    \and Francesco Pasquale \\
    {\footnotesize{}Università di Roma Tor Vergata}\\
    {\footnotesize{} Rome, Italy}\\
    {\footnotesize{}\texttt{pasquale@mat.uniroma2.it}} 
    \and Luca Trevisan \\
    {\footnotesize{}Università Bocconi}\\
    {\footnotesize{} Milan, Italy}\\
    {\footnotesize{}\texttt{l.trevisan@unibocconi.it}} 
    \and Isabella Ziccardi\\
    {\footnotesize{}Università dell'Aquila}\\
    {\footnotesize{}L'Aquila, Italy}\\
    {\footnotesize{}\texttt{isabella.ziccardi@graduate.univaq.it}} 
}
\date{}
\begin{document}

\maketitle

\begin{abstract}
We study expansion and information diffusion in dynamic networks, that is in networks in
which nodes and edges are continuously created and destroyed. We consider information diffusion by {\em flooding}, the process by which, once a node is informed, it broadcasts its information to all its neighbors.

We study models in which the network is {\em sparse}, meaning that it has $\bigO(n)$ edges, where $n$ is the number of nodes, and in which edges are created randomly, rather than according to a carefully designed distributed algorithm.  In our models, when a node is ``born'', it connects to $d=\bigO(1)$ random other nodes. An edge remains alive as long as both its endpoints do.

If no further edge creation takes place, we show that, although the network will have $\Omega_d(n)$ isolated nodes, it is possible, with large constant probability, to inform a $1-exp(-\Omega(d))$ fraction of nodes in $\bigO(\log n)$ time. Furthermore, the graph exhibits, at any given time, a ``large-set expansion'' property.

We also consider  models with {\em edge regeneration}, in which if an edge $(v,w)$ chosen by $v$ at birth goes down because of the death of $w$, the edge is replaced by a fresh random edge $(v,z)$. In models with edge regeneration, we prove that the network is, with high probability, a vertex expander at any given time, and flooding takes $\bigO(\log n)$ time.

The above results hold both for a simple but artificial streaming model of node churn, in which at each time step one node is born and the oldest node dies, and in a more realistic continuous-time  model in which the time between births is Poisson and the lifetime of each node follows an exponential distribution.

Previous work on expansion and flooding    studied models in which either the vertex set is fixed and only edges change with time or models in which edge generation occurs according to an algorithm. Our motivation for studying models with random edge generation is to go in the direction of models that may eventually capture the formation of social networks or peer-to-peer networks.

\end{abstract}

\newpage
\setcounter{page}{1}

\section{Introduction}

We study {\em information diffusion} in {\em dynamic networks}. 
We focus on {\em flooding}, the information diffusion process whereby each node, once informed, spreads the information to all its neighbors.

By {\em dynamic networks} we mean communication networks that change 
over time, in which nodes enter and leave the networks, and links 
between nodes are created and destroyed. Several networks in which 
information diffusion is of interest, such as social networks and 
peer-to-peer network, exhibit change over time. 

Information diffusion in dynamic networks has been the focus of extensive previous work, 
surveyed in Section \ref{sec:related}. We are interested in models that 
exhibit {\em node churn} (that is, in which nodes enter and exit the 
network over time) and in which edge creation occurs  randomly, rather 
than being controlled by a sophisticated distributed algorithm. 
Our motivation is that a satisfactory modeling of network formation 
in social networks and peer-to-peer networks will have to satisfy both 
characteristics. As far as we are aware, information diffusion in 
dynamic networks with node churn and with uniformly random edge 
generation has not been studied before.

We made all other modeling choices as simple as possible, and we 
defined models with as few parameters as possible, in order to 
highlight qualitative features that we believe to be robust to 
different modelling choices. While our models are too simplified to predict all 
properties of realistic networks, one of our models (the Poisson model 
with edge regeneration that will be defined below) bears a certain 
resemblance of the way peer-to-peer networks such as bitcoin are formed. 

We study models in which the network is {\em sparse}, meaning that it 
has $\bigO(n)$ edges, where $n$ is the number of nodes. Specifically, 
when a node is ``born,'' it connects to $d=\bigO(1)$ random other 
nodes. We show that these dynamic random graphs maintain interesting 
expansion properties and that flooding informs all or most nodes 
(depending on details of the model) in $\bigO(\log n)$ time.

\subsection{Modeling networks that change with time}

To specify a dynamic network model we have  to specify how nodes enter 
and exit the network, and how edges are generated and destroyed.

\paragraph{Modeling node churn.}
We initially study an unrealistic but very simple model of node churn: 
at each discrete time unit, one node enters the network, and each node 
is alive for precisely $n$ time units. We refer to this as a { \em 
streaming} model of node churn. After the first $n$ time units, the 
network has always exactly $n$ nodes, and precisely one node is born 
and one node dies in each time unit. We then study a more realistic 
continous-time model, in which the number of births within each time 
unit follows a Poisson distribution with mean $\lambda$, and the lifetime of each node is 
independently distributed as an exponential distribution with parameter 
$\mu$, so that the average lifetime of a node is $1/\mu$ and the 
average number of nodes in the network at any given time is 
$\lambda/\mu$.  In order to reduce the number of parameters, we  
assume that the time that it takes to send a message along an edge is 
the same, or is of the same order as, the typical time between node 
births, which is 1 in the streaming model and $\lambda$ in the Poisson 
model. In order to have a consistent notation in the two models, we 
choose  time units in the continuous model such that $\lambda=1$, and 
we call $n=1/\mu$. With these conventions, the results that we prove in 
the streaming model are also true in this continuous time model, 
suggesting that the results have a certain robustness and that the 
streaming model, despite its simplicity, has some predictive power on 
the behavior of more realistic models.

\paragraph{Modeling edge creation and destruction.} When a node enters the
network, we assume that it connects to $d=\bigO(1)$ nodes chosen uniformly at
random among those currently in the network. Once an edge $(u,v)$ is created,
it remains active as long as both $u$ and $v$ are alive. We study two models:
one {\em without edge regeneration} and one {\em with edge regeneration}. In
the former edges are created only when a new node joins the network, in the
latter a node creates its outgoing edges not only when it joins the network,
but also every time it looses an outgoing edge due to one of its neighbors
leaving the network, in order to keep its out-degree always equal to $d$.

Although the assumption that a node can pick its neighbors uniformly at random
among all nodes of the network is unrealistic in many scenarios, the edge
creation and regeneration processes in our models resembles the way in which
some unstructured peer-to-peer networks maintain a ``random'' topology. For
example, each \textit{full-node} of the Bitcoin network running the Bitcoin
Core implementation has a ``target out-degree value'' and a ``maximum in-degree
value'' (respectively $8$ and $125$, in the default configuration) and it
locally stores a large list of (ip addresses of) ``active'' nodes. Such list is
initially started with nodes received in response to queries to some DNS seeds.
Every time the number of current neighbors of a full-node is below the
configured target value it tries to create new connections with nodes sampled
from its list.  The list stored by a full-node is periodically advertised to
its neighbors and updated with the lists advertised by the neighbors. Hence, in
the long run each full-node samples its out-neighbors from a list formed by a
``sufficiently random'' subset of all the nodes of the network.

\bigskip

\subsection{Results and techniques}

\subsubsection{Informing most nodes in the models without edge regeneration}\label{sssec:intro-oursnoedge}

In the models without edge regeneration, we prove that, with high 
probability, at any given time, there are $\Omega_d(n)$ isolated 
vertices in the network. A vertex $v$ becomes isolated if all the $d$ 
edges created at birth were to nodes that have meanwhile died, and $v$ 
was  never been chosen as neighbor by younger nodes. Because of the 
presence of such isolated nodes, broadcasting a message to all nodes is 
not possible, or at  least it takes at least $\Omega_d(n)$ time in the 
streaming model and  $\Omega_d(n\log n)$ time in the Poisson model. 
Furthermore,  there is a constant probability that a broadcast dies out 
after reaching only $\bigO(1)$ vertices. There is, however, also a 
large constant probability  (that tends to 1 as $d\to \infty$ as 
$1-exp(-\Omega(d))$) that a broadcast will reach, say, $90\%$ of the 
nodes (in general, a constant fraction that tends to 1 as $d\to \infty$ 
as $1-exp(-\Omega(d))$) after $\bigO(\log n)$ time.

To prove this fast convergence we establish two results. One is that, 
in $\bigO(\log n)$ time, a broadcast reaches at least, say, $n/10$ 
nodes. To prove this, we argue that, while the number of informed nodes 
is less than $n/10$, there is a good probability that the number of 
informed nodes grows by a constant factor at each step (and the 
probability that the above condition fails after exactly $t$ steps 
decreases exponentially with $t$, so that we can take a union bound 
over all $t$). The basic idea of this proof is to apply the principle 
of deferred decision to the $d$ edges chosen by each vertex, and assume 
that those edges are chosen after the vertex is informed, so that the 
``frontier'' of newly informed vertices keeps growing. There are two 
difficulties with this approach. One is that older nodes are likely to 
have chosen neighbors that have meanwhile died, and so older nodes are 
unlikely to significantly contribute to the number of nodes that will 
be newly informed at the next step. The second difficulty is that a 
node may become informed by a message coming from one of the $d$ 
neighbors chosen at birth, so that we cannot really apply deferred 
decision in the way that we would like. 

To overcome these difficulties, we only consider nodes that are 
informed through special kinds of paths from the source node (this will 
undercount the number of informed nodes and make our result true for a 
stronger reason). Specifically, we define an ``onion-skin'' process 
that only considers paths that alternate between ``young'' nodes whose 
age is less than the median age and ``old'' nodes whose age is more 
than the median age. Furthermore,  this  process arbitrarily splits the 
$d$ edges chosen by each node at birth into $d/2$ ``type-A'' edges and 
$d/2$ ``type-B'' edges, and   only considers paths that, besides 
alternating between young nodes and old nodes, also alternate between 
type-A edges and type-B edges. With this restrictions and conventions 
in place, we can study what happens for every pair of consecutive steps 
by applying deferred decision.

As sketched above, we are able to show that we inform at least $n/10$ 
nodes after $\bigO(\log n)$ steps. To complete the argument, we show 
that, if $d$ is a sufficiently large constant, all sets of at least 
$n/10$ vertices have constant vertex expansion, which leads to 
informing at least $.9 n$ nodes after another $\bigO(1)$ steps. Above, 
$1/10$ can be replaced by $exp(-\Omega(d))$. This tradeoff is best 
possible because, as argued above, there are $\Omega_d(n)$ isolated 
vertices that we will not be able to inform.

\subsubsection{Informing all nodes in the models with edge regeneration}

In the model with edge regeneration, we show that the graph has, with high probability, constant vertex expansion at each time step. 
Despite the presence of node churn, this  implies that broadcast reaches all nodes in $\bigO(\log n)$ steps.

In the streaming model, the proof of vertex expansion is similar to how expansion is generally proved in random graphs: we bound the probability that a fixed set of $k$ vertices fails to have constant vertex expansion, then we take a union bound by multiplying by ${n \choose k}$ and then by summing over $k$. The only difficulty is in characterizing the probability that an edge exists between a pair of vertices $u,v$, because such probability is a non-trivial function of the age of $u$ and $v$. Then, since in the streaming model the node churn   is  limited and deterministic, we can easily exploit  the vertex expansion to derive the  logarithmic bound on the flooding time.

The analysis becomes considerably more technical in the Poisson model. The main difficulty is that, in order to compute the probability that an edge $(u,v)$ exists, we need to know the age of $u$ and $v$ and so we have to take a union bound over all subsets of vertices of all possible ages. But, at any given time, there are nodes of age up to $n\log n$, and so we end up with ${n\log n \choose k}$ cases in our union bound for sets of size $k$, while the probability that one such set is non-expanding is as high as $1/{n \choose k}^{\bigO(1)}$ for sets that contain mostly young vertices. The point is that most of the  ${n\log n \choose k}$ possible ways of choosing $k$ nodes of all possible ages involve choices of several old nodes, which are unlikely to have all survived. In order to carefully account for the ``demographics'' of all possible sets of edges in our union bound, we look at the logarithm of the probability that a certain set fails to expand, interpret is as the KL divergence of two appropriately defined distributions, and then use inequalities about KL divergence.
Moreover, some more technical care is required in the Poisson model to apply the above expansion property for bounding the flooding time. Indeed, the flooding analysis needs to cope with the presence of a random number of node insertion/deletions during every 1-hop message transmission.

\subsection{Summary and roadmap}
We consider four dynamic graph models, each corresponding to different 
choices as regards the process modelling node churn and the change in 
topology induced by departure of a node. For the former, we 
study both a streaming model of node churn and a more elaborate, 
continuous Poisson model. As for topology dynamics, we consider both 
the case in which a node's departure simply determines failure of all 
incident edges, and a model in which all nodes maintain a constant 
degree, thus regenerating new edges to compensate for the loss of 
edges shared with nodes that died in the interim.
Table \ref{table} summarizes our positive and negative results and refers to 
the formal statements of results in theorems and lemmas below.

\begin{table}[]
\centering
\footnotesize
\begin{tabular}{|c|c|c|c|}
\hline
\multicolumn{2}{|c|}{\multirow{2}{*}{}} & \multicolumn{2}{c|}{\textbf{Poisson/Streaming dynamic graphs}} \\ \cline{3-4} 
\multicolumn{2}{|c|}{} & \textbf{without Edge Regeneration} & \textbf{with Edge Regeneration} \\ \hline
\multirow{2}{*}{\begin{tabular}[c]{@{}c@{}}\\{\bf Expansion}\\ {\bf properties}\end{tabular}} & Negative Results & \begin{tabular}[c]{@{}c@{}}There is a constant fraction\\ of isolated nodes (w.h.p.)\\ \\ Streaming: Lemma \ref{lem:isolated_nodes}\\ Poisson: Lemma \ref{lem:isolated_nodes_poisson}\end{tabular} & --- \\ \cline{2-4} 
 & Positive Results & \begin{tabular}[c]{@{}c@{}}$\Theta(1)$-Expansion of big-size node subsets  (w.h.p.)\\ \\ Streaming: Lemma \ref{lem:exp_large_sub_SDG}\\ Poisson: Lemma \ref{lem:exp_large_subset_PDG}\end{tabular} & \begin{tabular}[c]{@{}c@{}}$\Theta(1)$-Expansion (w.h.p.)\\ \\ Streaming: Theorem \ref{thm:expansion-stream}\\ Poisson: Theorem \ref{thm:exp:pdge}\end{tabular} \\ \hline
\multirow{2}{*}{\begin{tabular}[c]{@{}c@{}}\\ \\  {\bf Flooding}\\ \end{tabular}} & Negative Results & \begin{tabular}[c]{@{}c@{}}Flooding may not complete, \\ with    probability $\Theta_d(1)$ \\ \\ Streaming: Theorem \ref{thm:not_flooding_purestreaming}\\ Poisson: Theorem \ref{lem:flooding_not_terminate_poisson}\end{tabular} & --- \\ \cline{2-4} 
 & Positive Results & \begin{tabular}[c]{@{}c@{}}Flooding informs  a fraction \\ $1-exp(-\Omega(d))$  of the nodes in $\bigO(\log n)$ time, \\ with   probability $1-\Omega(exp(-d))$\\ \\ Streaming: Theorem \ref{apx:thm:aeflooding}\\ Poisson: Theorem \ref{thm:flood_poiss_noreg}\end{tabular} & \begin{tabular}[c]{@{}c@{}} Flooding time is \\   $\bigO(\log n)$    (w.h.p.)\\ \\ Streaming: Theorem \ref{thm:SDGE-flooding}\\ Poisson: Theorem \ref{thm:flooding_terminates_poisson}\end{tabular} \\ \hline
\end{tabular}
\caption{Summary of our results.\label{table}}
\end{table}
The paper is organized as follows.
We provide more details on previous work in Section \ref{sec:related}.
In Section~\ref{sec:streaming},   we define  streaming models of dynamic graphs, we state our results on the convergence of the flooding process, and we provide an overview of the proofs of such results. Our main technical contribution is  
the  analysis via the onion-skin process   in the model without edge regeneration, which  is given  in Subsection \ref{subse:stream_without_flood}.  
In Section \ref{sec:Poisson}, we define the Poisson models of dynamic graphs, we state our results on the flooding process and provide an overview of the proofs. The main technical result, presented in  Subsection \ref{sssec:pdge-expansion}, is to establish vertex expansion for the model with edge regeneration, using a notion of edge subset ``demographics,'' which is quantified via KL divergence.  
Section   \ref{sec:conlc} provides some further overall remarks about our contribution and  poses an open question.
  To highlight our major contributions and to keep Sections \ref{sec:streaming} and \ref{sec:Poisson} reasonably short, all the omitted proofs of such sections are given in Sections \ref{apx:ssec:stream} and \ref{apx:sec:Poisson}, respectively.
Finally,   some mathematical tools 
  we used in the analysis are given in the Appendix.

\section{Related Work} \label{sec:related}

 A first, rough classification of dynamic graphs  can be made according to an important feature:
 whether or not the set of nodes keeps the same along all the graph process.
 In the affirmative case, we have an \emph{edge-dynamic graph} $\{G_t=(V,E_t),\, t \geq 0 \}$ where  the topology dynamics 
 defines the way the edges of a fixed set $V$ of participant nodes change over time. For this class of dynamic graphs,
 several
 models, such as worst-case adversarial changes \cite{KLO10,KO11,M16} and Markovian evolving graphs \cite{CMMPS08,CMPS11},
 have been introduced, their basic connectivity properties have been derived, and, fundamental distributed tasks, such as broadcast and consensus, have been rigorously analyzed.

 In contrast, much less analytical works are currently available when (even) 
 the set of participant nodes can change over time. This 
 class of dynamic graphs $\{G_t=(V_t,E_t),\, t \geq 0 \}$ are often called \emph{dynamic networks with churn} \cite{augustine2016distributed}: in this framework,
 the specific graph dynamics describe both the node insertion/deletion rule for 
 the time sequence $V_t$ and the edge updating rule for the time sequence $E_t$.
 The number of nodes that can join or leave the network at every round is called \emph{churn rate}.
 For brevity's sake, in what follows we will only describe    those previous analytical results on 
 dynamic networks with churn which are related to the models we studied in this paper. In particular, we mainly focus
 on previous work where some  connectivity properties of a  dynamic networks with churn have been rigorously proved.
 
 As remarked in the Introduction, to the best of our knowledge, previous analytical studies focus on   distributed algorithms that are suitably designed to maintain topologies having  good connectivity properties.
 
   Pandurangan et al.~\cite{pandurangan2003building} introduced  a partially-distributed 
protocol that constructs and maintains a bounded-degree graph   which relies on a centralized cache of a constant number of nodes. In more detail, their protocol ensures the network is connected,
has logarithmic diameter, and has always bounded degree. The protocol manages a central cache which maintains a  subset of the current set of vertices. When joining the network, a new node chooses a constant number of nodes in the cache. The insertion/deletion procedures for the central cache follows rather complex rules which   take $\bigO(\log n)$ overhead and delays, w.h.p.

In \cite{DuchonD14}, Duchon et al   presented ad-hoc protocols that  
  maintain a given distribution of random graphs under an arbitrary sequence of vertex insertions and deletions. 
   More in detail, given that the graph $G_t$ is  random uniform    over  the  set of 
   $k$-out-degree  graphs   with $n$ nodes, they provide  suitable  distributed  randomized protocols that
  can  insert (respectively  delete) a node  such that the 
graph $G_{t+1}$ at round $t$ is again  random uniform over the  set of 
   $k$-out-degree  graphs   with $n+1$ (respectively,  $n - 1$) nodes.
 They do not   assume a centralized knowledge of the whole graph but, instead,
 their protocol relies on some  random primitives to 
sample arbitrary-sized subsets of nodes  uniformly at random. For instance,     
once a new node $u$ is inserted,    a random subset of nodes is selected (thanks to 
one of such  centralized primitives),
and each of them   is forced to delete one of its link and to deterministically connect to $u$. 
 The basic versions of their insertion$/$deletion procedures require each node to communicate
with nodes at distance 2, while their more refined version (achieving   optimal  performance) require communications over 
longer paths.

An important and effective approach to keep a dynamic graph with churn having 
good expansion properties is based on the use of ID random walks. Roughly speaking, 
this approach let 
every participating node start  $k$
independent random walks of tokens containing its ID  
and all the other nodes collaborate to perform such random walks for enough time so that
the token is well-mixed over the network. Once a token is mature, it can be used by any
node that, in that step,  needs a new edge by simply asking to connect to it. The probabilistic
analysis then typically shows two main, correlated invariants: on one hand, the edge set, arising from 
the above random-walk process, form a random graph having good expansion properties. On the other hand, after a small
number of steps, the random walks are well-mixed.

   Cooper et Al \cite{cooper2007sampling} consider two  deterministic churn processes: in the first one,
     at every round a new node is inserted while no nodes leave the network, while, in  the second process,
     the size $n$ of the graph does never change since, 
      at every round,  a new node is inserted  and the oldest node leaves the graph (this is in fact the streaming model  we study in this paper).  They
 provide  a protocol where each node $v$ starts $c\cdot m$ 
  independent random walks (containing the ID-label of $v$) until they are picked up, $m$ at a time, by new nodes joining the network.
 The new node connects to the $m$ peers that contributed the tokens it got. The resultant dynamic topology
 is shown to keep diameter $\bigO(\log n)$, and to be fault-tolerant against  adversarial deletion of both edges and vertices. We remark that the tokens in the graph must be constantly circulated in order to ensure that they are well-mixed. Moreover, the rate at which new nodes can join the system is limited, as they must wait while the existing tokens mix before they can use them.
 
  Law and Siu \cite{LS03} provide a distributed algorithm for maintaining
a regular expander in the presence of limited number of insertions/deletions.  The   algorithm is based on a complex procedure that is able to sample uniformly at random  from the space of all possible $2d$-regular graphs formed by $d$ Hamiltonian circuits over the current set of alive nodes. They present
possible distributed implementations of this sample procedure, the best of which, based on random walks,  have $\bigO(\log n)$ overhead and time delay.   Such solutions cannot   manage frequent node churn. 

Further distributed algorithms with different approaches achieving $\bigO(\log n)$ overhead and time delay in the case of slow node churn are proposed in \cite{AS04,RRSST09,MS04,PT14}.
    
    In \cite{APRRU15}, Augustine et al present an efficient randomized distributed protocol that guarantees the maintenance of a bounded degree  topology that, with high probability, contains an expander subgraph whose set of vertices has size $n-o(n)$, where $n$ is the stable network size. This property is preserved   despite the presence of a large oblivious adversarial churn rate
— up to $\bigO(n/ \polylog(n))$. In more detail,  considering the node churn adopted in \cite{APRU12}, i.e.,  an oblivious churn adversary that:   can remove any set of nodes up to the churn limit in every round, and,  at the same time, it should add (an equal amount of) nodes to the network with the following constraints.   A new node
should be connected to at least one existing node and  the number of new nodes added to an existing node
should not exceed a fixed constant (thus, all nodes have constant bounded degree).

The expander maintenance protocol is efficient  even though  it is rather complex and the local overhead for maintaining
the topology is   polylogarithmic in $n$.  A complication of the protocol follows from the fact that, in order to prevent the growth of large clusters of nodes outside the expander subgraph, it  uses   special criteria to ``refresh'' the links of some nodes, even when the latter have not been involved by 
any edge deletion due to the node churn.

   Recently, the flooding process has been analytically studied  over  dynamic graph models
   with churn in \cite{APRU12,augustine2016distributed}. Here, the authors consider the model analysed in \cite{APRRU15},  that we discussed above. Using  the   expansion property proved in \cite{APRRU15}, they show that, for any  fixed   churn rate $C(n) \leq n/ \polylog n$ managed by an oblivious worst-case adversary,  there is  a set $S$ of size  $n - \bigO(C(n))$ of nodes such that, if a source node in $S$ starts the flooding   in round $t$,  then all except $\bigO(C(n)$ nodes  get informed within round $t + \bigO(\log(n/C(n)) \log n)$, w.h.p.

Our models are inspired by the way some unstructured P2P networks maintain a
``well-connected'' topology, despite nodes joining and leaving the network,
small average degree and almost fully decentralized network formation. For example, after an initial bootstrap in which they
rely on DNS seeds for node discovery, full-nodes of the Bitcoin
network~\cite{nakamoto2008bitcoin} running the Bitcoin Core implementation
turn to a fully-decentralized policy to regenerate their neighbors when their
degree drops below the configured threshold~\cite{bitcoincoreP2P}. This allows
them to pick new neighbors essentially at random among all nodes of the
network~\cite{bitnodes}. Notice also that the real topology of the Bitcoin
network is hidden by the network formation protocol and discovering the real
network structure has been recently an active subject of
investigations~\cite{delgado2019txprobe,neudecker2016timing}.

\section{Warm-up: Preliminaries and  the Streaming Model} \label{sec:streaming}

We first recall the notion of vertex expansion of a graph. 

\begin{definition}[Vertex expansion]\label{def:expander} 
The \emph{vertex isoperimetric number} $h_{out}(G)$ of a graph $G = (N,E)$ is
\[
h_{out}(G)=\min_{0 \leq |S| \leq |N|/2} \frac{|\partial_{out}(S)|}{|S|}\,,
\]
where we used $\partial_{out}(S)$ for the outer boundary of $S$
\[
\partial_{out}(S) = \{ v \in N \setminus S \,:\, \{u,v\} \in E \mbox{ for some }
u \in S\}\,.
\]
Given a constant $\varepsilon > 0$, a graph $G$ is a \emph{(vertex)
$\varepsilon$-expander} if $h_{out}(G) \geq \varepsilon$.
\end{definition}

A \textit{dynamic graph} $\mathcal{G}$ is a sequence of graphs $\mathcal{G} =
\{G_t = (N_t, E_t) \,:\, t \in \mathbb{N}\}$ where the sets of nodes and edges
can change at any discrete round. If they can change randomly we call the
corresponding random process a \textit{dynamic random graph}. We call $G_t$ the
\textit{snapshot} of the dynamic graph at round $t$. For a set of nodes $S
\subseteq N_t$, we denote with $\partial_{out}^{t}(S)$ the outer boundary $S$
in snapshot $G_t$; we omit superscript $t$ when it is clear from the context.

In this section we study two dynamic random graph models in which nodes join
and leave the network according to a \textit{deterministic streaming}, (see
Definition~\ref{def:streaming_node_churns}) and edges are created randomly by
nodes with low degree (see Definitions~\ref{def:pure_streaming_without}
and~\ref{def:pure_streaming_with}).

\begin{definition}[Streaming node churn]\label{def:streaming_node_churns}
The set of nodes $N_t$ evolves as follows: It starts with $N_0 = \emptyset$; At
each round $t \geq 1$ a new node joins the network and it stays in the network
for exactly $n$ rounds (i.e., node joining at round $t$ stays up to round $t +
n-1$), then it disappears.  We say that a node has \emph{age} $k$ at round $t$
if it joined the network at round $t-k$. We say that a node $u$ is \emph{older}
(respectively, \emph{younger}) than a node $v$ if $u$ joined the network before
(respectively, after) $v$.
\end{definition}

We are interested in estimating the time a message sent by a node takes to
reach all (or a large fraction of) the nodes. To this end, we formalize the
\textit{flooding process} over a dynamic (random) graphs.

\begin{definition}[Flooding]\label{def:flooding_streaming}
Let $\calG = \{G_t = (N_t, E_t) \,:\, t \in \mathbb{N}\}$ be a dynamic (random)
graph. The flooding process over $\calG$ starting at time $t_0$ from the source node $v_0 \in N_{t_0}$ 
is the sequence of (random) sets of
nodes $\{ I_t \,:\, t \in \mathbb{N} \}$ where, $I_t = \emptyset$ for all $t < t_0$, 
$I_{t_0} = \{v_0\}$ (in this paper we will assume that $I_0$
contains the node joining the network at round $t_0$) and, for every $t
\geqslant t_0$, $I_{t}$ contains all nodes in $N_{t}$ that were neighbor of
some node in $I_{t-1}$ in the snapshot $G_{t-1}$, i.e.,
\[
I_{t} = (I_{t-1} \cup \partial_{out}^{t-1}(I_{t-1}))\cap N_{t} \,.
\]
We say that $I_t$ is the subset of \emph{informed} nodes at round $t$.  We say
that the flooding \emph{completes} the broadcast if a round $t$ exists such
that $I_t \supseteq N_{t-1} \cap N_t$ and, in this case, the number of rounds
$t-t_0$ is the \emph{flooding time} of the source message.
\end{definition}

\subsection{Streaming graphs without edge regeneration} \label{ssec:purestreamwithout}
In this section, we study the streaming model \SDG\  where edges are 
created only when a new node joins the network. We first show that, for 
constant $d$ and for any given round, the corresponding random snapshot of the dynamic graph has a linear 
fraction of isolated nodes, w.h.p. Moreover, we show that flooding fails with constant 
probability. On the other hand, this model still affords a weaker 
notion of epidemic process. In particular, in Subsection 
\ref{subse:stream_without_flood}, we show that, with constant 
probability, flooding can still inform a large, constant fraction of 
the nodes within a time interval of size $\bigO(\log n)$.

\begin{definition}[Streaming graphs without edge
regeneration]\label{def:pure_streaming_without}
A \emph{Streaming Dynamic Graph} (for short, \SDG) $\mathcal{G}(n,d)$ is a
dynamic random graph $\{G_t = (N_t,E_t)\,:\, t \in \mathbb{N} \}$ where the set
of nodes $N_t$ evolves according to Definition~\ref{def:streaming_node_churns},
while the set of edges $E_t$ according to the following \textit{topology
dynamics}
\begin{enumerate}[noitemsep]
\item When a new node appears, it creates $d$ independent connections, each one
with a node chosen uniformly at random among the nodes in the network.
\item When a node dies, all its incident edges disappear.
\end{enumerate}
\end{definition}

\noindent
\textbf{Remark.}
 The
considered graphs are always undirected. However,  given  any active node $v$,
our analysis will need to distinguish between \emph{out-edges} from $v$,
i.e., those requested by $v$, and the \emph{in-edges}, i.e., the ones due
to the  requests from other nodes and accepted by $v$.

\paragraph{Preliminary properties.}
  It is possible to prove that the expected degree of each 
node in  a snapshot $G_t=(V_t,E_t)$ of a \SDG\ $\mathcal{G}(n,d)$ is 
$d$   (see Lemma 
\ref{lem:exp_degree_purestreaming} in Subsection \ref{ssec:lem:exp_degree_purestreaming}), for any $t \geq n$. Thus, the expected number of edges in the  graph is  $nd/2$.

It is well-known    that  a static random graph in which each node 
chooses  $d$ random neighbors is a $\Theta(1)$-expander, w.h.p., for any choice of 
the parameter  $d \geq 3$ (see Lemma \ref{lemma:staticdoutgoing} in Section \ref{ssec:lemma:staticdoutgoing} of the Appendix). 
In the next lemma we instead show that this is not the case for the 
\SDG \  model: w.h.p., there can be a linear fraction of isolated nodes 
at every time steps. Informally speaking, this fact is essentially due to 
the presence of ``older'' nodes that have good chance to see all their 
out-edges disappear and, at the same time, to get no in-edges from 
younger nodes. The formal argument (which is given in    Subsection \ref{ssec:lem:isolated_nodes}) to prove this intuitive fact  requires the 
use of the method of bounded difference to manage the correlations 
among the random variables, each one  indicating whether a given node 
gets isolated or not.

\begin{lemma}[Isolated nodes]
\label{lem:isolated_nodes}
For every positive constant $d$ and for every sufficiently large $n$, let $\{G_t = (N_t, E_t) \,:\, t \in
\mathbb{N}\}$ be an \SDG\ sampled from $\mathcal{G}(n,d)$. For every fixed $t \geqslant n$,  
w.h.p. the number of isolated nodes in $G_t$ is at least   $\frac{1}{6}n 
e^{-2d}$. Moreover, w.h.p., each of these nodes will remain isolated across its
entire lifetime.
\end{lemma}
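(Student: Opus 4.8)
The plan is to prove both assertions at once by exhibiting a large family of what I will call \emph{permanently isolated} nodes. Fix $t\geq n$ and a node $v$ of age $k\in\{0,\dots,n-1\}$ at round $t$, so $v$ joined at round $s=t-k$. Call $v$ \emph{permanently isolated} if (i) all $d$ neighbors chosen by $v$ at birth are already dead at round $t$, and (ii) $v$ is never chosen as a neighbor by any of the $n-1$ younger nodes that coexist with it during its lifetime. Under (i), since the out-neighbors are older than $v$, deaths are monotone, and $v$ never creates new out-edges (no regeneration), $v$ has no out-edge at any round $\geq t$; under (ii) it has no in-edge at any round of its life. Hence such a $v$ is isolated at round $t$ and remains so until it leaves. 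Letting $X$ be the number of permanently isolated nodes alive at round $t$, a lower bound $X\geq \tfrac16 n e^{-2d}$ w.h.p.\ yields simultaneously that there are at least this many isolated nodes in $G_t$ and that each of them stays isolated for the rest of its lifetime.

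First I would compute the marginal probability $p_k=\Pr[v\text{ permanently isolated}]$. A neighbor chosen at birth is uniform among the $n-1$ older nodes present at round $s$, of which exactly $k$ are dead by round $t$; since the $d$ choices are independent, event (i) has probability $(k/(n-1))^d$. Independently, each of the $n-1$ younger nodes avoids $v$ in all of its $d$ independent choices with probability $(1-1/(n-1))^d$, so event (ii) has probability $(1-1/(n-1))^{d(n-1)}$, which converges to $e^{-d}$ and is at least $\tfrac12 e^{-d}$ for large $n$. Events (i) and (ii) depend on disjoint families of random choices, hence are independent, giving $p_k=(k/(n-1))^d(1-1/(n-1))^{d(n-1)}$. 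Summing and keeping only $k\geq (n-1)/2$, where $(k/(n-1))^d\geq 2^{-d}$, yields $\mathbb{E}[X]=\sum_k p_k\geq \tfrac{n}{4}e^{-d(1+\ln 2)}$, which exceeds $\tfrac16 n e^{-2d}$ by a constant factor (indeed $\tfrac32 e^{d(1-\ln 2)}\geq \tfrac32$), so the first moment already clears the target with room to spare.

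The crux is concentration, since the indicators $\mathbf{1}[v\text{ perm.\ isolated}]$ are correlated: the choice vector of one young node influences the in-degree of several older nodes simultaneously. I would use the method of bounded differences (McDiarmid), taking the independent coordinates to be the birth-choice vectors $C_w$ of the $\bigO(n)$ nodes that can affect $X$, namely the nodes alive at $t$ together with all younger nodes coexisting with them, i.e.\ those joining in a window of length $\bigO(n)$ around $t$. Replacing a single $C_w$ changes $X$ by at most $2d+1$: it can flip whether $w$ itself is isolated (through $w$'s own out-edges, $\pm1$), and it can change the ``chosen-by-$w$'' status of at most $2d$ older nodes (the $\leq d$ entries of the old vector and the $\leq d$ of the new one), each flipping one indicator; crucially $C_w$ cannot affect any node younger than $w$, since $w$ connects only to nodes older than itself. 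McDiarmid then gives $\Pr[X\leq \mathbb{E}[X]-\lambda]\leq \exp\!\big(-\lambda^2/(\bigO(n)\cdot(2d+1)^2)\big)$, and taking $\lambda$ to be a small constant fraction of $\mathbb{E}[X]=\Omega_d(n)$ makes the exponent $-\Omega_d(n)$, so $X\geq \tfrac16 n e^{-2d}$ except with probability $e^{-\Omega_d(n)}$.

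The step I expect to be the main obstacle is pinning down the bounded-difference constant, which is $\Theta(d)$ rather than $\Theta(1)$: one must argue carefully that a single node's $d$ random choices perturb the isolation status of only $\bigO(d)$ other nodes, and of no younger node, so that the McDiarmid variance term is $\bigO(n d^2)$ and not something that grows with the number of affected pairs. Once this dependency bookkeeping is done the remaining calculation is routine, since for constant $d$ and large $n$ the mean $\Omega_d(n)$ dominates the fluctuation scale $\sqrt{n\,d^2}$.
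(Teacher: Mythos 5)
Your proposal is correct and follows essentially the same route as the paper's proof: both lower-bound the expected number of nodes that simultaneously have all birth-chosen neighbours dead and are never targeted by younger nodes, and then concentrate via the method of bounded differences, with only cosmetic differences (the paper restricts attention to the oldest $n/3$ nodes and bounds the out-degree-zero probability by $(1-\varepsilon)^d$ rather than summing $(k/(n-1))^d$ over ages, and it takes individual requests rather than whole choice vectors as the independent coordinates, giving Lipschitz constant $2$ instead of $2d+1$). Both variants yield $\mathbb{E}[X]=\Omega_d(n)$ with deviation probability $e^{-\Omega_d(n)}$, so your argument goes through.
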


\subsubsection{Expansion properties}\label{subse:poiss_without_exp}

As Lemma \ref{lem:isolated_nodes} suggests, we have no generalized 
expansion properties in the \SDG\ model. Still, we can prove a weaker 
expansion property, which only applies to sufficiently large subsets of the 
vertices. This property is crucial in proving our positive result about 
flooding in this model and it is stated in Lemma \ref{lem:exp_large_sub_SDG} below.

\begin{lemma}[Expansion of large subsets]
\label{lem:exp_large_sub_SDG}
For every constant $d \geq 20$ and for every sufficiently large $n$, let $\{G_t = (N_t, E_t) \,:\, t \in
\mathbb{N}\}$ be an \SDG\ sampled from $\mathcal{G}(n,d)$. For 
every fixed $t \geq n$, w.h.p. the snapshot $G_t$ satisfies the following:
\[
\min_{S \subseteq N_t \, : \, ne^{-d/10} \leq |S| \leq
n/2}\frac{|\partial_{out}(S)|}{|S|} \geq 0.1\,.
\]
\end{lemma}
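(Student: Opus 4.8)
The plan is to run the standard first-moment union-bound argument for vertex expansion, exploiting the special ``young-to-old'' orientation of surviving edges in the \SDG\ model. First I would record the structural fact underlying everything: at the fixed round $t \geq n$ label the $n$ alive nodes by their age $0,1,\dots,n-1$, and recall that a node of age $a$ chose its $d$ out-neighbours independently and uniformly at random among the $n-1$ nodes alive at its birth, which are exactly the nodes older than it. Hence every surviving edge is an out-edge directed from a younger node to an older node, the $d$ choices of a fixed node are independent of the choices of all other nodes, and a node of age $a$ can have a surviving out-edge only to a node of current age in $\{a+1,\dots,n-1\}$. Consequently, for disjoint sets $A$ and $B$, the event ``no edge joins $A$ and $B$'' decomposes, over the nodes $x\in A\cup B$, into the \emph{independent} events ``the $d$ out-edges of $x$ avoid every node of the opposite side that is older than $x$''.

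Next I would set up the union bound. A set $S$ with $|S|=k$ violates the $0.1$-expansion bound exactly when there is a set $T$ of size $\lfloor 0.1k\rfloor$ with $\partial_{out}(S)\subseteq T$; writing $R=N_t\setminus(S\cup T)$, this means there is no edge between $S$ and $R$. I would therefore fix a pair $(S,T)$ with $ne^{-d/10}\le k\le n/2$ and bound the probability of the event ``no edge between $S$ and $R$''. By the decomposition above this probability equals $\prod_{x\in S\cup R}(1-c_x/(n-1))^d$, where $c_x$ is the number of nodes on the opposite side that are older than $x$. Using $1-y\le e^{-y}$ together with the identity $\sum_{x\in S\cup R} c_x=|S|\,|R|$ (each of the $|S|\,|R|$ pairs in $S\times R$ is counted exactly once, through its younger endpoint) yields the clean bound $\exp\!\bigl(-dk|R|/(n-1)\bigr)$. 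Since $|R|\ge n-1.1k\ge 0.45\,n$ throughout the range, this is at most $\exp(-0.45\,dk)$.

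Finally I would take the union bound over the at most $\binom{n}{k}\binom{n}{\lfloor 0.1k\rfloor}$ choices of $(S,T)$ and sum over $k$. Bounding the binomials by $(en/k)^k$ and $(10en/k)^{0.1k}$ and writing $m=n/k$, the logarithm of the size-$k$ contribution is at most $k\bigl(1.1\ln(em)+0.1\ln 10-0.45\,d\bigr)$. The cutoff $k\ge ne^{-d/10}$ forces $\ln m\le d/10$, so the bracket is at most $1.33-0.34\,d$, which is negative for $d\ge 20$; hence each size-$k$ term is at most $e^{-\Omega(k)}\le e^{-\Omega(ne^{-d/10})}$, and summing the at most $n$ terms leaves a quantity that tends to $0$, giving the claim w.h.p.\ for the fixed snapshot $G_t$. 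I expect the main obstacle to be precisely this last balancing act: the entropy factor $k\ln(n/k)$ coming from $\binom{n}{k}$ grows as $k$ shrinks, and it is exactly the lower cutoff $ne^{-d/10}$ that keeps it below the edge-decay rate $0.45\,dk$, so that making the constants line up (the $0.1$ expansion target, the $e^{-d/10}$ threshold, and the hypothesis $d\ge 20$) is where the care is needed. A minor point to handle cleanly is the with-replacement nature of the $d$ samples and whether a node may draw itself, both of which only perturb the $n$ versus $n-1$ normalisation and do not affect the conclusion.
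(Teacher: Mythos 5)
Your proposal is correct and follows essentially the same route as the paper: the same union bound over pairs $(S,T)$, the same reduction of $A_{S,T}$ to the absence of edges between $S$ and $R=N_t\setminus(S\cup T)$, and the same use of the fact that a younger node's request hits a fixed older node with probability $\Theta(1/n)$. The only difference is cosmetic and slightly sharper: where the paper splits into two cases according to whether at least half of the pairs in $S\times R$ have their younger endpoint in $S$ or in $R$ (losing a factor $2$ in the exponent), you account for all nodes of $S\cup R$ at once, counting each pair exactly once through its younger endpoint, which is valid by the independence of distinct nodes' requests.
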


The  proof's idea of the above property  is to show that any two disjoint sets $S,T \subseteq N_t$, with $ne^{-d/10} \leq |S| \leq n/2$ and $|T|=0.1|S|$, such that  $\partial_{out}(S) \subseteq T$, exist with negligible probability and then apply a union bound over all possible pairs $S,T \subseteq N_t$. To get the first fact, we derive  a suitable   argument that takes care about the younger/older relationship between any pair of nodes.   We then exploit the fact that  any node
$u$ has probability $1/n$ to send a request to   a node $v$ older than him. The full proof is given in Subsection \ref{ssec:lem:exp_large_sub_SDG}.

\subsubsection{Flooding}\label{subse:stream_without_flood}
We begin with a negative result about flooding in  the \SDG\ model.   We recall that Lemma \ref{lem:isolated_nodes}  shows the existence of a linear fraction of   nodes that keep isolated for all their respective lifetime. This is the key-ingredient in proving the following fact (the full proof  is given in Subsection \ref{ssec:thm:not_flooding_purestreaming}).

\begin{theorem}[Flooding]
\label{thm:not_flooding_purestreaming}
For every positive constant $d$, for every sufficiently large $n$, and for every fixed $t_0 \geqslant n$, the flooding process over an \SDG\ sampled from $\mathcal{G}(n,d)$ starting at $t_0$ satisfies the following two statements:
\begin{enumerate}[noitemsep]
\item With probability $\Omega(e^{-d^2})$, for every $t \geq t_0$, $I_t$ contains at most $d+1$ nodes;
\item W.h.p. The flooding time is $\Omega_d(n)$.
\end{enumerate}
\end{theorem}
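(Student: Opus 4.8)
For the first statement, the plan is to exhibit an event of probability $\Omega(e^{-d^2})$ on which the informed set never escapes the immediate neighborhood of the source $v_0$. The starting observation is that $v_0$ is the node joining at round $t_0$, hence it is the \emph{youngest} node of $G_{t_0}$ and has no in-edges yet; its only neighbors are the $d$ (older) nodes $w_1,\dots,w_d$ to which it connects at birth, so deterministically $I_{t_0+1}\subseteq C:=\{v_0,w_1,\dots,w_d\}$. I would then define the favorable event $E$ that $C$ is a \emph{closed dead end}, namely: (i) $w_1,\dots,w_d$ are distinct; (ii) every out-edge created by each $w_i$ at its own birth points to a node that is already dead at round $t_0+1$; and (iii) no node outside $C$ ever selects a node of $C$ during the (deterministic) lifetimes of the members of $C$. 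On $E$ no snapshot $G_t$ with $t\ge t_0$ contains an edge leaving $C$: the $w_i$ reach no live outside node through their own edges by (ii), the edges of $v_0$ stay inside $C$, and nobody attaches from outside by (iii). Hence $C$ is a connected component in every such snapshot and $I_t\subseteq C$, i.e. $|I_t|\le d+1$, for all $t\ge t_0$.

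Since distinct nodes choose their out-edges independently, after conditioning on the choices $w_1,\dots,w_d$ of $v_0$ (and hence on their ages $a_1,\dots,a_d$) the three requirements become independent and can be bounded separately. Requirement (iii) is the dominant one: only $\bigO(n)$ nodes are born during the combined lifetimes of the members of $C$, each makes $d$ independent uniform choices, and each misses all of $C$ with probability at least $(1-(d{+}1)/n)^{d}=1-\bigO(d^2/n)$; multiplying over these $\bigO(n)$ nodes gives $\Prc{\text{(iii)}}\ge (1-\bigO(d^2/n))^{\bigO(n)}=e^{-\Theta(d^2)}$. Requirement (ii) contributes the independent factor $\prod_i (a_i/n)^d$ (in the streaming model lifetimes are deterministic, so a target chosen by $w_i$ is dead at round $t_0+1$ precisely when it is old enough, which happens with probability $\approx a_i/n$); averaging over the uniform ages gives $\Expcc{\prod_i (a_i/n)^d}=e^{-\Theta(d\log d)}$, which is negligible in the exponent compared with the $e^{-\Theta(d^2)}$ of (iii). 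Collecting the factors yields $\Prc{E}=e^{-\Theta(d^2)}=\Omega(e^{-d^2})$. I expect this probability estimate to be the main obstacle: requirements (ii) and (iii) range over populations of potential ``attachers'' and target nodes whose sizes depend on the realized ages $a_i$, so the careful step is to bound each conditional factor uniformly in the ages before averaging the conditioning away.

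For the second statement I would use Lemma~\ref{lem:isolated_nodes} essentially as a black box. A node that stays isolated throughout its life has no incident edge in any snapshot, so by the flooding recursion it can enter $I_t$ only if it is the source; as $v_0$ has $d$ incident edges at birth it is never isolated, whence every isolated node other than $v_0$ stays \emph{uninformed} forever. Moreover an isolated node has age $\ge 1$ (a newborn has $d$ live out-edges and is not isolated), so it belongs to $N_{t-1}\cap N_t$ and is \emph{persistent}; consequently, whenever $G_t$ contains an isolated node different from $v_0$ we have $I_t\not\supseteq N_{t-1}\cap N_t$ and the broadcast has not completed by round $t$. Lemma~\ref{lem:isolated_nodes} guarantees, for each fixed round, that w.h.p. there are at least $\tfrac16 n e^{-2d}\gg 1$ such nodes, and I would take a union bound of this guarantee over all rounds $t\in[t_0+1,\,t_0+\lfloor n/2\rfloor]$, which is permissible because the bounded-difference concentration behind the lemma makes the per-round failure probability small enough to absorb the $\bigO(n)$ union. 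Note that isolated nodes are typically old, hence short-lived, so no single isolated node can certify non-completion for $\Omega(n)$ consecutive rounds; this is exactly why one re-applies the lemma at every round rather than tracking a fixed node. We conclude that, w.h.p., the broadcast fails to complete throughout a window of length $\Omega(n)$, i.e. the flooding time is $\Omega_d(n)$.
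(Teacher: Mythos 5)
Your argument for the first statement is correct but takes a genuinely different route from the paper. The paper reuses Lemma~\ref{lem:isolated_nodes}: with probability at least $1/2$ there are $\Omega(ne^{-2d})$ nodes that are isolated at $t_0$ and remain so, and the source's $d$ uniform choices all land in this set with probability at least $(e^{-2d}/6)^d$, confining $I_t$ to the source and its $d$ neighbours. You instead build the ``closed dead end'' directly around the source's actual neighbours and price the three requirements separately; this is self-contained, makes the dominant $e^{-\Theta(d^2)}$ cost (no later node ever attaching to $C$) explicit, and avoids the slight awkwardness in the paper's event (a node chosen by $s_0$ is, strictly speaking, no longer isolated). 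Both arguments yield $e^{-\Theta(d^2)}$ rather than literally $e^{-d^2}$, so you are no worse off than the paper on the constant in the exponent.

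The second statement has a genuine gap. You apply Lemma~\ref{lem:isolated_nodes} at every round $t\in[t_0+1,t_0+n/2]$ and claim that an isolated node of $G_t$ other than $v_0$ certifies $I_t\not\supseteq N_{t-1}\cap N_t$. That inference requires the isolated node to be \emph{uninformed} at round $t$, and your justification --- ``a node that stays isolated throughout its life can enter $I_t$ only if it is the source'' --- does not apply to the nodes the lemma actually produces: every node has $d$ out-edges at birth, so no node is isolated throughout its life (you even note this yourself when arguing the node has age $\ge 1$). The lemma's nodes have in-degree $0$ forever but had out-edges earlier in their lives; a node certified at a round $t>t_0$ could have been informed through one of those out-neighbours during $[t_0,t)$, before the out-neighbours died, and would then sit in $I_t$ while isolated. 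Your stated reason for preferring the per-round union bound over tracking fixed nodes (``no single isolated node can certify non-completion for $\Omega(n)$ consecutive rounds'') is also mistaken, and the correct argument is exactly the one you rejected: apply the lemma \emph{once} at $t_0$. Those nodes have no edge in any snapshot $G_t$ with $t\ge t_0$ and no information exists before $t_0$, so they are never informed; since there are $\Omega(ne^{-2d})$ of them with distinct ages inside the oldest third of the population, by pigeonhole at least one survives for another $\Omega(ne^{-2d})$ rounds, and that single node witnesses non-completion throughout a window of length $\Omega_d(n)$.
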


 On the other hand, there is  a large constant probability   that a broadcast will reach a large fraction of nodes  within $\bigO(\log n)$ time.

\begin{theorem}[Flooding completes for a large fraction  of nodes] \label{apx:thm:aeflooding}
For every constant $d>200$, for every sufficiently large $n$ and for every fixed 
$t_0\ge n$, there is a $\tau=\bigO(\log n/\log d + d)$, such that the 
flooding process over an \SDG\ sampled from 
$\mathcal{G}(n,d)$ starting at $t_0$ satisfies the following:
\[
	\Prc{|I_{t_0+\tau}| \geq (1-e^{-d/10})n} 
	\geq 1-4e^{-d/100}-o(1) \, ,
\]
\end{theorem}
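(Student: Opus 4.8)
The plan is to split the time window $[t_0,t_0+\tau]$ into three phases and to track the uninformed set $U_t = N_t\setminus I_t$, the goal being $|U_{t_0+\tau}|\le e^{-d/10}n$. In Phase~1 I grow $|I_t|$ from $1$ up to $ne^{-d/10}$; this is the delicate phase, where the $e^{-\Omega(d)}$ failure probability is paid, and where I do \emph{not} analyse the true flooding but a restricted \emph{onion-skin} process whose count lower bounds $|I_t|$. In Phase~2 I use Lemma~\ref{lem:exp_large_sub_SDG} directly to drive $|I_t|$ from $ne^{-d/10}$ up to $n/2$, and in Phase~3 I apply the same lemma to the \emph{complement} to shrink $|U_t|$ from $n/2$ below $ne^{-d/10}$. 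The time budget matches the statement: with a per-double-step growth factor $\Theta(d)$ the onion-skin needs $\bigO(\log n/\log d)$ rounds to reach a constant fraction of $n$, while each expansion phase changes its set by a factor $\Theta(e^{d/10})$ at the constant rate $1.1$ guaranteed by Lemma~\ref{lem:exp_large_sub_SDG}, hence costs $\bigO(d)$ rounds; the total is $\bigO(\log n/\log d + d)$.

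For Phase~1 I would run the onion-skin process sketched in Section~\ref{sssec:intro-oursnoedge}: split the alive nodes into \emph{young} (age below the median) and \emph{old} (age above), split each node's $d$ birth-edges into $d/2$ type-A and $d/2$ type-B edges, and only follow paths that alternate young/old and, simultaneously, type-A/type-B. The point of this bookkeeping is to make the principle of deferred decisions applicable across each pair of consecutive rounds: at a fixed parity only one half of a node's edges and only one age class are ever queried, so the edges exposed when the frontier advances are fresh and conditionally uniform over a known population. I would then show that, while the onion-skin frontier has size below $ne^{-d/10}$, its conditional expectation multiplies by $\Theta(d)$ per double-step, concentrate this with a Chernoff/bounded-difference bound so that the probability of a bad double-step decays geometrically, and sum over the $\bigO(\log n/\log d)$ double-steps. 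Since $v_0$ is a freshly born node with $d$ uniform out-edges, the probability that the frontier becomes extinct in the first rounds — the genuine obstruction exhibited by Theorem~\ref{thm:not_flooding_purestreaming} — is $e^{-\Omega(d)}$, and this is the dominant contribution to the final $4e^{-d/100}$ bound.

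Phases~2 and~3 are deterministic once we condition on the event of Lemma~\ref{lem:exp_large_sub_SDG}, which I invoke on each of the $\bigO(\log n)$ snapshots met by the process (an $o(1)$ union bound). In the growth phase, whenever $ne^{-d/10}\le |I_t|\le n/2$ we have $|\partial_{out}(I_t)|\ge 0.1|I_t|$, and since $I_{t+1}\supseteq (I_t\cup\partial_{out}(I_t))\cap N_{t+1}$ the informed set grows by a factor $\ge 1.1-o(1)$ per round, reaching $n/2$ in $\bigO(d)$ rounds. For the shrinking phase I note that a node stays uninformed from round $t$ to $t+1$ only if all its neighbors in $G_t$ lie in $U_t$, so $\partial_{out}(U_{t+1})\subseteq U_t\setminus U_{t+1}$; applying Lemma~\ref{lem:exp_large_sub_SDG} to $U_{t+1}$ (whose size lies in the admissible range $[ne^{-d/10},n/2]$ until it drops below $ne^{-d/10}$, at which point we are done) gives $|U_t\setminus U_{t+1}|\ge 0.1|U_{t+1}|$, i.e.\ $|U_{t+1}|\le |U_t|/1.1$, so $|U_t|$ shrinks geometrically and passes below $ne^{-d/10}$ in $\bigO(d)$ rounds. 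The node churn perturbs every count by at most one node per round, hence by $\bigO(\log n)=o(ne^{-d/10})$ over the whole window, which is absorbed into the slack between the expansion constant $0.1$ and the growth/shrink factor $1.1$.

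The main obstacle is Phase~1. The two structural problems are that (i) a node can be informed either through one of its own birth-edges or through an in-edge contributed by a younger node, which breaks any naive deferred-decision coupling, and (ii) old nodes have typically already lost their out-edges to deceased neighbors and contribute little to the frontier (the same phenomenon that produces the isolated nodes of Lemma~\ref{lem:isolated_nodes}). The alternating onion-skin construction is precisely what disentangles both issues: restricting to alternating young/old and type-A/type-B paths guarantees that each double-step exposes a batch of edges never conditioned on before, with endpoints uniform over a known age class, so the expected multiplicative growth is a clean $\Theta(d)$. The remaining and most demanding work is the concentration of this growth together with the careful two-part union bound controlling both the small probability of early extinction and the probability of any single bad double-step.
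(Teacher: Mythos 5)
Your proposal follows essentially the same route as the paper's proof: a bootstrap via the alternating young/old, type-A/type-B onion-skin process with deferred decisions and a geometric union bound over double-steps (the paper's Lemma~\ref{lem:flooding_terminates_part_1} and Claims~\ref{claim:O_0}--\ref{claim:product}), followed by the large-set expansion of Lemma~\ref{lem:exp_large_sub_SDG} applied first to $I_t$ and then to its complement (the paper's Lemma~\ref{le:big_set_exp}). The only cosmetic differences are that the paper runs the onion-skin up to $2n/d$ rather than $ne^{-d/10}$ and explicitly excludes the ``very old'' nodes (age above $n-\log n$) from the old class so that they cannot die mid-window, a bookkeeping detail your churn-absorption remark implicitly covers.
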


As remarked in Subsection \ref{sssec:intro-oursnoedge}, the proof of the above result is one of our  major technical contributions: for this reason, in what follows,    we provide its   description.

\subsubsection*{Proof of Theorem \ref{apx:thm:aeflooding}}
The proof consists of two steps. Assuming the source node $s$ joined the 
network in round $t_0$, we first show (Lemma 
\ref{lem:flooding_terminates_part_1}) that, with probability at least $1-4e^{-d/100}$ , a restriction of 
the true topology dynamics establishes a bipartite graph which i) 
contains $s$, ii) only connects nodes with ages in the interval 
$\{1,\ldots , n/2\}$ to nodes with ages in the
interval $\{n/2 + 1,\ldots , n-\log n\}$, iii) has diameter $\bigO(\log n)$, 
iv) includes at least $2n/d$ nodes. This is enough to prove that, with 
probability $1-4e^{-d/100}$, $2n/d$ nodes are informed at time $t_0 + \tau_1$, where 
$\tau_1 = \bigO(\log n)$.

The second step consists in showing (Lemma \ref{le:big_set_exp}) that, 
thanks to the expansion properties established in Lemma 
\ref{lem:exp_large_sub_SDG}, once $2n/d$ nodes have been informed, at 
least $(1-e^{-d/10})n$ nodes will become informed within a constant 
number $\tau_2=\Theta(d)$ of additional steps, w.h.p.

Overall, the above two steps prove that within time $t_0 + \tau_1 + \tau_2$, at least $(1-e^{-d/10})n$  nodes have been informed, with probability at least $1-4e^{-d/100}-o(1)$. We begin with the first part, corresponding to the following lemma.
\begin{lemma}[Flooding completes for a large fraction of nodes, phase 1]
\label{lem:flooding_terminates_part_1} 
Under the hypotheses of Theorem~\ref{apx:thm:aeflooding}, there is a
$\tau_1=\bigO(\log n/\log d)$ such that
\begin{equation}
\label{eq:lem:flo_alm_term}
    \Prc{|I_{t_0+\tau_1}|\geq \frac{2n}{d}} \geq 1 - 4e^{-\frac{d}{100}} \,.
\end{equation}
\end{lemma}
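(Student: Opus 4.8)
The plan is to lower-bound $|I_{t_0+\tau_1}|$ by tracking only a restricted sub-process of the true flooding---the \emph{onion-skin} process---so that any bound proved for it is automatically a bound for the real flooding, since the paths it uses are genuine edges of the \SDG. First I would fix a bipartition of the alive nodes into \emph{young} nodes (age in $\{1,\dots,n/2\}$) and \emph{old} nodes (age in $\{n/2+1,\dots,n-\log n\}$); the upper cutoff $n-\log n$ guarantees that every old node ever touched survives for the whole $\bigO(\log n)$ duration of the argument, so survival is not a source of error. I would also split the $d$ birth-edges of each node, once and for all, into $d/2$ \emph{type-A} and $d/2$ \emph{type-B} edges. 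The onion-skin BFS then propagates only along alternating paths: from an informed young node information is pushed to old nodes along its type-A \emph{out}-edges, and from an informed old node $v$ information is pushed to those young nodes $w$ whose type-B out-edge points to $v$ (i.e.\ along $v$'s \emph{in}-edges). This design addresses the two obstacles flagged in the overview: old nodes have almost all of their out-edges dead, so we never use them, letting old-to-young transmission ride on young nodes' out-edges instead; and since the edge that informs a node and the edges it subsequently uses are always of opposite type, the latter remain unexposed and may be revealed by deferred decision.

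Next I would run deferred decision layer by layer and argue a branching factor of $\Theta(d)$ in each direction. When a young node $w$ enters the frontier (necessarily via a type-B in-edge of some old node), its $d/2$ type-A targets are still unexamined, so each is, conditionally, a uniform alive node at $w$'s birth; a short age computation shows such a target is currently old-and-surviving with probability $\tfrac12-o(1)$, so $w$ yields $\Theta(d)$ new old nodes in expectation. Symmetrically, when an old node $v$ enters the frontier, I examine the type-B out-edges of the young nodes not yet exposed; because $v$ is older than every young node, $v$ was alive at each such node's birth, so each points its type-B edge at $v$ with probability $\Theta(1/n)$, and with $\Theta(n)$ unexposed young nodes this again gives $v$ a yield of $\Theta(d)$ new young nodes. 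Throughout the process the informed fraction stays below $2/d$, so collisions with already-informed nodes cost only a $(1-\bigO(1/d))$ factor and never stall the growth.

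I would then convert these expectations into the claimed high-probability bound by Chernoff concentration on the (conditionally independent) fresh uniform choices made at each layer. At the layer of size $F_t=\Theta((d/4)^t)$ there are $\Theta(d\,F_t)$ such choices, so the event that the next layer fails to reach a constant fraction of its mean has probability $e^{-\Omega(d\,F_t)}\le e^{-\Omega(d^{t+1})}$. Summing this doubly-exponential tail over $t\ge 0$ is dominated by the very first step ($F_0=1$), where the source's $d/2$ type-A edges must produce $\Omega(d)$ old neighbors; a single Chernoff bound there contributes at most $3e^{-d/100}$ and all later steps at most $e^{-d/100}$ more, for the stated total $4e^{-d/100}$. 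Since the branching factor is $\Theta(d)$, the frontier reaches $2n/d$ after $\tau_1=\bigO(\log n/\log d)$ layers, matching the claim; note also that $2n/d\ge ne^{-d/10}$ for $d>200$, so this frontier lies above the threshold required by Lemma~\ref{lem:exp_large_sub_SDG} in the second phase.

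The step I expect to be the main obstacle is making the two-sided deferred-decision argument fully rigorous rather than heuristic. The delicate direction is old-to-young: asking ``which unexposed young nodes have a type-B edge into $v$'' consults the type-B randomness of \emph{many} young nodes simultaneously, so I must organize the exploration---for instance as a principled vertex-exposure martingale, or by exposing each young node's type-B targets exactly once and then retiring it---so that the randomness consumed at one old node remains independent of what is needed at the next, and so that conditioning on survival and on the young/old bipartition (including the $\bigO(\log n)$ nodes that may cross the age-$n/2$ boundary during the process) perturbs the uniform-choice probabilities only by the $o(1)$ corrections claimed above. Controlling these conditioning effects, and verifying that the informed set genuinely stays in the $o(n)$ regime where the $\Theta(d)$ branching is valid across all $\bigO(\log n/\log d)$ layers, is where the real work lies.
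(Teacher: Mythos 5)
Your proposal is correct and follows essentially the same route as the paper: the same young/old bipartition with the $n-\log n$ cutoff, the same split of each node's $d$ requests into two halves used in alternating directions, the same deferred-decision onion-skin layering with branching factor $\Theta(d)$, and the same infinite-product bound yielding $1-4e^{-d/100}$. The one technical point you flag at the end is resolved in the paper exactly as you suggest (each young node's type-B randomness is exposed once and retired, giving independent indicators for the old-to-young step), while for the young-to-old step the paper replaces Chernoff by the method of bounded differences since the per-old-node hit indicators are not independent.
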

\begin{proof}
We begin by defining the following subsets of $N_{t_0}$:
\begin{itemize}
\item the set of the \emph{young nodes}: $    Y=\{v \in N_{t_0} \ | \ v \hbox{ has life $l$ with } 2 \leq l <\frac{n}{2}\}$
\item the set of the \emph{old nodes}: $
O=\{ v \in N_{t_0} \ | \ v \hbox{ has life $l$ with } \frac{n}{2} \leq  l \leq n-\log n\}$
\item the set of the \emph{very old nodes}:
$    \hat{O} = N_{t_0}-(Y\cup O)=\{v \in N_{t_0} \ | \ v \hbox{ has life $l$ with } n-\log n < l \leq n\}$
\end{itemize}

To prove \eqref{eq:lem:flo_alm_term} we show that 
$G_{t_0}=(N_{t_0},E_{t_0})$ contains a bipartite subgraph with logarithmic 
diameter, containing the informed node $s$ and such that i) links are 
established only between nodes in $Y$ and in $O$ and ii) it contains 
no very old node. The graph in question is the result of the 
\emph{onion-skin} process described below.

\paragraph{The onion-skin process.} The iterative process we consider 
operates in phases, each consisting of two steps. Starting from $s$, 
the onion-skin process builds a connected, bipartite graph, 
corresponding to alternating paths in which young nodes only connect
to old ones. In particular, each realization of this process 
generates a subset of the edges generated by the original topology 
dynamics. 
Moreover, each iteration of the process corresponds to a partial 
flooding in the original graph, in which a new layer of informed nodes 
is added to the subset of already informed ones, hence the term onion-skin. Flooding is partial since i) the 
network uses a subset of the edges that would be present in the 
original graph. 

In the following, we 
denote by $Y_k\subseteq Y$ and $O_k\subseteq O$ the subsets of young 
and old nodes that are informed by the end of phase $k$, respectively. In the remainder, we let 
$O_{-1} = \emptyset$ for notational convenience.

\begin{center}
\fbox{
\begin{minipage}{15cm}
\textbf{Onion-skin process}
	\begin{description}
		\item[Phase $\mathbf{0}$:] $Y_0 = \{s\}$; $O_0$ is obtained as follows:
				$s$ establishes $d$ links. We let $O_0\subset O$ denote the 
				subset of old nodes that are destinations of these links. Links with 
				endpoints in $Y$ or $\hat{O}$ are discarded;
		\item[Phase $\mathbf{k\ge 1}$:] $Y_k$ and $O_{k}$ are 
		iteratively obtained as follows:
		
		\smallskip
		\emph{Step 1}. Each node in $Y - Y_{k-1}$ 
				establishes $d/2$ links. More precisely: 
				\begin{equation}\label{def:Y_k}
					Y_k-Y_{k-1}=\left\{v \in Y-Y_{k-1} \ | \ v \hbox{ 
					connects to $O_{k-1}$ by a request $i \in \{\frac{d}{2}+1,\dots,d$\}} \right\}
				\end{equation}
				Links to nodes not belonging to $O$ are discarded;
				
		\smallskip
		\emph{Step 2}. Each node in $Y_k - Y_{k-1}$ establishes $d/2$ links to 
			nodes in $O - O_{k-1}$. More precisely:
			\begin{equation}\label{def:O_k}
				O_k-O_{k-1}=\left\{v \in O-O_{k-1} \ | \hbox{ some $w \in Y_k$ 
				connects to $v$ by a request $i \in \{1,\dots, \frac{d}{2}\}$}\right\}
			\end{equation}
			Links to nodes not belonging to $O$ are discarded.
	\end{description}
\end{minipage}\label{proc:onion_skin_stream}
}
\end{center}
A couple remarks are in order. It is clear that the links in $E_{t_0}$ 
can be established in any order, as long as they are created from 
younger nodes towards older ones. As a consequence, each realization of 
the onion-skin process produces a subset of $E_{t_0}$. In particular, 
i) nodes in $O$ and $\hat{O}$ do not create any links, though they can still 
be the targets of links originating from $Y$; ii) a node $v\in Y$ 
released at time $\hat{t}$ ($\le t_0$) creates $d$ links, with possible 
destinations the nodes released in the interval $[\hat{t}, t_0]$, but 
only links with destinations in $O$ are retained, the others are 
discarded.

The next claim states  that, at each step, the sets of informed nodes $Y_k
\subseteq Y$ and $O_k \subseteq O$ grow by a constant factor $d/20$. It
analyzes Phase 0 and the generic Phase $k$ separately and it is proved in
Subsection \ref{ssec:claim:O_0}.

\begin{claim}
The following holds for Phase 0, 
\label{claim:O_0}\begin{equation}
    \Prc{|O_0| \ge \frac{d}{20}}\geq 1-e^{-d/100}.
\end{equation} 
In the  generic phase $k \geq 1$,
if $|Y_{k-1}| \leq n/d$ and $|O_{k-1}| \leq n/d$, 
\begin{align*}
\Prc{|Y_k-Y_{k-1}| > \frac{d}{20} y \mid |O_{k-1}-O_{k-2}| \geq y} \geq 
1-e^{-yd/100}  
\end{align*}

\begin{equation}
\Prc{|O_{k}-O_{k-1}| \geq \frac{d}{20}x \mid |Y_k-Y_{k-1}| \geq x} \geq 1-e^{-dx/100} \, .
\end{equation}
\end{claim}

Then, from the above     claim  and using    the chain rule, 
we get that, for each $k \geq 0$,
\begin{align}
&\Prc{|O_k-O_{k-1}| \geq a_{2k+1}} \geq \prod_{i=0}^{2k}\left(1-e^{-a_i(d/100)}\right) \hbox{ and } \Prc{|Y_k-Y_{k-1}| \geq a_{2k} } \geq \prod_{i=0}^{2k}\left(1-e^{-a_i(d/100)}\right)\,,
\end{align}
where $a_k=\left(\frac{d}{20}\right)^k$ and
as long as $a_{2k}$ and $a_{2k+1}$ are smaller than $n/d$.
Then,  after some $\tau_1=\log n/\log d$ rounds, we get    $|Y_{t_0+\tau_1}| \geq n/d$ and $|O_{t_0+\tau_1}|\geq n/d$, with probability at least
\begin{align}
c=\prod_{i=0}^{\infty}\left(1-e^{-a_i(d/100)}\right)\,.
\end{align}
In Subsection \ref{ssec:claimprod}, using standard calculus, we prove the following 
claim, which concludes the proof.

\begin{claim}
\label{claim:product}
For each $d>200$, if $a_i=(d/20)^i$,
\[
	c = \prod_{i=0}^{\infty}\left(1-e^{-a_i(d/100)}\right) \ge 1 - 4e^{-\frac{d}{100}}.
\]
\end{claim}
\end{proof}

\begin{lemma}[Flooding completes for a large fraction of nodes, phase 2]\label{le:big_set_exp}
Under the hypotheses of Theorem \ref{apx:thm:aeflooding}, a constant
$\tau_2=\Theta(d)$ exists such that, for $\tau_1=\bigO(\log n/\log d)$ (as in Lemma 
\ref{lem:flooding_terminates_part_1}) we have:
\begin{equation}
    \Prc{|I_{t_0+\tau_1+\tau_2}| \geq (1-e^{-d/10})n} \geq 1-4e^{-d/100}-o(1)\,.
\end{equation}
\end{lemma}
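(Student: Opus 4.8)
The plan is to let the large-subset expansion of Lemma~\ref{lem:exp_large_sub_SDG} drive two consecutive sub-phases of geometric progress, all applied to the snapshots $G_t$ for $t$ ranging over the \emph{deterministic} window $[t_0+\tau_1,\,t_0+\tau_1+\tau_2]$. By Lemma~\ref{lem:flooding_terminates_part_1} I may assume $|I_{t_0+\tau_1}|\ge 2n/d$, and since $2/d\ge e^{-d/10}$ for $d>200$, the informed set already exceeds the threshold $ne^{-d/10}$ at which the expansion bound applies, and it will stay above it throughout. If this starting set already has more than $n/2$ nodes, I would skip directly to the second sub-phase below.

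\emph{Growth of $I_t$.} While $ne^{-d/10}\le|I_t|\le n/2$, I would apply Lemma~\ref{lem:exp_large_sub_SDG} with $S=I_t$ to get $|\partial_{out}^t(I_t)|\ge 0.1|I_t|$. Since $I_{t+1}=(I_t\cup\partial_{out}^t(I_t))\cap N_{t+1}$ and the streaming churn deletes exactly one node per round, $|I_{t+1}|\ge 1.1|I_t|-1$, i.e.\ essentially factor-$1.1$ growth. Iterating, $|I_t|$ crosses $n/2$ after $\log_{1.1}(d/4)=\bigO(\log d)$ rounds.

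\emph{Shrinkage of $U_t:=N_t\setminus I_t$.} Once $|I_t|>n/2$ we have $|U_t|<n/2$ and can no longer feed $I_t$ to the expansion lemma; instead I would track the uninformed set. The key observation is that a node (other than the single newborn $y$) survives uninformed into round $t+1$ only if \emph{all} of its $G_t$-neighbours were themselves uninformed at round $t$. Hence, writing $W=U_{t+1}\setminus\{y\}$, every $G_t$-neighbour of $W$ lies in $U_t$, so $\partial_{out}^t(W)\subseteq U_t\setminus W$ and therefore $|\partial_{out}^t(W)|\le|U_t|-|W|$. Applying Lemma~\ref{lem:exp_large_sub_SDG} to $W$ (valid while $ne^{-d/10}\le|W|\le n/2$) gives $0.1|W|\le|U_t|-|W|$, i.e.\ $|U_{t+1}|\le|U_t|/1.1+1$. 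This is geometric shrinkage, and because the target $ne^{-d/10}$ is exponentially small in $d$, descending from $n/2$ to it costs $\log_{1.1}(e^{d/10}/2)=\Theta(d)$ rounds; this dominates the $\bigO(\log d)$ cost of the first sub-phase and is exactly the source of the bound $\tau_2=\Theta(d)$. When $|U_t|$ first reaches $ne^{-d/10}$ we have $|I_t|\ge(1-e^{-d/10})n$. The additive $+1$ per round, and the boundary case $|W|<ne^{-d/10}$, perturb the count only by lower-order terms that I would absorb into the constants.

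Finally I would assemble the probabilities. The two deterministic recursions hold on the event $\mathcal{E}$ that \emph{every} snapshot $G_t$ in the length-$\Theta(d)$ window is a good expander of large sets; Lemma~\ref{lem:exp_large_sub_SDG} makes each such snapshot good w.h.p., and since there are only $\Theta(d)=\bigO(1)$ of them, a union bound gives $\Prc{\overline{\mathcal{E}}}=o(1)$. Intersecting $\mathcal{E}$ with the phase-1 success event of Lemma~\ref{lem:flooding_terminates_part_1} (probability $\ge 1-4e^{-d/100}$) and union-bounding the two failure events yields the claimed $1-4e^{-d/100}-o(1)$. I expect the shrinkage step to be the main obstacle and the one genuinely new idea: Lemma~\ref{lem:exp_large_sub_SDG} controls only the \emph{outer} vertex boundary, whereas flooding progress is governed by the \emph{inner} boundary of $U_t$; the observation that the still-stuck nodes push all of their edges back into $U_t$ is precisely the device that converts one into the other.
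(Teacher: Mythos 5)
Your proposal is correct and follows essentially the same route as the paper's own proof: the same two sub-phases (geometric growth of $I_t$ up to $n/2$ via $|I_{t+1}|\ge 1.1|I_t|-1$, then geometric shrinkage of the uninformed set via $|U_{t+1}|\le |U_t|/1.1+1$), the same conversion of the outer-boundary expansion of Lemma~\ref{lem:exp_large_sub_SDG} into inner-boundary control by observing $\partial_{out}^t(U_{t+1})\subseteq U_t\setminus U_{t+1}$ up to the one churned node per round, and the same union bound over the $\Theta(d)$ snapshots intersected with the phase-1 event. The only cosmetic difference is that the paper additionally sets aside the oldest $\tau_2$ nodes of $N_{t_0+\tau_1}$ (which die during the window), a bookkeeping point your constants already absorb.
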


The proof of the above lemma, which is given in Subsection
\ref{ssec:le:big_set_exp}, heavily relies on the expansion properties of large
subsets proven in Lemma  \ref{lem:exp_large_sub_SDG}. In more detail, we first observe
that Lemma \ref{lem:flooding_terminates_part_1} implies
$|I_{t_0+\tau_1}|\geq 2n/d$.  We can then 
inductively  apply the expansion property stated by Lemma  \ref{lem:exp_large_sub_SDG} to the set of   informed nodes
$I_t$, for each $t \geq t_0+\tau_1$, until the size of this subset becomes $n/2$.
After that, the expansion property and the consequent inductive argument is
instead applied to the set of non-informed nodes.  The process ends when the
size of the set of non-informed nodes falls below $ \leq ne^{-d/10}$, since at
that point we can no longer apply Lemma \ref{lem:exp_large_sub_SDG}. We notice that, in the whole proof,
the oldest $\tau_2$ nodes in $N_{t_0+\tau_1}$ are never considered, since they
all die within the next $\tau_2$ steps.

\subsection{Streaming   graphs with edge regeneration} \label{ssec:streaming-with}
We now consider a variant of the streaming dynamic graph model where a node
creates its outgoing links not only when it joins the network, but also every
time it looses an outgoing link due to one of its neighbors leaving the
network. In this model, at every round $t$ the snapshot $G_t$ is a sparse
random graph having exactly $d n$ edges.

\begin{definition}[Streaming graphs with edge regeneration]\label{def:pure_streaming_with}
A \emph{Streaming Dynamic Graph with edge Regeneration} (for short, \SDGE) 
$\mathcal{G}(n,d)$ is a dynamic random graph $\{G_t = (N_t,E_t)\,:\, t \in \mathbb{N} \}$ 
where the set
of nodes $N_t$ evolves according to Definition~\ref{def:streaming_node_churns},
while the set of edges $E_t$ evolves according to the following \textit{topology
dynamics}: 
\begin{enumerate}[noitemsep]
\item When a new node appears, it creates $d$ independent connections, each one
with a node chosen uniformly at random among the nodes in the network.
\item When a node dies, all its incident edges disappear.
\item When a node has one of its   $d$ outgoing edges disappearing, it creates a new connection
with a node chosen uniformly at random among all nodes in the network.
\end{enumerate}
\end{definition}

\paragraph{Preliminary properties.}
We will prove that the streaming model with edge regeneration yields snapshots
having good vertex expansion. To derive the expansion properties we first
prove a bound on the edge probability. Informally, we show that, despite the
presence of nodes of different ages, making the edge distribution non uniform,
the probability that a fixed node chooses any other active node in the network
is still $\bigO(1/n)$.

For constant $d \geq 20$ and for sufficiently large $n$, let $\{G_t = (N_t, E_t) \,:\, t \in
\mathbb{N}\}$ be an \SDG\ sampled from $\mathcal{G}(n,d)$. For 
every fixed $t \geq n$, w.h.p. the snapshot $G_t$

\begin{lemma}\label{lem:node_destination}
  For every $d \geq 1$ and for every sufficiently large $n$, let $\{G_t = (N_t, E_t) \,:\, t \in
\mathbb{N}\}$ be an \SDGE\ sampled from $\mathcal{G}(n,d)$. For 
every fixed $t \geq n$, consider  the snapshot $G_t$.  Let $k \leq t-1$ and let  $u$ be the  node 
having age $k+1$. Then, if another node $v$ in $N_t$ is born before $u$, the probability that a single request of $u$ has destination $v$ is
\begin{equation} \label{eq:edgeprob1}
    \frac{1}{n-1} \left( 1+\frac{1}{n-1} \right)^k \,,
\end{equation}
while,  if $v$ is born after $u$, the probability that a single request of $u$ has destination $v$ is always $\leq \frac{1}{n-1}$.
\end{lemma}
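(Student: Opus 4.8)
The plan is to reduce the statement to the evolution of a single out-edge of $u$ and to track, round by round, the probability that its current endpoint is a fixed node $v$. Fix one of the $d$ requests of $u$. Its endpoint is governed by a sequence of \emph{selection events}: the initial selection at $u$'s birth, plus one fresh selection at each later round in which the current endpoint dies and is regenerated. Because every snapshot with $t\ge n$ has exactly $n$ nodes, each selection is uniform over the $n-1$ other currently alive nodes and is independent of the past. These two facts—uniformity over $n-1$ candidates and independence of fresh selections from history—are all I will use. A convenient consequence is the identity $\Prc{\text{endpoint}=v}=\frac{1}{n-1}\,\Expcc{R_v}$, where $R_v$ counts the selection events occurring in rounds when $v$ is alive: conditioned on a selection at round $r$ it lands on $v$ with probability $\frac1{n-1}$ independently of everything else, and since $v$ (being in $N_t$) survives, the slot can land on $v$ at most once, so these events are disjoint.

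For $v$ \emph{older} than $u$ I would prove the exact formula by induction on the number $k$ of regeneration rounds reflected in the snapshot, strengthening the claim to: \emph{at every round all alive nodes older than $u$ are equally likely to be the endpoint}, with common value $p_k=\frac{1}{n-1}\left(1+\frac1{n-1}\right)^{k}$. The base case is the birth selection, at which every candidate is older than $u$ and is picked with probability $\frac1{n-1}$. For the step, observe that the node that dies when the process advances one round is the globally oldest node, hence older than $u$; splitting according to whether the endpoint was already $v$ (which survives, contributing $p_{k-1}$) or was that dying node (probability $p_{k-1}$ by the inductive hypothesis, then re-selected onto $v$ with probability $\frac1{n-1}$) gives $p_k=p_{k-1}+p_{k-1}\cdot\frac1{n-1}=p_{k-1}\left(1+\frac1{n-1}\right)$. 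The two events are disjoint and exhaust the possibilities, and the same computation shows equiprobability is preserved, so the induction closes. A good sanity check is that these values together with the younger-node probabilities sum to $1$.

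For $v$ \emph{younger} than $u$ the identity above specialises to $\Prc{\text{endpoint}=v}=\frac1{n-1}\Expcc{R_v}$ with $R_v$ now counting only the regenerations that occur after $v$'s birth, since the birth selection of $u$ cannot land on a node that does not yet exist. Running the one-round recursion from $v$'s birth onward (with the ``already pointing to $v$'' term absent until $v$ exists) expresses $\Prc{\text{endpoint}=v}$ as a truncated geometric sum, and the stated bound $\frac1{n-1}$ is exactly the assertion $\Expcc{R_v}\le 1$: a single out-edge of $u$ is expected to regenerate at most once after any fixed later birth time.

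I expect this last step to be the main obstacle. The expected number of selection events over the full life of the edge is $\left(1+\frac1{n-1}\right)^{\mathrm{age}}$, a constant that exceeds $1$ once the age is a constant fraction of $n$, so $\Expcc{R_v}\le 1$ cannot follow from crudely bounding all regenerations; it must use that $[\,\text{birth of }v,\,t\,]$ is a proper sub-interval together with the ``demographics'' of the endpoint, namely that each regeneration resets the endpoint to a uniformly random node whose residual lifetime is on average of order $n$, so regenerations are typically $\Theta(n)$ apart and at most about one is expected within a window of length $\le n$. Tightening this intuition to the exact constant in the statement is delicate and is where the precise snapshot-timing convention of the model enters (under which the regeneration triggered at the current round is not yet reflected, so a node of age $k+1$ has exactly $k$ reflected regeneration rounds). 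For the expansion results that use this lemma only the weaker conclusion that every edge probability is $\Theta(1/n)$ is needed, and both the older and the younger case deliver that bound.
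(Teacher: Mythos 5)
Your treatment of the case where $v$ is older than $u$ is correct and, modulo packaging, matches the paper's argument: the paper sums $\binom{k}{i-1}\left(\frac{1}{n-1}\right)^{i}$ over the index $i$ of the assignment that hits $v$ and applies the binomial theorem, while you obtain the identical quantity through the one-round recursion $p_k=p_{k-1}\left(1+\frac{1}{n-1}\right)$; your strengthened hypothesis (equiprobability over all alive nodes older than $u$) is exactly what closes the recursion, since the node dying at each round is the globally oldest one and hence older than $u$. The identity $\Prc{\text{endpoint}=v}=\frac{1}{n-1}\Expcc{R_v}$ is also sound.

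The genuine gap is the younger-$v$ case, and it cannot be closed the way you set it up, because the reduction target $\Expcc{R_v}\le 1$ is false. Your own per-round accounting shows this: the probability that the fixed request regenerates at the round in which $u$ completes its $m$-th round of life is $p_{m-1}=\frac{1}{n-1}\left(1+\frac{1}{n-1}\right)^{m-1}$ (the probability that the current endpoint is the node about to die, which is the oldest one), so for $v$ born $k-j$ rounds after $u$ you get exactly $\Expcc{R_v}=\left(1+\frac{1}{n-1}\right)^{k}-\left(1+\frac{1}{n-1}\right)^{k-j-1}$, which for $k=n-1$ and $j=k-1$ equals $\left(1+\frac{1}{n-1}\right)^{n-1}-1\to e-1>1$. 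A concrete sanity check with $n=3$: if $u$ is the oldest node and $v$ was born one round after $u$, the endpoint of $u$'s request is $v$ with probability $\frac{1}{4}+\frac{3}{4}\cdot\frac{1}{2}=\frac{5}{8}>\frac{1}{2}=\frac{1}{n-1}$. So the heuristic that at most one regeneration is expected after any fixed later birth time is quantitatively wrong, and in fact the second claim of the lemma as stated does not hold; the uniform bound your framework actually delivers is $\frac{e-1}{n-1}$. For comparison, the paper's own proof of this case is a single sentence asserting that the request ``can choose $v$ only if some previous neighbour of $u$ leaves the network'' and that this has probability at most $\frac{1}{n-1}$, with no computation, so it does not survive the same scrutiny. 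None of this affects the downstream applications, which only invoke the bound $\frac{e}{n-1}$ per request; but to make your write-up correct you should either prove only that weaker $O(1/n)$ bound (which your identity gives immediately) or note that the constant in the second claim should be $e-1$ rather than $1$.
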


The almost-uniformity of the destination distribution stated in the lemma above
is essentially due to the fact that, in the streaming model,  every node node
has lifetime $n$ and, hence, it has at most $n$ chances to be chosen as
destination along the regeneration process. In formula, this yields, in the
worst case, the extra factor $(1 + 1/n)^{\bigO(n)}$ in \eqref{eq:edgeprob1}. The
full proof of the lemma is given in Subsection \ref{ssec:lem:node_destination}.

\subsubsection{Expansion properties}

In this subsection, we show that, for a sufficiently large constant $d$, the streaming model with edge regeneration yields snapshots having good vertex expansion.

For constant $d \geq 20$ and for sufficiently large $n$, let $\{G_t = (N_t, E_t) \,:\, t \in
\mathbb{N}\}$ be an \SDG\ sampled from $\mathcal{G}(n,d)$. For 
every fixed $t \geq n$, w.h.p. the snapshot $G_t$ 
 
 \begin{theorem}[Expansion] \label{thm:expansion-stream}
 For every $d \geq 14$ and  
 for every sufficiently large $n$, let $\{G_t = (N_t, E_t) \,:\, t \in
\mathbb{N}\}$ be an \SDGE\ sampled from $\mathcal{G}(n,d)$. Then, w.h.p.,    
  for  every fixed $t \geq n$,   the snapshot $G_t$ 
 is an   $\varepsilon$-expander with parameter $\varepsilon \geq 0.1$.
 \end{theorem}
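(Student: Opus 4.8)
The plan is to run the classical first-moment argument for vertex expansion of sparse random graphs, feeding in the edge-probability estimate of Lemma~\ref{lem:node_destination} as the only model-specific input. Fix the target $\varepsilon = 0.1$ and the round $t \ge n$ (the statement quantifies over a fixed $t$, so no union over time is needed). For a candidate bad set take $S \subseteq N_t$ with $|S| = s \le n/2$ and $T$ with $|T| = \tau = \lceil \varepsilon s\rceil$. By Definition~\ref{def:expander}, the event $h_{out}(G_t) < \varepsilon$ is contained in $\bigcup_{S,T} \mathcal{B}_{S,T}$, where $\mathcal{B}_{S,T} = \{\partial_{out}(S) \subseteq T\}$ is precisely the event that \emph{no edge} joins $S$ to $R := N_t \setminus (S \cup T)$. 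Everything reduces to bounding $\Prc{\mathcal{B}_{S,T}}$ sharply enough that the $\binom{n}{s}\binom{n}{\tau}$ choices in the union bound are absorbed, after which one sums over $s$. The quantitative input is that, by Lemma~\ref{lem:node_destination} and since every node has age at most $n$ (so $(1+\tfrac{1}{n-1})^{n-1} < e$), a single current out-request of any node hits a fixed node with probability at most $\tfrac{e}{n-1}$, and hits a fixed \emph{older} node with probability at least $\tfrac{1}{n-1}$. I treat the $d$ requests of a node, and the requests of distinct nodes, as independent, as this is how a snapshot is generated.

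\emph{Small sets.} For $s \le \beta n$ with a suitable constant $\beta \in (0,\tfrac12)$, I bound $\Prc{\mathcal{B}_{S,T}}$ by the probability that all $ds$ out-requests issued from $S$ land inside $S \cup T$, which is at most $\left(\tfrac{e(s+\tau)}{n-1}\right)^{ds}$. Multiplying by $\binom{n}{s}\binom{n}{\tau} \le (en/s)^s(en/\tau)^\tau$ and writing $x = s/n$, the contribution of size $s$ is at most $\left[\,e(10e)^{0.1}(1.1e)^{d}\,x^{\,d-1.1}\right]^{s}$. Since $d \ge 14$ makes the exponent $d-1.1$ positive and $1.1e < 3$, the bracket stays below $1/2$ for all $x \le \beta$ (this is exactly what pins $\beta$), so these terms sum to $o(1)$.

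\emph{Large sets.} For $s \ge \gamma n$ with a small constant $\gamma$, I use both directions of the no-edge event, which are independent since they involve disjoint randomness: with $R_{<u}$ (resp. $S_{<w}$) the elements of $R$ older than $u$ (resp. of $S$ older than $w$), the lower bound $\tfrac{1}{n-1}$ on older targets gives $\Prc{\mathcal{B}_{S,T}} \le \exp\!\left(-\tfrac{d}{n-1}\sum_{u\in S}|R_{<u}|\right)\exp\!\left(-\tfrac{d}{n-1}\sum_{w\in R}|S_{<w}|\right)$. The observation that makes this clean for \emph{every} partition, irrespective of its age demographics, is the identity $\sum_{u\in S}|R_{<u}| + \sum_{w\in R}|S_{<w}| = s\,|R|$: each of the $s|R|$ pairs $(u,w)\in S\times R$ is counted exactly once, according to which endpoint is older (there are no ties, one node being born per round). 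Hence $\Prc{\mathcal{B}_{S,T}} \le \exp\!\left(-\tfrac{d}{n-1}\,s|R|\right)$ with $|R| \ge (1 - 1.1\cdot\tfrac12)n = 0.45\,n$; multiplying by $\binom{n}{s}\binom{n}{\tau}$ and using $d \ge 14$ makes the size-$s$ contribution at most $2^{-s}$ for all $s \ge \gamma n$. For $d \ge 14$ one checks $\gamma < \beta$, so the two ranges overlap and every $s \in \{1,\dots,n/2\}$ is covered; summing the geometric bounds gives total failure probability $o(1)$ for the fixed snapshot $G_t$, proving $h_{out}(G_t) \ge 0.1$ w.h.p.

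I expect the main obstacle to be precisely the large-set regime, and specifically making the no-edge bound independent of the age structure of $S$ and $R$: the per-node lower bound $\tfrac{1}{n-1}$ applies only to \emph{older} targets, so a naive one-directional bound degenerates to nothing when $S$ is entirely younger than $R$. The identity $\sum_{u}|R_{<u}| + \sum_{w}|S_{<w}| = s|R|$ is what rescues this, summing the two directions so that every $S$--$R$ pair contributes through whichever endpoint is older. A single uniform bound over all $s$ is impossible (the both-directions estimate is too weak against the $\binom{n}{s}$ factor for tiny $s$, while the all-out-edges estimate exceeds $1$ for $s$ near $n/2$), so the two-regime split is genuinely needed. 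The remaining work --- justifying the independence of requests used above and optimizing constants to land exactly at $d \ge 14$ and $\varepsilon \ge 0.1$ --- is routine once the two regimes and the identity are in place.
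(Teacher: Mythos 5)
Your proposal is correct and follows essentially the same route as the paper: a union bound over pairs $(S,T)$ with $|T|=0.1|S|$, split into two size regimes, where small sets are handled by trapping all $d|S|$ out-requests of $S$ inside $S\cup T$ via the uniform upper bound $e/(n-1)$ from Lemma~\ref{lem:node_destination}, and large sets are handled via the lower bound $1/(n-1)$ on the probability of hitting a fixed \emph{older} target (the paper's cutoff is $n/4$; your overlapping $[\gamma n,\beta n]$ ranges serve the same purpose). The one point where you genuinely diverge is the age-symmetrization step for large sets: the paper (following Lemma~\ref{lem:exp_large_sub_SDG}) pigeonholes on which orientation accounts for at least half of the $|S|\cdot|R|$ pairs and uses only that direction, obtaining $e^{-d|S||R|/(2n)}$, whereas you multiply the two directional no-hit events --- legitimately, since they involve the disjoint request randomness of $S$ and of $R$ --- and invoke the exact identity $\sum_{u\in S}|R_{<u}|+\sum_{w\in R}|S_{<w}|=|S||R|$ to get $e^{-d|S||R|/(n-1)}$. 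Both resolve the same obstacle (the one-directional bound degenerates when $S$ is entirely younger than $R$); yours is marginally sharper, the paper's avoids any independence claim, and neither changes the admissible range of $d$ in an essential way.
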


The full proof of the above result is given in Subsection \ref{ssec:lem:expansion_small_streaming}, while an   overview is given below.
The proof is divided into two parts: the expansion for the big-size sets (with size in the range $[n/4,n/2]$) and the expansion for small-size subsets (with size in the range $[1,n/4]$). As for the first case, the analysis  is identical to that of Lemma \ref{lem:exp_large_sub_SDG} for the  \SDG\ model.  To analyze the expansion of   small-size subsets,  we show that, for every pair of vertex subset $S$, with $|S| \leq n/4$ and $T$, with $S \cap T = \emptyset$ and $|T|=0.1|S|$, the event ``all the out-neighbors of $S$ are in $T$'', i.e. $A_{S,T}=\{\partial_{out}(S)\subseteq T\}$, does happen with negligible probability.  
  To  give an   upper bound on $\Prc{A_{S,T}}$, we   observe that $A_{S,T}$  is bounded by the event that each link request of every  node in $S$ must have destination in  $S \cup T$.  Thanks to Lemma \ref{lem:node_destination}, for any pair of subset $S$ and $T$, we can derive the following bound
\begin{equation}
    \Prc{A_{S,T}} \leq \left(\frac{e}{n-1}\cdot |S \cup T|\right)^{d|S|}\,.
\end{equation}
Since $|S| \leq n/4$, using standard calculus, we show the above equation offers a sufficiently small bound. The theorem then follows from an union bound over all possible pairs $S,T \subseteq N_{t}$.

\subsubsection{Flooding}
An important consequence of the 
expansion property we prove in Theorem \ref{thm:expansion-stream}  is that the flooding process over the \SDGE\ model is fast and reaches all nodes of the network.    The proof of this fact for this streaming model is a simple adaptation of  
the expansion argument which is  typically used in   dynamic graph models with no node churn (see, for example,  \cite{CMPS11}).  The deterministic and limited  node churn   has in fact  a negligible impact in the analysis, only. The  proof is given in  Subsection \ref{ssec:thm:SDGE-flooding}.

\begin{theorem}[Flooding] \label{thm:SDGE-flooding}
For    every  $d \geq 21$, for every sufficiently large $n$, and for every fixed $t_0 \geq n$, w.h.p. the flooding process over an \SDGE\ sampled from $\mathcal{G}(n,d)$ starting at $t_0$ has completion time $ \bigO(\log n)$.
\end{theorem}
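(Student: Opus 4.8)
The plan is to combine the snapshot expansion of Theorem~\ref{thm:expansion-stream} with the standard two‑phase ``expander flooding'' argument, paying attention to the single subtlety introduced by node churn. Since we aim for a flooding time $\tau = \bigO(\log n)$, I would first fix the time window $\{t_0, t_0+1, \dots, t_0 + \tau\}$ and invoke Theorem~\ref{thm:expansion-stream} at each of these $\bigO(\log n)$ snapshots; a union bound shows that, w.h.p., every snapshot $G_t$ in the window is \emph{simultaneously} a $0.1$-expander. I would condition on this event for the rest of the argument and write $\varepsilon = 0.1$ throughout. (As the per-snapshot failure probability is $1/\poly(n)$, multiplying it by $\bigO(\log n)$ keeps the bad event negligible.)

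\emph{Phase 1 (growth).} Recall that $I_{t+1} = (I_t \cup \partial_{out}^t(I_t)) \cap N_{t+1}$. As long as $|I_t| \le n/2$, expansion of $G_t$ gives $|I_t \cup \partial_{out}^t(I_t)| \ge (1+\varepsilon)|I_t|$, and intersecting with $N_{t+1}$ discards at most the single node that dies at round $t+1$, so $|I_{t+1}| \ge (1+\varepsilon)|I_t| - 1$. To start the recursion I would use that the source $s$ (born at $t_0$) creates $d$ out-edges, which w.h.p. have distinct destinations, so $|I_{t_0+1}|$ is already a constant of size at least $2/\varepsilon$ (this is where $d\ge 21$ helps). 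From there $(1+\varepsilon)|I_t| - 1 \ge (1+\varepsilon/2)|I_t|$, and the informed set grows by a constant factor each round, reaching $|I_t| \ge n/2$ after $\bigO(\log n/\log(1+\varepsilon)) = \bigO(\log n)$ rounds.

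\emph{Phase 2 (shrinking the complement).} Let $U_t = N_t \setminus I_t$ and suppose $|U_t| \le n/2$. A node of $N_t \cap N_{t+1}$ remains uninformed at $t+1$ iff it lies in $A_t := \{v \in U_t : v \text{ has no neighbor in } I_t \text{ in } G_t\}$; hence $U_{t+1} \subseteq A_t \cup \{w_{t+1}\}$, where $w_{t+1}$ denotes the node born at round $t+1$. Every neighbor of $A_t$ lies in $U_t \setminus A_t$, so applying expansion to $A_t$ yields $|U_t| \ge (1+\varepsilon)|A_t|$, i.e. $|A_t| \le |U_t|/(1+\varepsilon)$, whence $|U_{t+1}| \le |U_t|/(1+\varepsilon) + 1$. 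This recursion drives $|U_t|$ from $n/2$ down toward its fixed point $(1+\varepsilon)/\varepsilon$ in $\bigO(\log n)$ rounds.

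\emph{Endgame, and the main obstacle.} The term $+1$ in the Phase-2 recursion is exactly where the proof departs from the no-churn case: unlike there, the recursion does \emph{not} force $|U_t|$ (nor the persistent uninformed set $U_t \cap N_{t-1}$) down to $0$, since its fixed point is the constant $(1+\varepsilon)/\varepsilon \approx 1/\varepsilon$. The hard part is therefore cleaning up this constant-size residual set, and here I would switch from expansion to the minimum-degree property of the regeneration model. Once $|U_t| < d$, any uninformed node $v$ --- which w.h.p. has $d$ distinct out-neighbors --- cannot have all of them inside $U_t \setminus \{v\}$, a set of fewer than $d$ vertices; hence $v$ has an informed neighbor and becomes informed at round $t+1$, and the same reasoning applies to $w_{t+1}$. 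Consequently all of $U_t$ is informed at $t+1$, leaving only the node born at $t+1$ uninformed, which is precisely the completion condition $I_{t+1} \supseteq N_t \cap N_{t+1}$. The threshold $1/\varepsilon \approx 10 < 21 \le d$ is what makes $d \ge 21$ sufficient: Phase 2 reaches $|U_t| < d$, and one further round completes the broadcast. Summing the $\bigO(\log n)$ rounds of the three stages, together with the $\bigO(\log n)$-term union bounds, gives flooding time $\bigO(\log n)$ w.h.p.
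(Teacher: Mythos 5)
Your proposal follows the same two-phase skeleton as the paper's proof in Subsection~\ref{ssec:thm:SDGE-flooding}: condition on the snapshots being $0.1$-expanders via Theorem~\ref{thm:expansion-stream}, grow $I_t$ by a constant factor up to $n/2$, then shrink the uninformed set $U_t=N_t\setminus I_t$ via the recursion $|U_{t+1}|\le |U_t|/(1+\varepsilon)+1$. Where you genuinely depart from the paper is at the two boundaries, and in both cases you are more careful than the paper is. The paper starts Phase~1 from $|I_{t_0}|=1$ with the recursion $|I_{t+1}|\ge 1.1|I_t|-1$, which actually stalls for $|I_t|<10$, and it ends Phase~2 by asserting $|S_{t_0+\tau_1+\tau_2}|<1$ from a recursion whose fixed point is the constant $10$; neither step is justified as written. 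You fix the first by bootstrapping with the source's $d\ge 21$ out-edges to get past the fixed point, and the second with an explicit endgame once $|U_t|<d$, exploiting that every node keeps out-degree $d$ in the regeneration model. This endgame is exactly the missing ingredient, and the threshold $d>1/\varepsilon$ is the right reason the constant $21$ suffices.

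The one soft spot is your justification of the endgame. The claim that an uninformed node ``w.h.p.\ has $d$ distinct out-neighbors'' does not survive a union bound over all $n$ nodes (the expected number of nodes with a repeated out-destination is $\Theta(d^2)$, a constant), and since $U_t$ is a random set you cannot simply restrict attention to its constantly many members. The clean way to finish is the same union bound used in Lemma~\ref{lem:expansion_small_streaming}: if some $v\in U_t\cap N_{t+1}$ other than the newborn stays uninformed, then the nonempty set $A_t$ of such nodes has all of its $d|A_t|$ out-requests confined to the set $U_t$ of size at most $d-1$; by Lemma~\ref{lem:node_destination} a fixed pair $(S,W)$ with $S\subseteq W$, $1\le|S|$, $|W|\le d-1$ realizes this with probability at most $\bigl(e(d-1)/(n-1)\bigr)^{d|S|}$, and summing over the at most $n^{d-1}$ such pairs and the $\bigO(\log n)$ rounds gives $\bigO(\log n/n)=o(1)$. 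With that substitution your argument is complete and, in my view, repairs a genuine gap in the paper's own write-up.
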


\section{The Poisson Model} \label{sec:Poisson}

A \textit{continuous dynamic graph} $\mathcal{G}$ is a continuous family of
graphs $\mathcal{G} = \{G_t = (N_t, E_t) \,:\, t \in \mathbb{R}^+\}$ where the
sets of nodes and edges can change at any time $t \in \mathbb{R}^+$. As in the
discrete case, we call $G_t$ the \textit{snapshot} of the dynamic graph at time
$t$ and, for a set of nodes $S \subseteq N_t$, we denote with
$\partial_{out}^{t}(S)$ the outer boundary of $S$ in snapshot $G_t$ and we omit
superscript $t$ when it is clear from context.

In this section we study expansion properties and flooding over two continuous-time dynamic graph models in which
nodes' arrivals follow a Poisson process and their lifetimes obey an
exponential distribution. 

\begin{definition}[Poisson node churn]\label{def:Poisson_node_churns}
Initially $N_0 = \emptyset$. Node arrivals in $N_t$ follow
a Poisson process with mean $\lambda$. Moreover, once a node joins the 
network, its lifetime  has exponential distribution with parameter $\mu$.
\end{definition}

While the definition of flooding is straightforward in the discrete case
(Definition~\ref{def:flooding_streaming}), where we assume that the sets of
nodes and edges can change and all the neighbors of an informed node gets
informed \textit{in one unit of time}, in the continuous case we need to
specify how the time it takes a message to flow from a node to its neighbors
and the changes in the topology of the graph relate to each other. Since we
want to preserve in the model the fact that a message takes one unit of time to
flow from an informed node to its neighbors, the most natural way to define the
flooding process in a continuous setting would be the following
``asynchronous'' version.


\begin{definition} (``Asynchronous'' Flooding)\label{def.async.flood}
Let $\calG = \{G_t = (N_t, E_t) \,:\, t \in \mathbb{R}^+\}$ be a dynamic
(random) graph. The flooding process over $\calG$ starting at time $t_0$ from
vertex $v_0 \in N_{t_0}$ is the sequence of (random) sets of nodes $\{ I_t
\,:\, t \in \R^+ \}$ where, $I_t = \emptyset$ for all $t < t_0$, $I_{t_0} =
\{v_0\}$ (in this paper we will thus assume that $I_0$ contains the node joining the network at round $t_0$) and, for every $t \geqslant t_0$, $I_{t}$ contains
all nodes in $N_{t}$ that were neighbor of some node in $I_{t-1}$ in the snapshot
$G_{t-1}$, in addition to all previously informed nodes
\[
I_{t} = 
\left(  
\left( \bigcup_{t'<t} I_{t'} \right) \cup \partial_{out}^{t-1}(I_{t-1})
\right) 
\cap N_{t} \, .
\]
We say that the nodes in $I_t$ are \emph{informed} at time $t$.  We say that
the flooding \emph{completes} the broadcast if a time $t$ exists such that $I_t
\supseteq N_t$, in this case the time $t-t_0$ is the \emph{flooding time} of
the source message.
\end{definition}

In order to analyze the process of Definition \ref{def.async.flood}, it will be convenient to define the  discretized process below, in which nodes are informed only at discrete times. 

\begin{definition} (``Discretized'' Flooding) \label{def:flooding_poisson}
Let $\calG = \{G_t = (N_t, E_t) \,:\, t \in \mathbb{R}^+\}$ be a continuous
dynamic (random) graph. The flooding process over $\calG$ starting at time $t_0
\in \R^+$ from vertex $v_0 \in N_{t_0}$ is the sequence of (random) sets of
nodes $\{ I_t \,:\, t \in \mathbb{N} \}$ where, $I_t = \emptyset$ for all $t <
t_0$, $I_{ t_0} = \{v_0\} $ and, for
every $t$ of the form $t_0 + m$ with integer $m$, $I_{t}$
contains all nodes in $I_{t-1}$ that did not die in the time interval $(t-1,
t)$ and all nodes in $N_{t}$ that have been neighbor of some node in $I_{t-1}$
for the whole time interval $(t-1, t)$:
\[
I_{t} = 
\left(  
I_{t-1} \cup \partial_{out}^{t-1}(I_{t-1} \cap N_t)
\right) 
\cap N_{t} \, .
\]
We say that the nodes in $I_t$ are \emph{informed} at round $t$. We say that
the flooding \emph{completes} the broadcast if a round $t$ exists such that
$I_t \supseteq N_t$, in this case the time $t- t_0$ is the
\emph{flooding time} of the source message.
\end{definition}

The discretized process, which is artificial and is defined only for the purpose of the analysis, can be thought of as the asynchronous modified in such a way that an informed node waits until a discrete time before sending messages, Thus, the convergence of the discretized flooding can only be slower than the convergence of the asynchronous flooding, and any upper bound that we prove on the convergence time of the former will apply to the latter.

Our negative results, however, also apply to Definition \ref{def.async.flood}.

\subsection{Poisson node churning} \label{ssec:Poisson-Prely}

In this subsection, we present useful properties of Poisson dynamic
graphs that only depend on the random node churn process and therefore 
apply to both variants of the model, i.e., with and without edge regeneration. 

We remark that,  according to Definition \ref{def:Poisson_node_churns} 
above, the time interval between two consecutive node arrivals is an 
exponential random variable of parameter $\lambda$,
while the number of nodes joining the network in a time interval of 
duration $\tau$ is a Poisson random variable with expectation 
$\tau \cdot \lambda$. We finally note that the stochastic continuous 
process $\{N_t:t \in \mathbb{R}^+\}$  is clearly a continuous Markov Process.

A first important fact our analysis relies on is that  we can
 bound the number of active nodes at every time. In particular, it is easy to show that  
 $\Expcc{|N_t|} \rightarrow \lambda/\mu$ and, moreover, we have the following bound in concentration.

\begin{lemma}[Pandurangan et al. \cite{pandurangan2003building} - Number of nodes in the network]
\label{thm:concentration_nodes}
For every pair of parameters $\lambda$ and $\mu$ such that 
$n=\lambda/\mu$ is sufficiently large, 
consider the  Poisson node churn $\{N_t \,:\, t \in \mathbb{R}^+ \}$       in Definition~\ref{def:Poisson_node_churns}.
Then, for     every fixed    real $t \geq 3n$, w.h.p. $|N_t|=\Theta(n)$ and, more
precisely,
\begin{equation}
\Prc{0.9n \leq |N_t| \leq 1.1n}
\geq 1-2e^{-\sqrt{n}} \,.
\end{equation}
\end{lemma}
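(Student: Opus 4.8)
The plan is to identify the occupancy process $\{|N_t|\}$ with that of an $M/M/\infty$ queue started empty, and to exploit the classical fact that, for such a process, the number in the system at any fixed time is \emph{exactly} Poisson distributed. Concretely, arrivals form a rate-$\lambda$ Poisson process on $[0,\infty)$, and an arrival occurring at time $s \leq t$ is still alive at time $t$ precisely when its (independent) exponential lifetime exceeds $t-s$, an event of probability $e^{-\mu(t-s)}$.

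First I would invoke the marking/thinning theorem for Poisson processes: marking each arrival as ``alive at $t$'' independently with the location-dependent probability $e^{-\mu(t-s)}$ turns the arrival process into a thinned Poisson process, so that $|N_t|$, the number of arrivals in $[0,t]$ that survive to $t$, is Poisson with mean
\[
m_t = \int_0^t \lambda\, e^{-\mu(t-s)}\,ds = \frac{\lambda}{\mu}\left(1 - e^{-\mu t}\right) = n\left(1 - e^{-t/n}\right),
\]
using the normalization $\lambda = 1$, $n = 1/\mu$. The initial condition $N_0 = \emptyset$ is exactly what makes this clean: there is no initial population to track, only the survivors of arrivals in $[0,t]$, and the independence of lifetimes from each other and from the arrival times guarantees that the marking is valid.

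Next I would evaluate $m_t$ under the hypothesis $t \geq 3n$. Since then $e^{-t/n} \leq e^{-3} < 0.05$, we get $0.95\, n \leq m_t < n$, so the mean already lies in a narrow band around its stationary value $n$. It then remains to apply standard Poisson tail (Chernoff) bounds around $m_t$. For the lower deviation, $\Prc{|N_t| \leq 0.9\,n}$ is at most the lower tail evaluated a constant relative distance below $m_t \geq 0.95\,n$, which is $e^{-\Omega(n)}$; symmetrically, $\Prc{|N_t| \geq 1.1\,n}$ is the upper tail a constant relative distance above $m_t < n$, again $e^{-\Omega(n)}$. Since $e^{-\Omega(n)} \leq e^{-\sqrt{n}}$ for all sufficiently large $n$, a union bound over the two tails yields the claimed $1 - 2e^{-\sqrt{n}}$.

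The only genuinely delicate point is the first step, namely justifying that $|N_t|$ is \emph{exactly} Poisson; everything downstream is routine Poisson concentration, and in fact the Chernoff bounds give the far stronger $e^{-\Omega(n)}$, so the weaker $e^{-\sqrt{n}}$ stated in the lemma leaves ample slack. Because this distributional identity is precisely the transient analysis of the $M/M/\infty$ queue, an equally acceptable route is simply to cite Pandurangan et al.~\cite{pandurangan2003building} for the Poisson law of $|N_t|$ and then carry out only the concentration step; the thinning argument above is the self-contained derivation of the same fact.
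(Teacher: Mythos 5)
Your proposal is correct. Note that the paper does not actually prove this lemma: it imports it from Pandurangan et al.\ \cite{pandurangan2003building} and only states it, so there is no in-paper argument to compare against. Your derivation is the standard one for the transient $M/M/\infty$ occupancy: the marking/thinning theorem gives that $|N_t|$ is exactly Poisson with mean $m_t=\frac{\lambda}{\mu}\bigl(1-e^{-\mu t}\bigr)$, the hypothesis $t\ge 3n$ pins $m_t$ into $[0.95n,n)$, and a Poisson tail bound (e.g.\ the paper's own Theorem~\ref{thm:tail_bound_poisson_distribution} applied with deviation $0.05n$) yields $2e^{-\Omega(n)}\le 2e^{-\sqrt{n}}$ for $n$ large. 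The only point worth flagging is that your evaluation $e^{-t/n}\le e^{-3}$ uses the normalization $\lambda=1$; for general $\lambda$ the condition $t\ge 3n$ only gives $\mu t\ge 3\lambda$, so the bound $m_t\ge 0.95n$ fails when $\lambda$ is small. Since the paper explicitly adopts the time normalization $\lambda=1$ throughout its Poisson analysis, this is a defect of the lemma's phrasing rather than of your argument, but it would be worth stating the normalization as a hypothesis rather than burying it mid-proof.
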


Leveraging Lemma \ref{thm:concentration_nodes},  our  analysis of the Poisson 
considers the setting $\lambda=1$ without loss of generality. In the 
remainder, we define the key parameter $n=\frac{1}{\mu}$ representing the  ``expected'' size 
of the network. Moreover,   since the probability that two or more 
churn events occur at the same time is zero, the  points of change 
of the dynamic graph   yield a discrete-time sequence  of   
\emph{events}. In particular, we can observe and prove properties of the dynamic 
graph only when one event changing the graph occurs, namely, the 
arrival of a new node or the death of an existing one. 

\begin{definition}
\label{def:T_i}
Let $\{N_t \,:\, t \in \mathbb{R}^+ \}$ be a Poisson node churn  as in
Definition~\ref{def:Poisson_node_churns}.  We
define   the infinite sequence of random variables \emph{steps} (also called \emph{rounds}) $\{T_r: r \in \mathbb{N}\}$ (with parameters $\lambda$ and $\mu$) as follows:
\[ 
T_0 \, = \, 0 \quad \text{ and } \quad T_{r+1} \, = \, \inf\{t > T_r \, : \, N_t \neq N_{T_r} \, \} \, , \mbox{ for } r =  0, 1, 2, \ldots .
\]
\end{definition}

It is worth mentioning that, since the Poisson stochastic process $\{N_t:t \in \mathbb{R}^+\}$ is a countinuous Markov process, the above defined stochastic process 
$\{N_{T_r}: r \in \mathbb{N}\}$ consistently is a discrete Markov chain.

Thanks to Theorem \ref{thm:minimum_exp} in the Appendix, we can easily 
find the law of the random variables that define the time steps at
which new events occur. The proof of the next lemma is given in 
Subsection \ref{ssec:lem:prop_In}.

\begin{lemma}[Jump process]
\label{lem:prop_Tn} The stochastic process $\{N_{T_r} \,, r \in \mathbb{N}\}$ in Definition \ref{def:T_i} is a discrete Markov chain where, for every fixed integer $r \geq 0$ and for every integer $N \geq 0$, conditional to the event  ``$|N_{T_r}|=N$'', $T_{r+1}$ 
  is a  random variable of exponential distribution with parameter $N\mu + \lambda$. Moreover, 
\begin{align}
    \Prc{|N_{T_{r+1}}|=|N_{T_r}|-1 \mid |N_{T_r}|=N}= \frac{N\mu}{N\mu + \lambda}\,, \\
    \label{eq:bound_birth}
    \Prc{|N_{T_{r+1}}|=|N_{T_r}|+1 \mid |N_{T_r}|=N}= \frac{\lambda}{N\mu + \lambda}\,.
\end{align}
Finally, for every fixed  node $v \in N_{T_r}$, the probability that the decreasing of $N_{T_r}$ is due to the death of $v$ is 
 
\begin{equation}
    \Prc{v \not \in N_{T_{r+1}} \mid v \in N_{T_r}, |N_{T_r}|=N}= \frac{\mu}{N \mu + \lambda}\,.
\end{equation}
\end{lemma}

The  next lemma shows that the probability of the next event being a node 
arrival or death is close to $1/2$ since, for large enough $r$,  
$|N_t|$  is w.h.p. close to  $n$. The proof is deferred to Subsection 
\ref{ssec:lem:N_m+1_and_death}.

\begin{lemma} \label{lem:N_m+1_and_death}
For every sufficiently large $n$,   consider  the Markov chain  $\{N_{T_r} \,, r \in \mathbb{N}\}$     in  Definition \ref{def:T_i} with parameters $\lambda =1$ and $\mu = 1/n$.    Then, for  every fixed integer $r \geq n \log n$,  
\begin{align}
\label{eq:bound_death}
  0.47 \leq \Prc{|N_{T_{r+1}}|=|N_{T_r}|-1 } \leq 0.53 \quad \hbox{and} \quad  0.47 \leq \Prc{|N_{T_{r+1}}|=|N_{T_r}|+1 } \leq 0.53\,.
\end{align}
Moreover, if $v \in N_{T_{r}}$,
\begin{equation}
    \frac{1}{2.2n} \leq \Prc{v \not \in N_{T_{r+1}} \mid v \in N_{T_r}} \leq \frac{1}{1.8n} \, .
    \label{eq:v_not_in_T_r+1}
\end{equation}

\end{lemma}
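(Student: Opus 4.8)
The plan is to use Lemma~\ref{lem:prop_Tn} to rewrite every quantity in the statement as the expectation of a simple function of the random population size $|N_{T_r}|$, and then feed in the concentration of $|N_{T_r}|$ around $n$. First I would substitute $\lambda = 1$ and $\mu = 1/n$ into Lemma~\ref{lem:prop_Tn}, which turns the conditional probabilities there into $\Prc{|N_{T_{r+1}}| = |N_{T_r}|-1 \mid |N_{T_r}| = N} = \frac{N}{N+n}$, $\Prc{|N_{T_{r+1}}| = |N_{T_r}|+1 \mid |N_{T_r}| = N} = \frac{n}{N+n}$, and $\Prc{v \notin N_{T_{r+1}} \mid v \in N_{T_r}, |N_{T_r}| = N} = \frac{1}{N+n}$. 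Averaging over the law of $|N_{T_r}|$ gives $\Prc{|N_{T_{r+1}}| = |N_{T_r}|-1} = \Expcc{\frac{|N_{T_r}|}{|N_{T_r}|+n}}$; since every event changes the population by exactly $\pm 1$, the birth probability is $1$ minus the death probability, so the two displays in \eqref{eq:bound_death} are equivalent. Likewise $\Prc{v \notin N_{T_{r+1}} \mid v \in N_{T_r}} = \Expcc{\frac{1}{|N_{T_r}|+n} \,\middle|\, v \in N_{T_r}}$. Each of the three maps $N \mapsto N/(N+n)$, $n/(N+n)$, $1/(N+n)$ is monotone, and on the band $N \in [0.9n, 1.1n]$ they land strictly inside the target intervals: $0.9/1.9 \approx 0.474$ and $1.1/2.1 \approx 0.524$ both lie in $[0.47,0.53]$, while $1/(2.1n)$ and $1/(1.9n)$ both lie in $[1/(2.2n), 1/(1.8n)]$.

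Everything therefore reduces to the key fact that, for every fixed $r \ge n\log n$, the population is concentrated, $\Prc{0.9n \le |N_{T_r}| \le 1.1n} \ge 1 - o(1)$. I would obtain this from the fixed-time concentration of Lemma~\ref{thm:concentration_nodes} by relating the round index $r$ to continuous time: while the chain stays below $1.1n$ each holding time is $\mathrm{Exp}(|N_{T_r}|/n + 1)$ with rate in $[1, 2.1]$, so a concentration bound on the sum of these stochastically comparable exponentials shows $T_r \ge 3n$ w.h.p.\ once $r \ge n\log n$. This is exactly the burn-in the population needs, starting from $\emptyset$, to climb into the band, and it places us in the regime where Lemma~\ref{thm:concentration_nodes} applies.

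The hard part will be this transfer from deterministic to random sampling times: Lemma~\ref{thm:concentration_nodes} controls $|N_t|$ at a \emph{fixed} $t$, whereas $T_r$ is random and conditioning on $\{T_r = t\}$ reshapes the law of the process; moreover a naive union bound of the fixed-time estimate over $[3n, T_r]$ would fail for exponentially large $r$, where $T_r$ is itself exponentially large. I would therefore lean on the strong drift of the embedded chain toward $n$ — from any state $N > n$ the next move is a death with probability $\frac{N}{N+n} > \frac12$, and symmetrically for $N < n$ — which both supplies the burn-in carrying the population into the band by round $n\log n$ and keeps the band $[0.9n, 1.1n]$ ``sticky'', so that for \emph{every} $r \ge n\log n$ the chain sits near its stationary law (concentrated around $n$), uniformly in $r$. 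Finally I would combine the band estimate with the monotonicity computation of the first step: on the good event each of the three functions lies in the stated interval, while on the complementary event (probability $o(1)$) I bound them crudely in $[0,1]$, respectively by $1/n$; the $o(1)$ slack is swallowed by the gaps $0.474 > 0.47$, $0.524 < 0.53$, $\tfrac{1}{2.1n} > \tfrac{1}{2.2n}$, $\tfrac{1}{1.9n} < \tfrac{1}{1.8n}$ for all large $n$, and the conditioning on $v \in N_{T_r}$ in the last bound only shifts the population by $O(1)$, which the width $0.2n$ of the band absorbs.
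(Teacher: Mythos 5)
Your proposal is correct, and the core of it — decomposing via the law of total probability over the event $\{|N_{T_r}|\in[0.9n,1.1n]\}$, plugging in the conditional probabilities $\tfrac{N}{N+n}$, $\tfrac{n}{N+n}$, $\tfrac{1}{N+n}$ from Lemma~\ref{lem:prop_Tn} with $\lambda=1$, $\mu=1/n$, checking that these land inside the target intervals on the band, and absorbing the bad-event slack into the gaps $0.474>0.47$, $0.524<0.53$, etc. — is exactly what the paper does. The one genuine divergence is in how the band event is justified. The paper simply writes $\Prc{|N_{T_r}|\in[0.9n,1.1n]}\ge 1-1/n^2$ ``for Lemma~\ref{thm:concentration_nodes}'', even though that lemma is stated for a \emph{fixed} real time $t\ge 3n$, not for the random round time $T_r$; the transfer is left implicit. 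You correctly flag this as the delicate step and propose to handle it by combining a burn-in estimate (relating $r\ge n\log n$ to $T_r\ge 3n$ via the holding times) with a drift/stickiness argument for the embedded birth–death chain, which is needed to get a bound uniform in arbitrarily large $r$ where a union bound over fixed times would fail. Your route is therefore somewhat longer but more self-contained on this point; the paper's is shorter at the cost of gliding over the fixed-time-versus-stopping-time distinction. The same remark applies to your observation that conditioning on $v\in N_{T_r}$ could in principle reshape the law of $|N_{T_r}|$: the paper dismisses the second display with ``by a similar argument,'' whereas you at least note that the width of the band absorbs this effect. Neither difference changes the conclusion, and your sketch of the drift argument, while not carried out in full, is standard and would go through.
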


The next lemma provides a useful bound on the   lifetime of any node in 
the network. The proof is given  in Subsection \ref{ssec:thm:life_of_nodes}.

\begin{lemma}[Lifetime of the nodes]
\label{lem:life_of_nodes} 

For every sufficiently large $n$   consider  the Markov chain  $\{N_{T_r} \,, r \in \mathbb{N}\}$     in  Definition \ref{def:T_i} with parameters $\lambda =1$ and $\mu = 1/n$.
Then, for  every fixed integer $r \geq 7 n \log n$,
 with probability at 
least $1-2/n^{2.1}$, each node in $N_{T_r}$ was born after step 
$T_{r-7n \log n}$.
\end{lemma}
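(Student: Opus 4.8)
The plan is to prove the statement by a first–moment (union bound) argument run \emph{forward} in time. Write $s = r - 7n\log n \ge 0$. The key reduction is that any node counted in $N_{T_r}$ but born at or before step $T_s$ must have been continuously alive throughout the whole window of steps $s, s+1, \dots, r$; in particular it already belongs to $N_{T_s}$. Hence it suffices to bound the probability that \emph{some} node alive at step $s$ survives all the way to step $r$, and then union-bound over the (few) nodes in $N_{T_s}$. I work forward in time because the per-step death probability is exactly the quantity controlled, in that direction, by Lemma~\ref{lem:prop_Tn} and Lemma~\ref{lem:N_m+1_and_death}.

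For a fixed node $u\in N_{T_s}$ I would bound the survival probability $\Prc{u \in N_{T_r} \mid u \in N_{T_s}}$ step by step. By Lemma~\ref{lem:prop_Tn}, conditioned on the history up to a step with $|N_{T_{r'}}|=N$ and $u$ still alive, $u$ dies at that step with probability exactly $\mu/(N\mu+\lambda) = 1/(N+n)$ (using $\lambda=1$, $\mu=1/n$). These per-step survival events factorize through the Markov chain $\{N_{T_{r'}}\}$. To turn the resulting product into a clean bound I condition on the global high-probability event $\mathcal E$ that $|N_{T_{r'}}| \le 1.1\,n$ for every step $r'$ in the window $[s,r]$: by Lemma~\ref{thm:concentration_nodes} each such step violates this with probability $\le 2e^{-\sqrt n}$, and since the window has only $7n\log n + 1$ steps a union bound gives $\Prc{\overline{\mathcal E}} = o(n^{-2.1})$ (the at most $\bigO(n)$ earliest steps, where concentration has not yet kicked in, satisfy $|N_{T_{r'}}|\le 3n$ trivially and are far too few to matter). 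On $\mathcal E$ each survival factor is at most $1 - 1/(2.2\,n)$ on all but at most $3n$ of the steps, so at least $6.9\,n\log n$ of them contribute this factor and
\[
\Prc{\,u \in N_{T_r}\ \text{and}\ \mathcal E \mid u \in N_{T_s}\,}
\;\le\; \left(1 - \frac{1}{2.2\,n}\right)^{6.9\,n\log n}
\;\le\; n^{-3.1}\,.
\]

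Finally I would union-bound over the nodes alive at step $s$. Since on $\mathcal E$ we have $|N_{T_s}| \le 1.1\,n$, the probability that some node alive at $s$ survives to $r$ is at most
\[
\Prc{\overline{\mathcal E}} + 1.1\,n \cdot n^{-3.1} \;\le\; o(n^{-2.1}) + 1.1\,n^{-2.1} \;\le\; \frac{2}{n^{2.1}}
\]
for $n$ large, which is exactly the claimed bound; note that $\Prc{\overline{\mathcal E}}$ is added only \emph{once}, globally, rather than once per node.

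The main obstacle is the rigorous treatment of the product of per-step survival probabilities: because the death of $u$ itself perturbs the trajectory $\{|N_{T_{r'}}|\}$, one cannot naively condition on that trajectory and then multiply independent factors. The clean fix is to run the computation on the Markov filtration, inserting the death-probability lower bound $1/(2.2\,n)$ \emph{pointwise} on $\mathcal E$, so that the multiplicative bound holds conditionally on the good event. A secondary, purely quantitative nuisance is the warm-up window: Lemma~\ref{lem:N_m+1_and_death} guarantees a per-step death probability $\ge 1/(2.2\,n)$ only from step $n\log n$ onward, so for the smallest admissible $r$ one must check that enough of the window $[s,r]$ lies in the concentrated regime; the choice $r \ge 7n\log n$ leaves a comfortable margin of at least $6.9\,n\log n$ ``good'' steps, and it is precisely this margin that lets the $n^{-3.1}$ per-node exponent beat the $\bigO(n)$-size union bound.
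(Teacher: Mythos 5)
Your proposal is correct and follows essentially the same route as the paper: bound the survival probability of a single node over the window of $7n\log n$ steps by $\left(1-\frac{1}{2.2n}\right)^{\Theta(n\log n)}\le n^{-3.1}$ using the per-step death probability from Lemmas~\ref{lem:prop_Tn} and~\ref{lem:N_m+1_and_death}, then union-bound over the $\bigO(n)$ relevant nodes and add the failure probability of the concentration event. The only (cosmetic) differences are that you union-bound over $N_{T_{r-7n\log n}}$ rather than $N_{T_r}$ and that you condition on concentration holding throughout the whole window rather than only at $T_r$ — which is in fact the more careful way to justify inserting the factor $1-\frac{1}{2.2n}$ at every step.
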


\subsection{Poisson graphs without edge regeneration}
\label{ssec:poisson_without_reg}
We consider two variants of dynamic graphs with node churns governed by Poisson
processes that mirror the two dynamics in
Definitions~\ref{def:pure_streaming_without} and~\ref{def:pure_streaming_with}.
In this subsection, we consider the first variant, in which new edges are
created only when a new node joins the network.

\begin{definition}[Poisson dynamic graphs without edge regeneration]
\label{def:Poisson1}
A \emph{Poisson Dynamic Graph without edge regeneration} (for short, \PDG)
$\mathcal{G}(\lambda, \mu, d)$  is a continuous dynamic random graph $\{G_t =
(N_t,E_t)\,:\, t \in \mathbb{R}^+ \}$ where the set of nodes $N_t$ evolves
according to Definition~\ref{def:Poisson_node_churns}, while the set of edges
$E_t$ according to the following \textit{topology dynamics}:
\begin{enumerate}[noitemsep]
\item When a new node appears, it creates $d$ independent connections, each one
with a node chosen uniformly at random among the nodes in the network.
\item When a node dies, all its incident edges disappear.
\end{enumerate}
\end{definition}

\paragraph{Preliminary properties.}
Similarly to the streaming model, the Poisson model without edge regeneration
may result in the presence of a linear fraction of isolated nodes. The proof of
this negative result proceeds along the same lines as the case of the streaming
model (Lemma~\ref{lem:isolated_nodes}). In more detail, we leverage
Lemma~\ref{thm:concentration_nodes} and Lemma~\ref{lem:life_of_nodes} to
characterize the random churn. The full proof is given in
Subsection~\ref{ssec:lem:isolated_nodes_poisson}.

\begin{lemma}[Isolated nodes]
\label{lem:isolated_nodes_poisson}
For every positive constant $d$ and 
for every sufficiently large $n$, let $\{G_t = (N_t, E_t) \,:\, t \in
\mathbb{R}^+\}$ be a \PDG\ sampled from $\mathcal{G}(\lambda,\mu,d)$ with $\lambda=1$ and $\mu=1/n$. For every fixed integer $r \geqslant 7n \log n$, 
w.h.p. the number of isolated nodes in $G_{T_r}$ is at least   $\frac{1}{18}n 
e^{-2d}$. Moreover, w.h.p., each of these nodes will remain isolated across its
entire lifetime.
\end{lemma}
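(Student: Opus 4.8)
The plan is to mirror the streaming argument of Lemma~\ref{lem:isolated_nodes}, first neutralising the extra randomness coming from the Poisson churn and then running a first-moment computation followed by a concentration argument on the number of isolated nodes. Concretely, I would start by conditioning on a high-probability \emph{regular-churn} event $\mathcal{R}$, defined as the intersection of (i) the event that $0.9n \leq |N_{T_s}| \leq 1.1n$ for every step $s$ in the window $[r-7n\log n,\, r]$ (Lemma~\ref{thm:concentration_nodes} together with a union bound over the $\bigO(n\log n)$ steps), and (ii) the event that every node alive at $T_r$ was born after step $T_{r-7n\log n}$ (Lemma~\ref{lem:life_of_nodes}). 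On $\mathcal{R}$, the per-step death probability of a fixed node is $\Theta(1/n)$ (Lemma~\ref{lem:N_m+1_and_death}) and the population has essentially fixed size $\Theta(n)$, so for the purpose of counting the process looks like a streaming model and the remaining randomness is carried by the uniform edge choices made at birth (and there are no regeneration choices, since this is the \PDG\ model).

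The second step is to lower-bound the expected number of nodes that are isolated at $T_r$ and remain isolated for the rest of their life. I would fix a set $C$ of candidate nodes consisting of those alive at $T_r$ whose age (in real time) lies in an intermediate range $[c_1 n,\, c_2 n]$: old enough that each of the $d$ out-neighbours chosen at birth has residual lifetime (memoryless $\mathrm{Exp}(1/n)$) short enough to die before $T_r$ with a constant probability bounded below by $e^{-1}$, yet young enough that a constant fraction of such nodes is still alive at $T_r$; on $\mathcal{R}$ one checks $|C|=\Theta(n)$. For a fixed $v \in C$ I would estimate $\Prc{X_v=1}$, where $X_v$ is the indicator that (a) all $d$ out-edges of $v$ point to nodes that die before $T_r$, and (b) no node ever selects $v$ over the entire life of $v$. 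Event (a) contributes a factor $\geq e^{-d}$ (the $d$ out-neighbours die independently, each with probability $\geq e^{-1}$), and event (b) contributes a factor $\geq e^{-\bigO(d)}$, because over a lifetime of $\Theta(n)$ steps the expected number of times $v$ is selected is $\Theta(d)$ and a Poisson-type tail gives probability $e^{-\Theta(d)}$ of never being selected; tracking constants through the $0.9/1.1$ slack yields $\Prc{X_v=1}\geq \tfrac{1}{18}e^{-2d}$ per candidate and hence $\Expcc{\sum_{v\in C}X_v}\geq \tfrac{1}{18}n e^{-2d}$. Crucially, if (a) and (b) both hold, then $v$ is isolated at $T_r$ and stays isolated until it dies, which is exactly the ``entire lifetime'' part of the claim.

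Finally I would establish concentration of $X=\sum_{v\in C}X_v$ about its mean. This is where the method of bounded differences is needed: the $X_v$ are \emph{not} independent, since whether $v$ is ever selected depends on the out-edge choices of all other nodes, so one node's choice influences several indicators. Conditioned on $\mathcal{R}$ the churn is fixed, so I would treat the $\bigO(dn\log n)$ edge-choices as independent coordinates and apply McDiarmid's inequality: re-routing a single out-edge changes that edge's destination and hence at most $\bigO(1)$ of the selection events (b) plus the single event (a) of the node itself, giving a bounded-difference constant $\bigO(1)$; deviations of order $n^{0.6}$, which is $o(n e^{-2d})$ for constant $d$, then fail with probability $e^{-\Omega(n^{0.2}/\log n)}$, and combining this with $\Prc{\overline{\mathcal{R}}}$ yields the high-probability lower bound. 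I expect the two delicate points to be: (1) reconciling the conditioning on $\mathcal{R}$ with the independence McDiarmid requires, i.e.\ arguing that after conditioning the edge destinations remain close to uniform and independent, which is precisely where Lemmas~\ref{lem:prop_Tn} and~\ref{lem:N_m+1_and_death} do the work; and (2) controlling event (b) over $v$'s \emph{future}, since $v$ may be selected by nodes born after $T_r$, which forces me to sum the per-step selection probability over the remaining $\bigO(n\log n)$ steps of $v$'s life while staying on the regular-churn event.
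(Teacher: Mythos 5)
Your overall strategy matches the paper's: condition on a regular-churn event (the paper's $L_r$, built from Lemmas \ref{thm:concentration_nodes} and \ref{lem:life_of_nodes}), lower-bound the expected number of nodes that are isolated at $T_r$ and forever after via a first-moment computation over a $\Theta(n)$-size candidate set, and then concentrate with the method of bounded differences applied to the independent edge-destination variables with Lipschitz constant $\bigO(1)$. Your candidate set (nodes of age in $[c_1n,c_2n]$, using memorylessness of the out-neighbours' lifetimes to kill all $d$ out-edges with probability $e^{-\Omega(d)}$) differs from the paper's (the $\varepsilon n$ \emph{oldest} nodes, for which deterministically at most $\varepsilon n$ of the nodes present at birth can still be alive), but both handle the out-degree part correctly.

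The genuine gap is in your event (b), the requirement that $v$ is never chosen as a destination over its \emph{entire} lifetime. You correctly identify this as the delicate point but your proposed resolution --- ``sum the per-step selection probability over the remaining $\bigO(n\log n)$ steps of $v$'s life'' --- does not work: a node alive at $T_r$ can have total lifetime $\Theta(n\log n)$ rounds, during which the expected number of times it is selected is $\Theta(d\log n)$, so the never-selected probability along that route is only $n^{-\Theta(d)}$, making $\Expcc{X}=o(n)$ and collapsing the first moment. Your claim that the expected number of selections is $\Theta(d)$ is only true for nodes whose total lifetime is $\bigO(n)$ rounds, and your age window $[c_1n,c_2n]$ constrains the past but not the future of each candidate. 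The paper's fix is exactly this missing restriction: it intersects the candidate set with $A=\{v : v$ has lifetime at most $2n$ rounds$\}$, shows $|A|=\Theta(n)$ w.h.p.\ by a Chernoff bound on the exponential lifetimes, and only then gets $\Prc{\Delta^{in}_v=0\mid L_r}\geq(1-\tfrac{1}{0.9n})^{2nd}\geq e^{-3d}$. (An alternative repair is to average the never-selected probability over the exponential lifetime, which yields a $\Theta(1/d)$ factor, but that computation is absent from your sketch and entangles with the conditioning on $v$'s age.) With that one additional restriction on the candidate set, the rest of your argument goes through essentially as in the paper.
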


\subsubsection{Expansion properties}
The lemma that follows highlights weak expansion properties of the 
Poisson model without edge regeneration. In particular, we show that, for any sufficiently large 
$t$, all subsets of $N_t$ including a sufficiently large, constant 
fraction of the nodes exhibit good expansion properties. 

\begin{lemma}[Expansion of large subsets]
\label{lem:exp_large_subset_PDG}
For every constant $d \geq 20$ and for every sufficiently large $n$,  let $\{G_t = (N_t, E_t) \,:\, t \in
\mathbb{R}^+\}$ be a \PDG\ sampled from $\mathcal{G}(\lambda,\mu,d)$ with $\lambda=1$ and $\mu=1/n$. Then, for every fixed integer
$r=\Omega(n \log n)$, with probability at least $1-2/n^2$, the snapshot $G_{T_r}$ satisfies
\begin{equation}
    \min_{n e^{-d/20} \leq |S| \leq |N_{T_r}|/2} \frac{|\partial_{out}(S)|}{|S|}\geq 0.1 \, .
\end{equation}
\end{lemma}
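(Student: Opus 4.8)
The plan is to mirror the union-bound proof of the streaming analogue, Lemma~\ref{lem:exp_large_sub_SDG}, replacing the deterministic age/size bookkeeping by concentration estimates for the Poisson churn. Non-expansion of a set $S$ in the target range is witnessed by a pair of disjoint sets $S,T\subseteq N_{T_r}$ with $|T|=\lceil 0.1|S|\rceil$ and $\partial_{out}(S)\subseteq T$ (pad $\partial_{out}(S)$ with arbitrary external nodes if it is smaller). Writing $U=N_{T_r}\setminus(S\cup T)$ and $A_{S,T}=\{\partial_{out}(S)\subseteq T\}$, it suffices to show $\sum_{S,T}\Prc{A_{S,T}}$ is negligible. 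The crucial structural fact, special to the model \emph{without} regeneration, is that edges are created only at birth and only towards nodes already present, hence every request points to an \emph{older} node. Therefore a node $w\in U$ is adjacent to $S$ exactly when either (i) some $u\in S$ requested the older node $w$, or (ii) $w$ requested an older node of $S$; consequently $A_{S,T}$ is the intersection of the event that \emph{no} $u\in S$ requests an older $U$-node and the event that \emph{no} $w\in U$ requests an older $S$-node. These two events involve the requests of disjoint sets of source nodes, so they are independent.

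Next I would condition on a good churn event $\calE$. By Lemma~\ref{lem:life_of_nodes}, with probability $1-2/n^{2.1}$ every node of $N_{T_r}$ was born after step $T_{r-7n\log n}$, so only the $\bigO(n\log n)$ births inside this window matter. By Lemma~\ref{thm:concentration_nodes} and a union bound over that window (each deviation costing at most $2e^{-\sqrt n}$), with probability $1-o(1/n^2)$ the number of active nodes stays in $[0.9n,1.1n]$ throughout; thus $|N_{T_r}|\le 1.1n$ and the number $m_{b_u}$ of nodes present at the birth of any relevant $u$ satisfies $m_{b_u}\le 1.1n$. Hence $\Prc{\neg\calE}\le 2/n^{2.1}+o(1/n^2)$, which absorbs essentially all the failure probability in the statement. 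Two facts then make the per-edge estimate clean: because lifetimes are time intervals, any node older than $u$ that is still alive at $T_r$ was already alive at $u$'s birth, hence an eligible target, so a single request of $u$ hits such a fixed node with probability $1/m_{b_u}\ge 1/(1.1n)$; and, after revealing the whole churn, the uniform neighbor choices remain independent of it, so the pairs $(S,T)$ in the union bound range over subsets of the now-fixed $N_{T_r}$.

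Given $\calE$, for each $u\in S$ the probability that its $d$ requests all avoid $U_{>u}:=\{v\in U:\,v\text{ older than }u\}$ is at most $(1-|U_{>u}|/(1.1n))^d\le \exp(-d|U_{>u}|/(1.1n))$, and symmetrically for each $w\in U$ with $S_{>w}$. By the independence noted above,
\[
\Prc{A_{S,T}\mid\calE}\;\le\;\exp\!\left(-\frac{d}{1.1n}\Big(\sum_{u\in S}|U_{>u}|+\sum_{w\in U}|S_{>w}|\Big)\right)\;=\;\exp\!\left(-\frac{d}{1.1n}\,|S|\,|U|\right),
\]
where the identity holds because each ordered pair in $S\times U$ contributes to exactly one of the two sums, according to which endpoint is older. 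Since $|S|\le|N_{T_r}|/2$ forces $|U|\ge 0.29n$, this is at most $\exp(-0.26\,d\,|S|)$. Summing over the $\binom{|N_{T_r}|}{s}\binom{|N_{T_r}|}{0.1s}$ pairs with $|S|=s$ and using $\binom{N}{k}\le(eN/k)^k$ with $N\le 1.1n$, the size-$s$ contribution is at most $B(s)^s$ with, writing $x=s/n$,
\[
B(s)\;=\;\left(\frac{1.1\,e}{x}\right)\left(\frac{11\,e}{x}\right)^{0.1}e^{-0.26\,d}.
\]
Because $x\ge e^{-d/20}$ gives $-\log x\le d/20$, one gets $\log B(s)\le 1.44+0.055\,d-0.26\,d<0$ for $d\ge 20$; as $d$ is constant, $s\ge ne^{-d/20}=\Omega(n)$, so each term is $e^{-\Omega(n)}$ and the whole sum is $e^{-\Omega(n)}=o(1/n^2)$. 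Combined with $\Prc{\neg\calE}$ this yields probability at least $1-2/n^2$.

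The main obstacle is conceptual and concerns the per-pair bound: the naive relaxation ``every request of $S$ lands in $S\cup T$'' is \emph{false}, since requests aimed at meanwhile-dead nodes create no boundary (this is precisely why isolated nodes exist, Lemma~\ref{lem:isolated_nodes_poisson}), and it would wrongly predict expansion even of singletons. The correct bound must exploit \emph{both} edge orientations, and the pairing identity $\sum_{u}|U_{>u}|+\sum_{w}|S_{>w}|=|S|\,|U|$ is what makes the estimate work uniformly regardless of how the ages of $S$ and $U$ interleave, in particular even when $S$ consists entirely of old nodes. The remaining, more technical nuisance is Poisson-specific: turning the fixed-time concentration of Lemma~\ref{thm:concentration_nodes} into the uniform-over-the-window control $m_{b_u}\le 1.1n$ while keeping the conditioning error below $2/n^2$; the subsequent estimate of $B(s)$ is routine calculus.
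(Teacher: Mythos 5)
Your proof is correct and follows essentially the same route as the paper's: condition on the same good churn event (all of $N_{T_r}$ born within the last $7n\log n$ steps and $|N_{T_i}|\in[0.9n,1.1n]$ throughout), bound $\Prc{A_{S,T}}$ via the age-aware per-request probability $\geq 1/(1.1n)$ for older targets, and union-bound over pairs $(S,T)$. The only (valid) deviation is cosmetic: where the paper splits into two cases according to which orientation covers at least half of the ordered pairs in $S\times P$ and uses only that direction (paying a factor $1/2$ in the exponent), you exploit the independence of the request families of $S$ and of $U$ to multiply the two bounds and use the full pairing identity $\sum_{u}|U_{>u}|+\sum_{w}|S_{>w}|=|S|\,|U|$, which gives a slightly stronger per-pair estimate.
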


The full 
proof of the above lemma is given in Subsection
\ref{ssec:lem:exp_large_subset_PDG} and it is based on the following idea.
For any fixed pair of subsets 
$S$ and $T$ with $|T|=0.1|S|$, we first observe that the event 
$A_{S,T}$ (defined as in the analogous proof in the \SDG \ model, Lemma 
\ref{lem:exp_large_sub_SDG}) implies that there is no outgoing link 
from $S$ to the subset  $P = N_{T_r}-S-T$, and, using Lemma 
\ref{thm:concentration_nodes},  we know that $|P| \geq 0.9n-1.1s$, 
w.h.p. By using a simple counting argument, we show that each of at 
least half of all the edges in the cut $E(S,T)$ have probability $\geq 
1/\Theta(n)$ to belong to $E_{T_r}$. Then, since $S$ and $P$ are   
large,  the number of such potential edges is large enough to apply a 
standard union bound to all possible subset pairs $S,T \subseteq N_{T_r}$.

\subsubsection{Flooding}

The negative result of Lemma \ref{lem:isolated_nodes_poisson} implies that
the flooding process has non-negligible chances to 
fail in rapidly informing the entire network.  
Its proof, which is given in Subsection 
\ref{ssec:lem:flooding_not_terminate_poisson}, relies on the presence 
of isolated nodes and uses the original Definition 
\ref{def.async.flood}, adapting the   argument we used  in the proof 
of Theorem \ref{thm:not_flooding_purestreaming}  (for the streaming model) to take care about the presence
of nodes having random lifetime that can be of length $\Theta(n\log 
n)$.

\begin{theorem}[Flooding]
\label{lem:flooding_not_terminate_poisson}
For every positive constant $d$, for every sufficiently large $n$ and 
for every fixed $r_0 \geqslant 7n \log n$, the flooding process over a 
\PDG\ sampled from $\mathcal{G}(\lambda,\mu,d)$ with $\lambda=1$, 
$\mu=1/n$ and starting at $t_0=T_{r_0}$ satisfies the following two 
properties:
\begin{enumerate}[noitemsep]
\item With probability $\Omega(e^{-d^2})$, for every $t \geq t_0$, $I_t$ contains at most $d+1$ nodes;
\item W.h.p., the flooding time is $\Omega_d(n)$.
\end{enumerate}
\end{theorem}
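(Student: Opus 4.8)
The plan is to prove the two statements separately, both by exploiting the uninformable isolated nodes guaranteed by Lemma~\ref{lem:isolated_nodes_poisson} and closely following the streaming argument of Theorem~\ref{thm:not_flooding_purestreaming}; the genuinely new work is to cope with the random lifetimes, which in the Poisson model can be as long as $\Theta(n\log n)$ rather than the fixed value $n$ of the streaming model. I begin with Statement~2, the easier one. By Lemma~\ref{lem:isolated_nodes_poisson}, w.h.p.\ the snapshot at $t_0=T_{r_0}$ contains a set $\mathcal I$ of at least $\tfrac{1}{18}ne^{-2d}$ nodes that stay isolated for their entire lifetime; the source, born at $t_0$ and of degree $d\ge 1$, is not among them. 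An isolated node has no incident edge at any time, so under the flooding of Definition~\ref{def.async.flood} it can never be informed; hence the broadcast cannot complete until every node of $\mathcal I$ has died, and this requires at least $|\mathcal I|$ death events. By the concentration of Lemma~\ref{thm:concentration_nodes}, w.h.p.\ $|N_t|\le 1.1 n$ throughout, so deaths occur at rate at most $1.1 n\mu = 1.1$; consequently the number of deaths in any interval of length $T$ is at most $1.1\,T\,(1+o(1))$ w.h.p. Choosing $T=\Omega(ne^{-2d})$ just below $|\mathcal I|/1.1$ shows that not all of $\mathcal I$ can have died by time $t_0+T$, so the flooding time is $\Omega_d(n)$ w.h.p.

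For Statement~1 I would exhibit an event $\mathcal E$ of probability $\Omega(e^{-d^2})$ under which $|I_t|\le d+1$ for every $t$. Let $v_1,\dots,v_d$ be the endpoints of the $d$ connections the source $s$ creates at birth, and let $\mathcal E$ be the event that (i)~$s$ never receives an incoming connection during its lifetime, and (ii)~each $v_i$ is a \emph{dead end}, meaning that all $d$ connections $v_i$ made at birth already point to nodes that are dead at $t_0$ and that no node other than $s$ ever connects to $v_i$. Since the graph is undirected and edges vanish only when an endpoint dies, under $\mathcal E$ no edge ever joins $\{s,v_1,\dots,v_d\}$ to another node, whence $I_t\subseteq\{s,v_1,\dots,v_d\}\cap N_t$ and $|I_t|\le d+1$ at every round. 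This is exactly the mechanism behind the streaming Theorem~\ref{thm:not_flooding_purestreaming}.

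The core of the argument is the estimate $\Prc{\mathcal E}=\Omega(e^{-d^2})$, and this is where the random lifetimes enter. Deferring the choice of $s$'s $d$ destinations makes them independent and uniform, so up to the $\bigO(d^2/n)$ probability of a collision, $\Prc{\mathcal E}$ factorises as $\Prc{s\text{ gets no in-edge}}\cdot\prod_{i=1}^{d}\Prc{v_i\text{ is a dead end}}$. For the first factor I would use a race argument: by the jump-chain transition probabilities of Lemmas~\ref{lem:prop_Tn} and~\ref{lem:N_m+1_and_death}, at each step $s$ dies with probability $\approx\mu$ while it is hit by a fresh in-edge with probability $\approx d\mu$, so $s$ dies before any in-edge arrives with probability $\Omega(1/d)$, a bound insensitive to how long $s$ actually lives and therefore sidestepping the long-lifetime issue. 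Each dead-end factor I would compute by conditioning on the (random) age and residual lifetime of $v_i$ and integrating against the exponential laws: the probability that all $d$ out-neighbours of $v_i$ are dead grows with its age, whereas the probability of escaping every other in-edge decreases with its lifetime, and averaging their product reproduces the same elementary Beta-type integral that underlies the per-node bound of order $e^{-\Theta(d)}$ in Lemma~\ref{lem:isolated_nodes_poisson}. Multiplying the $d$ independent dead-end events and the $\Omega(1/d)$ factor then gives a bound of order $e^{-\Theta(d^2)}$, i.e.\ the claimed $\Omega(e^{-d^2})$.

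I expect the main obstacle to be precisely the rigorous control of these integrals in the face of a possibly $\Theta(n\log n)$-long lifetime, which lets a node accrue far more potential in-edges than the fixed lifetime $n$ of the streaming model permits. To handle this I would confine the analysis to the last $7n\log n$ rounds, within which Lemma~\ref{lem:life_of_nodes} places (w.h.p.)\ the births of all relevant nodes, and use Lemma~\ref{thm:concentration_nodes} to keep every destination probability at $\Theta(1/n)$ throughout the window; the residual-lifetime averages are then taken against the exponential law via Lemma~\ref{lem:N_m+1_and_death}. The delicate point is the opposing effect of age on the two requirements defining a dead end, ``old enough for its out-edges to be dead'' versus ``short-lived enough to have escaped in-edges'', but this is the same trade-off already resolved in the isolated-node analysis, which I would reuse rather than redo.
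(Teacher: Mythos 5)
Your overall strategy coincides with the paper's: both statements rest on the forever-isolated nodes guaranteed by Lemma~\ref{lem:isolated_nodes_poisson} (statement~2 because the broadcast cannot complete before all of them have died, statement~1 via the event that the source's $d$ out-neighbours are all forever-isolated and the source itself never receives an in-edge), and your treatment of statement~2 is the paper's argument made slightly more explicit. The one step that would not survive as written is the factorisation $\Prc{\mathcal E}\approx\Prc{s\text{ gets no in-edge}}\cdot\prod_{i=1}^{d}\Prc{v_i\text{ is a dead end}}$: the dead-end events for distinct $v_i$ share the randomness of all other nodes' destination choices (a request that misses $v_1$ is slightly more likely to hit $v_2$), a correlation that goes beyond the collision event you set aside and is exactly why Lemma~\ref{lem:isolated_nodes_poisson} itself must resort to the method of bounded differences instead of independence. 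The paper sidesteps this entirely: it conditions on the w.h.p.\ event that at least $\tfrac{1}{18}ne^{-2d}$ forever-isolated nodes exist, notes that the source's $d$ destinations are independent, uniform over $N_{t_0}$, and independent of that event, and so each lands in the isolated set with probability at least $e^{-2d}/(18\cdot 1.1)$, giving $e^{-\Theta(d^2)}$ with no per-node integration at all. Since you already say you would ``reuse rather than redo'' the isolated-node analysis, the correct reuse is of its w.h.p.\ conclusion, not of its per-node expectation. For the remaining factor, your rate-based race argument ($s$ dies at rate $\mu$ versus receives in-edges at rate about $d\mu$, so it wins with probability $\Omega(1/d)$) is a genuinely different and in fact sharper route than the paper's, which first forces the source's lifetime below $n$ at constant cost and then union-bounds over the roughly $dn$ requests issued in that window to get $\tfrac{1-e^{-1}}{2}e^{-2d}$; either constant-in-$n$ bound suffices because the $e^{-\Theta(d^2)}$ term dominates the final probability.
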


We next complement the negative results above by showing that, following the arrival of 
an informed node at some time $t$, a fraction $1 - e^{-\Omega(d)}$ of the vertices of the network
will become informed within the following $\bigO(\log n)$
flooding steps, with probability $1- e^{-\Omega(d)}$ . 

\begin{theorem}[Flooding completes for a large fraction of nodes] \label{thm:flood_poiss_noreg}
For every constant $d \geq 1152$, for every sufficiently large $n$ and 
for every fixed $r_0 \geq 7n \log n$, there is a $\tau = \bigO(\log 
n/\log d + d)$, such that the flooding process over a \PDG \ sampled 
from $\mathcal{G}(\lambda, \mu, d)$ with $\lambda=1$, $\mu=1/n$ 
and  starting at $t_0=T_{r_0}$, satisfies the following:
\[\Prc{|I_{t_0+\tau}| \geq (1 - e^{-\frac{d}{20}})|N_{t_0 + \tau}|} \geq 1 - 2e^{-\frac{d}{576}} - o(1)\,.\]
\end{theorem}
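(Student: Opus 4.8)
The plan is to reproduce the two-phase scheme behind Theorem~\ref{apx:thm:aeflooding}, replacing each deterministic feature of the streaming model by its Poisson counterpart and supplying the extra control that the random churn demands. Throughout I would work with the discretized flooding of Definition~\ref{def:flooding_poisson} and with the event process $\{T_r\}$, noting that one unit of flooding time contains only $\Theta(1)$ churn events in expectation (Lemma~\ref{lem:N_m+1_and_death}), so that the $\tau_1+\tau_2=\bigO(\log n/\log d+d)$ flooding rounds span $\bigO(\log n/\log d+d)$ births and deaths, which is $\bigO(\log n)$ and hence tiny compared with $n$. I would fix once and for all the good events of Lemmas~\ref{thm:concentration_nodes} and~\ref{lem:life_of_nodes}: with probability $1-o(1)$ every relevant snapshot has $|N|=\Theta(n)$ and every alive node was born within the last $7n\log n$ steps. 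For \textbf{Phase~1 (onion-skin)} I would sort $N_{t_0}$ by birth time and cut at a fixed age threshold into a \emph{young} half $Y$ and an \emph{old} half $O$, each of size $\Theta(n)$ w.h.p.\ (the equilibrium age distribution puts a constant fraction on each side). Every young node is by definition born after every old node, and an old node alive at $t_0$ was alive at every earlier time, hence was a valid destination when any young node was born; so a single birth-request of a young node hits a fixed informed old node with probability $\Theta(1/n)$, using $|N|=\Theta(n)$. This is exactly the probabilistic input of Claim~\ref{claim:O_0}, so with $\Theta(n)$ candidates on each side I can re-derive the geometric growth of the informed young and old layers by a factor $\Theta(d)$ per phase, each failing with probability $e^{-\Omega(d)}$, and re-run Claim~\ref{claim:product} (with the looser Poisson constants, which is what degrades $e^{-d/100}$ to $e^{-d/576}$ and forces $d\geq 1152$) to conclude $|I_{t_0+\tau_1}|\geq 2n/d$ after $\tau_1=\bigO(\log n/\log d)$ rounds, with failure probability at most $2e^{-d/576}$.

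The genuinely new work in Phase~1 is accounting for deaths. Because lifetimes are exponential and memoryless, a fixed alive node dies within the $\tau_1=\bigO(\log n)$ rounds of the phase only with probability $\bigO(\tau_1/n)$, and the discretized flooding additionally requires each onion-skin edge to persist for the whole unit interval, i.e.\ both endpoints to stay alive. A node in layer $k$ must survive to round $k+1$ to pass the message onward, and its death forfeits a whole subtree of size $\bigO((d/20)^{\tau_1-k})$; weighting the death probability $\bigO(k/n)$ of each of the $(d/20)^k$ layer-$k$ nodes by this subtree size and summing over $k$, the expected number of informed nodes lost is only $\bigO(\tau_1^2/d)=\polylog(n)$, so by Markov the loss is $o(n/d)$ w.h.p.\ and the target $2n/d$ (up to lower-order corrections) is still met.

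For \textbf{Phase~2 (expansion) and assembly} I would start from $|I_{t_0+\tau_1}|\geq 2n/d\gg n e^{-d/20}$ and mirror Lemma~\ref{le:big_set_exp}, invoking the Poisson large-set expander, Lemma~\ref{lem:exp_large_subset_PDG}: while the informed set has size in $[n e^{-d/20},|N|/2]$ its outer boundary is at least $0.1|I|$, so each round it grows by a constant factor (modulo the $\bigO(1)$ boundary nodes that die or whose edge fails to persist during the round); once past $|N|/2$ the same expansion applied to the shrinking uninformed set drives it below $n e^{-d/20}$. Each round multiplies, respectively divides, by a constant factor, so $\tau_2=\Theta(d)$ rounds suffice to reach $|I|\geq(1-e^{-d/20})|N_{t_0+\tau}|$. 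Since Phase~2 touches only $\bigO(d)$ distinct snapshots, a union bound keeps the total expansion-failure probability at $\bigO(d/n^2)=o(1)$, and the $\bigO(d)$ nodes that die during the phase are discarded exactly as in the streaming proof. Combining the $2e^{-d/576}$ failure of Phase~1 with the $o(1)$ failures of Phase~2 and of the structural Lemmas~\ref{thm:concentration_nodes},~\ref{lem:life_of_nodes},~\ref{lem:exp_large_subset_PDG} yields the stated bound.

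I expect the main obstacle to be precisely the death-accounting that is absent in the streaming model: keeping the three moving parts---persistence of each intended edge across a full unit interval, stability of the $\Theta(1/n)$ destination probability as $|N|$ fluctuates, and the attrition of the informed set by deaths, including the subtree cascades of Phase~1---simultaneously under control and uniformly over all $\bigO(\log n/\log d+d)$ rounds, while keeping the cumulative failure probability at $2e^{-d/576}+o(1)$.
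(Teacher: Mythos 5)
Your overall architecture matches the paper's proof exactly: discretized flooding, the good events from Lemmas~\ref{thm:concentration_nodes} and~\ref{lem:life_of_nodes}, a young/old onion-skin process with deferred decisions giving $\Omega(n/d)$ informed nodes in $\bigO(\log n/\log d)$ rounds with failure probability $2e^{-d/576}$, and then Lemma~\ref{lem:exp_large_subset_PDG} driving the informed set up to a $1-e^{-d/20}$ fraction in $\Theta(d)$ further rounds (with the $\bigO(\log n)$ deaths and births per round absorbed because the sets involved have size $\Omega(n)$). The one place you genuinely depart from the paper is the accounting of deaths inside Phase~1, and that is also where your argument has a gap.

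You treat deaths a posteriori: run the death-free onion-skin process, then subtract, for each node that dies, the ``subtree'' it would have generated, and bound the expected total loss by $\bigO(\tau_1^2/d)=\polylog(n)$ via Markov. Two problems. First, your per-node subtree bound $\bigO((d/20)^{\tau_1-k})$ is not justified: $d/20$ is the \emph{typical} growth factor of a whole layer, whereas a single layer-$k$ node can have up to $(d/2)^{\tau_1-k}$ descendants in the worst case; the expectation computation can be repaired (e.g.\ by noting each informed node has at most $\tau_1$ ancestors, so $\sum_v|\mathrm{subtree}(v)|\le\tau_1\cdot|I|$), but as written the bound does not follow. Second, and more seriously, ``death-free process minus subtrees of dead nodes'' is not a coupling with the actual process: the layer-growth estimates of Claim~\ref{claim:O_0} are conditioned on the sizes of the previous layers, so once deaths shrink a layer you cannot simply reuse the death-free chain rule and subtract afterwards. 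The paper sidesteps this entirely by building death into the process: each newly informed node is removed \emph{immediately}, with probability $\log n/n$ (an upper bound on its chance of dying during the whole interval, by memorylessness), before it creates any links --- a worst-case stochastic domination. A Markov bound then shows each layer loses at most half its nodes with probability $1-2\log n/n$, so the growth factor only degrades from $d/24$ to $d/48$ and the chain rule goes through layer by layer, contributing the $(1-2\log n/n)^{2k+1}=1-o(1)$ factor. If you adopt this ``kill on arrival'' device your Phase~1 closes; with your current subtraction argument it does not.
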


The proof of Theorem \ref{thm:flood_poiss_noreg} is presented in Subsection \ref{ssec:thm:flood_poiss_noreg} and
proceeds along lines similar to those of Section 
\ref{subse:stream_without_flood}, though with some important 
differences, which we briefly discuss below. It should be noted that, 
in order to account for the fact 
that a live node might die at any point of a given flooding interval, the proof of Theorem 
\ref{thm:flood_poiss_noreg} uses the discretized version of the 
flooding process described by Definition \ref{def:flooding_poisson}, 
which clearly provides a worst case scenario when we are interested in proving 
lower bounds on the extent and upper bounds on the speed of flooding.

We first show that, starting with an informed node $s$ joining the network 
at time $t_0$, with probability $1 - 2e^{-\frac{d}{576}} - o(1)$, at least $\frac{n}{10}$ nodes 
are informed at time $t_0 + \tau_1$, where 
$\tau_1 = \bigO(\log n/\log d)$. To prove a 
similar result in the streaming model, we considered a subset of the vertices, inducing a 
topology that remained unchanged within an interval of interest of 
logarithmic size. This way, the proof boiled down to proving diameter 
properties of this induced subgraph, which we did by introducing 
the onion-skin process. Unfortunately, this approach does not trivially 
carry over to the Poisson model, since every node that is in the 
network at any given time $t$ has some probability of dying within each time 
unit. To address this issue, we define a variant of the onion-skin 
process, in which i) flooding proceeds alongside edge creation, in the
sense that a node only establishes its links upon becoming informed 
(deferred decisions), ii) each newly informed node tosses a coin to 
decide whether or not it is going to die before time $t_0 + \tau_1$. In 
order to consider a worst-case scenario, if a node dies before time 
$t_0 + \tau_1$, the node leaves the network immediately upon being reached 
in the flooding process, without generating any links or informing 
any neighbours. We also leverage two facts, namely, since we are 
considering an overall flooding interval spanning a logarithmic number of steps, 
the number of nodes joining the network in this interval is itself 
at most logarithmic, while the probability of a node that is alive 
at time $t_0$ to die before time $t_0 + \bigO(\log n)$ is $\bigO(\log n/n)$. This 
allows us to prove that, with probability at least $1 - 
2e^{-\frac{d}{576}} - o(1)$, 
at least a constant fraction of the nodes are informed within 
time $t_0 + \tau_1$. The second step is similar to the case of the 
streaming model, leveraging expansion of large sets and, in particular, 
Lemma \ref{lem:exp_large_subset_PDG} above. In particular, \emph{if} at least 
$n/10$ are informed, Lemma \ref{lem:exp_large_subset_PDG} allows to show 
that, with high probability, a further, constant number $\tau_2 = 
\bigO(d)$ of flooding steps suffice to reach a fraction $1 - e^{-\frac{d}{20}}$ of the nodes. Choosing 
$\tau = \tau_1 + \tau_2$ allows to prove that within time $t_0 + 
\bigO(\log n/\log d + d)$, a fraction at least $1 - e^{-\frac{d}{20}}$ 
of the nodes is informed, with probability at 
least $1 - 2e^{-\frac{d}{576}} - o(1) - \frac{2}{n^2}$. 

\subsection{Poisson   graphs with edge regeneration } \label{ssec:poissonedgereg}

We now  model graph dynamics where an active node replaces
each of its $d$ outgoing edges that will be deleted. 

\begin{definition}[Poisson dynamic random graphs with edge regeneration]
\label{def:Poisson2}
A \emph{Poisson Dynamic Graph with edge Regeneration} $\mathcal{G}(\lambda,
\mu, d)$ (for short, \PDGE) is a continuous dynamic random graph $\{G_t =
(N_t,E_t)\,:\, t \in \mathbb{R}^+ \}$ where the set of nodes $N_t$ evolves
according to Definition~\ref{def:Poisson_node_churns}, while the set of edges
$E_t$ evolves according to the following \textit{topology dynamics}: 
\begin{enumerate}[noitemsep]
\item When a new node appears, it creates $d$ independent connections, each one
with a node chosen uniformly at random among the nodes in the network.
\item When a node dies, all its incident edges disappear.
\item When a node has one of its $d$ outgoing edges disappearing, it creates a new connection with a node chosen uniformly at random among all the nodes in the network.
\end{enumerate}
\end{definition}

\paragraph{Preliminary properties.}
Similarly to the approach we adopted for the streaming model, our first
technical step is to provide an upper bound on the probability  that a  fixed
node  chooses any other active node in the network as destination of one of its
$d$ requests. However, things in this setting get more complicated essentially
because of  the presence of ``very old'' nodes (i.e. those nodes having age
$\omega(n)$).  Indeed, such old nodes can be selected as destination of a link
request from  a  younger node with probability $\omega(1/n)$. The next lemma
formalizes this fact as function of the age of the nodes.
The  proof follows the same approach we used to get   Lemma \ref{lem:node_destination} and  it  is given in Subsection \ref{sssec:lem:existence_edge_hyp}.

\begin{lemma}
\label{lem:existence_edge_hyp}
For every constant $d \geq 20$ and for every sufficiently large $n$,  let $\{G_t = (N_t, E_t) \,:\, t \in
\mathbb{R}^+\}$ be a \PDGE\ sampled from $\mathcal{G}(\lambda,\mu,d)$ with $\lambda=1$ and $\mu=1/n$. Then, for every fixed integer
$r=\Omega(n \log n)$, consider the snapshot $G_{T_r}$. Let $u \in N_{T_r}$ be the node born in round $T_{r-i}$ for some integer $i \leq r$. Then, if another node $v \in N_{T_r}$ is born before $u$, the probability that a single request of $u$ has destination $v$ is at most
\begin{equation} \label{eq:uppedgeprob-poisson}
    \frac{1}{0.8n}\left(1+\frac{i}{1.7n}\right) \, .
\end{equation}
While, if $v$ is born after $u$, the probability that a single request of $u$ has destination $v$ is always $\leq \frac{1}{0.8n}$.
\end{lemma}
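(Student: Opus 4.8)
The plan is to follow the renewal-type argument behind Lemma~\ref{lem:node_destination} for the streaming model, and to replace the deterministic features of that setting by the probabilistic churn estimates of Subsection~\ref{ssec:Poisson-Prely}. Fix one of the $d$ outgoing \emph{slots} (requests) of $u$, and let $q_s$ denote the probability that, right after round $T_s$, this slot points to $v$; I would track $q_s$ from the birth round $T_{r-i}$ of $u$ up to $T_r$. The only events that change the slot's target are: the current target dies, in which case the slot immediately re-samples a fresh uniformly random destination, and a (re-)sample lands on $v$, after which the slot stays on $v$ as long as $v$ is alive. Since a node alive at $T_r$ was alive at every earlier round of its life, $v$ (which is alive at $T_r$) is a legal destination for the slot at every round following $v$'s own birth, so $q_s$ is nondecreasing and increases only at rounds where the slot both re-samples and hits $v$.

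First I would fix the two per-round estimates that take the place of the fixed lifetime $n$ of the streaming model. Conditioning on the high-probability event of Lemma~\ref{thm:concentration_nodes} that $0.9n \le |N_{T_s}| \le 1.1n$ at every round $T_s$ in the window, each fresh sample of the slot hits any prescribed node with probability at most $1/(0.9n-1) \le 1/(0.8n)$; this produces the leading factor $1/(0.8n)$. By the upper bound in \eqref{eq:v_not_in_T_r+1} of Lemma~\ref{lem:N_m+1_and_death}, the current target of the slot dies at any given round with probability at most $1/(1.8n)$, so the slot re-samples at a given round with probability at most $1/(1.8n)$. By Lemma~\ref{lem:life_of_nodes} I may also assume every node alive at $T_r$ was born after round $T_{r-7n\log n}$, so that $i \le 7n\log n$ and the target quantity $\frac{1}{0.8n}(1+\frac{i}{1.7n})$ is $\bigO(\log n / n)\ll 1$.

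Next I would close the recursion in the two regimes. If $v$ is older than $u$, then $v$ is a legal destination at the birth pick and at every subsequent re-sample, across all $i$ rounds of $u$'s life; starting from $q \le 1/(0.8n)$ at birth and adding at each round an increment of at most (probability of re-sampling) $\times$ (probability of hitting $v$) $\le \frac{1}{1.8n}\cdot\frac{1}{0.8n}$, I obtain $q \le \frac{1}{0.8n}(1+\frac{i}{1.8n}) \le \frac{1}{0.8n}(1+\frac{i}{1.7n})$, which is exactly \eqref{eq:uppedgeprob-poisson}; the gap between $1.8$ and $1.7$ absorbs lower-order corrections from the conditioning. If instead $v$ is born after $u$, the birth pick of the slot cannot be $v$, and the target at $T_r$ is determined by the slot's \emph{last} re-sample before $T_r$. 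Decomposing over the disjoint rounds at which this last re-sample can occur, and using that conditionally on the last re-sample being at a given round the fresh pick is uniform and hence equals $v$ with probability at most $1/(0.8n)$, the disjoint contributions sum to at most $1/(0.8n)$, which is the claimed uniform bound for younger $v$.

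The main obstacle, and the genuine difference from the streaming proof, is that the number of re-sampling opportunities of a slot is here a \emph{random} quantity that may be as large as $\Theta(\log n)$ (ages range up to $\Theta(n\log n)$), whereas in the streaming model every node lives for exactly $n$ rounds and the number of regenerations is deterministically bounded. Consequently the two per-round estimates above must hold \emph{simultaneously} over the whole, potentially $\Theta(n\log n)$-long, history of $u$; I would secure this by a union bound over the $\bigO(n\log n)$ relevant rounds against the failure probabilities of Lemmas~\ref{thm:concentration_nodes} and~\ref{lem:life_of_nodes}, which is why the statement tolerates an error polynomially small in $n$ and requires $r=\Omega(n\log n)$. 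A secondary point requiring care is that conditioning on these global size- and age-concentration events should not bias the conditionally uniform law of an individual fresh request; this holds because, by definition of the model, each regenerated request is uniform over the currently alive nodes independently of the past, so that only the (mildly perturbed) value of $|N_{T_s}|$ enters, and that is precisely what the $1/(0.8n)$ bound accounts for.
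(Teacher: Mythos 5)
Your proposal follows essentially the same route as the paper's proof: condition on the high-probability event that the network size stays in $[0.9n,1.1n]$ and all ages are at most $7n\log n$ (Lemmas~\ref{thm:concentration_nodes} and~\ref{lem:life_of_nodes}), then decompose over the rounds of $u$'s life, bounding each re-assignment opportunity by the product of the per-round target-death probability from Lemma~\ref{lem:N_m+1_and_death} and the $\bigO(1/n)$ uniform-hit probability, and summing the $i$ increments on top of the birth pick. The only differences are bookkeeping (where you place the $1/(0.8n)$ versus the paper's $1/(0.9n)$ plus a final $1/n^2$ deconditioning correction) and that you spell out the ``last re-sample'' decomposition for the younger-$v$ case, which the paper asserts in one line.
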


\subsubsection{Expansion properties} \label{sssec:pdge-expansion} 
The expansion property satisfied by the Poisson model with edge regeneration can be stated as follows.

\begin{theorem}[Expansion] \label{thm:exp:pdge}
For every constant $d \geq 35$ and for every sufficiently large $n$,  let $\{G_t = (N_t, E_t) \,:\, t \in
\mathbb{R}^+\}$ be a \PDGE\ sampled from $\mathcal{G}(\lambda,\mu,d)$ with $\lambda=1$ and $\mu=1/n$. Then, for every fixed integer
$r \geq 7n \log n$,  w.h.p.  the snapshot $G_{T_r}$ is an   $\varepsilon$-expander with parameter $\varepsilon \geq 0.1$.

\end{theorem}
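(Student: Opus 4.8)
The plan is to establish the $0.1$-expansion by a first-moment argument over "non-expanding witnesses," in the same spirit as Lemma~\ref{lem:exp_large_subset_PDG} and Theorem~\ref{thm:expansion-stream}, with the extra difficulty of unbounded node ages handled by weighting the union bound according to the age-demographics of each candidate witness. I first condition on the two high-probability events that $0.9n \le |N_{T_r}| \le 1.1n$ (Lemma~\ref{thm:concentration_nodes}) and that every node alive at step $T_r$ was born within the last $7n\log n$ steps (Lemma~\ref{lem:life_of_nodes}); together these cost only an additive $\bigO(1/n)$ in the final failure probability. A set $S$ with $1 \le |S| \le |N_{T_r}|/2$ fails to expand exactly when the event $A_{S,T}=\{\partial_{out}(S)\subseteq T\}$ holds for some disjoint $T$ with $|T|=0.1|S|$, and this forces each of the $d$ requests of every node of $S$ to have its destination inside $S\cup T$. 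I would bound $\Prc{A_{S,T}}$ for a fixed pair, then union bound over all pairs, splitting the range of $|S|$ into \emph{large} sets, $ne^{-d/20}\le |S|\le |N_{T_r}|/2$, and \emph{small} sets, $1\le |S| < ne^{-d/20}$.

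For large sets I would reuse the argument of Lemma~\ref{lem:exp_large_subset_PDG}: since $|S\cup T|\le 1.1|S|$, the residual set $P=N_{T_r}\setminus(S\cup T)$ has size $\Omega(n)$, and $A_{S,T}$ forces $S$ to send no edge into $P$. Restricting attention to the birth requests (the extra regenerated edges of the \PDGE\ model can only increase the number of edges into $P$) and to pairs $(u,v)$ with $u\in S$, $v\in P$, and $v$ older than $u$, a counting argument shows that a constant fraction of the potential cut edges each exist with probability $\Omega(1/n)$; hence $\Prc{A_{S,T}}\le \gamma^{\,d|S|}$ for a constant $\gamma<1$, which comfortably survives the crude bound $\binom{1.1n}{|S|}\binom{1.1n}{0.1|S|}\le 2^{\bigO(n)}$ on the number of pairs once $d$ is a large enough constant.

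The delicate regime is that of small sets, and I expect it to be the main obstacle. Here I would use the destination upper bound of Lemma~\ref{lem:existence_edge_hyp}: writing $i_u$ for the age index of $u\in S$ (i.e.\ $u$ was born in round $T_{r-i_u}$), each request of $u$ lands in $S\cup T$ with probability at most $\frac{1.1|S|}{0.8n}(1+\frac{i_u}{1.7n})$, whence
\[
\Prc{A_{S,T}}\le\prod_{u\in S}\left(\frac{1.1|S|}{0.8n}\Big(1+\tfrac{i_u}{1.7n}\Big)\right)^{d}.
\]
The amplification $(1+i_u/1.7n)^d$ reaches $(\log n)^d$ for the oldest survivors, so replacing it by this worst case and union bounding over the $\binom{7n\log n}{|S|}$ candidate witnesses (indexed by birth step) makes the bound diverge for constant $d$. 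Moreover, a naive "generating function" bound that lets each node of $S$ independently choose the amplification-optimal age $i\approx 2nd$ also fails, because it introduces a spurious factor of order $d!$: in reality $S$ cannot pile all of its nodes onto a single optimal age, since there is at most one node per birth step and such a configuration is both combinatorially rare and unlikely to survive.

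The way around this is to union bound not over alive nodes with worst-case ages, but over all \emph{potential} nodes indexed by their birth step, charging to each candidate its probability $e^{-\Theta(i/n)}$ of still being alive at $T_r$ (Lemma~\ref{lem:N_m+1_and_death}). The key quantitative point is that the survival weight exactly neutralizes the amplification: the per-node sum $\sum_{i\ge1}e^{-\Theta(i/n)}(1+\frac{i}{1.7n})^d=\Theta(n)$ is a convergent Gamma-type integral, so the $\log n$ factors disappear. To control the combinatorial cost of a given age profile against its amplification benefit, I would take logarithms and package the resulting per-size term as (minus) the KL divergence between the empirical age distribution of the witness $S$ and the survival reference distribution $i\mapsto e^{-\Theta(i/n)}$; the nonnegativity and the tensorization of KL divergence then bound the worst-case age profile sharply enough---much more tightly than the crude generating-function estimate---to push the per-size bound below $1$ for all $|S|< ne^{-d/20}$ when $d\ge35$, and summable over all small sizes. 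Combining the small-set and large-set union bounds with the $\bigO(1/n)$ conditioning terms yields that $G_{T_r}$ is a $0.1$-expander with high probability.
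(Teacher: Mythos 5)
Your overall strategy is the paper's: condition on concentration and bounded lifetimes, bound $\Prc{A_{S,T}}$ for each witness pair, and resolve the tension between the age-amplified edge probabilities of Lemma~\ref{lem:existence_edge_hyp} and the survival probabilities of Lemma~\ref{lem:N_m+1_and_death} by comparing the age demographics of a candidate witness against the survival reference measure via KL divergence. That is exactly the paper's central device for what it calls middle-size sets, and your large-set treatment coincides with Lemma~\ref{lem:expansion_3}. However, your two-range decomposition leaves a genuine gap at the bottom of the ``small'' range. The KL/age-profile argument requires enumerating the possible age profiles of $S$ and $T$: with $L=\Theta(\log n)$ age slices, a set of size $k$ admits up to $\binom{k+L}{L}$ profiles, and the per-profile bound one can extract from nonnegativity of KL is only $2^{-\Theta(k)}$. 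For $k=O(\log n)$ (e.g.\ a single old node as witness) the profile count $\binom{k+L}{L}=(\Theta(\log n/k))^{\Theta(k)}$ overwhelms $2^{-\Theta(k)}$, so the union bound does not close. The paper handles this with a separate lemma (Lemma~\ref{lem:poi:exp:small1}) for $|S|\le n/\log^2 n$, where one simply charges \emph{every} node the worst-case amplification $(1+O(\log n))^d$ and wins because $(|S|/n)^{d}\le (\log n)^{-2d}$ beats it; your proposal needs an analogous third regime, and your alternative per-birth-step generating-function bound does not substitute for it at the other end of the range (see below).

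Two further points on constants that you should check. First, your claim that the survival weighting ``exactly neutralizes'' the amplification, i.e.\ that $\sum_{i\ge 1}e^{-\Theta(i/n)}\bigl(1+\tfrac{i}{1.7n}\bigr)^d=\Theta(n)$, hides a factor $\Theta(d!\,c^{d})$ coming from the Gamma integral $\int_0^\infty e^{-x/2.2}(1+x/1.7)^d\,dx$ --- this is the same $d!$ you correctly flag one sentence later, so the plain per-node summation only closes the union bound for $|S|\lesssim n/d$, not up to $ne^{-d/20}$. The paper's escape is to cap the per-request probability at $1$ inside the reference measure (the $\min\{1,\cdot\}$ in the definition of $q_m$), which truncates the Gamma tail; you need that cap, not just KL nonnegativity. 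Second, the normalization $\sum_m q_m\le 1$ in the paper is verified only for $|S|\le n/14$, whereas your small/large split point $ne^{-d/20}$ equals about $0.17n$ at $d=35$; either lower the split to $n/14$ (the cut-counting large-set argument of Lemma~\ref{lem:exp_large_subset_PDG} works down to $ne^{-\Omega(d)}$, so it covers $[n/14,n/2]$) or re-verify the normalization up to $0.17n$, where the term $(n/k)^2e^{-0.36n/k}$ exceeds $1$ with the paper's constants.
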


 The proof proceeds analyzing     three different 
  size ranges of the    vertex  subset  $S \subseteq N_{T_r}$,  the  expansion of which has to  be shown.

\paragraph{Expansion of small subsets.}

\begin{lemma}[Expansion of small subsets] \label{lem:poi:exp:small1}
Under the hypothesis of Theorem \ref{thm:exp:pdge}, for subsets $S$ of $N_{T_r}$, with probability of at least $1-2/n^2$,
\begin{equation}
    \min_{0 \leq |S| \leq n/\log^2n} \frac{|\partial_{out}(S)|}{|S|}\geq 0.1 \, .
\end{equation}

\end{lemma}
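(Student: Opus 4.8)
The plan is to mirror the small-subset argument used for the streaming model with regeneration (Theorem~\ref{thm:expansion-stream}), replacing the streaming edge-probability estimate (Lemma~\ref{lem:node_destination}) by its Poisson counterpart (Lemma~\ref{lem:existence_edge_hyp}), and to absorb the extra age-dependence that the latter introduces by exploiting that we only range over very small sets. Concretely, the snapshot $G_{T_r}$ fails the claimed inequality only if there are disjoint sets $S,T\subseteq N_{T_r}$ with $1\le |S|\le n/\log^2 n$, $|T|=\lfloor 0.1|S|\rfloor$, and $\partial_{out}(S)\subseteq T$; denote this last event by $A_{S,T}$, exactly as in the proof of Lemma~\ref{lem:exp_large_sub_SDG} (an integer boundary strictly below $0.1|S|$ can always be padded into such a $T$). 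It therefore suffices to bound $\Prc{\exists\,S,T:A_{S,T}}$ by a union bound, after first controlling $\Prc{A_{S,T}}$ for a fixed pair.

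To estimate $\Prc{A_{S,T}}$ I would first condition on the high-probability event $\mathcal{L}$ of Lemma~\ref{lem:life_of_nodes}, that every node of $N_{T_r}$ was born after step $T_{r-7n\log n}$ (which holds with probability at least $1-2/n^{2.1}$), together with $|N_{T_r}|\le 1.1n$ from Lemma~\ref{thm:concentration_nodes}. Under $\mathcal{L}$ every active node has age $i\le 7n\log n$, so the bound \eqref{eq:uppedgeprob-poisson} of Lemma~\ref{lem:existence_edge_hyp} shows that each single request has destination equal to any fixed node with probability at most $\tfrac{1}{0.8n}\bigl(1+\tfrac{7\log n}{1.7}\bigr)\le \tfrac{5\log n}{0.8n}$ for large $n$. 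The event $A_{S,T}$ is contained in the event that all $d|S|$ out-requests issued by the nodes of $S$ land inside $S\cup T$; conditioning on the churn realization, these requests are independent uniform choices, so taking the product over requests gives, in analogy with the streaming bound,
\begin{equation}
\Prc{A_{S,T}\mid\mathcal{L}}\ \le\ \left(\frac{5\log n}{0.8n}\,|S\cup T|\right)^{d|S|}\ \le\ \left(\frac{6.9\,s\log n}{n}\right)^{ds},
\end{equation}
where $s=|S|$ and I used $|S\cup T|\le 1.1 s$. This is exactly the streaming estimate inflated by a factor $(\log n)^{ds}$, which is the only genuinely new feature of the Poisson analysis.

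The final step is the union bound over the $\binom{|N_{T_r}|}{s}\binom{|N_{T_r}|-s}{\lfloor 0.1 s\rfloor}$ choices of $(S,T)$. Using $\binom{1.1n}{s}\le (1.1en/s)^s$ and the analogous bound for $T$, the contribution of size-$s$ sets is at most $\bigl[C\,(\log n)^{d}\,(s/n)^{\,d-1.1}\bigr]^{s}$ for an absolute constant $C$. The crucial point is the calibration of the range: since $s\le n/\log^2 n$ we have $(s/n)^{d-1.1}\le (\log n)^{-2(d-1.1)}$, so for $d\ge 35$ (as assumed) the bracket is at most $C'(\log n)^{2.2-d}<1$ and, more importantly, one checks by differentiating that the exponent $s\log\!\bigl[C(\log n)^d (s/n)^{d-1.1}\bigr]$ is strictly decreasing in $s$ over the whole range. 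Hence the size-$s$ term is maximized at the smallest admissible $s$, where it is already $n^{-\Omega(d)}$, and summing the at most $n$ terms keeps the total at $n^{-\Omega(d)}\le \tfrac12 n^{-2}$. Combining with the $2/n^{2.1}$ failure probability of $\mathcal{L}$ yields the claimed $1-2/n^2$.

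I expect the main obstacle to be precisely this $(\log n)^{ds}$ blow-up coming from ``very old'' nodes, whose edge probability is $\omega(1/n)$: the union bound only closes because the restriction $|S|\le n/\log^2 n$ forces $s\log n/n\le 1/\log n\to 0$, so the factor $(s/n)^{d}$ beats $(\log n)^{d}$. Two minor points also need care. First, one must verify that conditioning on the churn makes the $d$ out-edge destinations of each node genuinely independent (each regeneration being a fresh uniform draw given the births and deaths), so that the product bound is legitimate. Second, one must treat the degenerate small sets with $\lfloor 0.1 s\rfloor=0$ separately: there $A_{S,T}$ asks that $S$ be closed under all its out-edges, which is impossible for $s=1$ by edge regeneration and has probability at most $(5s\log n/n)^{ds}=n^{-\Omega(d)}$ for $2\le s\le 9$.
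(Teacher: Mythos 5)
Your proposal is correct and follows essentially the same route as the paper's own proof: condition on the event that all nodes of $N_{T_r}$ were born within the last $7n\log n$ steps and that $|N_{T_r}|\in[0.9n,1.1n]$, plug the worst-case age $i\le 7n\log n$ into Lemma~\ref{lem:existence_edge_hyp} to get a per-request destination probability of order $\log n/n$, bound $\Prc{A_{S,T}}$ by the product over the $d|S|$ requests, and close the union bound by checking that each size-$s$ term is maximized at the endpoints of the range $1\le s\le n/\log^2 n$. Your explicit observation that the restriction $|S|\le n/\log^2 n$ is exactly what absorbs the $(\log n)^{ds}$ inflation, and your handling of the degenerate $\lfloor 0.1s\rfloor=0$ case, are the same mechanism the paper relies on (its displayed factor $(1+5n\log n)$ is evidently a typo for $(1+5\log n)$, consistent with your $\tfrac{1}{0.8n}\bigl(1+\tfrac{7\log n}{1.7}\bigr)$).
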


The proof of the above lemma adapts the argument we used in the proof of Lemma \ref{lem:expansion_small_streaming} for the streaming model, and  it is given in the Subsection \ref{ssec:lem:poi:exp:small1}.

 \paragraph{Expansion of middle-size subsets.} The second case deals with   subsets of size in the range 
 $n/\log^2n \leq |S| \leq n/14$ and its analysis definitely represents one of the key technical contributions of this paper. Indeed, departing from the first case, the presence of a large number of subsets in this range    does not allow to use any  rough worst-case counting argument:   for instance, assuming that all nodes in the considered subset $S$ have age $\bigO(n \log n)$ and applying the corresponding edge-probability bound given by  \eqref{eq:uppedgeprob-poisson} would lead to a useless, too large   union bound for  the probability of non-expansion for some subset $S$.
 
 In few words, to  cope with this technical issue, we need to 
 partition and classify the subsets $S$ and $T$ according to their \emph{age profile}. More in detail, we first define a sequence of $\Theta(\log n)$ \emph{slices} of possible nodes ages and then we provide an effective age profile  of  each subset $S$ (and $T$) depending on how large its intersection is with  each of these slices.
 Thanks to the properties of the exponential distributions of the life of every node in the Poisson model (see \eqref{eq:v_not_in_T_r+1} in Lemma \ref{lem:N_m+1_and_death}),
 we show that the existence of a given subset in a given time has a probability that essentially  depends on its  profile. Roughly speaking, the more is the number of old nodes in $S$, the less is the probability of  the presence  of $S$ in $N_{T_r}$. 
 
 Then, combining this profiling with a more refined use of the parameterized 
 bound on the edge probability in   \eqref{eq:uppedgeprob-poisson}, we get 
 a mathematical expression  (see \eqref{eq:kl_div_inequality_touse}) that, in turn, we show to be dominated by the  KL divergence  of two suitably defined probability distributions. Finally, our target probability bound, stated in the next lemma,  is obtained by the 
 standard KL divergence inequality (see Theorem \ref{thm:kullback_inequality}). 
 The arguments above allow us to prove the following result.

\begin{lemma}[Expansion of middle-size subsets] \label{lem:expansion_2}
Under the hypothesis of Theorem \ref{thm:exp:pdge}, for subsets $S$ of $N_{T_r}$, 
with probability of at least $1-2/n^2$,
\begin{equation}
\label{eq:lem:expansion_2}
    \min_{n/\log^2n \leq |S| \leq n/14} \frac{|\partial_{out}(S)|}{|S|}\geq 0.1 \, .
\end{equation}
\end{lemma}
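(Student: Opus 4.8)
The plan is to establish the lemma by the standard first-moment / union-bound strategy for vertex expansion, refined so as to control the strongly heterogeneous edge probabilities caused by nodes of different ages. For a pair of disjoint subsets $S,T \subseteq N_{T_r}$ with $|T| = 0.1|S|$, let $A_{S,T}$ be the event $\{\partial_{out}(S) \subseteq T\}$, exactly as in the proof of Lemma \ref{lem:exp_large_subset_PDG}. If no $S$ with $n/\log^2 n \leq |S| \leq n/14$ and no corresponding $T$ realize $A_{S,T}$, then every such $S$ satisfies $|\partial_{out}(S)| > 0.1|S|$, which is the claim. Hence I would bound $\Prc{\exists\,(S,T):\, A_{S,T}}$ by $\sum_{(S,T)} \Prc{S,T \subseteq N_{T_r} \text{ and } A_{S,T}}$ and show this sum is at most $2/n^2$, working on the high-probability events of Lemma \ref{thm:concentration_nodes} (so that $|N_{T_r}| = \Theta(n)$) and Lemma \ref{lem:life_of_nodes} (so that all ages lie in $[0,7n\log n]$).

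The first ingredient is a bound on the probability of a single template $(S,T)$. Conditioned on the ages of the nodes, $A_{S,T}$ forces each of the $d|S|$ requests issued by nodes of $S$ to fall inside $S\cup T$; using Lemma \ref{lem:existence_edge_hyp}, this yields a factor of roughly $\prod_{u \in S} \left( \frac{|S \cup T|}{0.8 n}\left(1 + \frac{i_u}{1.7 n}\right) \right)^{d}$, where $i_u$ is the round-age of $u$. Crucially this bound must be capped at $1$: for the oldest source nodes the per-request estimate is as large as $\polylog(n)/n$, so the product over the $d|S|$ requests can exceed $1$ and become vacuous, and for old-heavy templates I would instead use $\Prc{A_{S,T}} \leq 1$ and rely solely on the existence probability. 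The second ingredient is that template existence is itself unlikely when $S$ contains old nodes: by \eqref{eq:v_not_in_T_r+1} a node of round-age $i$ survives to round $T_r$ with probability at most $(1 - 1/(2.2n))^{i} \leq e^{-i/(2.2n)}$.

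To organize the union bound I would partition the admissible ages $[0,7n\log n]$ into $\Theta(\log n)$ slices and classify each pair $(S,T)$ by its age profile $(s_\ell)_\ell$, where $s_\ell$ counts the nodes of $S$ in slice $\ell$ (and analogously for $T$). The number of templates with a given profile is $\prod_\ell \binom{w_\ell}{s_\ell}$, with $w_\ell$ the number of birth rounds in slice $\ell$; multiplying by the per-slice survival factor $e^{-s_\ell a_\ell/(2.2n)}$ and the (capped) edge factor and taking logarithms, the contribution of a profile becomes, up to lower-order terms, a sum of the form $\sum_\ell s_\ell \log \frac{p_\ell}{\pi_\ell}$, where $p_\ell = s_\ell/|S|$ is the empirical age distribution of $S$ and $\pi_\ell$ is a reference distribution built from the equilibrium age density $\propto e^{-a/(2.2n)}$ tilted by the edge-probability weight. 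This is where the demographics enter: the expression is dominated by $|S|$ times a Kullback--Leibler divergence $D(p \,\|\, \pi)$, and the KL inequality of Theorem \ref{thm:kullback_inequality} shows the log-contribution is at most $-c\,|S|$ for an absolute constant $c>0$ once $d \geq 35$. The competing effects — the edge factor $(1+i/(1.7n))^{d}$, which rewards old source nodes, versus the survival factor $e^{-i/(2.2n)}$ and the feasibility constraint $s_\ell \leq w_\ell$, which penalize them — balance precisely through this divergence.

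Finally I would sum the per-profile bounds $e^{-c|S|}$ over all $\Theta(\log n)$-slice profiles (a $\poly(n)^{\Theta(\log n)} = e^{\bigO(\log^2 n)}$ factor, absorbed since $|S| \geq n/\log^2 n$ makes $e^{-c|S|} \leq e^{-cn/\log^2 n}$ super-polynomially small) and over the $\bigO(n)$ sizes $|S| \in [n/\log^2 n, n/14]$, concluding $\Prc{\exists\,(S,T):\, A_{S,T}} \leq 2/n^2$. I expect the main obstacle to be exactly the middle step: choosing the slice granularity and the reference distribution $\pi$ so that the age-dependent edge probabilities collapse into a single KL term, and checking that the edge-probability blow-up for old source nodes is always dominated — either by the survival/feasibility penalty (old-heavy profiles, where the edge bound is capped at $1$) or by the small factor $|S|/n$ hidden in the edge probability (young-heavy profiles) — uniformly over all profiles and all sizes in the stated range, which is what forces the constant $d \geq 35$.
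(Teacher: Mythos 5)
Your proposal is correct and follows essentially the same route as the paper's proof: the same union bound over pairs $(S,T)$ with the edge-probability factor from Lemma \ref{lem:existence_edge_hyp} capped at $1$ and multiplied by the survival factor $e^{-i/(2.2n)}$ from \eqref{eq:v_not_in_T_r+1}, the same classification of templates by age profile over $\Theta(\log n)$ slices of width $n$, the same reduction of the per-profile log-contribution to a KL divergence between the empirical age distribution of $S$ and a tilted reference distribution (whose normalization $\sum_m q_m \leq 1$ is exactly where the condition on $d$ and the cap $|S| \leq n/14$ enter), and the same absorption of the $e^{\bigO(\log^2 n)}$ profile count by $e^{-c|S|}$ with $|S| \geq n/\log^2 n$. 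You have correctly identified the main obstacle (engineering the reference distribution so that the competing edge/survival effects collapse into a single nonnegative KL term), which the paper resolves via Jensen's inequality on the $T$-part and the explicit construction of $q_m$.
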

\begin{proof}
  From Lemma \ref{lem:life_of_nodes},   all the nodes in $N_{T_r}$ are born after time $T_{r-7n\log n}$ with probability of at least $1-1/n^2$. So, if we define the event
  \[L_r=\{\hbox{each node in $N_{T_r}$ is born after time $T_{r-7n \log n}$}\}\,,\]
  we get that $\Prc{L_r} \geq 1-1/n^2$:  through the rest of this proof,  we will condition to this event. 
 
  So, if we denote $i$ as the node that joined the network at round $T_{r-i+1}$ (i.e. the node has age of $i$ rounds), conditioning to $L_r$,
\[N_{T_r} \subseteq \{1,2,3,\dots,7n \log n\}\,.\]
As in the previous analysis of small subsets, we have to show that (conditioning to $L_r$) any two disjoint sets $S,T \subseteq \{1,2,\dots,7n \log n\}$, such that $n/\log^2n \leq |S| \leq n/14$,   $|T|=0.1|S|$,     $S,T \subseteq N_{T_r}$, and $\partial_{out}(S) \subseteq T$, may exist only  with negligible probability. To this aim, we define the following event
\begin{equation}
    A_{S,T}=\{\partial_{out}(S) \subseteq T\} \cap \{S,T \subseteq N_{T_r}\}\,.
\end{equation}
For the  law of total probability, 
\begin{equation}
   \Prc{\min_{n/\log^2n \leq |S| \leq n/14}\frac{|\partial_{out}(S)|}{|S|} \leq 0.1} \leq \sum_{\substack{n/\log^2n \leq |S| \leq n/2,|T|=0.1|S| \\ S,T \subseteq \{1,2,\dots,7n\log n\}}} \Prc{A_{S,T} \mid L_r}+\frac{1}{n^2}\,,
   \label{eq:final_prob}
\end{equation}
   and, hence, our next goal is to upper bound the quantity $\Prc{A_{S,T} \mid L_r}$. For each $i \in S$, let $B_i$ be  the event ``Each of the $d$ requests of   node $i$ has destination in $S \cup T$''. Then,  we can write 
   \[ A_{S,T}=\cap_{i \in S}B_i \cap \{S,T \in N_{T_r}\} \, , \] and for Bayes' rule
\begin{align}
\label{eq:A_S,T}
    &\Prc{A_{S,T} \mid L_r}= \Prc{\bigcap_{i \in S}B_i \mid S,T \subseteq N_{T_r},L_r}\Prc{S,T \subseteq N_{T_r} \mid L_r} \,.
\end{align}
From Lemma \ref{lem:existence_edge_hyp}, conditional to the event $\{S,T \subseteq N_{T_r}\}$, we get   
\begin{equation}
    \Prc{B_i \mid S,T \subseteq N_{T_r},L_r}= \left[\frac{|S \cup T|}{0.8n}\left(1+\frac{i}{1.7n}\right)\right]^d\,.
    \label{eq:pr_S_i}
\end{equation}
Since we will use \eqref{eq:A_S,T} to bound $\Prc{A_{S,T} \mid L_r}$, we need an upper bound   for 
   $ \Prc{S,T \subseteq N_{T_r}}$.
To this aim,  we can use the bound \eqref{eq:v_not_in_T_r+1} in Lemma \ref{lem:N_m+1_and_death}. However, according to the definition of  \emph{round} in Definition \ref{def:T_i}, we know that the death of one node in one single round is not independent of the death of the others. Indeed, if we know that, in   a given round, the node $v$ dies, we will also know   that in this round no other event occurs, and, so, noone of the other nodes dies. Moreover, if we know that one node does not die in a given round, the probability to die of the other nodes will be larger. To cope with this issue, we consider the probability that a fixed set of node survives in one round. From  Lemma \ref{lem:N_m+1_and_death},   for an arbitrary set of $r$ nodes it holds 
\begin{equation}
    \Prc{v_1,\dots,v_r \in N_{T_{r}} \mid v_1,\dots,v_r \in N_{T_{r-1}},L_r} \leq 1-\frac{r}{2.2n} \leq \left(1-\frac{1}{2.2n}\right)^{r}\,,
    \label{eq:v_1,v_k_in_N_Tr}
\end{equation}
This is in fact  the probability that the next step is not characterized by the death of any of the $k$ considered nodes. The last inequality in \eqref{eq:v_1,v_k_in_N_Tr} follows from the binomial inequality. So, thanks to  \eqref{eq:v_1,v_k_in_N_Tr} and to  the memoryless property of the exponential distribution,  
\begin{equation}
    \Prc{S,T \subseteq N_{T_r} \mid L_r} \leq \prod_{i \in S \cup T}\left(1-\frac{1}{2.2n}\right)^i \leq \prod_{i \in S \cup T}e^{-i/2.2n}\,,
    \label{eq:pr_S,T_in_N_T}
\end{equation}
where,   in \eqref{eq:pr_S,T_in_N_T} we used the fact that, from \eqref{eq:v_1,v_k_in_N_Tr}, each node contributes in the product with a factor $1-1/(2.2n)$ for each round of its life.
Since each node chooses the destination of its out-edges independently of the other nodes, we can place  \eqref{eq:pr_S_i} and  \eqref{eq:pr_S,T_in_N_T} into  \eqref{eq:A_S,T},  and  obtain
\begin{align}
    \Prc{A_{S,T} \mid L_r}\leq \prod_{i \in S \cup T}e^{-i/2.2n}\cdot \prod_{i \in S} \min \left\{1, \left[\frac{|S \cup T|}{0.8n}\left(1+\frac{i}{1.7n}\right)\right]^d \right\}\,.
    \label{eq:bound_A_S,T}
\end{align}
   For each set $R \subseteq N_{T_r}$,  we define the sequence $(K_1^R,\dots,K_L^R)$ (where $L= 7 \log n$), whose goal is to classify the nodes of the set according to their \emph{age profile}:
\begin{align}
    &K_1^R=|R \cap \{1,2,\dots,n\}| \notag \\ &K_2^R=|R \cap \{n+1,\dots 2n\}| \notag \\ &\cdots \notag \\ \notag &K_{L}^R=|R \cap \{(L-1)n+1,\dots, L n\}|\,.
\end{align}
Notice that, if $|R|=r$ and $K_1^R=r_1,\dots,K_L^R=r_L$,  then it must holds $\sum_{m=1}^L r_m=r$. For each set $R \subseteq N_{T_r}$, we denote the vector of random variables $(K_1^R,\dots,K_L^R)$ as $\mathbf{K}^R$.
According to this definition, by setting $\mathbf{k}=(k_1,\dots,k_L)$ and $\mathbf{h}=(h_1,\dots,h_L)$, we can rewrite \eqref{eq:final_prob} as follows:
\begin{align}
\label{eq:final_prob_2}
    &\Prc{\min_{n/\log^2n \leq |S| \leq n/14} \frac{|\partial_{out}(S)|}{|S|} \leq 0.1} \\ &\leq \sum_{k=n/\log^2n}^{n/14} \sum_{\substack{k_1+\dots+k_L=k \\ h_1+\dots+h_L=0.1k}} \sum_{\substack{S,T: \ \mathbf{K}^S=\mathbf{k} \\ \mathbf{K}^T=\mathbf{h}}} \Prc{A_{S,T} \hbox{ s.t. } \mathbf{K}^S=\mathbf{k},\ \mathbf{K}^T=\mathbf{h} \mid L_r}+\frac{1}{n^2}\,.
\end{align}
Indeed, we have to sum over all the possible size  $k=n/\log^2n,\dots,n/14$ of the set $S$, all the possible vectors $\mathbf{k}$ and $\mathbf{h}$ whose sum of the elements is equal to $k$ and $0.1k$, respectively (i.e. the characterization of the age profiles of $S$ and $T$ with $|S|=k$ and $|T|=0.1|S|=0.1k$), and, finally, over all the possible sets $S,T$ characterized by $\mathbf{K}^S=\mathbf{k}$ and $\mathbf{K}^T=\mathbf{h}$, respectively.

\noindent
From \eqref{eq:bound_A_S,T}, we get

\begin{align}
    &\Prc{A_{S,T} \hbox{ s.t. } \mathbf{K}^S=\mathbf{k},\ \mathbf{K}^T=\mathbf{h} \mid L_r}  \leq p(\mathbf{k},\mathbf{h}) \label{def:p(k,h)} \\
    &= \prod_{m=1,\dots,L}e^{-0.4(m-1)k_m}  \prod_{m=1,\dots,L}e^{-0.4(m-1)h_m}\prod_{m=1,\dots,L}\min \left\{1,\left[\frac{|S \cup T|}{0.8n}\left(1+0.6m \right) \right]^{dk_m} \right\}\,.
  \notag
\end{align}
The number of    subsets $S,T \subseteq \{1,2,\dots,7n\log n\}$ such that  $(K_1^S,\dots,K_L^S)=(k_1,\dots,k_L)$ and  $(K_1^T,\dots,K_L^S)=(h_1,\dots,h_L)$
is bounded by
\begin{equation}
    n(\mathbf{k},\mathbf{h})=\binom{n}{k_1}\cdot\binom{n}{h_1} \cdots \binom{n}{k_2}\cdot \binom{n}{k_2}\cdots \binom{n}{k_L}\cdot \binom{n}{h_L}\,.
    \label{def:n(k,h)}
\end{equation}
So, we introduce  the quantity $s(\mathbf{k,h})$ and get the  following bound from \eqref{def:p(k,h)} and \eqref{def:n(k,h)}: 
\begin{equation}
\label{def:s(k,h)}
     s(\mathbf{k,h})=\sum_{\substack{S,T:\mathbf{K}^S=\mathbf{k}\\\mathbf{K}^T=\mathbf{h}}}\Prc{A_{S,T} \hbox{ s.t. } \mathbf{K}^S=\mathbf{k},\ \mathbf{K}^T=\mathbf{h}} \leq n(\mathbf{k},\mathbf{h}) \cdot p(\mathbf{k},\mathbf{h})\,.
\end{equation}
We place  \eqref{def:n(k,h)} and \eqref{def:p(k,h)} into \eqref{def:s(k,h)},  and, since $|S \cup T|=1.1k$, then: 
\begin{align}
    s(\mathbf{k,h}) \leq  \prod_{m=1}^{L}\left(\binom{n}{h_m}e^{-0.4(m-1)h_m}\cdot \binom{n}{k_m} e^{-0.4(m-1)k_m}\min \left\{1,\left(\frac{1.1k}{0.8n}(1+0.6m)\right)^{dk_m}\right\}\right)\,.
\end{align}
The next step is to prove that $s(\mathbf{k},\mathbf{h})\leq 2^{-0.15k}$ and, to this aim, we split $s(\mathbf{k},\mathbf{h})$ in two factors, $s_1(\mathbf{k,h})$ and $s_2(\mathbf{k,h})$: 
\begin{align}
    s_1(\mathbf{k,h})& =\prod_{m=1}^L \binom{n}{h_m}e^{-0.4(m-1)h_m}\, ; \\
 s_2(\mathbf{k,h})& =\prod_{m=1}^L \binom{n}{k_m}e^{-0.4(m-1)k_m}\min \left\{1,\left(\frac{1.1k(1+0.6m)}{0.8n}\right)^{dk_m}\right\} \, .
\end{align}
To give an  upper bound on $s(\mathbf{k,h})$,
 we provide separate upper bounds for  $\log( s_1(\mathbf{k,h}))$ and $\log(s_2(\mathbf{k,h}))$. In particular, we want to  show that 
\begin{equation}
    \log( s(\mathbf{k,h}))\leq -0.15k \, ,
    \label{eq:aim_log_s}
\end{equation}
which implies that
\begin{equation}
    s(\mathbf{k,h}) \leq 2^{-0.15k} \, .
    \label{eq:aim_bound_final_s(k,h)}
\end{equation}
We will start    bounding $\log(s_1(\mathbf{k,h}))$. Using  $\binom{n}{k}\leq \left(\frac{n\cdot e}{k}\right)^{k}$,  
\begin{align}
    \log(s_1(\mathbf{k,h})) \leq \sum_{m=1}^{L}h_m\log\left(\frac{n }{h_m}e^{-0.4m+1.4}\right)\,.
    \label{eq:log_s_1}
\end{align}
Since $\log(x)$ is a concave function, we can apply Jensen's inequality. In detail, for any concave function $\varphi$, numbers $x_1,\dots,x_L$ in its domain, and positive weights $a_1,\dots, a_L$,  it holds that
\[\frac{\sum_{m=1}^L a_m \varphi(x_m)}{\sum_{m=1}^L a_m} \leq \varphi\left(\frac{\sum_{m=1}^L a_m x_m}{\sum_{m=1}^L a_m}\right)\,.\]
So, taking $a_m=h_m$  $x_m=\frac{n}{h_m}e^{-0.4m+1.4}$ and recalling that $\sum_{m=1}^L h_m=0.1k$,  we obtain  
\begin{align}
    \frac{\sum_{m=1}^L h_m \log\left(\frac{n}{h_m}e^{-0.4m+1.4}\right)}{\sum_{m=1}^L h_m} \leq \log \left(\frac{n\sum_{m=1}^L e^{-0.4m+1.4}}{0.1k}\right)\,.
    \label{eq:jensen_log_s_1}
\end{align}
Since  $\sum_{m=1}^L e^{-0.4m+1.4}\leq 7$, combining    \eqref{eq:log_s_1}, \eqref{eq:jensen_log_s_1} and  since $k \leq n/14$,  we get 
\begin{equation}
    \log(s_1(\mathbf{k,h})) \leq 0.1k \log\left(\frac{7n}{0.1k}\right) \leq k \log \left(\frac{n}{7k}\right) \, ,
    \label{eq:bound_log_s1}
\end{equation}
where the last inequality follows by a simple calculation.

As for   $\log(s_2(\mathbf{k,h}))$,
\begin{align}
    \log(s_2(\mathbf{k,h})) \leq \sum_{m=1}^L k_m \log\left(\frac{n}{7k}\cdot \frac{n \cdot e}{k_m}e^{-0.4(m-1)}\left(\min \left\{1,\frac{1.1k(0.6m+1)}{0.8n}\right\}\right)^d\right) -k\log\left(\frac{n}{7k}\right) \, .
    \label{eq:bound_log_s_2}
\end{align}
Then, since $\log (s(\mathbf{k},\mathbf{h}))=\log( s_1(\mathbf{k},\mathbf{h}))+\log( s_2(\mathbf{k},\mathbf{h}))$,  from \eqref{eq:bound_log_s1} and \eqref{eq:bound_log_s_2},
\begin{align}
    \log(s(\mathbf{k,h}))\leq \sum_{m=1}^L k_m \log\left(\frac{0.6n^2}{k\cdot k_m}e^{-0.4m}\left( \min \left\{1,\frac{1.1k(0.6m+1)}{0.8n}\right\}\right)^d\right)\,.
    \label{eq:bound_log_s}
\end{align}
So,   from the above inequality,   
\begin{align}
    -\frac{\log(s(\mathbf{k,h}))}{k} \geq \sum_{m=1}^L \frac{k_m}{k}\log\left(\frac{k_m}{k}\frac{9}{10}\cdot \frac{k^2}{0.6n^2}e^{0.4m}\left(\min \left\{1,\frac{1.1k(0.6m+1)}{0.8n}\right\}\right)^{-d}\right)+\log(10/9)
    \label{eq:bound_log(s(k,h)/k}
\end{align}
Now, notice that, if we prove that
\begin{equation}
    \sum_{m=1}^L\frac{k_m}{k}\log\left(\frac{k_m}{k}\frac{9}{10}\cdot \frac{k^2}{0.6n^2}e^{0.4m}\left(\min \left\{1,\frac{1.1k(0.6m+1)}{0.8n}\right\}\right)^{-d}\right) \geq 0 \, ,
    \label{eq:kl_div_inequality_touse}
\end{equation}
then,  from \eqref{eq:bound_log(s(k,h)/k}, we would get  \eqref{eq:aim_log_s}, since $\log(10/9) \geq 0.15$.

So, we want  to prove \eqref{eq:kl_div_inequality_touse}. Thanks to the \emph{KL divergence inequality} (see Theorem \ref{thm:kullback_inequality}), it is sufficient  to show that the following functions are density mass functions over $\{1,2,\dots,L\}$:
\begin{equation}
    p_m=\frac{k_m}{k} \quad \text{and} \quad q_m=\frac{10}{9}\cdot \frac{0.6n^2}{k^2}e^{-0.4m}\min\left\{1,\left(\frac{1.1k(0.6m+1)}{0.8n}\right)^d\right\}\,.
\end{equation}
Notice that $\sum_{m=1}^L p_m=1$, and 
\begin{align}
    \sum_{m=1}^L q_m=&\sum_{m=1}^{0.9\frac{n}{k}-1}\frac{10}{9}\frac{0.6n^2}{k^2}e^{-0.4m}\left(\frac{9}{10}\right)^{d-2}\left(\frac{1.1k(0.6m+1)}{0.8n}\right)^2+\sum_{r=0.9\frac{n}{k}}^L\frac{10}{9}\frac{0.6n^2}{k^2}e^{-0.4m}  \\ &\leq 1.1\left(\frac{1.1}{0.8}\right)^2\left(\frac{9}{10}\right)^{d-3}+\frac{10}{9}\cdot \frac{0.6n^2}{k^2}\cdot e^{-0.36\frac{n}{k}} \leq 1\,,
\end{align}
where the last inequality holds   we taking $d$ large enough ($d \geq 30$) and $k \leq \frac{n}{14}$.
So, we have proved that $q_m$ and $p_m$ are density mass functions over $\{1,2,\dots,L\}$ and so, thanks to Theorem \ref{thm:kullback_inequality}, \eqref{eq:kl_div_inequality_touse} holds and  implies  \eqref{eq:aim_bound_final_s(k,h)}.

By placing \eqref{def:s(k,h)} in \eqref{eq:final_prob_2} and using  \eqref{eq:aim_bound_final_s(k,h)},

\begin{equation}
\Prc{\min_{n/\log^2 \leq |S| \leq n/14}\frac{|\partial_{out}(S)|}{|S|} \leq 0.1} \leq     \sum_{k=n/\log^2n}^{n/14}\sum_{\substack{k_1+\dots+k_L=k \\ h_1+\dots+h_L=0.1k}} s(\mathbf{k,h})+\frac{1}{n^2} \leq \frac{2}{n^2} \, ,
\end{equation}
  where the last inequality holds   since the number of integral sequences $k_1,\dots,k_L$ that sum up $k$ is bounded by $\binom{k+L}{L}$ (and the same holds for $h_m$), and hence, from simple calculations and recalling that $L=7n \log n$, 
\begin{align}
\label{eq:final_sum_s(k,h)}
      &\sum_{k=n/\log^2n}^{n/14}\sum_{\substack{k_1+\dots+k_L=k \\ h_1+\dots+h_L=0.1k}} s(\mathbf{k,h}) \leq \sum_{k=n/\log^2n}^{n/14}\binom{L+0.1k}{L}\binom{L+k}{L}2^{-0.15k} \leq \frac{1}{n^2} \, .
\end{align}
   
\end{proof}

\paragraph{Expansion of big subsets.}
The last case of our analysis of the vertex expansion of the \PDGE\ model considers subsets of big size $|S| \geq n/14$. It analysis is much simpler than the that of the previous case and proceeds exactly as the proof of Lemma \ref{lem:exp_large_subset_PDG} about the expansion of large subsets in the \ \PDG \ model. Indeed, in both the \PDG \ and \PDGE \ models,   we use the fact that any node $u \in N_{T_r}$ chooses any fixed older node $v \in N_{T_r}$ with probability $\geq 1/1.1n$ (thanks to Lemma \ref{thm:concentration_nodes}).
The proof is omitted since it is identical to that of Lemma \ref{lem:exp_large_subset_PDG} in Subsection \ref{ssec:lem:exp_large_subset_PDG}.

\begin{lemma}[Expansion of large subsets] \label{lem:expansion_3}
Under the hypothesis of Theorem \ref{thm:exp:pdge}, for subsets $S$ of $N_{T_r}$, 
with probability of at least $1-2/n^2$,
\begin{equation}
\label{eq:lem:expansion_3}
    \min_{n/14 \leq |S| \leq |N_{T_r}|/2} \frac{|\partial_{out}(S)|}{|S|}\geq 0.1 \, . 
\end{equation}. 
\end{lemma}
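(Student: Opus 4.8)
The plan is to establish this big-subset case by the same first-moment / union-bound argument already used for Lemma~\ref{lem:exp_large_subset_PDG} in the \PDG\ model, exploiting that in the regime $|S|=\Theta(n)$ a crude worst-case count is already sufficient, and that the regeneration mechanism can be sidestepped entirely: since a node only regenerates an out-edge when the corresponding neighbour \emph{dies}, any birth-request of a node $w$ that landed on a node still alive at $T_r$ yields an edge present in $G_{T_r}$. As every node of $S\cup T\cup P=N_{T_r}$ is alive at $T_r$, the bad event can be bounded using the $d$ birth-requests alone, exactly as in the no-regeneration case. First I would condition on the node-count concentration of Lemma~\ref{thm:concentration_nodes}, both at time $T_r$ and, via a union bound over the $\bigO(n\log n)$ steps spanned by the lifetime window of Lemma~\ref{lem:life_of_nodes}, at every birth time of a node alive at $T_r$; this costs only $o(1/n^2)$ in probability and guarantees that $0.9n\le |N_t|\le 1.1n$ throughout, so that each birth-request is a uniform choice over at most $1.1n$ nodes.

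Next, fix a candidate pair of disjoint sets $S,T\subseteq N_{T_r}$ with $n/14\le |S|\le |N_{T_r}|/2$ and $|T|=0.1|S|$, and set $A_{S,T}=\{\partial_{out}(S)\subseteq T\}$ and $P=N_{T_r}\setminus(S\cup T)$, as in the \PDG\ proof. The event $A_{S,T}$ forces the absence of any edge between $S$ and $P$, and since $|T|=0.1|S|$ we have $|P|=|N_{T_r}|-1.1|S|\ge 0.45|N_{T_r}|=\Omega(n)$. For each $w\in S\cup P$ let $X_w$ be the set of nodes on the \emph{opposite} side that are older than $w$ (so $X_w\subseteq P$ when $w\in S$ and $X_w\subseteq S$ when $w\in P$); every node in $X_w$ was alive at $w$'s birth, so by the conditioning each of the $d$ birth-requests of $w$ lands on any fixed member of $X_w$ with probability at least $1/(1.1n)$. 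By the observation above, $A_{S,T}$ forces all $d$ birth-requests of every $w$ to avoid $X_w$, and these choices are independent across distinct nodes, so
\[
\Prc{A_{S,T}} \le \prod_{w\in S\cup P}\left(1-\frac{|X_w|}{1.1n}\right)^{d} \le \exp\left(-\frac{d}{1.1n}\sum_{w\in S\cup P}|X_w|\right).
\]
A counting argument now identifies $\sum_{w}|X_w|$ with the number of cross pairs in $S\times P$, each counted once at its younger endpoint, giving $\sum_w|X_w|=|S|\cdot|P|=\Omega(n|S|)$, whence $\Prc{A_{S,T}}\le \exp(-\Omega(d|S|))$.

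Finally I would take a union bound over all admissible pairs. The number of choices of $(S,T)$ of the prescribed sizes is at most $\binom{1.1n}{|S|}\binom{1.1n}{0.1|S|}$, which by $\binom{N}{k}\le(eN/k)^k$ and $n/|S|\le 14$ is $\exp(\bigO(|S|))$; for $d\ge 35$ the per-pair bound $\exp(-\Omega(d|S|))$ dominates this count with constant slack in the exponent, so summing over the $\bigO(n)$ possible sizes of $|S|$ keeps the total failure probability $o(1/n^2)$, which together with the conditioning yields the claimed $1-2/n^2$.

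The one genuinely delicate point — and the reason the statement is merely pointed back to Lemma~\ref{lem:exp_large_subset_PDG} — is certifying that $\sum_w|X_w|$ is $\Omega(n|S|)$ rather than merely positive, i.e. that a constant fraction of the $|S|\cdot|P|$ cross pairs place the older endpoint within reach of the younger one's uniform birth-requests. In the large-set regime this is automatic, since both $S$ and $P$ have size $\Omega(n)$ and every younger node reaches every older node with probability $\ge 1/(1.1n)$; it is precisely this crude bound that breaks down once $|S|$ falls below $\Theta(n)$, forcing the finer age-profile / KL-divergence demographics argument of Lemma~\ref{lem:expansion_2}.
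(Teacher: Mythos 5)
Your proof is correct and follows essentially the same route as the paper, which simply reuses the argument of Lemma~\ref{lem:exp_large_subset_PDG}: condition on the node-count/lifetime events, observe that a birth-request landing on a node still alive at $T_r$ yields a surviving edge (so regeneration can be ignored), bound $\Prc{A_{S,T}}$ by forcing all birth-requests to avoid the $\Omega(n|S|)$ older cross-neighbours, and close with a union bound. Your only (harmless) deviation is charging each cross pair to its younger endpoint and multiplying over all of $S\cup P$ at once, where the paper instead splits into two cases according to which orientation contains at least half of the $|S|\cdot|P|$ pairs; both give the same $e^{-\Omega(d|S|)}$ bound.
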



\subsubsection{Flooding} \label{sssec:Poisson-flooding}

We now consider the flooding process over the \PDGE\ model introduced in Definition 
\ref{def:flooding_poisson}. The 
vertex expansion property we derived in Theorem \ref{thm:exp:pdge} is here exploited to 
obtain a logarithmic   bound on the   time required by this  process to inform all the nodes of the graph. Notice that, according to the considered topology dynamics,  if there is 
a time in which   all the alive nodes are informed, then every  successive
snapshot of the dynamic graph will have all its nodes informed as well, w.h.p.

\begin{theorem}[Flooding]
\label{thm:flooding_terminates_poisson}
For every constant $d \geq 35$, for every sufficiently large $n$ and for every fixed $r_0 \geq 7n \log n$,  consider  the flooding process over a \PDGE \ sampled from $\mathcal{G}(\lambda, \mu, d)$, with $\lambda=1$ and $\mu=1/n$, and  starting at $t_0=T_{r_0}$. Then, w.h.p.,    the flooding    time is $\bigO(\log n)$. 
\end{theorem}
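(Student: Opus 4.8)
The plan is to upgrade the vertex‑expansion guarantee of Theorem~\ref{thm:exp:pdge} into a logarithmic flooding bound, in the spirit of the classical ``expansion implies fast broadcast'' argument, while absorbing the node churn that occurs \emph{inside} each unit time interval. I work throughout with the discretized process of Definition~\ref{def:flooding_poisson}, which only strengthens the bound. Fix $\tau=\bigO(\log n)$ and condition on a ``good'' event $\calE$, which holds with probability $1-o(1)$ by a union bound over the polylogarithmically many relevant times: (i)~$|N_t|\in[0.9n,1.1n]$ for every integer $t\in[t_0,t_0+\tau]$ (Lemma~\ref{thm:concentration_nodes}); (ii)~every snapshot used by the flooding is a $0.1$-expander, obtained by applying Theorem~\ref{thm:exp:pdge} to each of the $\bigO(\log n)$ event rounds that fall in the window (there are this many because the total event rate is $\Theta(1)$ by Lemma~\ref{lem:N_m+1_and_death}); (iii)~every node alive in the window is young (Lemma~\ref{lem:life_of_nodes}); and (iv)~the numbers of births $\beta_t$ and deaths $\delta_t$ in each interval $(t-1,t)$ are $\bigO(\log n)$. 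Under $\calE$ the analysis splits into a growth phase and a shrinking phase.

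\textbf{Growth phase (reaching $|N_t|/2$).} Let $S'=I_{t-1}\cap N_t$ be the informed nodes surviving the interval. Since an edge vanishes only when one of its endpoints dies, every node of $\partial_{out}^{t-1}(S')$ that survives is a neighbour of a live informed node for the \emph{whole} interval and hence becomes informed, so $I_t\supseteq S'\cup(\partial_{out}^{t-1}(S')\cap N_t)$. Applying expansion of $G_{t-1}$ \emph{to the survivor set} $S'$ (valid while $|S'|\le|N_{t-1}|/2$) gives $|\partial_{out}^{t-1}(S')|\ge 0.1\,|S'|$, and discarding the at most $\delta_t$ boundary deaths yields
\[
	|I_t|\;\ge\;1.1\,|S'|-\delta_t\;\ge\;1.1\,|I_{t-1}|-2.1\,\delta_t.
\]
The crucial point is that this uses only Theorem~\ref{thm:exp:pdge} and needs \emph{no} maximum-degree estimate. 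For $|I_{t-1}|=\Omega(\log n)$ the bound $\delta_t=\bigO(\log n)$ makes the additive term negligible, so $|I_t|\ge 1.05\,|I_{t-1}|$. For the initial regime $|I_{t-1}|=\bigO(\log n)$ I instead argue that \emph{no} death touches the informed set or its boundary: each fixed live node dies within the $\bigO(\log\log n)$ steps of this regime with probability $\bigO(\log\log n/n)$, and since only polylogarithmically many nodes are ever informed or on the boundary here, a union bound removes this event with probability $1-o(1)$, giving clean growth by a factor $1.1$. Either way the informed set grows geometrically and exceeds $|N_t|/2$ within $\bigO(\log n)$ steps.

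\textbf{Shrinking phase and mop-up.} Once $|I_t|>|N_t|/2$, track the uninformed survivors $U_t=N_t\setminus I_t$, with $|U_t|\le|N_t|/2$. Let $U''\subseteq U_t$ be those with \emph{no} surviving informed neighbour; by definition every edge leaving $U''$ ends in a dead node or in $U_t\setminus U''$, so $\partial_{out}^{t}(U'')\subseteq(U_t\setminus U'')\cup\{\text{dead nodes}\}$. Combining this containment with $|\partial_{out}^{t}(U'')|\ge 0.1\,|U''|$ and then adding the $\beta_t$ newborns gives
\[
	1.1\,|U''|\;\le\;|U_t|+\delta_t,\qquad\text{hence}\qquad |U_{t+1}|\;\le\;\frac{|U_t|}{1.1}+\bigO(\log n),
\]
which drives $|U_t|$ down to $\bigO(\log n)$ in a further $\bigO(\log n)$ steps. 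For the final mop-up, when the informed fraction is $1-\bigO(\log n/n)$ each remaining (and each newly born) uninformed node, whose $d$ requests are near-uniform by Lemma~\ref{lem:existence_edge_hyp}, misses the informed set entirely only with probability $(\bigO(\log n/n))^d=o(1/n)$; a union bound informs all of them in $\bigO(1)$ additional steps, and selecting a birth-free interval (which occurs with constant probability $e^{-\lambda}$) exhibits an integer time $t\le t_0+\bigO(\log n)$ with $I_t\supseteq N_t$, i.e.\ completion.

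\textbf{Main obstacle.} The hard part is the interaction between the geometric dynamics and the churn occurring within each step: the discretized rule informs a node only if it neighbours a live informed node for the \emph{entire} interval, so one ill-timed death can simultaneously remove an informed node and sever the only link feeding an uninformed one, and while the informed set is tiny such a death could extinguish the broadcast. The resolution is that deaths relevant to a set of size $k$ occur with probability only $\bigO(k/n)$ per step, hence are vanishingly unlikely while $k$ is small, whereas once $k=\Omega(\log n)$ the worst-case death count $\bigO(\log n)$ is swamped by the $0.1$-expansion surplus. Packaging these two regimes into the single recurrence above, and exploiting the survivor-set formulation of expansion to bypass any degree bound, is the technical heart of the proof.
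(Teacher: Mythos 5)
Your proposal follows essentially the same route as the paper's proof (Subsection \ref{sssec:thm:flooding_terminates_poisson}): condition on concentration of $|N_t|$, per-snapshot expansion from Theorem \ref{thm:exp:pdge}, and $\bigO(\log n)$ births/deaths per unit interval; run a geometric growth phase on the informed survivors up to $|N_t|/2$; then a geometric shrinking phase on the uninformed set. The paper splits the growth into a bootstrap up to $n^{1/10}$ (where all informed and boundary nodes survive a step with probability $1-\bigO(n^{-9/10})$, so the additive loss is zero rather than merely negligible) followed by the regime where the $0.1$-expansion surplus swallows the $\bigO(\log n)$ deaths; your two regimes with threshold at $\Theta(\log n)$ accomplish the same thing, and your survivor-set recurrences match those in Lemmas \ref{lem:flooding_poisson_first}--\ref{lem:flooding_poisson_third}.

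The one step I would not accept as written is the mop-up. For the $\bigO(\log n)$ uninformed nodes that have \emph{survived} the whole process, you bound the probability that their $d$ requests all miss the informed set by $(\bigO(\log n/n))^d$ via Lemma \ref{lem:existence_edge_hyp}. That lemma gives unconditional (given $L_r$) destination probabilities, whereas the event ``$u$ is still uninformed at time $t$'' is determined by, and heavily positively correlated with, where $u$'s out-edges have pointed so far: an out-edge of $u$ that persisted through the shrinking phase while pointing into the uninformed set will keep pointing there until a death forces its regeneration, so you cannot multiply fresh per-edge probabilities for these nodes. The paper sidesteps exactly this by continuing the expansion-based contraction on the set $S_t^*$ of uninformed nodes \emph{excluding} newborns, which then decreases by the clean factor $1/1.1$ per step and reaches zero after a further $\bigO(\log n)$ steps; the fresh-edge probability computation is applied only to the $\bigO(\log n)$ newborns, whose requests genuinely are fresh. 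Replacing your mop-up by that extra expansion stage closes the gap; everything else is sound.
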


As remarked before, in dynamic networks without node churn, it has already been  shown \cite{CMPS11}
that  the good  vertex expansion of every snapshot   implies  fast flooding time (see for instance \cite{CMPS11}). 
In the Poisson models, the  presence of random node churn requires to consider some new technical issues.  
Indeed, once we observe the set of informed nodes $I_t$   at a given snapshot $G_t=(N_t,E_t)$, the expansion of $I_t$ refers to topology $E_t$ while the 1-hop
message transmissions take one  unit of time. So, during this time interval, some topology changes may take place  affecting the expansion observed at time $t$.
To cope with this    issue,  our analysis  splits the process 
  into  three consecutive phases and prove they all have
logarithmic length, w.h.p.
The details of this approach are given in Subsection \ref{sssec:thm:flooding_terminates_poisson}.

\section{Overall Remarks and Open Questions} \label{sec:conlc}
We   studied 
two    models of fully-random dynamic networks with node churns. We analysed their  expansion properties   and gave   results about  the performances of the flooding process. We essentially show that such important aspects depends on the specific adopted topology dynamic, namely, on whether or not,  edge regeneration takes place along the time process.

While our models are too simplified to predict all properties of realistic
networks, the Poisson model with edge regeneration bears a certain similarity
to the way peer-to-peer networks such as Bitcoin are formed. In particular,
although the random choices over the current node set $N_t$ the nodes make to
establish connections is not the connection mechanism adopted in standard
Bitcoin implementations, the set of IP addresses of the active full-nodes of
the Bitcoin network can be easily discovered by a crawler (see, e.g.,
\cite{bitnodes}). This implies that, potentially, nodes can implement a good
approximation of the fully-random strategy by picking random elements from such
on-line table.

We see an interesting future research direction related to our work. The
topology dynamics we considered yield sparse graphs at every round, however,
the maximum node degree can be of magnitude $\bigO(\log n)$. For some real
applications this bound is too large, and finding natural, fully-random
topology dynamics that yield bounded-degree snapshots of good expansion
properties is a challenging issue which has strong theoretical and practical
motivations~\cite{allen2016expanders,augustine2016distributed,mao2020perigee}.

\section{Omitted Proofs for the Streaming Model} \label{apx:ssec:stream}
In this section we present the proofs of the results we obtained for  the streaming model.

\subsection{Omitted Proofs for the streaming model without edge regeneration}

\subsubsection{Lemma \ref{lem:exp_degree_purestreaming}}
\label{ssec:lem:exp_degree_purestreaming}

We first observe that the expected degree of each node in this graph
is $d$. Thus, the expected number of edges in the graph $nd/2$.

\begin{lemma}[Expected degree]
\label{lem:exp_degree_purestreaming}
Let  $G_t=(N_t,E_t)$ be the snapshot of a  \SDG\ $\mathcal{G}(n,d)$. Then, for any $t \geq n$, every    node in $N_t$ has expected degree $d$.
\end{lemma}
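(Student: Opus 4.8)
The plan is to fix a round $t$ past the initial fill-up phase and a node $v \in N_t$, and to split its degree into the contribution of its \emph{out-edges} (the links $v$ requested at birth that are still alive) and its \emph{in-edges} (the links requested by younger nodes towards $v$ that are still alive), following the Remark after Definition~\ref{def:pure_streaming_without}. I would write $\deg_t(v) = \mathrm{outdeg}_t(v) + \mathrm{indeg}_t(v)$ and compute each expectation by linearity, expressed as a function of the age $k$ of $v$ (so $v$ was born at round $t-k$, with $0 \le k \le n-1$). The whole point is that, although both quantities depend on $k$, their sum does not: at round $t$ the other $n-1$ nodes split into exactly $n-1-k$ nodes older than $v$ and $k$ nodes younger than $v$, and these two groups feed the out-degree and the in-degree respectively.

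For the out-degree, note that $v$ issued its $d$ requests at round $t-k$, each to a node chosen uniformly among the $n-1$ other nodes then alive. Such a request still yields a live edge at round $t$ exactly when its endpoint is still alive at $t$, i.e.\ when that endpoint had age at most $n-1-k$ at round $t-k$; there are precisely $n-1-k$ such nodes (they are exactly the older nodes that survive to $t$). Hence each request survives with probability $(n-1-k)/(n-1)$ and $\Expcc{\mathrm{outdeg}_t(v)} = d\,(n-1-k)/(n-1)$, which correctly decreases with the age of $v$.

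For the in-degree I would sum over the younger nodes alive at round $t$, namely the $k$ nodes born in rounds $t-k+1,\dots,t$. Each such node, at its own birth, saw $n$ nodes in the network (with $v$ already present and hence among its $n-1$ admissible targets) and issued $d$ independent requests, each hitting $v$ with probability $1/(n-1)$; moreover every resulting edge is alive at round $t$, since both endpoints are. Thus $\Expcc{\mathrm{indeg}_t(v)} = k\cdot d/(n-1)$, which increases with the age of $v$. Adding the two expressions, the dependence on $k$ cancels: $\Expcc{\deg_t(v)} = \tfrac{d(n-1-k)}{n-1} + \tfrac{dk}{n-1} = d$, and the bound on the expected number of edges follows from $\sum_{v \in N_t} \Expcc{\deg_t(v)} = nd$.

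The only real subtlety — and the step I expect to be the crux — is the bookkeeping that makes the two age-dependent counts exactly complementary, so that the $n-1-k$ surviving out-targets and the $k$ younger in-sources together account for all $n-1$ other nodes. Two caveats should be recorded. First, one must take $t$ large enough that every node currently alive, and every younger node that could have pointed to $v$, was born when the network already contained $n$ nodes; during the initial fill-up the candidate pools are smaller and the clean cancellation breaks, so the equality is meant for the steady-state regime. Second, requests are counted with multiplicity (parallel/repeated links counted as edges), which is what makes the equality exact; deduplicating to a simple graph only perturbs each expectation by an $O(d^2/n)$ term and does not affect the leading behaviour.
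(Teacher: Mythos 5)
Your proposal is correct and follows essentially the same route as the paper: the paper also decomposes the degree of the node of age $k$ (rank $i$ in its notation) into the $d(n-i)/(n-1)$ expected surviving out-requests towards older nodes and the $d(i-1)/(n-1)$ expected in-requests from the younger nodes still alive, and observes that the two counts are complementary so the sum telescopes to $d$. Your two caveats (the steady-state requirement on $t$ and the counting of parallel requests with multiplicity) are implicitly assumed rather than discussed in the paper's proof, so flagging them is a small improvement rather than a divergence.
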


\begin{proof}
We first fix $t \geq n$, and let $v_1,\dots,v_n$ be the nodes n the network at round $t$, where $v_i$ is the node with age $i$. We define the following Bernoulli random variable, for each $i,j \in [n]$ with $i<j$ ($v_i$ joined the network after $v_j$)
\begin{equation}
 z^{(k)}(v_i,v_j)=
\begin{cases}
1 \text{ if the node $v_i$ at the time of its arrival has connected its $k$-th request to $v_j$} \\
0 \text{ otherwise}
\end{cases}
\end{equation}
We notice that for each $k$ and $i,j$ s.t. $i< j$ the random variables $z^{(k)}(v_i,v_j)$ are independent.
We indicate with $\Delta_i^t$ the degree of the node $v_i$ at time $t$, where $i \in [n]$. Then, we can say that $\Delta_i^t$ is the random variable where
\begin{equation}
    \Delta_i^t = \sum_{k=1}^{d} \left( \sum_{j=1}^{i-1}z^{(k)}(v_j,v_i)+\sum_{j=i+1}^{n}z^{(k)}(v_i,v_j)\right)
\end{equation}
and, since for each $i,j$ with $i<j$ and $k$ we have that $\Prc{z^{(k)}(v_i,v_j)=1}=1/(n-1)$, it holds that for each $t \geq n$ and $i=1,\dots,n$
\begin{equation}
    \Expcc{\Delta_i^t}=d.
\end{equation}
\end{proof}

\subsubsection{Proof of Lemma \ref{lem:isolated_nodes}}
\label{ssec:lem:isolated_nodes}

Let $\varepsilon$ be an arbitrary value with $0 < \varepsilon \leq 1/3$.   We first define the set $H$ of the oldest $\varepsilon n$ nodes in $N_t$. We now define the random variable  
\begin{equation}
    X =\{\text{number of nodes in $H$ that are isolated at round $t$ and for the rest of their lifetime}\}\,.
\end{equation}
  Our goal is to show that, w.h.p., $X \geq \frac{1}{2}\varepsilon n e^{-2d}$.  First, to evaluate the expectation of $X$, we introduce the following random variables, for each $v  \in N_t$:
\begin{equation}
    \Delta^{in}_v=\{\text{maximum in-degree of the node $v_i$ for all its lifetime}\};
\end{equation}
\begin{equation}
    \Delta^{out}_v=\{\text{out-degree of the node $v_i$ at round $t$}\}.
\end{equation}
Since, if $\Delta_v^{out}=0$, then the node $v$ will have not out-edges from round $t$ to the rest of its lifetime, we can then  write $X$ as a function of $\Delta^{in}_v$ and $\Delta^{out}_v$, with $v \in H$, i.e., 
\begin{equation}
\label{eq:X_sum_nodes_isolated}
    X=\sum_{v \in H}\mathbb{1}_{\{\Delta^{in}_i=0\}}\mathbb{1}_{\{\Delta^{out}_i=0\}}\,.
\end{equation}
Since each node sends its requests independently of the others, we get   
\begin{equation}
    \Expcc{X}=\sum_{v \in H} \Prc{\Delta^{in}_v=0}\Prc{\Delta^{out}_v=0}\,.
    \label{eq:exp_isolated_nodes_S}
    \end{equation}
The probability that a node $v \in H$ has always in-degree $0$, i.e. it does not receive any request by other nodes for all its lifetime is 
\[ \Prc{\Delta_i^{in}=0}=\left(1-\frac{1}{n}\right)^{nd}\geq e^{-3d/2} \, . \]
The probability that  node $v \in H$ have no out-edges in the current round is 
\[ 
    \Prc{\Delta_v^{out}=0}=\left(1-\varepsilon \right)^d \geq e^{- d/2} \, , \] since $\varepsilon \leq 1/3$. So,   from \eqref{eq:exp_isolated_nodes_S} and since $|H|=\varepsilon n$
    \begin{equation}
        \Expcc{X} \geq \varepsilon n e^{-2d}\,.
    \end{equation}
Observe that, in \eqref{eq:X_sum_nodes_isolated}, $X$ is not a sum of independent random variables, so to get a concentration result,  we   use the method of bounded differences. We introduce the random variables $Y^j_v$, returning the index of the node to which the  node $v$ sends its $j$-th request. Notice that the random variables $\{Y_v^j : v \in N_t\cup N_{t+1}\cdots \cup N_{t+\varepsilon n},j \in [d]\}$ are independent, and they represent the destination of the requests of the nodes in the network and of the $\varepsilon n$ nodes that will join the network after time $t$.   Then, considering the vector  $\mathbf{Y}$ of these random variables, we can easily express $X$ as a function of $\mathbf{Y}$,
\[X=f(\mathbf{Y}).\]
Moreover, if any two vectors $\mathbf{Y}$ and $\mathbf{Y'}$ differs only in one coordinate, it holds  $|f(\mathbf{Y})-f(\mathbf{Y'})| \leq 2 \, . $
Indeed, in the worst case, an isolated node can change its destination from a leaving node to another isolated node: so, the number of isolated nodes decreases (or increases) by at most  $2$ units.
By  applying   Theorem \ref{thm:bounded_differences},  we get that, if $\mu$ is a lower bound to $\Expcc{X}$ and $M>0$,
\begin{equation}
    \Prc{X \leq \mu - M} \leq e^{-\frac{2M^2}{4nd(1+\varepsilon)}} \, .
\end{equation}
Hence, we can fix $\mu=\varepsilon n e^{-2d}$ and $M=\frac{1}{2}\varepsilon n e^{-2d}$, and get  
\begin{equation}
    \Prc{X \leq \frac{1}{2}\varepsilon n e^{-2d}}\leq e^{-n\frac{\varepsilon^2 e^{-4d}}{8d(1+\varepsilon)}} \, .
\end{equation}
Finally, the lemma is proved by setting  $\varepsilon=1/3$.

\subsubsection{Proof of Lemma \ref{lem:exp_large_sub_SDG}} \label{ssec:lem:exp_large_sub_SDG}

We  show that any two disjoint sets $S,T \subseteq N_t$, with $n e^{-d/10} \leq |S| \leq n/2$ and $|T|=0.1|S|$, such that  $\partial_{out}(S) \subseteq T$, exist with negligible probability. If we denote
\begin{equation}
    A_{S,T}=\{\partial_{out}(S) \subseteq T\}
\end{equation}
we have that
\begin{align}
    &\Prc{\min_{n e^{-d/10} \leq |S| \leq n/2}\frac{|\partial_{out}(S)|}{|S|} \leq 0.1} \leq \sum_{\substack{n e^{-d/10} \leq |S| \leq n/2 \\ |T|=0.1|S|}}\Prc{A_{S,T}}\,.
    \label{eq:prob_complementare_2_2_noreg}
    \end{align}
To upper bound $\Prc{A_{S,T}}$ we define $P$ as the set $P=N_t-S-T$, and notice that the event $A_{S,T}$ is like saying that between $S$ and $P$ there are no edges. Clearly, it holds that
\begin{equation}
    \left|\{(a,b) \hbox{ s.t. } a \in S, b \in P\}\right|=|S|\cdot |P| \,.
\end{equation}
Two cases may arise: 
\begin{enumerate}
    \item $
        \left|\{(a,b) \hbox{ s.t. } a \in S, b \in P \hbox{ and $a$ is younger than $b$}\}\right|\geq |S| \cdot |P|/2$;
    \item 
        $\left|\{(a,b) \hbox{ s.t. } a \in S, b \in P \hbox{ and $b$ is younger than $a$}\}\right|\geq |S| \cdot |P|/2$.
\end{enumerate}
As for the first case, for each $a \in S$,
we define  $N_a$ as the number of nodes in $P$ older then $a$. We get that $\sum_{a \in S}N_a \geq |S|\cdot |P|/2$. Since a fixed request of a node $u$ has probability $1/n$ to get  a node $v$ older than him as destination (this is in fact the probability that $u$ connects  to $v$ when  $u$ joins the network), we get

\begin{equation} \label{eq:AST1_noreg}
    \Prc{A_{S,T}} \leq \prod_{a \in S}\left(1-\frac{N_a}{n}\right)^d \leq e^{-d \sum_{a \in S}N_a/n} \leq e^{-d|S|\cdot |P|/2n}.
\end{equation}
As for the second case, we get the same bound  above   by proceeding with a similar argument. 
Then, from  \eqref{eq:prob_complementare_2_2_noreg} and \eqref{eq:AST1_noreg},      
\begin{align}
    &\Prc{\min_{n e^{-d/10} \leq |S| \leq n/2}\frac{|\partial_{out}(S)|}{|S|} \leq 0.1} \leq \sum_{s=ne^{-d/10}}^{n/2} \binom{n}{s}\binom{n-s}{0.1s} e^{-ds\frac{n-1.1s}{2n}} \leq \frac{1}{n^4}\, ,
    \end{align}
    where the last inequality holds for large enough $n$  and for any $d \geq 20$. It  easily follows    by bounding each binomial coefficient with the   $\binom{n}{k}\leq \left(\frac{n \cdot e}{k}\right)^{k}$ and by standard calculations.
    
\subsubsection{Proof of Theorem \ref{thm:not_flooding_purestreaming}} \label{ssec:thm:not_flooding_purestreaming}

Let $s_0$ be the source node of the flooding process. We bound the  probability that $s_0$ has all its out-edges towards $d$ nodes that are isolated at round $t_0$ and keep so  for the rest of their lifetime. Define the events $A=$"the source node $s_0$ connects  to $d$ nodes that are isolated at round $t_0$ and for the rest of their lifetime" and $B=$"there are at least $\frac{ne^{-2d}}{6}$ nodes in $N_{t_0}$ that are isolated at time $t_0$ and for the rest of their lifetime".
From Lemma \ref{lem:isolated_nodes},
\begin{equation}
    \Prc{A}  \geq \Prc{B} \Prc{A \mid B} \geq  \frac{1}{2}\cdot \left(\frac{e^{-2d}}{6}\right)^d .
    \label{eq:pr_A_poisson_noflod}
\end{equation}
Let $s_1,\dots,s_d$ be the $d$ out-neighbors      of  $s_0$.
Notice that the events $A$ imply that $s_0,s_1,\dots,s_d$ are isolated  and  after $n$ rounds   all of them will leave the network.   Then, from \eqref{eq:pr_A_poisson_noflod},
\begin{equation}
    \label{pr:I_t_0+n}
\Prc{ I_{t_0+n} =\emptyset} \geq \Prc{A} \geq  \frac{1}{2}\cdot \left(\frac{e^{-2d}}{6}\right)^d\,.
\end{equation}
Finally, as for the stated  linear bound on the flooding time $\tau$, we observe this is an easy consequence of Lemma \ref{lem:isolated_nodes}. Indeed, in the network there are at least $\frac{1}{6}e^{-2d}n$ isolated nodes that will remain isolated for the rest of their lifetime: to have all the nodes informed, we have to wait that these nodes leave the network, and so $\tau=\Omega_{d}(n)$.


\subsubsection{Proof of Claim \ref{claim:O_0}}
\label{ssec:claim:O_0}

As for the claim for Phase 0, the proof of the first inequality of the claim
proceeds as follows. For each $i=1,\dots,d$ and for every $v\in N_{t_0}$ that
joined the network at time $\hat{t}\le t_0$, we define the variable
$A_v^{(i)}\in N_{\hat{t}}$ as follows:
\begin{equation}
\label{def:A_v^i}
    A_v^{(i)} = w \quad \text{if } w\in N_{\hat{t}} \text{ is the destination of the $i$-th link of $v$.}
\end{equation}
Assume $w\in O$. We have:
\[
	\Prc{\exists i\in [d]: A_s^{(i)} = w} = 1 - \left(1 - 
	\frac{1}{n}\right)^d,
\]
which implies 
\[
	\Expcc{|O_0|}  \geq |O|\left(1 - \left(1 - 
	\frac{1}{n}\right)^d\right)\ge |O|\frac{d}{2n} \geq \frac{d}{5}, 
\]
where the last equality follows since $|O|=n/2-\log n$. We next bound the
probability that $O_0$ is smaller than $d/10$. To this purpose, we cannot
simply apply a Chernoff bound to the binary variables that describe whether or
not a node $w\in O$ was the recipient of at least one link originating from
$s$, since these are not independent. We instead resort to Theorem
\ref{thm:bounded_differences}. In particular, we consider the random variables
$A_s^{(1)},\ldots , A_s^{(d)}$ and we define the function $f(A_s^{(1)},\ldots ,
A_s^{(d)}) = |O_0|$. Clearly, $f$ is well-defined for each possible realization
of the $A_s^{(i)}$'s.  Moreover, $f$ satisfies the Lipschitz condition with
values $\beta_1 =\cdots = \beta_d = 1$, since changing the destination of one
link can affect the value of $|O_0|$ by at most $1$. We can thus apply
Theorem~\ref{thm:bounded_differences} to obtain:
\[
	\Prc{|O_0| < \frac{d}{10}}\le\Prc{f < \Expcc{f} - 
	\frac{d}{10}}\le e^{-d/50}.
\]

To uniform the results, in the claim we give a weaker bound following from the
bound above.

As for the first inequality of the generic phase $k \geq 1$, we proceed as
follows.  For each $i=1,\dots,d$, for each node $v \in N_{t_0}$ and for each
set $A \subseteq N_{t_0}$, we define the Bernoulli random variable $R_{v,A}$ as
follows:
\begin{equation}\label{def:R_v,A_streaming}
    R_{v,A}=\left\{
    \begin{array}{l}
		1\quad \hbox{if $x\ge 1$ links in }\{\frac{d}{2},\dots,d\} \hbox{ from $v$ have destination in 
		$A$}\\
		0 \quad \hbox{otherwise}
	\end{array}\right.
\end{equation}
We remark that 
\begin{fact}\label{fa:onion_dest}
If $v \in Y-Y_{k-1}$ establishes a link with destination  $w 
\in O_{k-1}$ in phase $k$, then $w \not \in O_{k-2}$.
\end{fact}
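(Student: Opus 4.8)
The plan is to read the Fact off the defining rule \eqref{def:Y_k} of the onion-skin process, using the monotone nesting $O_{k-2}\subseteq O_{k-1}$ together with the observation that, under deferred decisions, the Step~1 requests of a young node are the \emph{same} fixed random choices in every phase. Write $A_v^{(j)}$ for the (fixed) destination of the $j$-th request of $v$, as in \eqref{def:A_v^i}. By \eqref{def:Y_k}, the requests through which an uninformed node $v\in Y-Y_{k-1}$ connects to the already-informed old set $O_{k-1}$ in phase $k$ are its Step~1 requests $j\in\{\frac{d}{2}+1,\dots,d\}$; hence the hypothesis that $v$ reaches some $w\in O_{k-1}$ supplies an index $i\in\{\frac{d}{2}+1,\dots,d\}$ with $A_v^{(i)}=w$.

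First I would restate \eqref{def:Y_k} one phase earlier: for $v\in Y-Y_{k-2}$, membership $v\in Y_{k-1}-Y_{k-2}$ holds if and only if $A_v^{(j)}\in O_{k-2}$ for at least one $j\in\{\frac{d}{2}+1,\dots,d\}$. Since $Y_{k-2}\subseteq Y_{k-1}$, our node satisfies $v\in Y-Y_{k-2}$, and since $v\notin Y_{k-1}$ it lies in particular outside $Y_{k-1}-Y_{k-2}$. By the equivalence just stated this forces $A_v^{(j)}\notin O_{k-2}$ for every $j\in\{\frac{d}{2}+1,\dots,d\}$, and applying it to the index $i$ above gives $w=A_v^{(i)}\notin O_{k-2}$, which is exactly the assertion. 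For the base case $k=1$ the claim reads $w\notin O_{-1}$ and holds trivially by the convention $O_{-1}=\emptyset$.

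I do not expect a real difficulty here: the Fact is a bookkeeping consequence of the invariance of the request indices $\{\frac{d}{2}+1,\dots,d\}$ across phases together with $O_{k-2}\subseteq O_{k-1}$. The single point that must be stated carefully is that the requests tested in Step~1 of phase $k-1$ against $O_{k-2}$ are literally the same random variables $A_v^{(j)}$ tested in Step~1 of phase $k$ against $O_{k-1}$, so that ``$v$ was not informed in phase $k-1$'' genuinely constrains the destination that realizes the link in phase $k$. This is exactly what the deferred-decision convention underlying the onion-skin process guarantees, and it is also the reason the Fact is useful downstream: it shows that the Step~1 links of uninformed young nodes that become relevant in phase $k$ land only in the fresh layer $O_{k-1}-O_{k-2}$, which is what allows the subsequent growth estimate for $|Y_k-Y_{k-1}|$ to be carried out on independent randomness.
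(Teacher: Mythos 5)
Your proposal is correct and follows essentially the same route as the paper, which proves the Fact by the one-line contradiction that $w\in O_{k-2}$ would force $v\in Y_j$ for some $j\le k-1$ via the defining rule \eqref{def:Y_k} applied one phase earlier; you merely spell out the bookkeeping (the nesting $O_{k-2}\subseteq O_{k-1}$ and the fact that the Step~1 request indices are the same deferred random choices in every phase) that the paper leaves implicit.
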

\begin{proof}
If this were the case, the definition of the onion-skin process would 
imply $v\in Y_j$, for some $j\le k-1$, a contradiction.
\end{proof}
From the fact above and from definition of $Y_k-Y_{k-1}$ given above we have:
\begin{equation}
    |Y_{k}-Y_{k-1}|=\sum_{v \in Y-Y_{k-1}}R_{v,O_{k-1}-O_{k-2}}.
\end{equation}
Moreover, for each $v \in N_{t_0}$
\[
	\Prc{R_{v,O_{k-1}-O_{k-2}} = 1 \mid |O_{k-1}-O_{k-2}| \geq y}\ge 1 
	- \left(1 - \frac{y}{n}\right)^{\frac{d}{2}},
\]
whence
\begin{equation}
     \Expcc{|Y_k-Y_{k-1}| \mid |O_{k-1}-O_{k-2}| \geq y} \geq 
     |Y-Y_{k-1}| \left(1 - \left(1 - 
     \frac{y}{n}\right)^{\frac{d}{2}}\right).
\end{equation}
Since $y\le n/d$, we have:
\[
	\Expcc{|Y_k-Y_{k-1}| \mid |O_{k-1} - O_{k-2}| \ge y} \ge |Y - 
	Y_{k-1}|\frac{yd}{4n}\ge\frac{yd}{10}, 
\]
where in the last inequality we used the assumption that $|Y_{k-1}|\le 
n/d$.
The $R_{v,O_{k-1}-O_{k-2}}$'s are independent, we can 
therefore apply Chernoff's Bound (Theorem \ref{thm:chernoff} in the 
Appendix) to obtain 
\begin{align}
    &\Prc{|Y_k-Y_{k-1}| \leq \frac{yd}{20} \mid |O_{k-1}-O_{k-2}| \geq y}
     \leq e^{-yd/100}.
\end{align}

As for the last inequality of the claim, we proceed similarly to the case $k=0$.
In this setting, the  following fact holds.
 
\begin{fact}
If $v \in Y_k$ establishes a link with destination $w \in O-O_{k-1}$ in phase
$k$, then $v \not \in Y_{k-1}$.
\end{fact}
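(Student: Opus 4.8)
The plan is to prove this Fact directly from the definition of the onion-skin process, by contradiction, in the same spirit as Fact~\ref{fa:onion_dest}. The key structural observation is that in any phase $k$ the only nodes that establish links are the not-yet-informed young nodes $Y - Y_{k-1}$, which act in Step~1 using requests $\{d/2+1,\dots,d\}$ (see~\eqref{def:Y_k}), and the freshly-informed young nodes $Y_k - Y_{k-1}$, which act in Step~2 using requests $\{1,\dots,d/2\}$ to reach not-yet-informed old nodes (see~\eqref{def:O_k}). In particular, a link whose destination lies in $O - O_{k-1}$ can be created only in Step~2 of phase $k$, hence only by a node of $Y_k - Y_{k-1}$.

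Given this, I would argue as follows. Suppose, toward a contradiction, that $v \in Y_{k-1}$ and yet $v$ establishes a link with destination $w \in O - O_{k-1}$ in phase $k$. Such a link can only be a Step-2 link of phase $k$, so $v$ would have to be one of the actors of phase $k$. But the actors of phase $k$ are exactly $Y - Y_{k-1}$ (in Step~1) and $Y_k - Y_{k-1}$ (in Step~2); a node $v \in Y_{k-1}$ belongs to neither set and therefore establishes no link at all during phase $k$. This contradicts the hypothesis, so $v \notin Y_{k-1}$ (and in fact $v \in Y_k - Y_{k-1}$), as claimed.

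The only point requiring care — more a matter of bookkeeping than a genuine obstacle — is fixing the deferred-decision convention: a link toward $O$ is regarded as ``established in phase $k$'' precisely when its type-A request $\{1,\dots,d/2\}$ is revealed in Step~2 of phase $k$, and each young node reveals these requests exactly once, namely in the unique phase in which it enters $Y$. Once this convention is made explicit, the contradiction above is immediate and the Fact follows. This Fact is exactly what is needed to close the last inequality of Claim~\ref{claim:O_0}: it guarantees that the Step-2 links used to lower-bound $|O_k - O_{k-1}|$ are fresh, previously unexposed requests of the nodes in $Y_k - Y_{k-1}$, so that deferred decision applies and their destinations among the old nodes may be treated as independent and uniform.
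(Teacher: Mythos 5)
Your proof is correct and is essentially the paper's own one-line argument: both proceed by contradiction from the definition of the onion-skin process, using the fact that each young node's requests are revealed exactly once, in the unique phase in which it becomes newly informed. The only cosmetic difference is where the contradiction lands --- the paper observes that $v \in Y_{k-1}$ would force $w \in O_j$ for some $j \le k-1$ (contradicting $w \in O - O_{k-1}$), while you observe that $v \in Y_{k-1}$ is simply not an actor in phase $k$ --- but these are two phrasings of the same bookkeeping fact.
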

\begin{proof}
If this were the case, the definition of the onion-skin process would imply $w
\in O_j$ for some $j\leq k-1$, a contradiction.
\end{proof}
Assume $w \in O$. Recalling the definition of the random variables $A_v^{(i)}$
in \eqref{def:A_v^i}, we have
\begin{equation}
\Prc{\exists i \in \left[\frac{d}{2}\right], \exists v \in Y_k-Y_{k-1}  :
A_{v}^{(i)}=w\mid |Y_{k}-Y_{k-1}| \geq x}
=1-\left(1-\frac{1}{n}\right)^{\frac{dx}{2}}\,.
\end{equation}
From the fact above and from the definition of $|O_k-O_{k-1}|$ in the
onion-skin process we have
\begin{equation}
\label{eq:E(Ok-Ok-1)_bound}
    \Expcc{|O_k-O_{k-1}| \mid |Y_{k}-Y_{k-1}| \geq x}= |O-O_{k-1}|\left(1-\left(1-\frac{1}{n}\right)^{\frac{dx}{2}}\right)\,.
\end{equation}
Since $x \leq n/d$, we have
\begin{equation}
    \Expcc{|O_{k}-O_{k-1}| \mid |Y_k-Y_{k-1}| \geq x} \geq |O-O_{k-1}|\frac{dx}{4n}\geq  \frac{xd}{10} 
\end{equation}
where the last inequality follows from the fact that $|O|= n/2- \log n$ and
$|O_{k-1}| \leq n/d$.  However, in this case, as in the proof of the case
$k=0$, we cannot simply apply a Chernoff bound to the binary random variables
that describe whether or not a node $w \in O$ was the recipient of at least one
link in $\{1,\dots, d/2\}$ originating from one node $v \in Y_k$, since these
are not independent. We will use instead the method of bounded differences
(Theorem~\ref{thm:bounded_differences}). In particular, we consider the random
variables $\{A_v^{(i)}, i \in [d/2],v \in Y_k\}$ and we define the function $g$
depending on these variables which returns $|O_k-O_{k-1}|$. Clearly, $g$ is
well-defines for each possible realization of the $A_{v}^{(i)}$'s and satisfies
the Lipschitz condition with values $\beta_1=\dots=\beta_{\frac{d}{2}\cdot
|Y_k-Y_{k-1}|}=1$, since changing the destination of one link can affect the
value of $|O_{k}-O_{k-1}|$ of at most $1$. We can thus apply
Theorem~\ref{thm:bounded_differences} to obtain
\begin{equation} \Prc{|O_k-O_{k-1}|
\geq \frac{dx}{20} \mid |Y_k-Y_{k-1}| \geq x}\leq e^{-dx/100}\,.
\end{equation}

\subsubsection{Proof of Claim \ref{claim:product}} \label{ssec:claimprod}
From $a_i=(d/20)^i$ we have:
\begin{equation}
\label{eq:log_product}
	\log c = \log \left(\prod_{i=0}^{\infty}\left(1-e^{-(d/20)^i 
	(d/100)}\right)\right)=\sum_{i=0}^{\infty} \log 
	\left(1-e^{-(d/20)^i(d/100)}\right)=-\sum_{i=0}^{\infty}\log\left(\frac{1}{1-e^{-(d/20)^i 
	(d/100)}}\right)\,.
\end{equation}
Moreover, since $\log \left(\frac{1}{1-x}\right) \leq 2x$ for each $x \leq 1$,
\begin{align}
	&\sum_{i=0}^{\infty} \log 
	\left(\frac{1}{1-e^{-(d/20)^i(d/100)}}\right) \leq 
	\sum_{i=0}^{\infty} 2e^{-(d/20)^i(d/100)} = 2e^{-(d/100)} + 
	\sum_{i=1}^{\infty} 2e^{-(d/20)^i(d/100)}\\
	&< 2e^{-(d/100)} + 2e^{-(d/100)}\sum_{i=1}^{\infty}e^{-(d/20)^i} < 4e^{-(d/100)}.
    \label{eq:double_exp}
\end{align}
The third inequality holds since $d> 200$, which implies
$\frac{d}{100}\left(\frac{d}{20}\right)^i > \frac{d}{100} +
\left(\frac{d}{20}\right)^i$ for $i\ge 1$, while the last inequality follows
since the double exponential is dominated by a simple one, summing to a
constant not exceeding $1$. This in turn implies So, from
\eqref{eq:log_product} and \eqref{eq:double_exp} we have that
\begin{align}
    \log c \geq -4e^{-(d/100)}\,,
\end{align}
whence:
\begin{align}
    c\ge e^{-4e^{-(d/100)}}\ge 1 - 4e^{-\frac{d}{100}}\,,
\end{align}
where the last inequality follows since $e^{-x} > 1 - x$ for $x\ge 0$.

\subsubsection{Proof of Lemma \ref{le:big_set_exp}} \label{ssec:le:big_set_exp}
We recall that, for Lemma \ref{lem:flooding_terminates_part_1},
$|I_{t_0+\tau_1}|\geq 2n/d$.  In this proof, we  show that the size of the set
of informed nodes grows by a constant factor at each step, reaching size
$(1-e^{-d/10})n$ in $\tau_2$ steps. In our analysis, we can fix
$\tau_2=\Theta(d)$.  In the whole proof, we will not consider the oldest
$\tau_2$ nodes in $N_{t_0+\tau_1}$, because they will die in the next $\tau_2$
steps. We call the set of such nodes $V$, and let $V_t$ be the intersection
between $V$ and $N_t$. To show that the set of informed nodes grows at each
step by a constant factor, it is sufficient to notice that, for $d \geq 20$,
the graph $G_t=(N_t,E_t)$ is an expander for sets of large size w.h.p. (see
Lemma \ref{lem:exp_large_sub_SDG}), i.e.,
\begin{equation}
\label{eq:expansion_set_large_size}
    \min_{ne^{-d/10} \leq |S| \leq n/2} \frac{|\partial_{out}(S)|}{|S|} \geq 0.1 \, .
\end{equation}
Indeed, the size of the set of informed nodes $I_t$, for $t \geq t_0+\tau_1$,
grows by a constant factor at each step, as long as $|I_t| \leq n/2$. We have
from the definition of flooding that $I_{t+1}=(I_t \cup \partial_{out}(I_{t}))
\cap N_t$ and since, at each step, at most $1$ informed node leaves the
network, $|I_{t+1}| \geq |\partial_{out}(I_t)|+|I_t|-1$. Since
$|I_{t_0+\tau_1}| \geq n/d$, from \eqref{eq:expansion_set_large_size} follows
that for each $t \geq t_0+\tau_1$ \[|I_{t+1}| \geq 1.1|I_t|-1,\] as long as
$|I_t| \leq n/2$. This consideration implies that in $\tau_2'=\Theta(\log d)$
steps, we will have $|I_{t_0+\tau_1+\tau_2'}| \geq n/2$. 

Now, we consider $S_t=N_t-I_t$ as the set of non-informed nodes in the graph at
time $t$, and we will show that the size of this set decreases by a constant
factor at each step, as long as $|S_t| \geq ne^{-d/10}$. First of all, we
notice that $\partial_{out}(S_{t+1}) \subseteq (S_{t}-S_{t+1}) \cap N_t$,
because $\partial_{out}(S_{t+1}) \subseteq I_{t+1}$ are nodes reachable in one
edge from the non-informed nodes, and so they were not informed in the previous
time. Since, at each step, at most $1$ node joins the network, we have that
$|S_{t}|-|S_{t+1}|+1 \geq |\partial_{out}(S_{t+1})|$. Since
$|S_{t_0+\tau_1+\tau_2'}| \leq n/2$, from \eqref{eq:expansion_set_large_size}
follows that for each $t \geq t_0+\tau_1+\tau_2'$, $|\partial_{out}(S_{t+1})|
\geq 0.1 |S_{t+1}|$, as long as $|S_{t+1}| \geq ne^{-d/10}$: so, \[|S_{t+1}|
\leq \frac{1}{1.1}(|S_{t}|+1).\] This consideration implies that in
$\tau_2=\Theta(d)$ steps, we will have $|S_{t_0+\tau_1+\tau_2}| \leq
ne^{-d/10}$.  Then, conditional at the event $\{|I_{t_0+\tau_1}| \geq n/d\}$,
w.h.p. $|I_{t_0+\tau_1+\tau_2}| \geq n(1-e^{-d/10})$. Since from Lemma
\ref{lem:flooding_terminates_part_1}, $\Prc{|I_{t_0+\tau_1}| \leq n/d}\geq
1-4e^{-d/100}$, the lemma is proved.

\subsection{Omitted Proofs for the streaming model with edge regeneration}
\subsubsection{Proof of Lemma \ref{lem:node_destination}}\label{ssec:lem:node_destination}

 If $u$ is younger than $v$, then the request of $u$ can choose $v$ only if some previous neighbor of $u$ leave the network: the probability of this event is clearly $\leq \frac{1}{n-1}$. 
 
 If, instead, $u$ is older than $v$, then the analysis leading to \eqref{eq:edgeprob1} needs to take care of more chances    $u$ has   to 
get $v$ as destination of its request because of the edge-regeneration process. The argument we adopt here for the streaming process is also a good warm-up for the more complex Poisson process we analyze in Subsection \ref{ssec:poissonedgereg}.

 According to the \SDGE\ model, when  a node leaves the network, all its incident edges are removed and, in the same step, if an edge of a fixed request of $u$ fails, $u$ instantly reassigns it by choosing its destination uniform at random (with replacement) over the current set of nodes: in this proof,  this action will be denoted as \emph{assignment}. For example, $u$ chooses $v$ at the $1$-st assignment if the request of $u$ gets destination $v$   in the first attempt; if, instead, $u$ first chooses a node that will die  at some successive  round and then $u$ chooses $v$, then we say $v$ is chosen at the $2$-nd assignment, and so on.
Since $u$ has age $k+1$,   $u$ must have selected $v$ in one of the  $(k+1)$ possible assignments. Then, for any fixed request of $u$,   
\begin{equation}
    \Prc{\text{$v$ is the   $u$-request destination}}  \,
    = \, \sum_{i=1}^{k+1}\Prc{\text{$v$ is the $u$-request destination  in  the $i$-th assignment}} \, . 
    \label{eq:prob_uv_sum}
\end{equation}
Since there are $\binom{k}{i-1}$ ways to choose the $i-1$ destinations of the request of $u$ before the choice of $v$, the probability of getting a fixed subset of  $i$  nodes in the reassignments is $\left(\frac{1}{n-1}\right)^i$, and such  choices in each round are mutually  independent. So, 
\begin{equation} \label{eq:wayski}
    \Prc{\text{$v$ is the $u$-request destination  in  the $i$-th assignment}}= \binom{k}{i-1}\left(\frac{1}{n-1}\right)^{i}\, .
\end{equation}
From \eqref{eq:prob_uv_sum}, \eqref{eq:wayski} and by Newton's Binomial Theorem
\begin{align*}
    &\Prc{\text{$v$ is the $u$-request destination  in  the $i$-th assignment}} \\ &= \sum_{i=1}^{k+1}\binom{k}{i-1}\left(\frac{1}{n-1} \right)^i=\sum_{j=0}^{k} \binom{k}{j}\left(\frac{1}{n-1}\right)^{j+1} 
    =\frac{1}{n-1}\left(1+\frac{1}{n-1}\right)^{k}\,.
\end{align*}

\subsubsection{Proof of Theorem  \ref{thm:expansion-stream}} \label{ssec:lem:expansion_small_streaming}

The theorem   is consequence of the next two lemmas. The first one shows the claimed expansion for subsets of  size $\leq n/4$.

\begin{lemma}[Expansion of ``small'' subsets]
 Under the hypothesis of Theorem \ref{thm:expansion-stream},
 for subsets $S$ of $N_t$, it holds
\[\min_{0 \leq |S| \leq n/4}\frac{|\partial_{out}(S)|}{|S|} \geq 0.1 \, ,\]
with probability at least $ 1-1/n^4$.
\label{lem:expansion_small_streaming}
\end{lemma}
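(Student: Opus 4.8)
The plan is to bound the probability of non-expansion by a union bound over ``witness'' pairs of sets, reducing the whole argument to the near-uniformity of the edge-destination distribution established in Lemma \ref{lem:node_destination}.

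First I would reduce the statement to one about pairs. If some $S$ with $0 \le |S| \le n/4$ has $|\partial_{out}(S)|/|S| \le 0.1$, then $|\partial_{out}(S)| \le 0.1|S|$, so there is a set $T$ disjoint from $S$ with $|T| = \lceil 0.1|S|\rceil$ and $\partial_{out}(S) \subseteq T$. Writing $A_{S,T} = \{\partial_{out}(S)\subseteq T\}$, a union bound (recall $|N_t|=n$ for $t\ge n$) gives
\[
\Prc{\min_{0\le|S|\le n/4}\frac{|\partial_{out}(S)|}{|S|}\le 0.1} \le \sum_{s=1}^{n/4}\binom{n}{s}\binom{n-s}{0.1s}\,\max_{|S|=s,\,|T|=0.1s}\Prc{A_{S,T}}.
\]

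The core step is bounding $\Prc{A_{S,T}}$ for a fixed pair. Since every out-request of a node $u\in S$ produces a neighbor of $u$, the event $A_{S,T}$ is contained in the event that each of the $d$ requests of every node of $S$ has its current destination in $S\cup T$. By Lemma \ref{lem:node_destination}, a single request of any node hits a fixed other node with probability at most $\frac{1}{n-1}\left(1+\frac{1}{n-1}\right)^{k}$, and since every node has age $k\le n-1$ we have $\left(1+\frac{1}{n-1}\right)^{k}\le e$; hence each request lands in $S\cup T$ with probability at most $\frac{e\,|S\cup T|}{n-1}$. As distinct request slots (and distinct nodes) regenerate using independent random choices, conditioning on the deterministic churn makes the current destinations independent, so
\[
\Prc{A_{S,T}} \le \left(\frac{e\,|S\cup T|}{n-1}\right)^{d|S|} = \left(\frac{1.1\,e\,s}{n-1}\right)^{ds},
\]
using $|S\cup T| = 1.1s$. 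This is exactly the bound flagged in the overview.

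Finally I would substitute this into the union bound, bound the binomials by $\binom{n}{k}\le(ne/k)^k$, and show the sum is at most $1/n^4$. Writing each summand as $b(x)^{s}$ with $x=s/n$ and base $b(x)=\frac{e}{x}\left(\frac{10e}{x}\right)^{0.1}(1.1ex)^{d}$ (up to the harmless $n/(n-1)$ factor), the key observation is that $b$ is increasing in $x$ and $b(1/4)=e\,(10e)^{0.1}\,4^{1.1}(1.1e/4)^{d}<1$ for $d\ge 14$, since $1.1e/4<0.75$; hence $b(x)<1$ throughout $s\le n/4$. The dominant contribution comes from the smallest sets: at $s=1$ the summand is $\Theta\!\left(n^{-(d-1.1)}\right)=\bigO(n^{-12.9})$, and the summands decay geometrically thereafter, so the whole sum is $\bigO\!\left(n^{-(d-1.1)}\right)\le 1/n^4$ for sufficiently large $n$. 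The main obstacle I expect is precisely this final calculus: one must check simultaneously that the base stays strictly below $1$ on the entire range $[1/n,1/4]$ (this is where $d\ge 14$ enters, through $1.1e/4<1$) and that the small-$s$ terms, which actually dominate, are of order $n^{-(d-1.1)}$ so that the full bound beats $1/n^4$. All the conceptual content sits in the preceding step, where Lemma \ref{lem:node_destination} lets one treat the age-dependent destination distribution as essentially uniform; once the clean per-request bound $\frac{e\,|S\cup T|}{n-1}$ is in hand, the remainder is the standard expansion union bound for random $d$-out graphs.
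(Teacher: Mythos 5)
Your proposal is correct and follows essentially the same route as the paper: the same reduction to witness pairs $(S,T)$ with the event $A_{S,T}=\{\partial_{out}(S)\subseteq T\}$, the same use of Lemma \ref{lem:node_destination} to get the per-request bound $\frac{e\,|S\cup T|}{n-1}$ and hence $\Prc{A_{S,T}}\le\bigl(\frac{1.1es}{n-1}\bigr)^{ds}$, and the same union bound with $\binom{n}{k}\le(ne/k)^k$ followed by checking that each summand is maximized at the boundary $s=1$ or $s=n/4$. The only cosmetic difference is that you phrase the final calculus via the monotone base $b(x)=Cx^{d-1.1}$ rather than via the derivative of the summand, which is an equivalent (and arguably cleaner) way to see that the $s=1$ term of order $n^{-(d-1.1)}$ dominates.
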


\begin{proof}

We proceed similarly to the proof of the expansion for big subset in the \SDG \ model (Lemma \ref{lem:exp_large_sub_SDG})
To prove the lemma, it is sufficient to show that any two disjoint sets $S, T \subseteq N_t$, with $|S| \leq n/4$ and $|T|=0.1|S|$, such that $\partial_{out}(S) \subseteq T$, exist with negligible probability.
For  any  $S$ and any $T \subseteq N_t-S$,  we define $A_{S,T}$ as in Lemma \ref{lem:exp_large_sub_SDG},
\begin{equation}
    A_{S,T}=\left\{\partial_{out}(S) \subseteq T\right\}\,.
    \label{def:A_S,T}
\end{equation}
Therefore, 
\begin{align}
    &\Prc{\min_{0 \leq |S| \leq n/4}\frac{|\partial_{out}(S)|}{|S|} \leq 0.1} \leq \sum_{\substack{|S| \leq n/4 \\ |T|=0.1|S|}}\Prc{A_{S,T}} \,.
    \label{eq:prob_complementare_1_2}
    \end{align}
The quantity  $\Prc{A_{S,T}}$ is upper bounded by the probability that each request of  the nodes in $S$ has destination in  $S \cup T$. From Lemma \ref{lem:node_destination} we know that a request of a node $u$ with age $k+1$  has probability at most $\frac{1}{n-1}$ to have a node $v$ younger than $u$ as destination and probability $\frac{1}{n-1}\left(1+\frac{1}{n-1}\right)^k$ to have a node $v$ older than $u$ as destination. Since  $k \leq n-1$,   the probability that a single request of $u$ has an arbitrary, fixed  node $v$ as destination is at most $\frac{e}{n-1}$. Since to have $\partial_{out}(S) \subseteq T$, each request of $u \in S$ must have destination in $S \cup T$, it holds
\begin{equation}
\label{eq:prc_S,T_small_set_streaming}
    \Prc{A_{S,T}} \leq \left(\frac{e}{n-1}\cdot |S \cup T|\right)^{d|S|}\,.
\end{equation}
So, from \eqref{eq:prob_complementare_1_2} and \eqref{eq:prc_S,T_small_set_streaming} we have that
\begin{align}
    \Prc{\min_{0 \leq |S| \leq n/4}\frac{|\partial_{out}(S)|}{|S|}\leq 0.1} \leq \sum_{s=1}^{n/4}\binom{n}{s}\binom{n-s}{0.1s} \left(\frac{1.1s \cdot e}{n-1}\right)^{ds} \leq \frac{1}{n^4} \,.
    \label{pr_upper_bound_sum}
\end{align}
By standard calculus, it can be proved that, for $d \geq 21$, the equation above is upper bounded by $1/n^4$. This is obtained by bounding each binomial coefficient in \eqref{pr_upper_bound_sum} with the bound $\binom{n}{k} \leq \left(\frac{n \cdot e}{k}\right)^k$ and by computing the derivative of the function $f(s)$ (representing each term of the sum), obtaining that each of these terms attained its maximum at the "boundaries", i.e. or in $s=1$ or in $s=n/4$. 
\end{proof}

The second lemma provides  the expansion property for subsets of  size $\geq n/4$.
Its proof  is omitted since it is identical to that of Lemma \ref{lem:exp_large_sub_SDG} about the expansion of large subsets in the \SDG\ model. 

The only difference between the two proofs is that now we need to consider sets of size in the range $[n/4,n/2]$, and so we get a weaker condition on the value of $d$.

\begin{lemma}[Expansion of large subsets]
\label{lem:expansion_big_streaming}
Under the hypothesis of Theorem \ref{thm:expansion-stream},
 for subsets $S$ of $N_t$, it holds
\[\min_{n/4 \leq |S| \leq n/2}\frac{|\partial_{out}(S)|}{|S|} \geq 0.1 \, , \]
with probability of at least $1-1/n^4$.
\end{lemma}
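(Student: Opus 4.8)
The plan is to follow the proof of Lemma~\ref{lem:exp_large_sub_SDG} essentially verbatim, the only differences being the size range of $S$ and the edge-probability estimate we invoke. As there, it suffices to show that any two disjoint sets $S,T\subseteq N_t$ with $n/4\le|S|\le n/2$ and $|T|=0.1|S|$, and such that $\partial_{out}(S)\subseteq T$, exist only with negligible probability. Writing $A_{S,T}=\{\partial_{out}(S)\subseteq T\}$, a union bound gives
\[
\Prc{\min_{n/4\le|S|\le n/2}\frac{|\partial_{out}(S)|}{|S|}\le 0.1}\le\sum_{\substack{n/4\le|S|\le n/2\\|T|=0.1|S|}}\Prc{A_{S,T}}\,,
\]
so the task reduces to bounding a single term $\Prc{A_{S,T}}$.

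First I would set $P=N_t-S-T$ and observe that $A_{S,T}$ forces every edge incident to $S$ to land in $S\cup T$; equivalently, there is no edge between $S$ and $P$. Among the $|S|\cdot|P|$ ordered pairs $(a,b)$ with $a\in S$ and $b\in P$, either at least half have $b$ older than $a$ or at least half have $b$ younger than $a$; the two cases are symmetric, so consider the former. Letting $N_a$ be the number of nodes of $P$ older than $a$, we then have $\sum_{a\in S}N_a\ge|S|\cdot|P|/2$. The single ingredient specific to the \SDGE\ model enters here: by Lemma~\ref{lem:node_destination}, whenever $v$ is older than $u$, a single request of $u$ reaches $v$ with probability at least $\tfrac{1}{n-1}\ge\tfrac1n$, so the lower bound $1/n$ used in the \SDG\ analysis is preserved (edge regeneration can only raise this probability). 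Since for a fixed request the events of reaching distinct older nodes are mutually exclusive, a single request of $a$ avoids all $N_a$ older nodes of $P$ with probability at most $1-N_a/n$; using independence of the $d$ requests of each node and of requests issued by different nodes,
\[
\Prc{A_{S,T}}\le\prod_{a\in S}\Bigl(1-\frac{N_a}{n}\Bigr)^{d}\le e^{-d\sum_{a\in S}N_a/n}\le e^{-d|S|\cdot|P|/2n}\,.
\]

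Finally I would substitute this estimate into the union bound. Since $|N_t|=n$ for $t\ge n$ and $|S|\le n/2$, we have $|P|=n-1.1|S|\ge 0.45n$, while $|S|\ge n/4$, so the exponent is $\Omega(dn)$, whereas the number of admissible pairs $(S,T)$ is at most $\sum_{s}\binom{n}{s}\binom{n-s}{0.1s}=e^{\bigO(n)}$. Bounding each binomial via $\binom{n}{k}\le(ne/k)^{k}$ and summing, a routine calculation shows the total is at most $1/n^4$ for every $d\ge 14$. The only genuine difference from Lemma~\ref{lem:exp_large_sub_SDG} is quantitative: because $|S|$ never falls below $n/4$ (rather than $ne^{-d/10}$), the factor $e^{-d|S||P|/2n}$ is already overwhelmingly small and the counting term is comparatively harmless, which is exactly why the weaker hypothesis $d\ge 14$ suffices here. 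I do not anticipate a real obstacle, since the argument is structurally identical to the \SDG\ case; the only point to check is that the per-term exponential bound dominates the binomial count over this narrower range of set sizes.
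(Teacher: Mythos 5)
Your proposal is correct and follows exactly the route the paper intends: the paper omits this proof precisely because it is the argument of Lemma~\ref{lem:exp_large_sub_SDG} rerun on the range $[n/4,n/2]$, with the only model-specific check being that Lemma~\ref{lem:node_destination} preserves the lower bound of order $1/n$ on the probability that a request targets a given older node (edge regeneration can only increase it). Your quantitative remarks about why the narrower size range tolerates the weaker hypothesis $d\ge 14$ are also consistent with the paper's brief justification.
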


\subsubsection{Proof of Theorem \ref{thm:SDGE-flooding}} \label{ssec:thm:SDGE-flooding}

We first show that the size
of the set of informed nodes $I_t$, for $t \geq t_0$, grows by a constant factor at each step, as long as $|I_{t}| \leq n/2$. From the definition of flooding it holds $I_{t+1}=(I_t \cup \partial_{out}(I_t))\cap N_t$ and, since, at each round, one single node leaves the network, $|I_{t+1}| \geq |\partial_{out}(I_t)| + |I_t| - 1$.  Since $d \geq 21$,    Theorem \ref{thm:expansion-stream} implies that the graph $G_{t}$ is an $(1/10)$-expander, w.h.p., and so, as long as $|I_{t}| \leq n/2$, w.h.p.
\[|I_{t+1}| \geq 1.1|I_{t}|-1\,.\]
Then, a $\tau_1=\bigO(\log n)$ exists such that  $|I_{\tau_1+t_0}|\geq n/2$,
w.h.p.

To reach $n$ informed nodes, from time $\tau_1+t_0$, we consider the set of non-informed nodes in the network at each time $t \geq \tau_1+t_0$, i.e. $S_t=N_t-I_t$, and we next show that the size of this set decreases by a constant factor at each step. Notice that, since every node $v$ in $\partial_{out}^{t+1}(S_{t+1}) \subseteq I_{t+1}$ is   reachable in 1-hop from the set of non-informed nodes at time $t+1$,  $v$ was not informed at time $t$. This implies that  $\partial_{out}^{t+1}(S_{t+1}) \subseteq (S_{t}-S_{t+1}) \cap N_{t+1}$. Since, at each step, one single node joins the network, we have that $|S_{t}|-|S_{t+1}|+1 \geq |\partial_{out}^{t+1}(S_{t+1})|$. Since $S_{t_0+\tau_1} \leq n/2$, from the expansion of the graph $G_{t+1}$ (Theorem \ref{thm:expansion-stream}) it holds w.h.p. that, for each $t \geq t_0+\tau_1$,

\[|S_{t+1}| \leq \frac{1}{1.1}(|S_{t}|+1) \, .\]
The above equation implies that a time $\tau_2=\bigO(\log n)$ exists such that $|S_{t_0+\tau_1+\tau_2}|<1$.

\section{Omitted Proofs for the Poisson Model } \label{apx:sec:Poisson}

\subsection{Omitted proofs for the preliminary properties} 

\subsubsection{Proof of Lemma \ref{lem:prop_Tn}} \label{ssec:lem:prop_In}

It is sufficient to apply Theorem \ref{thm:minimum_exp} in Appendix to the exponential random variables that represent the time arrival of a new node and the lifetime of the node in the network, for each node $v \in N_{T_n}$.

\subsubsection{Proof of Lemma \ref{lem:N_m+1_and_death}}
 \label{ssec:lem:N_m+1_and_death}
 
The lemma easily follows from Lemma \ref{lem:prop_Tn} and from the concentration of the nodes (Lemma \ref{thm:concentration_nodes}). 
We first define, for each $r \geq n \log n$, the following event
\begin{equation}
    C_{r}=\{|N_{T_r}| \in [0.9n,1.1n]\}
\end{equation} and, for Lemma \ref{thm:concentration_nodes}, we have that $ \Prc{C_{r}} \geq 1-1/n^2$.

We will first show the upper bound for the first inequality in \eqref{eq:bound_death}.
We notice that, for each $r \geq n \log n$ we have
\begin{align}
    & \Prc{|N_{T_{r+1}}|=|N_{T_r}|-1} \\ &= \Prc{|N_{T_{r+1}}|=|N_{T_r}|-1 \mid C_{r}}\Prc{C_{r}}+\Prc{|N_{T_{r+1}}|=|N_{T_r}|-1 \mid C_{r}^C}\Prc{C_{r}^C} 
    \label{eq:bound_leaving_node}
\end{align}
and we have that, for the law of total probability, and from the  equation above
\begin{align}
    \Prc{|N_{T_{r+1}}|=|N_{T_r}|-1}\leq \sum_{N=0.9n}^{1.1n}\Prc{|N_{T_{r+1}}|=|N_{T_r}|-1 \mid |N_{T_r}|=N}\Prc{|N_{T_r}|=N \mid C_{T_r}} + \frac{1}{n^2} .
    \label{eq:bound_leaving_node_2}
\end{align}
For Lemma \ref{lem:prop_Tn}, we have that $\Prc{|N_{T_{r+1}}|=|N_{T_r}|-1 \mid |N_{T_r}|=N}=(N/n)/(N/n+1)$, so, from the inequality above, we get
\begin{align}
     \Prc{|N_{T_{r+1}}|=|N_{T_r}|-1} \leq &\sum_{N=0.9n}^{1.1n}\frac{N}{N+n}\Prc{|N_{T_r}|=N \mid C_{T_r}}+\frac{1}{n^4} \leq 0.53\,.
     \label{eq:bound_leaving_node_3}
\end{align}
To show the lower bound, we utilize the upper bound above, getting
\begin{equation}
    \Prc{|N_{T_{r+1}}|=|N_{T_r}|+1}=1-\Prc{|N_{T_{r+1}}|=|N_{T_r}|-1} \geq 1-0.53=0.47.
\end{equation}

By a similar argument, we can show also the other inequalities in the statement of the lemma.

\subsubsection{Proof of Lemma \ref{lem:life_of_nodes}} \label{ssec:thm:life_of_nodes}

Let $r \geq7n \log n$. We know from Lemma \ref{thm:concentration_nodes} that $|N_{T_r}| \in [0.9n,1.1n]$ with probability at least $1-1/n^2$: we call this event $C_{T_r}$. For Lemma \ref{lem:prop_Tn} and for the memoryless property of the exponential distribution, we have that
\begin{equation}
\Prc{v \in N_{T_r} \mid v \in N_{T_{m-7n \log n}}} \leq \left(1-\frac{1}{2.2n}\right)^{7n \log n} \leq e^{-3.1 \log n}=\frac{1}{n^{3.1}}
\label{eq:L_v>4n log n}
\end{equation}
Now we want to prove that every node in $N_{T_r}$ has joined the network after time $T_{m-7n \log n}$ and, to do that, we have to do an union bound over all the nodes in the network. To doing that, we have to know how nodes in the network they are. So, for the law of total probability and for the concentration
\begin{align}
    &\Prc{\hbox{there exists a node $v \in N_{T_r}$ born before $T_{m-7n \log n}$}} \\ & \leq \Prc{\hbox{there exists a node $v \in N_{T_r}$ born before $T_{m-7n \log n}$} \mid C_{T_r}} + \frac{1}{n^2} 
\end{align}
and, since $C_{T_r}$ guarantees that the nodes in the network at time $T_{r}$ are at most $1.1n$, from equation \eqref{eq:L_v>4n log n} we get the lemma.

\subsection{Omitted proofs for  the Poisson model without edge regeneration}

\subsubsection{Proof of Lemma \ref{lem:isolated_nodes_poisson}}
\label{ssec:lem:isolated_nodes_poisson}
 
Let $r \geqslant 7n \log n$.  We define the following event
\begin{equation}
    \label{def:L_r}
    L_r=\{\hbox{each node in $N_{T_r}$ is born after time $T_{r-7n \log n}$}\}
\cap \{|N_{T_i}| \in [0.9n,1.1n] \hbox{ with }i=r-7n \log n,\dots,r\}\end{equation}
From Lemma \ref{thm:concentration_nodes} and Lemma \ref{lem:life_of_nodes} we have that $\Prc{L_r} \geq 1-1/n^2$.
From the event $L_r$ follows that, when each node in $N_{T_r}$ joined the network, the network was composed by at least $0.9n$ nodes and at most $1.1n$ nodes. 

Let $\varepsilon$ be an arbitrary value with $0 < \varepsilon  \leq 1/3$: we consider the set of the $\varepsilon n$ oldest nodes in $N_{T_r}$, and we call that set $H$. We define the following random variables
\begin{equation}
    A=\{v \in H \text{ s.t. $v$ has lifetime of at most $2n$ rounds}\}.
\end{equation}
Recalling the equation \eqref{eq:v_not_in_T_r+1} of Lemma \ref{lem:N_m+1_and_death}, which gives a bound on the lifetime (in rounds) of a node, we can apply a standard concentration argument (Theorem \ref{thm:chernoff}) getting that $|A| \geq \varepsilon n /2$ w.h.p.
Similarly to the proof of Lemma \ref{lem:isolated_nodes}, we first define the random variable below
\begin{equation}
    X=\{\text{number of nodes in $A$ that are isolated at time $T_r$ and for the rest of their lifetime}\}\,.
\end{equation}
The purpose is to show that, w.h.p., $X \geq \frac{1}{6}\varepsilon n e^{-2d}$. As in Lemma \ref{lem:isolated_nodes}, we will utilize the method of bounded differences, expressing $X$ as a function of $2n\cdot d$ independent random variables. First, to calculate the expectation of $X$, we introduce the following random variables, for each $v \in N_{T_r}$
\begin{equation}
    \Delta^{in}_v=\{\text{maximum in-degree of the node $v$ for all its lifetime}\} 
\end{equation}
\begin{equation}
    \Delta^{out}_v=\{\text{out-degree of the node $v$ at time $T_r$}\}.
\end{equation}
Since, if $\Delta_v^{out}=0$, then the node $v$ will have not out-edges from round $t$ to the rest of its lifetime, we can then write $X$ as a function of $\Delta^{in}_v$ and $\Delta^{out}_v$, with $v \in A$:
\begin{equation}
    X=\sum_{v \in A}\mathbb{1}_{\{\Delta^{in}_v=0\}}\mathbb{1}_{\{\Delta^{out}_v=0\}}\,.
\end{equation}
As each node sends its requests independently to the others, we have that 
\begin{equation}
    \Expcc{X \mid L_r}=\sum_{v \in A} \Prc{\Delta^{in}_v=0 \mid L_r}\Prc{\Delta^{out}_v=0 \mid L_r}\,.
    \label{eq:exp_isolated_nodes}
    \end{equation}
The probability of a node $v \in N_{T_r}$ to have in-degree $0$ for all their lifetime, conditional to $L_r$, is $
    \Prc{\Delta_v^{in}=0 \mid L_r}=\left(1-\frac{1}{0.9n}\right)^{2n d}\geq e^{-3d},
$ since each node $v \in A$ has lifetime of at most $2n$ rounds.
The probability of a node $v \in A$ to not have out-edges in the current round (conditional to $L_r$) is $
    \Prc{\Delta_i^{out}=0 \mid L_r}=\left(1-\frac{\varepsilon n}{0.9n}\right)^d \geq e^{- d/2}$, since $\varepsilon \leq 1/3$. So, since $|A| \geq \varepsilon n /2$ w.h.p. it follows from \eqref{eq:exp_isolated_nodes} that
    \begin{equation}
        \Expcc{X \mid C_t} \geq \frac{\varepsilon n}{3}e^{-2d}\,.
    \end{equation}
We now introduce the random variables $Y^j_v$, returning the index of the node to which the node $v$ send its $j$-th request. The random variables $\{Y_v^j : v \in N_{T_r}\cup N_{T_{r+1}}\cdots \cup N_{T_{r+2n}},j \in \{1,\dots,d\}\}$ are independent. We take $\mathbf{Y}$ as the vector of these random variables. Apparently, we can express $X$ as a function of $\mathbf{Y}$,
\begin{equation}
    X=f(\mathbf{Y}).
\end{equation}
Moreover, if the vectors $\mathbf{Y}$ and $\mathbf{Y'}$ differs only in one coordinate, we have that $|f(\mathbf{Y})-f(\mathbf{Y'})| \leq 2$.
This is because, in the worst case, an isolated node can change its destination from a dead node to an other isolated node: so, the number of isolated nodes decrease (or increase) of only $2$ units. 
So, applying Theorem \ref{thm:bounded_differences} we have that, if $\mu$ is a lower bound to $\Expcc{X \mid C_t}$, 
\begin{equation}
    \Prc{X \leq \mu - M \mid L_r} \leq e^{-\frac{2M^2}{4n d}}.
\end{equation}
So, taking $\mu=\frac{1}{3}\varepsilon n e^{-2d}$ and $M=\frac{1}{6}\varepsilon n e^{-2d}$ we get that
\begin{equation}
    \Prc{X \leq \frac{1}{6}\varepsilon n e^{-2d} \mid L_r}\leq e^{-n\frac{\varepsilon^2 e^{-4d}}{72d}}.
    \label{eq:upper_bound_isolated_nodes_conditioned}
\end{equation}
So, the number of isolated nodes is w.h.p. $X \geq \frac{1}{6}\varepsilon n e^{-2d}$.
This is because for the law of total probability and from \eqref{eq:upper_bound_isolated_nodes_conditioned}, if $n$ is large enough
\begin{align}
    &\Prc{X \leq \frac{1}{6}\varepsilon n e^{-d}}=\Prc{X \leq \frac{1}{6}\varepsilon n e^{-d} \mid L_r} + \frac{1}{n^2} \leq e^{-n\frac{\varepsilon^2e^{-2d}}{72d}}+\frac{1}{n^2} \leq \frac{2}{n^2}\,.
\end{align}
Taking $\varepsilon=1/3$ we get the lemma.
 
\subsubsection{Proof of Lemma 
\ref{lem:exp_large_subset_PDG}}
\label{ssec:lem:exp_large_subset_PDG}

We proceed as in the proof of Lemma \ref{lem:exp_large_sub_SDG}
for the \SDG \ model. We show that any two disjoint sets $S,T \subseteq N_{T_r}$, such that $n e^{-d/20} \leq |S| \leq |N_{T_r}|/2$, $|T|=0.1|S|$, and   $\partial_{out}(S) \subseteq T$, exist with negligible probability. For any $S$ and $T \subseteq N_{T_r}-S$ we again consider $A_{S,T}=\{\partial_{out}(S) \subseteq T\}$. 
Consider   the event
\[L_r=\{\hbox{each node in $N_{T_r}$ is born after time $T_{r-7n \log n}$}\}\cap \{|N_{T_i}| \in [0.9n,1.1n] \hbox{ with } i=r-7n \log n,\dots,r\},\]
and notice that Lemma \ref{lem:life_of_nodes} and Theorem 
\ref{thm:concentration_nodes} imply   $\Prc{L_r} \geq 1-1/n^2$. Thus,  from the law of total probability, 
\begin{align}
    \Prc{\min_{n e^{-d/20} \leq |S| \leq n/2}\frac{|\partial_{out}(S)|}{|S|}\leq 0.1} \leq \sum_{\substack{n e^{-d/20} \leq |S| \leq |N_{T_r}|/2 \\ |T|=0.1|S|}}\Prc{A_{S,T}\mid L_r}+\frac{1}{n^2} \, ,
    \label{eq:main_exp_sum}
\end{align}
To upper bound $\Prc{A_{S,T} \mid L_r}$ we define $P$ as the set $P=N_{T_r}-S-T$ and notice that  $|P| \geq 0.9n-1.1s$. The event $A_{S,T}$ implies that all the  edges   coming 
from   $S$ must go to $T$:   this is 
equivalent to say that there are no edges between $S$ and $P$. Since
\begin{equation}
    \left|\left\{(a,b) \mid a \in S, \ b \in P\right\} \right|=|S| \cdot |P|\, ,
\end{equation}
    two cases may arise: either 
\begin{enumerate}
    \item $\left| \left\{(a,b) \mid a \in S, \ b \in P, \hbox{$a$ younger then $b$}\right\}\right| \geq |S| \cdot |P| /2$, or 
    \item $\left| \left\{(a,b) \mid a \in S, \ b \in P, \hbox{$b$ younger then $a$}\right\}\right| \geq |S| \cdot |P| /2$.
\end{enumerate}
For each $a \in S$, let  $N_a$ be  the number of nodes in $P$ older than $a$. In the first case, we clearly have  
that $\sum_{a \in S}N_a \geq |S| \cdot |P| /2$. We can prove that  
\begin{equation}
\label{eq:bound_A_s_t_mid_L_r}
    \Prc{A_{S,T} \mid L_r}\leq \prod_{a \in S}\left(1-\frac{N_a}{1.1n}\right)^d \leq e^{-d\sum_{a \in S}N_a/(1.1n)} \leq e^{-d|S| \cdot |P|/2.2n}\, .
\end{equation}
 Indeed, as for the first inequality above,  for each $a \in S$,
 we considered the probability that a fixed request of node $a$ does not choose any node  in $P$ which is older than $a$ and we used the fact that, conditional to  the event $L_r$,     the probability that a node $a$ chooses any fixed older node  $v \in N_{T_r}$    is $\geq 1/1.1n$ (thanks to the event  ``$a$ chooses $v$ when it joins the network'').
Using a symmetric argument, we get the same claim for the second case.  Hence, placing \eqref{eq:bound_A_s_t_mid_L_r} into  \eqref{eq:main_exp_sum},
\begin{align}
&\Prc{\min_{n e^{-d/20} \leq |S| \leq n/2} \frac{|\partial_{out}(S)|}{|S|} \leq 0.1}=\sum_{s=n e^{-d/20}}^{n/2}\binom{1.1n}{s}\binom{1.1n-s}{0.1s}e^{-ds\frac{0.9n-1.1s}{2.2n}}+\frac{1}{n^2} \leq \frac{2}{n^2}  \, , 
\label{eq:final_sum}
\end{align}
 where the last inequality holds for a large enough $n$ and for any
 $d \geq 20$.  It can be easily proved by bounding each binomial coefficient with the bound $\binom{n}{k}\leq \left(\frac{n \cdot e}{k}\right)^k$ and by standard calculation.

\subsubsection{Proof of Theorem \ref{lem:flooding_not_terminate_poisson}}
\label{ssec:lem:flooding_not_terminate_poisson}
We begin by noting that the proof of Theorem 
\ref{lem:flooding_not_terminate_poisson} uses the original Definition 
\ref{def.async.flood} of asynchronous flooding.
Let $s_0$ be the   source node that joins the network at time $t_0=T_{r_0}$. Consider the event
    \begin{equation}
        C_{r_0}^{r_0+n}=\{|N_{T_i}| \in [0.9n,1.1n] \hbox{ with } i=r_0,\dots,r_0+n\}\, ,
    \end{equation}
	and notice that   Lemma \ref{thm:concentration_nodes} implies 
	that $\Prc{C_{r_0}^{r_0+n}}\geq 1-1/n^2$. We then consider the 
	events $A$ and $B$  we  defined in the proof of Lemma 
	\ref{lem:isolated_nodes}: $A=$"the source node $s_0$ has all its out-edges to nodes that are isolated at time $t_0$ and for the rest of their lifetime" and $B=$"there are at least $\frac{ne^{-2d}}{18}$ nodes in $N_{t_0}$ that are isolated at time $t_0$ and for the rest of their lifetime". 
Observe that Lemma \ref{lem:isolated_nodes_poisson} implies
    \begin{equation}
		\Prc{A \mid C_{r_0}^{r_0+n}} \geq \Prc{A \cap B \mid 
		C_{r_0}^{r_0+n}}=\Prc{A \mid B,C_{r_0}^{r_0+n}}\Prc{B \mid 
		C_{r_0}^{r_0+n}} \geq \frac{1}{2}\left(\frac{e^{-2d}}{18\cdot 
		1.1}\right)^{d} \, . \label{eq:lower_bound_B_cap_C_poisson}
    \end{equation}
We define  the event
\begin{equation}
        E = \{\hbox{the node $s_0$ will not get any in-edges for all its lifetime}\} \, .
    \end{equation}
Notice that the 
event $A \cap E$ imply that all the informed nodes $s_0,s_1,\dots,s_d$ are isolated for all their lifetime, and so: 
\begin{equation}
    \Prc{\hbox{$|I_t| \leq d+1$ for all $t \geq t_0$}}\geq \Prc{ A \cap E}.
    \label{eq:pr_I_T_r+n}
\end{equation}
Let $D_{s_0}$ the random variable which indicates the lifetime (in rounds) of the node $s_0$.
We recall that, for any two events $P,Q$ it holds $\Prc{P \cap Q} \geq \Prc{P}+\Prc{Q}-1$.
Since the life of each nodes follows an exponential random variable of parameter $1/n$, we have that, if $n$ is large enough, 
\begin{equation}
\label{eq:pr_D_s0_mid_Cr}
\Prc{D_{s_0} \leq n \mid C_{r_0}^{r_0+n}}\geq \Prc{D_{s_0} \leq n, C_{r_0}^{r_0+n}}\Prc{C_{r_0}^{r_0+n}} \geq \left((1-e^{-1})+\frac{1}{n^2}\right)\left(1-\frac{1}{n^2}\right) \geq \frac{1-e^{-1}}{2}\,.
\end{equation}
Moreover, 
\begin{align}
\label{eq:pr_E_mid_Cr_E}
    \Prc{E \mid C_{r_0}^{r_0+n}} \geq \Prc{E \mid C_{r_0}^{r_0+n},D_{s_0}\leq n}\Prc{D_{s_0}\leq n \mid C_{r_0}^{r_0+n}}
\end{align}
Since each edge chooses its destination uniform at random among the nodes in the network, we have that  
\begin{equation}
\label{eq:pr_E_mid_Cr_D_s0}
    \Prc{E \mid C_{r_0}^{r_0+n},D_{s_0}\leq n} \geq \left(1-\frac{1}{0.9n}\right)^{dn} \geq e^{-2d} 
\end{equation}
Replacing  and \eqref{eq:pr_E_mid_Cr_D_s0} and  \eqref{eq:pr_E_mid_Cr_E} in \eqref{eq:pr_D_s0_mid_Cr}  we get
\begin{equation}
\label{eq:pr_E_mid_C_r0}
    \Prc{E \mid C_{r_0}^{r_0+n}} \geq \frac{(1-e^{-1})e^{-2d}}{2}
\end{equation}

Finally, from \eqref{eq:pr_I_T_r+n}, since $A$ and $E$ are independent and because of \eqref{eq:lower_bound_B_cap_C_poisson} and \eqref{eq:pr_E_mid_C_r0}
\begin{equation*}
    \Prc{|I_t| \leq d+1 \hbox{ for all $t \geq t_0$}} \geq \Prc{A \cap E \mid C_{r_0}^{r_0+n}}\Prc{C_{r_0}^{r_0+n}}\geq \frac{(1-e^{-1})e^{-2d}}{8}\cdot\left(\frac{e^{-2d}}{20}\right)^d =c(d)\,.
\end{equation*}

Finally, as for the stated linear bound on the flooding time $\tau$, we observe this is an easy consequence of Lemma \ref{lem:isolated_nodes_poisson}. Indeed, in the network there are at least $\frac{1}{18}e^{-2d}n$ isolated nodes that will remain isolated for the rest of their lifetime: to have all the nodes informed, we have to wait that these nodes leave the network, and so $\tau=\Omega_d(n)$.

\subsubsection{Proof of Theorem \ref{thm:flood_poiss_noreg}}\label{ssec:thm:flood_poiss_noreg}
We begin by reminding the reader that, in order to account for the fact 
that a live node might die at any point of a given flooding interval, the proof of Theorem 
\ref{thm:flood_poiss_noreg} uses the discretized version of the 
flooding process described by Definition \ref{def:flooding_poisson}, 
which clearly provides a worst case scenario when we are interested in proving 
lower bounds on the extent and upper bounds on the speed of flooding.
The first observation is that, without edge regeneration, the 
distribution of links created by a node $v$ is uniform over the set of 
nodes that were in the network as $v$ joined. In particular, this 
distribution does not depend on past history of the network as is the 
case in the model with edge regeneration, where death of a node 
triggers reallocation of incoming links. 
We begin by showing a number of preliminary facts that will be useful in 
the remainder of this proof. The first is the following lemma, which is 
just a variant of Lemma \ref{lem:life_of_nodes} that is of easier use 
here.

\begin{lemma}[Nodes' lifetimes]\label{le:lifetime}
\label{lem:life_<nlogn}
Let $G_t=(N_t,E_t)$ be a $\mathcal{G}(n,d)$ Poisson random 
graph. If $n$ is large enough, for every $t\ge 4n \log n$, each 
node in $N_t$ has life $\leq 4n \log n$ with probability at least 
$1-1/n^2$.
\end{lemma}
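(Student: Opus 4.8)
The plan is to bound directly, by a first-moment estimate, the expected number of nodes that are alive at time $t$ yet have age larger than $4n\log n$, and then to apply Markov's inequality. First I would observe that a node contributes to this count precisely when it was born at some time $s\le t-4n\log n$ and is still alive at time $t$; in particular, since its age at $t$ is at most its total lifetime, such a node must have a lifetime exceeding $t-s>4n\log n$. The key conceptual point is that, although the total number of nodes \emph{ever} born up to time $t$ grows linearly in $t$ and can be arbitrarily large, the \emph{expected} number of \emph{old} nodes still alive at $t$ is tiny; this lets us replace the ``condition on the current network size, then union bound over $\Theta(n)$ nodes of unknown age'' strategy of Lemma \ref{lem:life_of_nodes} by a single expectation computation, which is why the present formulation is of easier use.

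Next I would compute this expectation. Arrivals form a Poisson process of rate $\lambda=1$, and each node's lifetime is an independent exponential random variable of parameter $\mu=1/n$, so a node born at time $s\le t$ is still alive at $t$ with probability $e^{-(t-s)/n}$. By the thinning/superposition property of Poisson processes (Campbell's formula), the expected number $\Expcc{X}$ of nodes born before $t-4n\log n$ and still alive at $t$ is
\[
\Expcc{X}=\int_0^{\,t-4n\log n}\lambda\, e^{-(t-s)/n}\,ds
=\int_{4n\log n}^{\,t} e^{-u/n}\,du
\le n\, e^{-(4n\log n)/n}
= n\cdot n^{-4}
= \frac{1}{n^{3}},
\]
where I substituted $u=t-s$ and used $\int_{4n\log n}^{t}e^{-u/n}\,du\le n\,e^{-(4n\log n)/n}$ together with $e^{-(4n\log n)/n}=e^{-4\log n}=n^{-4}$. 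Note the hypothesis $t\ge 4n\log n$ is exactly what guarantees the lower integration limit $t-4n\log n\ge 0$, and that no conditioning on the concentration event of Lemma \ref{thm:concentration_nodes} is needed, since the computation holds for the arrival/lifetime process itself.

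Finally, since $X$ is a nonnegative integer-valued random variable, Markov's inequality yields
\[
\Prc{\exists\, v\in N_t \text{ with age}> 4n\log n}=\Prc{X\ge 1}\le \Expcc{X}\le \frac{1}{n^{3}}\le \frac{1}{n^{2}},
\]
for every $t\ge 4n\log n$ and $n$ large enough, which is precisely the claim. I do not anticipate a genuine obstacle: the entire argument is a clean first-moment computation, and the only point requiring care is to avoid the tempting but wasteful route of first bounding $|N_t|$ and then union-bounding over all alive nodes (whose ages are unknown). The first-moment argument sidesteps this and delivers the sharper $1/n^2$ bound directly, making the lemma convenient to invoke in the flooding analysis.
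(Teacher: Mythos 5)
Your proof is correct, and it takes a genuinely different route from the paper's. The paper first conditions on the concentration event $\{|N_{t-4n\log n}|\le 2n\}$ (via Lemma \ref{thm:concentration_nodes}), then union-bounds over those at most $2n$ nodes, each of which survives an additional $4n\log n$ time with probability $e^{-4\log n}=n^{-4}$, and finally removes the conditioning; this yields the stated $1-1/n^2$ bound. You instead exploit the marked-Poisson structure directly: the nodes born in $[0,\,t-4n\log n]$ that are still alive at $t$ form a thinned Poisson process, so the expected number of ``too old'' survivors is exactly $\int_0^{t-4n\log n} e^{-(t-s)/n}\,ds \le n^{-3}$, and Markov's inequality finishes. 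What your approach buys: it needs no appeal to Lemma \ref{thm:concentration_nodes} (so no conditioning bookkeeping and no restriction coming from that lemma's hypotheses), it handles the fact that the total number of nodes ever born is unbounded without any case analysis, and it gives the slightly sharper bound $1/n^3$. The paper's approach is more pedestrian but reuses machinery (the concentration lemma and the memorylessness-plus-union-bound template) that it deploys repeatedly elsewhere, e.g.\ in Lemma \ref{lem:life_of_nodes}. Both arguments establish the same statement; yours is the cleaner first-moment computation.
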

\begin{proof}
We condition on the event $\mathcal E = (|N_{t-4\log n}|\le 2n)$. From Lemma 
\ref{thm:concentration_nodes}:
\[
\Prc{|N_{t-4\log n}|\le 2n}\ge 1 - \frac{1}{2n^2}.
\]
Next, for every $i\in N_t$ we define a binary variable 
$L_i$, such that $L_i = 1$ if $i$ is alive at time $t$. Note that, 
conditioned on the event $\mathcal E$, each $\Prc{L_i} = 1$ is 
exponential with parameter $\mu = 1/n$, hence:
\[
	\Prc{L_i = 1 \mid \mathcal{E}} = e^{-\frac{4n\log n}{n}} = 
	\frac{1}{n^4},
\] 
which implies:
\[
	\Prc{\exists i\in N_{t-4n\log n}: L_i = 1 \mid \mathcal{E}}\le 
	\frac{2}{n^3}.
\]
Denote by $\mathcal{O}$ the event that there exists a node with age 
higher than $4n\log n$ at time $t$. We have:
\[
	\Prc{\mathcal{O}}\le \Prc{\exists i\in N_{t-4n\log n}: L_i = 1 \mid 
	\mathcal{E}}\Prc{\mathcal{E}} + 
	\Prc{\neg\mathcal{E}}\le\frac{1}{n^2}.
\]
\end{proof}

The following fact is instead a simple corollary
of Lemma \ref{thm:concentration_nodes}.

\begin{fact}\label{fa:no_nodes}
With probability at least $1 - 1/n^2$, $0.9n\le N_t\le 1.1n$, for every 
$t\in [t_0 - n^2, t_0]$.
\end{fact}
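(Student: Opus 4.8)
The plan is to combine the single-time concentration of Lemma~\ref{thm:concentration_nodes} with the elementary observation that $|N_t|$ is a piecewise-constant process, so that controlling it on a whole continuous interval reduces to controlling it at the finitely many instants where it actually changes. Since $|N_t|$ jumps by exactly $\pm 1$ at each event time $T_r$ and is constant in between, the bad event $\{\,|N_t| \notin [0.9n, 1.1n] \text{ for some } t \in [t_0 - n^2, t_0]\,\}$ happens if and only if $|N_{T_r}| \notin [0.9n, 1.1n]$ for some event index $r$ whose time $T_r$ lies in the window (or for the value at the left endpoint). Hence the whole statement follows from a union bound over the event times in the window.

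First I would bound the number of such events. Arrivals form a Poisson process of rate $\lambda = 1$, so the number of arrivals in a window of length $n^2$ is Poisson with mean $n^2$ and is at most $2n^2$ except with probability $e^{-\Omega(n^2)}$. Every death is the death of a node that was either already present at the start of the window or arrived during it, so the number of deaths is at most the number of nodes present at $t_0 - n^2$ plus the number of arrivals; bounding the former by $1.1n$ via Lemma~\ref{thm:concentration_nodes} applied at the left endpoint, the total number $R$ of events in the window is at most $5n^2$ except with probability $e^{-\Omega(n^2)} + 2e^{-\sqrt n}$. In particular, on this high-probability event the in-window indices form a subset of the \emph{deterministic} range $\{r_0 - 5n^2, \ldots, r_0\}$.

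Next I would union-bound the single-time concentration over that deterministic range. Following the convention already used in the proofs of Lemma~\ref{lem:N_m+1_and_death} and Lemma~\ref{lem:exp_large_subset_PDG}, for each fixed index $r$ Lemma~\ref{thm:concentration_nodes} gives $\Prc{|N_{T_r}| \in [0.9n, 1.1n]} \geq 1 - 2e^{-\sqrt n}$. Summing this failure probability over the at most $5n^2 + 1$ indices in $\{r_0 - 5n^2, \ldots, r_0\}$ yields a total of at most $(5n^2 + 1)\cdot 2e^{-\sqrt n}$, which is $o(1/n^2)$ because $e^{-\sqrt n}$ decays faster than any polynomial; adding the $e^{-\Omega(n^2)} + 2e^{-\sqrt n}$ loss incurred in bounding $R$ still leaves total failure probability below $1/n^2$ for large $n$, as claimed.

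The one genuinely delicate point — and the step I would treat most carefully — is the transfer of Lemma~\ref{thm:concentration_nodes}, which is phrased for a fixed \emph{real} time $t$, to the \emph{random} event times $T_r$; I would handle this exactly as the paper does elsewhere, treating $|N_{T_r}|$ for a fixed index $r$ as concentrated with the same $2e^{-\sqrt n}$ bound, so that the union bound over a fixed index range is legitimate. A secondary boundary issue is that when $t_0$ itself is only $\Theta(n \log n)$ the nominal window $[t_0 - n^2, t_0]$ reaches back before the network has filled up, where the $[0.9n, 1.1n]$ range cannot hold; there it suffices to run the argument on $[\max(3n,\, t_0 - n^2),\, t_0]$, which stays within the concentration regime $t \geq 3n$ of Lemma~\ref{thm:concentration_nodes} and is all that the subsequent flooding analysis ever needs.
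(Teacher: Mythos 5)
Your argument is correct, but it is worth noting that the paper offers no proof of this Fact at all: it is simply asserted as ``a simple corollary of Lemma~\ref{thm:concentration_nodes}'', so there is nothing to compare against except that one-line appeal. What you supply is exactly the missing content. The reduction from the continuous window to the event times via the observation that $|N_t|$ is piecewise constant with jumps of size one is the right way to make the uniform-in-$t$ claim rigorous; the bound of $\bigO(n^2)$ on the number of events in the window (Poisson arrivals plus deaths charged to nodes present at the left endpoint or arriving later) correctly turns the random set of event indices into a deterministic range; and the union bound closes because $n^2 e^{-\sqrt n} = o(1/n^2)$. Your two caveats are also well placed: the transfer of Lemma~\ref{thm:concentration_nodes} from a fixed real time to a fixed event index $T_r$ is indeed a gloss, but it is the same gloss the paper makes in Lemma~\ref{lem:N_m+1_and_death} and Lemma~\ref{lem:exp_large_subset_PDG}, so you are not introducing any new gap; and the observation that the window $[t_0 - n^2, t_0]$ reaches back before the network fills up when $t_0 = \Theta(n\log n)$ --- so that the Fact as literally stated cannot hold there and must be read on $[\max(3n,\,t_0-n^2),\,t_0]$ --- is a genuine imprecision in the paper's statement that your truncation correctly repairs, and the truncated window is all that the subsequent flooding analysis uses. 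In short: the paper buys brevity by assertion; your route buys an actual proof at the cost of half a page, and it is the proof the paper should have sketched.
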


Now, assume $N_{t_0} = m$ and recall that $m\in [0.9n, 1.1n]$ from Fact 
\ref{fa:no_nodes}.
Next, we consider the number of nodes that die in the interval $[t_0, 
t_0 + \log n]$. 
\begin{lemma}\label{le:no_deaths}
With probability at least $1 - 1/n^c$, at most $4\log n$ nodes die in 
the interval $[t_0, t_0 + \log n]$.
\end{lemma}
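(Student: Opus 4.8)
The plan is to bound the number $D$ of deaths in $[t_0,t_0+\log n]$ by separating the nodes that are already alive at time $t_0$ from those that are born during the interval. Writing $D \le D_{\mathrm{old}} + D_{\mathrm{born}}$, where $D_{\mathrm{old}}$ counts the nodes of $N_{t_0}$ that die by time $t_0+\log n$ and $D_{\mathrm{born}}$ counts the nodes that are born and die within $(t_0,t_0+\log n)$, I would show that $D_{\mathrm{old}} \le 4\log n$ with polynomially small failure probability, while $D_{\mathrm{born}}=0$ except on a negligible event. Throughout I condition on the event $\{|N_{t_0}| \le 1.1n\}$, which holds with probability at least $1-2e^{-\sqrt n}$ by Lemma \ref{thm:concentration_nodes} (note that $t_0=T_{r_0}\ge 7n\log n\ge 3n$).

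First I would handle $D_{\mathrm{old}}$. By the memoryless property of the exponential distribution, conditioned on the set $N_{t_0}$ the residual lifetime of each alive node is an independent $\mathrm{Exp}(\mu)$ variable with $\mu=1/n$, so each such node dies within the next $\log n$ time units independently with probability
\[
p \;=\; 1-e^{-\mu\log n} \;=\; 1-e^{-\log n/n} \;\le\; \frac{\log n}{n}.
\]
Hence, conditioned on $|N_{t_0}|=m\le 1.1n$, the variable $D_{\mathrm{old}}$ is a sum of $m$ independent $\mathrm{Bernoulli}(p)$ trials with mean $mp\le 1.1\log n$, and the Chernoff bound (Theorem \ref{thm:chernoff}) gives $\Prc{D_{\mathrm{old}}\ge 4\log n}\le n^{-\Omega(1)}$, with a comfortable exponent since $4\log n$ sits well above the mean.

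Next I would dispatch $D_{\mathrm{born}}$, which is doubly rare: the number of nodes born in an interval of length $\log n$ is $\mathrm{Poisson}(\lambda\log n)=\mathrm{Poisson}(\log n)$ (Definition \ref{def:Poisson_node_churns}), and each of them must in addition die within its residual window of length at most $\log n$, which again happens with probability at most $\log n/n$. By Poisson thinning, $D_{\mathrm{born}}$ is therefore stochastically dominated by a $\mathrm{Poisson}$ variable of mean at most $(\log n)^2/n=o(1)$, so $D_{\mathrm{born}}$ contributes nothing to the $4\log n$ budget except on an event of negligible (indeed super-polynomially small) probability. Combining the two estimates with the failure probability of the conditioning event yields $\Prc{D\ge 4\log n}\le 1/n^c$ for a suitable constant $c>0$.

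The argument is routine; the only steps needing care are invoking memorylessness so that the residual lifetimes of the nodes alive at $t_0$ are genuinely i.i.d.\ exponentials (making $D_{\mathrm{old}}$ a clean sum of independent indicators amenable to Chernoff), and ensuring that nodes which both join and die inside the interval are counted exactly once, which is precisely the role of the separate term $D_{\mathrm{born}}$. An equivalent, slightly slicker route is to observe that on the event that $|N_t|\le 1.1n$ throughout the interval the instantaneous death rate never exceeds $1.1n\mu=1.1$, so that $D$ is stochastically dominated by a single $\mathrm{Poisson}(1.1\log n)$ random variable, to which the same Poisson tail bound applies.
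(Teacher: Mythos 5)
Your proof is correct and its core is the same as the paper's: condition on $|N_{t_0}|\le 1.1n$, use memorylessness of the exponential lifetimes so that each node of $N_{t_0}$ dies in the next $\log n$ time units independently with probability $1-e^{-\log n/n}\le \log n/n$, and apply Chernoff to the resulting sum of independent indicators with mean at most $1.1\log n$. The one place you go beyond the paper is the term $D_{\mathrm{born}}$: the paper's proof silently identifies ``nodes that die in $[t_0,t_0+\log n]$'' with $\sum_{i\in N_{t_0}}L_i(\log n)$, i.e.\ it never accounts for nodes that both join and die inside the interval, whereas you bound these separately by Poisson thinning with mean $(\log n)^2/n$. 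That extra term is the more careful reading of the statement and costs nothing; your only slight overstatement is calling the event $D_{\mathrm{born}}\ge 1$ ``super-polynomially small'' (it has probability of order $(\log n)^2/n$, which is merely polynomially small, though still far below the $1/n^c$ budget, and the event $D_{\mathrm{born}}\ge \log n$ is indeed super-polynomially rare). Your closing remark that $D$ is stochastically dominated by a $\mathrm{Poisson}(1.1\log n)$ variable, since the instantaneous death rate is at most $1.1n\mu$ on the conditioning event, is a valid and cleaner alternative that handles both populations at once.
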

\begin{proof}
Consider the generic node $i\in N_{t_0}$. We define a binary variable 
$L_i(\tau)$, such that $L_i(\tau) = 1$ if $i$ is alive at time $t_0 + \tau$, 
$L_i(\tau) = 0$ otherwise. Clearly, since the death process follows the 
exponential distribution with parameter $\mu = 1/n$ and is thus memoryless, we have:
\[
	\Prc{L_i(\tau) = 1} = e^{-\frac{\tau}{n}}\ge 1 - \frac{\tau}{n}.,
\]
Denote by $Z$ the number of nodes that die in the interval $[t_0, t_0 + a\log n]$. Setting $\tau = a\log n$ we 
immediately have:
\[
	\Expcc{Z}\le m - m + \frac{m\log n}{n}\le 1.1\log n,
\]
where we used $m\le 1.1n$ with probability $1 - 1/n^2$. Finally, since 
$Z = \sum_{i\in N_{t_0}}L_i(\log n)$ and since the $L_i(\log n)$'s are 
independent, a simple application of Chernoff's bound yields:
\[
	\Prc{Z > 4\log n}\le e^{-1.1\log n},
\]
which proves our claim.
\end{proof}

Finally, we bound the number of nodes' arrivals in an interval of 
logarithmic duration starting at time $t_0$.
\begin{lemma}\label{le:no_births}
With probability at least $1 - 1/n^c$, at most $4\log n$ nodes join 
the network in the interval $[t_0, t_0 + \log n]$.
\end{lemma}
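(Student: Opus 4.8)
The plan is to observe that, since node arrivals form a Poisson process of rate $\lambda = 1$ (Definition~\ref{def:Poisson_node_churns}), the number $W$ of nodes that join the network during the fixed-length window $[t_0, t_0 + \log n]$ is a Poisson random variable of mean $\log n$, and then to apply a standard upper-tail (Chernoff) bound for Poisson variables. Concretely, first I would let $W$ count the arrivals in $(t_0, t_0 + \log n]$. As recalled in the preliminary properties of Subsection~\ref{ssec:Poisson-Prely}, the number of arrivals in an interval of duration $\tau$ is distributed as a Poisson random variable with parameter $\tau\lambda = \tau$; here $\tau = \log n$, so $\Expcc{W} = \log n$. The only point needing a word of justification is that $t_0 = T_{r_0}$ is itself a random event time rather than a deterministic instant; this is handled by the strong Markov (independent-increments) property of the Poisson arrival process, which guarantees that, conditioned on the history up to $t_0$, the arrivals in $(t_0, t_0 + \log n]$ still form a rate-$1$ Poisson process and hence $W \sim \mathrm{Poisson}(\log n)$ independently of the past.

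Second, I would bound the upper tail. Using the multiplicative Chernoff bound for a Poisson random variable of mean $m$, namely $\Prc{W \ge (1+\delta)m} \le \left( e^{\delta}/(1+\delta)^{1+\delta} \right)^{m}$, with $m = \log n$ and $1+\delta = 4$ (so $\delta = 3$), I obtain
\[
\Prc{W \ge 4\log n} \le \left( \frac{e^{3}}{4^{4}} \right)^{\log n} = e^{(3 - 4\ln 4)\log n} \le n^{-2.5},
\]
since $3 - 4\ln 4 < -2.5$. Passing to the complement, at most $4\log n$ nodes join in $[t_0, t_0 + \log n]$ with probability at least $1 - 1/n^{c}$ for $c = 2.5$; any constant $c > 1$ works, and in particular the constant $c$ in the statement can be taken this way, which proves the lemma.

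I do not expect any real obstacle: the computation is routine and parallels the Chernoff argument already used for Lemma~\ref{le:no_deaths}. The only conceptual care — and it is minor — is the remark above that the count is measured starting from the random time $t_0 = T_{r_0}$, so that the clean $\mathrm{Poisson}(\log n)$ law for $W$ must be justified via the strong Markov/independent-increments property of the arrival process rather than taken for granted. Alternatively, one may condition on the event $\{0.9n \le |N_t| \le 1.1n\}$ throughout the window (Fact~\ref{fa:no_nodes}) and absorb the resulting $1/n^{2}$ error term, exactly as is done in the neighbouring lemmas; this keeps the argument uniform with the rest of the proof of Theorem~\ref{thm:flood_poiss_noreg}.
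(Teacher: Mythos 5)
Your proof is correct and follows essentially the same route as the paper's (one-line) argument: the number of arrivals in a window of length $\log n$ is Poisson with mean $\log n$, and a standard Poisson tail bound gives the claim. Your added remarks on the strong Markov property at the random time $t_0=T_{r_0}$ and the explicit Chernoff computation only make the argument more complete than the paper's sketch.
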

\begin{proof}
Nodes enter the system according to a Poisson process with rate $1$ in 
each step, which corresponds to a Poisson process with rate $4\log 
n$ over an interval of $4\log n$ steps. This fact and tail bounds for 
the Poisson distribution imply the claim. 
\end{proof}

We next prove that, starting from $s$ at time $t_0$, with probability $1 - 2e^{-\frac{d}{576}} - 
o(1)$ we reach a fraction $1 - e^{-\frac{d}{20}}$ of the nodes that are in the network at time 
$t_0 + T$, where $T = \bigO(\log n/\log d + d)$. To this purpose, we apply the ``onion skin'' technique as 
we did in Section \ref{subse:stream_without_flood}, although with a 
number of more or less significant changes. To begin, differently from 
the \SDG\ model, we consider all nodes that are in the system when the informed node joins the network 
at time $t_0$. On the other hand, we completely disregard nodes that were born in the 
interval $[t_0, t_0 + \log n]$ (i.e., we don't count them as hits, though 
we keep into account that they remove probability mass from 
destinations in $N_{t_0}$), since their number is negligible with high 
probability from Lemma \ref{le:no_births}. Conversely, we do need to 
consider nodes that die in the 
interval $[t_0, t_0 + \log n]$, since their failure might in principle 
significantly affect flooding. Removing them at the onset or at the end 
may be tricky, since in both cases we need to argue that their removal 
has no significant topological effects (we cannot simply assume an adversarial 
removal). To sidestep these challenges, we remove each of them with 
probability $\log n/n$ (this is an upper bound on the 
probability that a node dies in the interval $[t_0, t_0 + \log n]$) as we 
proceed.

In particular, we define an ``extended'' \emph{onion-skin process}, which is dominated by 
the real flooding process. In this process, new links are generated (with 
restrictions with respect to the original process) as flooding unfolds 
(deferred decisions), while each new informed node has a chance 
to independently die with probability $\log n/n$ upon receiving the 
information for the first time, thus without any chance 
of informing other nodes of the network, which clearly is a worst-case 
scenario.

\paragraph{The onion-skin process.}
Denote by $S$ the set of nodes that were in the network at time $t_0$ and 
let $m = |S|$. From Fact \ref{fa:no_nodes}, we know that $m \in [0.9n, 
1.1n]$ with probability at 
least $1 - 1/n^2$. We now build a map $h: S\rightarrow [m]$, so that 
for $v\in S$, $h(v) = i$ if $v$ is the $i$-th youngest node in the 
system at time $t$. Note that $h(s) = 1$ by definition. We next define $Y 
= \{v\in N_t: h(v)\le m/2\}$ as the subset of \emph{young} nodes 
and $O = \{v\in N_t: h(v)\ge m/2 + 1\}$ as the subset of \emph{old} nodes. 

Starting from $s$, the onion-skin process builds a 
connected, bipartite graph, so that young nodes are only connected 
to old ones. In particular, each realization of this process 
generates a subset of the edges generated by the original topology 
dynamics. 
Moreover, each iteration of the process corresponds to a partial 
flooding in the original graph. Flooding is partial since i) the 
network uses a subset of the edges that would be present in the 
original graph and ii) every newly informed node tosses a coin and 
dies, with probability equal to the overall probability of dying in the 
interval $[t_0, t_0 + \log n]$.

The process unfolds over a suitble number $k$ of phases, with $k = 
\bigO(\log n/\log d)$. Each phase corresponds to $2$ 
flooding rounds, each consisting of two steps. In the following, we 
denote by $Y_k\subseteq Y$ and $O_k\subseteq O$ the subsets of young 
and old nodes that are informed by the end of phase $k$, respectively. In the remainder, we let 
$O_{-1} = \emptyset$ for notational convenience and, without loss of 
generality, we use the interval $[d]$ to number the links established by each 
vertex.

\begin{center}
\fbox{
\begin{minipage}{15cm}
\textbf{Onion-skin process (extended version)}:
	\begin{description}
		\item[Phase $\mathbf{0}$:] $Y_0 = \{s\}$; $O_0$ is obtained as follows:
			\begin{enumerate}
				\item $s$ establishes $d$ links. We let $Z_0\subset O$ denote the 
				subset of old nodes that are destinations of these links. Links with 
				endpoints in $Y$ are discarded;
				\item Let $R$ the subset obtained by removing each vertex in $Z_0$ (and 
				the just established links) independently, with probability $\log n/n$. $O_0 = Z_0 - R$;
			\end{enumerate}
		\item[Phase $\mathbf{k}\ge 1$:] (two flooding steps)
			\begin{enumerate} 
				\item[1.a] Each node in $Y - Y_{k-1}$ establishes links $\{1,\ldots, 
				d/2\}$. Denote by $W_k\subseteq Y - Y_{k-1}$ the subset of nodes with 
				at least one link to nodes in $O_{k-1} - O_{k-2}$. Links to nodes in 
				$Y$ are again discarded;
				\item[1.b] Let $R$ the subset obtained by removing each vertex in $W_k$ (and 
				the just established links) independently, with probability $\log 
				n/n$. $Y_k - Y_{k-1} = W_k - R$;
				\item[2.a] Each node in $Y_k - Y_{k-1}$ establishes links $\{d/2 + 
				1,\ldots , d\}$. Denote by $Z_k$ the subset of nodes in $O - O_{k-1}$ 
				that are reached by at least one such link. Links to nodes in $Y$ are 
				discarded;
				\item[2.b] Let $R$ the subset obtained by removing each node in $Z_k$ 
				(and the links just established) independently, with probability $\log 
				n/n$. $O_k - O_{k-1} = Z_k - R$;
			\end{enumerate}
	\end{description}
\end{minipage}\label{proc:onion_skin}
}
\end{center}
A couple remarks are in order. First of all, we are using the principle 
of deferred decisions, delaying decision as to the establishment of a 
link $(u, v)$ to the moment one of its endpoints is informed in the 
flooding process. On the other hand, this means that the probability 
that the $j$-th link originating from $u$ has $v$ as destination is 
equal to $1/N_{T_r}$ if $u$'s arrival in the network corresponds to the $r$-th event. We use 
Lemma \ref{le:lifetime} and Fact \ref{fa:no_nodes} to claim that, with 
probability at least $1 - 1/n^2$ the above probability fell in the interval $\left[\frac{1}{1.1n}, 
\frac{1}{0.9n}\right]$ for every node in the network at time $t_0$. 
Second, the aforementioned probabilities do not depend on nodes that 
joined or left the network in the interval $[t_0, t_0 + \log n]$. 
Accordingly, all events we consider in the remainder of this proof are 
conditioned on $\{N_t\in [0.9n, 1.1n], \forall t\in [t_0 - n^2, t_0]\}$ and 
on the ages of all nodes alive at time $t_0$ belonging to the interval 
$[1,\ldots , 4n\log n]$. We omit these conditionings for ease of notation.

\medskip

We next analyze Phase 0 and the generic Phase $k$ separately. 
To this purpose, we use the same random variables we defined in
the proof of Theorem \ref{thm:not_flooding_purestreaming}, whose 
definition is repeated here to make the proof self-contained.
For each $i=1,\dots,d$, for each node $v \in N_{t_0}$ 
and for each set $A \subseteq N_{t_0}$, we define the Bernoulli random 
variable $R_{v,A}$ as follows:
\begin{equation}\label{def:R_v,A}
    R_{v,A}=\left\{
    \begin{array}{l}
		1\quad \hbox{if $x\ge 1$ links in }\{\frac{d}{2},\dots,d\} \hbox{ from $v$ have destination in 
		$A$}\\
		0 \quad \hbox{otherwise}
	\end{array}\right.
\end{equation}
Moreover, for each $i=1,\dots,d$ and for every $v\in N_{t_0}$ that 
joined the network at time $\hat{t}\le t_0$, we 
define the variable $A_v^{(i)}\in N_{\hat{t}}$ as follows: $A_v^{(i)} = w$ if 
$w\in N_{\hat{t}}$ is the destination of the $i$-th link of $v$.

\paragraph{Analysis of phase $0$.} For phase $0$ we begin with the 
following claim.
\begin{claim}\label{cl:phase_0}
The following holds at the end of phase $0$:
\begin{equation}\label{eq:phase_0}
	\Prc{|O_0|\ge\frac{d}{16}}\ge\left(1 - \frac{2\log 
	n}{n}\right)\left(1 - e^{-\frac{d}{128}}\right).
\end{equation}
\end{claim}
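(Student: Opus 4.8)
The plan is to split the claim into two independent contributions that multiply to the stated bound: first, that $s$ hits at least $d/8$ \emph{distinct} old nodes (this will contribute the factor $1-e^{-d/128}$), and second, that the independent deletion step removes at most half of a designated block of those nodes (this will contribute the factor $1-2\log n/n$). Since the destinations of the $d$ links of $s$ and the deletion coins are independent, I would argue that the two events factorize, so it suffices to lower bound each probability separately and multiply.

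For the first step, I would lower bound $|Z_0|$ by stochastic domination. Conditioning (as everywhere in this proof) on $N_t\in[0.9n,1.1n]$, we have $|O|\ge m/2\ge 0.45n$ while the number of admissible destinations is $m-1\le 1.1n$, and by the remark following the process each link of $s$ lands on any fixed node with probability in $[1/(1.1n),1/(0.9n)]$. Revealing the $d$ links one at a time, if fewer than $d/8$ distinct old nodes have been hit so far, then the next link lands on a \emph{fresh} old node with conditional probability at least $(|O|-d)/(m-1)\ge 0.4$ for $n$ large, because $d=\bigO(1)\ll n$. Hence $|Z_0|$ stochastically dominates a $\mathrm{Bin}(d,0.4)$ variable, and a Chernoff bound (Theorem~\ref{thm:chernoff}) applied to this binomial, whose mean is $0.4d$, gives
\[
\Prc{|Z_0|\ge \tfrac{d}{8}}\ge 1-e^{-\frac{d}{128}}.
\]

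For the second step, I would condition on $\{|Z_0|\ge d/8\}$ and fix any $d/8$ of the old nodes hit by $s$. Each of them is deleted independently with probability $\log n/n$, so the number $U$ of deletions among these $d/8$ nodes satisfies $\Expcc{U}=(d/8)(\log n/n)$. By Markov's inequality,
\[
\Prc{U> \tfrac{d}{16}}\le \frac{(d/8)(\log n/n)}{d/16}=\frac{2\log n}{n},
\]
so with probability at least $1-2\log n/n$ at least $d/16$ of these nodes survive, which forces $|O_0|\ge d/16$. Because the deletion coins are independent of the link destinations, multiplying the two bounds yields $\Prc{|O_0|\ge d/16}\ge(1-2\log n/n)(1-e^{-d/128})$, as claimed.

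The main obstacle is the bookkeeping around \emph{distinctness}: a naive count of links landing in $O$ overcounts, since two links of $s$ may hit the same old node, so the domination argument must reveal links sequentially and track the shrinking pool of fresh old nodes, which is what keeps the per-link success probability bounded below by a constant. The only other delicate point is making the factorization rigorous, i.e. verifying that the deletion process is independent of the destination choices so that conditioning on $\{|Z_0|\ge d/8\}$ does not disturb the deletion probabilities, and checking that the constants ($0.4$, $d/8$, $d/16$) are compatible with the target exponent $1/128$ and with the factor $2\log n/n$ produced by Markov's inequality.
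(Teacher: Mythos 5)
Your proposal is correct and follows the same two-part decomposition as the paper: lower-bound the number $|Z_0|$ of distinct old nodes hit by $s$, then control the independent deletions by Markov's inequality (your second step, including the constants $d/8 \to d/16$ and the bound $2\log n/n$, is exactly the paper's). The only genuine divergence is in how the first step is concentrated. The paper computes $\Expcc{|Z_0|} = |O|\bigl(1-(1-1/m)^d\bigr) \ge d/4$ directly and then, precisely because the per-node indicators $\mathbb{1}\{v \in Z_0\}$ are not independent, applies the method of bounded differences to $f(A_s^{(1)},\ldots,A_s^{(d)}) = |Z_0|$ with Lipschitz constants $1$, getting $\Prc{|Z_0| < d/8} \le e^{-d/32} \le e^{-d/128}$. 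You instead reveal the $d$ links sequentially, observe that the conditional probability of hitting a \emph{fresh} old node is always at least $(|O|-d)/(m-1) \ge 0.4$, and conclude by stochastic domination of $|Z_0|$ by $\mathrm{Bin}(d,0.4)$ plus a Chernoff bound. Both routes are sound and give exponents well below $1/128$; yours trades McDiarmid's inequality for a domination/coupling lemma (which you should state or cite explicitly, since the paper's Theorem on Chernoff bounds assumes independent trials and only applies after the domination step), while the paper's sidesteps any sequential bookkeeping at the cost of invoking the bounded-differences machinery. Either way the factorization with the deletion coins is legitimate, as those coins are independent of the destination variables by construction of the extended onion-skin process.
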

\begin{proof}
We begin with step 1. Assume $v\in O$. We have:
\[
	\Prc{\exists i\in [d]: A_s^{(i)} = v} = 1 - \left(1 - \frac{1}{m}\right)^d,
\]
which implies 
\[
	\Expcc{Z_0} = |O|\left(1 - \left(1 - 
	\frac{1}{m}\right)^d\right)\ge |O|\frac{d}{2m} = \frac{d}{4}, 
\]
where the last equality follows since $|O| = m/2$. We next bound the 
probability that $Z_0$ is smaller than $d/8$. To this purpose, we 
cannot simply apply a Chernoff bound to the binary variables that 
describe whether or not a node $v\in O$ was the recipient of at least 
one link originating from $s$, since these are not independent. We 
instead resort to Theorem \ref{thm:bounded_differences}. In particular, 
similarly to what we did in the proof of Theorem 
\ref{thm:not_flooding_purestreaming}, we define the function 
$f(A_s^{(1)},\ldots , A_s^{(d)}) = |Z_0|$. Clearly, $f$ is well-defined and it 
satisfies the Lipschitz condition with values $\beta_1 =\cdots = 
\beta_d = 1$, since changing the destination of one link can affect the 
value of $|Z_0|$ by at most $1$. We can thus apply Theorem 
\ref{thm:bounded_differences} to obtain:
\[
	\Prc{|Z_0| < \frac{d}{8}}\le\Prc{|Z_0| < \Expcc{|Z_0|} - 
	\frac{d}{8}}\le e^{-\frac{d}{128}}.
\]

We next obtain $O_0$. If $|Z_0| = x$ and $R$ is the number of nodes in $Z_0$ that are 
removed, we have:
\[
	\Expcc{R}\le \frac{ax\log n}{n}.
\]
Applying Markov's inequality immediately yields
\[
	\Prc{R\ge\frac{x}{2}}\le\frac{2\log n}{n},
\]
whence the thesis immediately follows.
\end{proof}

\paragraph{Analysis of Phase $k$ - step 1.}
We next examine the number 
of nodes in $Y - Y_{k-1}$ that connect to nodes in $|O_{k-1} - 
O_{k-2}|$ using their links belonging to the subset $\{d/2 + 1,\ldots , 
d\}$. 
\begin{claim}\label{cl:phase_k_step_1}
Assume that $|O_{k-1} - O_{k-2}| = y$ and $|Y_{k-1}|\le m/10$.
For sufficiently large $n$, the following holds at the end of phase $k$:
\begin{align}\label{eq:phasek_step1}
	&\Prc{|Y_k - Y_{k-1}|\ge\frac{yd}{48} \mid |O_{k-1} - O_{k-2}| = 
	y}\ge\left(1 - \frac{2\log n}{n}\right)\left(1 - 
	e^{-\frac{yd}{48}}\right), & y\le\frac{1.1n}{d}\\
	&\Prc{|Y_k - Y_{k-1}|\ge\frac{m}{20} \mid |O_{k-1} - O_{k-2}| = 
	y}\ge\left(1 - \frac{2\log 
	n}{n}\right)\left(1 - e^{-\frac{n}{240}}\right), & y > \frac{1.1n}{d}.
\end{align}
\end{claim}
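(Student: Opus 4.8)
The plan is to mirror the analysis of Claim \ref{cl:phase_0}, treating step 1.a and step 1.b of phase $k$ separately, and to exploit the key structural fact that \emph{without} edge regeneration the relevant success indicators are \emph{independent} across young nodes, so a plain Chernoff bound suffices in place of the method of bounded differences used in phase $0$. First I would write $|W_k| = \sum_{v\in Y-Y_{k-1}} R_{v,O_{k-1}-O_{k-2}}$, where $R_{v,O_{k-1}-O_{k-2}}$ is the indicator from \eqref{def:R_v,A} that at least one of the $d/2$ links of $v$ reserved for this step lands in $O_{k-1}-O_{k-2}$. Since each node chooses its link destinations independently of every other node (deferred decisions, no regeneration), conditioning on the fixed set $O_{k-1}-O_{k-2}$ these indicators are mutually independent.

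Next I would lower bound the per-node success probability. Conditioning (as the proof already does) on $\{N_t\in[0.9n,1.1n]\}$ and on all ages lying in $[1,4n\log n]$, via Lemma \ref{le:lifetime} and Fact \ref{fa:no_nodes}, each link of a young $v$ reaches any fixed node of $O$ — which is older than $v$ and hence present at $v$'s birth — with probability at least $1/(1.1n)$, so that
\[
\Prc{R_{v,O_{k-1}-O_{k-2}} = 1 \mid |O_{k-1}-O_{k-2}| = y} \ge 1 - \left(1 - \frac{y}{1.1n}\right)^{d/2}.
\]
Combining with $|Y-Y_{k-1}| \ge |Y| - |Y_{k-1}| \ge m/2 - m/10 = 2m/5$ gives a lower bound on $\Expcc{|W_k|}$. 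In the regime $y\le 1.1n/d$ a first-order expansion of $1-(1-x)^{d/2}$ (valid since $yd/(1.1n)\le 1$) yields $\Expcc{|W_k|} = \Omega(yd)$; in the regime $y > 1.1n/d$ the per-node probability is bounded below by the absolute constant $1 - e^{-1/2}$, giving $\Expcc{|W_k|} = \Omega(m)$.

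I would then apply Chernoff's bound (Theorem \ref{thm:chernoff}) to the independent sum $|W_k|$ to obtain, with an appropriately chosen deviation $\delta$, that $|W_k|\ge(1-\delta)\Expcc{|W_k|}$ except with probability $e^{-\delta^2\Expcc{|W_k|}/2}$, which specializes to error exponents at least $yd/48$ and $n/240$ in the two regimes. Finally I would handle the thinning step 1.b: each vertex of $W_k$ is removed independently with probability $\log n/n$, so $\Expcc{R\mid |W_k|=x} = x\log n/n$ and Markov's inequality gives $\Prc{R\ge x/2}\le 2\log n/n$; hence $|Y_k-Y_{k-1}| = |W_k|-R \ge \tfrac12|W_k|$ with probability at least $1 - 2\log n/n$. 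Taking the product of the independent Chernoff and Markov events, and noting these bounds only tighten as $d$ grows (here $d\ge 1152$), yields the stated $(1-2\log n/n)(1-e^{-\cdot})$ form with thresholds $yd/48$ and $m/20$.

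I expect the main obstacle to be purely the constant bookkeeping: one must choose the Chernoff deviation $\delta$ in each regime so that the factor-$\tfrac12$ loss from thinning and the $(1-\delta)$ loss from concentration compose to land exactly on $yd/48$ (resp.\ $m/20$) while keeping the error exponents at least $yd/48$ (resp.\ $n/240$); in the second regime in particular $\delta$ must be taken around $0.36$ rather than $1/2$ for the target $m/20$ to be reached. A secondary point requiring care is to verify that the edge-probability lower bound $\ge 1/(1.1n)$ per link persists across all $\bigO(\log n/\log d)$ phases under the single conditioning on the good events, so that the $\Omega$-estimates never silently reuse independence that the deferred-decision process has already spent.
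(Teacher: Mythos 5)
Your proposal is correct and follows essentially the same route as the paper: independent indicators $R_{v,O_{k-1}-O_{k-2}}$ summed over $Y-Y_{k-1}$, the per-link destination probability $\geq 1/(1.1n)$ from the concentration of $|N_t|$, the two regimes split at $y = 1.1n/d$ with a linearization in the first and the constant $1-e^{-1/2}$ in the second, a plain Chernoff bound (valid precisely because, unlike Phase~0, the indicators here are independent across young nodes), and Markov's inequality for the thinning step 1.b. The constant bookkeeping you flag is exactly how the paper lands on $yd/48$ and $m/20$.
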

\begin{proof}
It should be noted that i) no links originating from nodes in $Y - Y_{k-1}$ have been established so 
far and ii) at this point we know that, for each $u\in Y - Y_{k-1}$, 
none of its links in $\{d/2 + 1,\ldots , d\}$ had destination in $O_{k-2}$, thus the 
probability of pointing to some vertex in $O_{k-1} - O_{k-2}$ can only 
be magnified.
With these premises, if $u\in Y - Y_{k-1}$ we have:
\[
	\Prc{R_{v, O_{k-1} - O_{k-2}} = 1 \mid |O_{k-1} - O_{k-2}| = y} = 
	1 - \left(1 - \frac{y}{\ell}\right)^{\frac{d}{2}} > 1 - 
	e^{-\frac{yd}{2\ell}},
\]
where $\ell\in [0.9n, 1.1n]$ denotes the number of nodes when $u$ 
joined the network. If $y\le 1.1n/d$ we have $\frac{yd}{2\ell} < 
0.61$, which implies $e^{-\frac{yd}{2\ell}} < 
\frac{2}{3}\frac{yd}{2\ell}$, whence:
\[
	\Prc{R_{v, O_{k-1} - O_{k-2}} = 1 \mid |O_{k-1} - O_{k-2}| = y} > 
	\frac{yd}{3\ell} > \frac{yd}{3.3 n}.
\]
As a consequence, if $|Y_{k-1}|\le m/10$:
\[
	\Expcc{|W_k| \mid |O_{k-1} - O_{k-2}| = y} > |Y - 
	Y_{k-1}|\frac{yd}{3.3 n}\ge \frac{2}{5}\frac{myd}{3.3 n} > 
	\frac{yd}{12},
\]
where in the last inequality we used $m\ge 0.9n$.
On the other hand, the $R_{v, O_{k-1} - O_{k-2}}$'s are independent. 
We can thus apply a standard Chernoff bound to obtain:
\[
	\Prc{|W_k| < \frac{yd}{24} \mid |O_{k-1} - O_{k-2}| = y}\le 
	e^{-\frac{yd}{48}}.
\]
We next consider step 1.b. Assume $|W_k| = x$. If $R$ denotes the 
number of vertices removed from $W_k$ we can argue exactly like for the 
analysis of step 2 of Phase $0$ to obtain:
\begin{equation*}
	\Prc{|Y_k - Y_{k-1}|\ge\frac{yd}{48} \mid |O_{k-1} - O_{k-2}| = 
	y}\ge\left(1 - \frac{2\log 
	n}{n}\right)\left(1 - e^{-\frac{yd}{48}}\right).
\end{equation*}

Conversely, if $y > 1.1n/d$ we have $\frac{yd}{2\ell} > 1/2$, so that
\[
	\Prc{R_{v, O_{k-1} - O_{k-2}} = 1 \mid |O_{k-1} - O_{k-2}| = y} > 
	1 - e^{-1/2}.
\]
In this case:
\[
	\Expcc{|W_k| \mid |O_{k-1} - O_{k-2}| = y}\ge |Y - 
	Y_{k-1}|\left(1 - \frac{1}{\sqrt{e}}\right)\ge\frac{2}{5}m\left(1 - 
	\frac{1}{\sqrt{e}}\right) > \frac{0.78}{5}m,
\]
which is both larger than $(2/15)m$ and $(2/15)n$ (the latter follows 
since $m\ge 0.9n$). We therefore have:
\begin{equation*}
	\Prc{|W_k|\ge\frac{m}{10} \mid |O_{k-1} - O_{k-2}| = 
	y}\ge\Prc{|W_k|\ge\frac{3}{4} \Expcc{|W_k|}\mid |O_{k-1} - O_{k-2}| = 
	y}\le e^{-\frac{n}{240}}.
\end{equation*}
Hence, arguing as we did before, we conclude:
\begin{equation*}
	\Prc{|Y_k - Y_{k-1}|\ge\frac{m}{20} \mid |O_{k-1} - O_{k-2}| = 
	y}\ge\left(1 - \frac{2\log 
	n}{n}\right)\left(1 - e^{-\frac{n}{240}}\right).
\end{equation*}

\end{proof}

\paragraph{Analysis of Phase $k$ - step 2.}
In this case, we are interested in nodes from $Y_k - Y_{k-1}$ that connect to nodes 
in $O - O_{k-1}$ using their first $d/2$ links. 
\begin{claim}\label{cl:phase_k_step_2}
Assume $|Y_k - Y_{k-1}| = x$ and $|O_{k-1}|\le m/10$.
For sufficiently large $n$, the following holds at the end of phase $k$:
\begin{align}\label{eq:phasek_step2}
	&\Prc{|O_k - O_{k-1}|\ge \frac{xd}{48} \mid |Y_k - Y_{k-1}| = x}\ge \left(1 - \frac{2\log 
	n}{n}\right)\left(1 - e^{-\frac{xd}{576}}\right), &
	x\le\frac{1.1n}{d},\\
	&\Prc{|O_k - O_{k-1}|\ge\frac{m}{20} \mid |Y{k} - Y_{k-1}| = 
	x}\ge\left(1 - \frac{2\log n}{n}\right)\left(1 - 
	e^{-\sqrt{n}}\right), & x > \frac{1.1n}{d}.
\end{align}
\end{claim}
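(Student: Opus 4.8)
The plan is to mirror the analysis of step 1 in Claim \ref{cl:phase_k_step_1} and of phase $0$ in Claim \ref{cl:phase_0}, the only essential novelty being that the quantity we must control, $|Z_k|$, counts the number of \emph{distinct} old nodes reached, so the natural per-target indicators are not independent and a bounded-differences argument is required rather than a plain Chernoff bound. I would condition throughout on the good events already assumed (all ages in $[1,4n\log n]$ by Lemma \ref{le:lifetime}, and $N_t\in[0.9n,1.1n]$ for $t\in[t_0-n^2,t_0]$ by Fact \ref{fa:no_nodes}), so that by deferred decisions each of the $d/2$ links reserved for step 2 of a node $v\in Y_k-Y_{k-1}$ independently lands on any fixed node present when $v$ joined with probability $1/\ell_v$, where $\ell_v\in[0.9n,1.1n]$. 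Crucially, these step-2 links are disjoint from the links that determined membership in $Y_k-Y_{k-1}$, hence they have never been inspected; conditioning on $|Y_k-Y_{k-1}|=x$ therefore leaves their destinations uniform and independent. Since every $w\in O$ is older than every $v\in Y$, such a $w$ is always a legal destination, and a single step-2 link hits $w$ with probability $\ge 1/(1.1n)$.

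First I would estimate the mean. Writing $Z_k$ for the set of nodes in $O-O_{k-1}$ receiving at least one of the $xd/2$ fresh step-2 links, for a fixed $w\in O-O_{k-1}$ we have
\[
\Prc{w\in Z_k\mid |Y_k-Y_{k-1}|=x}\ge 1-\left(1-\frac{1}{1.1n}\right)^{xd/2}\ge 1-e^{-\frac{xd}{2.2n}}\,,
\]
and the premise $|O_{k-1}|\le m/10$ with $|O|=m/2$ gives $|O-O_{k-1}|\ge 0.4m\ge 0.36n$. In the regime $x\le 1.1n/d$ the exponent is at most $1/2$, so concavity ($1-e^{-t}\ge 0.78\,t$ on $[0,1/2]$) yields $\Expcc{|Z_k|}\ge 0.36n\cdot 0.78\cdot\frac{xd}{2.2n}\ge \frac{xd}{12}$; in the regime $x>1.1n/d$ the exponent exceeds $1/2$, so $\Expcc{|Z_k|}\ge 0.4m\,(1-e^{-1/2})\ge \frac{m}{7}$.

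Next I would concentrate $|Z_k|$ via Theorem \ref{thm:bounded_differences}, viewing $|Z_k|$ as a function of the $xd/2$ independent destination variables $A_v^{(i)}$; moving one destination changes $|Z_k|$ by at most $1$, so every Lipschitz constant is $1$ and $\sum_i\beta_i^2=xd/2$. In the small regime, taking the deviation $M=xd/24$ below the mean $\Expcc{|Z_k|}\ge xd/12$ yields $\Prc{|Z_k|<xd/24}\le e^{-xd/576}$; in the large regime, $\Expcc{|Z_k|}\ge m/7$ exceeds the target $m/10$ by a margin $\Omega(m)$ while $\sum_i\beta_i^2=xd/2=\bigO(nd)$, so the same inequality gives a failure probability $e^{-\Omega(n/d)}\le e^{-\sqrt n}$ for $n$ large. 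Finally, to pass from $Z_k$ to $O_k-O_{k-1}=Z_k-R$ I would treat the independent $\log n/n$ deletions exactly as in step 2 of phase $0$: conditioned on $|Z_k|=x'$, Markov's inequality gives $\Prc{|R|\ge x'/2}\le 2\log n/n$, so at least half of $Z_k$ survives with probability $\ge 1-2\log n/n$, turning $|Z_k|\ge xd/24$ into $|O_k-O_{k-1}|\ge xd/48$ (resp.\ $|Z_k|\ge m/10$ into $|O_k-O_{k-1}|\ge m/20$). Multiplying the two independent contributions gives the claimed bounds.

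The main obstacle is the one flagged above: because $|Z_k|$ counts distinct targets, the per-target indicators are not independent, so a direct Chernoff bound is unavailable and one must use the weaker bounded-differences inequality — which is precisely why the failure exponent degrades from the $e^{-yd/48}$ of step 1 to $e^{-xd/576}$ here. The only genuinely delicate points are verifying that the step-2 links stay uniform and independent after conditioning on $|Y_k-Y_{k-1}|=x$ (true since they are disjoint from the links that defined $Y_k-Y_{k-1}$), and tuning the numerical constants so that the mean clears the target by a margin large enough to absorb the bounded-differences deviation.
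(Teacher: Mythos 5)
Your proposal is correct and follows essentially the same route as the paper's proof: lower-bound $\Expcc{|Z_k|}$ via the per-target hit probability $1-(1-1/(1.1n))^{xd/2}$ split into the two regimes $x\le 1.1n/d$ and $x>1.1n/d$, concentrate $|Z_k|$ with the bounded-differences inequality over the $xd/2$ fresh destination variables (correctly noting that the distinct-target indicators are dependent, so Chernoff is unavailable), and finally absorb the independent $\log n/n$ deletions via Markov's inequality. The constants and failure exponents you obtain match those in the paper.
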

\begin{proof}
Assume $v\in O - O_{k-1}$. We have:
\[
	\Prc{\exists u\in Y_k - Y_{k-1}, \exists i\in [d/2]: A_u^{(i)} = v \mid |Y_k - Y_{k-1}| = x}\ge 
	1 - \left(1 - \frac{1}{1.1n}\right)^{\frac{xd}{2}} > 1 - 
	e^{-\frac{xd}{2.2n}}.
\]
As a consequence:
\[
	\Expcc{|Z_k| \mid |Y_k - Y_{k-1}| = x} > |O - O_{k-1}|(1 - e^{-\frac{xd}{2.2n}}).
\]
We consider two cases, as we did in the proof of Claim 
\ref{cl:phase_k_step_1}. If $x\le\frac{1.1n}{d}$ we have:
\[
	\Expcc{|Z_k| \mid |Y_k - Y_{k-1}| = x}\ge |O - 
	O_{k-1}|\frac{xd}{3.3n}\ge\frac{2}{5}\frac{mxd}{3.3n} > 
	\frac{xd}{12},
\]
where similarly to Claim \ref{cl:phase_k_step_1}, we used $|O_{k-1}|\le 
m/10$ and $m\ge 0.9n$.
where the second inequality follows since i) $|O| = m/2$ and ii) we are 
assuming $|Y_{k-1}|\le m/10$. Again and differently from Claim 
\ref{cl:phase_k_step_1}, we cannot simply concentrate, 
since the events $\{A_u^{(i)} = v\}$ are negatively 
correlated as $v$ varies over $O - O_{k-1}$. We again resort to Theorem 
\ref{thm:bounded_differences}. In this case, we have $\frac{xd}{2}$ 
links that are established (independently of each other) from vertices 
in $Y_k - Y_{k-1}$. Consider the $xd/2$ variables $\{A_u^{(i)}\}_{u\in 
Y_k - Y_{k-1}, i\in [d/2]}$. The domain of $A_u^{(i)}$ is the set $N_t$ 
if $u$ joined the system at time $t$, where $0.9n\le |N_t|\le 1.1n$ 
with high probability, from  Lemma \ref{le:lifetime} and Fact 
\ref{fa:no_nodes}. We then define the function $f(\{A_u^{(i)}\}_{u\in 
Y_k - Y_{k-1}, i\in [d/2]}) = |Z_k|$. Like in 
the analysis of step 1 of Phase 0, we note that $f$ satisfies the 
Lipschitz condition with constants $\beta_1 =\cdots =\beta_{\frac{xd}{2}} = 1$. We can thus 
apply Theorem \ref{thm:bounded_differences} to obtain:
\begin{align*}
	&\Prc{|Z_k| < \frac{xd}{24} \mid |Y_k - Y_{k-1}| = 
	x}\le\Prc{|Z_k| < \Expcc{|Z_k| \mid |Y_k - Y_{k-1}| = x} - 
	\frac{xd}{24} \mid |Y_k - Y_{k-1}| = x}\\
	&\le e^{-\frac{xd}{576}}.
\end{align*}
Finally, we remove nodes from $Z_k$ exactly as we did in Phase 0 and in 
step 1.b of Phase $k$. The analysis proceeds exactly the same, so that 
we can conclude:
\begin{equation*}
	\Prc{|O_k - O_{k-1}|\ge \frac{xd}{48} \mid |Y_k - Y_{k-1}| = x}\ge \left(1 - \frac{2\log 
	n}{n}\right)\left(1 - e^{-\frac{xd}{576}}\right).
\end{equation*}

Conversely, if $x > 1.1n/d$ we have $\frac{xd}{d} > \frac{1}{2}$, 
so that:
\[
	\Prc{\exists u\in Y_k - Y_{k-1}, \exists i\in [d/2]: A_u^{(i)} = v 
	\mid |Y_k - Y_{k-1}| = x} > 1 -	e^{-1/2},
\]
whence:
\[
	\Expcc{|Z_k| \mid |Y_{k} - Y_{k-1}| = x}\ge |O - 
	O_{k-1}|\left(1 - \frac{1}{\sqrt{e}}\right)\ge\frac{2}{5}m\left(1 - 
	\frac{1}{\sqrt{e}}\right) > \frac{2}{15}m.
\]
Next, application of Theorem Theorem \ref{thm:bounded_differences} 
yields in this case:
\begin{align*}
	&\Prc{|Z_k| < \frac{m}{10} \mid |Y_{k} - Y_{k-1}| = 
	x}\le \Prc{|Z_k| < \Expcc{|Z_k|} - \frac{m}{30} \mid |Y_{k} - Y_{k-1}| = 
	x}\\
	&\le e^{-\frac{2m^2}{900xd}}\le e^{-\frac{m}{225d}} < 
	e^{-\sqrt{n}},
\end{align*}
where the third inequality follows from $x\le |Y|\le m/2$, while the 
fourth follows since $d$ is a constant and $m\ge 0.9$, so it holds for 
sufficiently large $n$. Finally, proceeding like in Claim 
\ref{cl:phase_k_step_1} (analysis of step 1.b) we obtain:

\begin{equation*}
	\Prc{|O_k - O_{k-1}|\ge\frac{m}{20} \mid |Y_{k} - Y_{k-1}| = 
	x}\ge\left(1 - \frac{2\log 
	n}{n}\right)\left(1 - e^{-\sqrt{n}}\right).
\end{equation*}
\end{proof}

Finally, Claims \ref{cl:phase_0}, \ref{cl:phase_k_step_1} and 
\ref{cl:phase_k_step_2} imply the following result.
\begin{lemma}[Flooding completes, part 1]\label{le:flood_without_part_1}
For sufficiently large and constant $d\ge 1152$, there exist a constant $c > 0$ 
and $k = \bigO(\log n/\log d)$, such that:
\begin{equation}\label{eq:flood_without_part_1}
	\Prc{|Y_k|\ge \left(\frac{m}{20}\right)\cap 
|O_k|\ge \left(\frac{m}{20}\right)}\ge 1 - 2e^{-\frac{d}{576}} - o(1).
\end{equation}
\end{lemma}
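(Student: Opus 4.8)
The plan is to chain together the three claims via the chain rule for conditional probabilities, exactly as we did for the streaming model in the proof of Lemma~\ref{lem:flooding_terminates_part_1}, tracking separately the geometric growth of the informed frontier and the accumulated failure probability. I would define the frontier sizes $a_0 = |O_0|$ and, for $j\ge 1$, let $a_j$ denote the guaranteed size of the frontier produced by the $j$-th flooding step of the onion-skin process, so that $a_{2k-1}=|Y_k-Y_{k-1}|$ and $a_{2k}=|O_k-O_{k-1}|$. While every frontier remains below the threshold $1.1n/d$, Claims~\ref{cl:phase_0}, \ref{cl:phase_k_step_1} and~\ref{cl:phase_k_step_2} guarantee, conditioned on the previous frontier size, that each step multiplies the frontier by a factor $d/48$. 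Starting from $a_0\ge d/16$ this yields $a_j\ge (d/16)(d/48)^{j}$, which grows geometrically in $j$. Since these guarantees require the inductive hypotheses $|Y_{k-1}|\le m/10$ and $|O_{k-1}|\le m/10$, I would maintain them by observing that the total number of informed nodes on each side is at most a constant times the largest (latest) frontier, so the hypotheses persist until the frontier first exceeds $1.1n/d$.

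Next I would bound the number of phases and handle the transition to the large regime. From $a_j\ge (d/16)(d/48)^j$ it follows that after $j=\bigO(\log n/\log d)$ steps some frontier first exceeds $1.1n/d$, so $k=\bigO(\log n/\log d)$. At that point the ``large'' cases of Claims~\ref{cl:phase_k_step_1} and~\ref{cl:phase_k_step_2} take over and deliver a frontier of size at least $m/20$ on the corresponding side within one further phase, with failure probability $e^{-n/240}$ or $e^{-\sqrt n}$, both $o(1)$. Since the process alternates between producing young and old frontiers, one additional step then guarantees that \emph{both} $|Y_k|\ge m/20$ and $|O_k|\ge m/20$, which is precisely the event in~\eqref{eq:flood_without_part_1}.

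It then remains to collect the failure probabilities. By the chain rule, the probability that every step succeeds is at least the product over all steps of the per-step success probabilities given by the three claims, each of the form $\left(1-\frac{2\log n}{n}\right)\left(1-e^{-a_j d/c}\right)$ with $c\le 576$. The product splits into two parts. The death-removal factors contribute $\prod_j\left(1-\frac{2\log n}{n}\right)\ge 1 - \bigO(\log^2 n/(n\log d)) = 1-o(1)$, since there are only $\bigO(\log n/\log d)$ of them. For the geometric factors, I would bound every denominator by its largest value $576$, so that each failure is at most $e^{-a_j d/576}$, except the Phase~$0$ factor $e^{-d/128}$; using $a_j\ge (d/16)(d/48)^j$, the sum $\sum_{j\ge 1} e^{-a_j d/576}$ is dominated by its first term in exactly the way quantified by Claim~\ref{claim:product}. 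Since $d/128\ge d/576$ gives $e^{-d/128}\le e^{-d/576}$ and the remaining terms decay doubly-exponentially, the whole sum is at most $2e^{-d/576}$ for $d\ge 1152$. Combining the two parts yields the success probability $1-2e^{-d/576}-o(1)$ claimed in~\eqref{eq:flood_without_part_1}, where the $o(1)$ also absorbs the $1/n^2$ conditioning events of Fact~\ref{fa:no_nodes} and Lemmas~\ref{le:lifetime}, \ref{le:no_deaths} and~\ref{le:no_births}, as well as the $o(1)$ large-regime failures above.

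The main obstacle I anticipate is the bookkeeping that keeps the inductive hypotheses $|Y_{k-1}|,|O_{k-1}|\le m/10$ valid throughout the small-frontier regime while arranging a clean hand-off to the large-frontier regime: one must verify that the total informed population on each side does not cross $m/10$ before some frontier crosses $1.1n/d$, and that the single final phase indeed lifts both sides above $m/20$. The remainder is the now-familiar geometric-product estimate, mirroring the streaming argument and Claim~\ref{claim:product}.
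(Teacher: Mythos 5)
Your proposal is correct and follows essentially the same route as the paper: chain the three claims via the chain rule, track the geometric growth of the frontier with per-step factor $d/48$ starting from $|O_0|\ge d/16$, hand off to the large-frontier cases of Claims~\ref{cl:phase_k_step_1} and~\ref{cl:phase_k_step_2} once a frontier exceeds $1.1n/d$, and bound the product of success probabilities by splitting off the $\left(1-\frac{2\log n}{n}\right)$ factors (which are $1-o(1)$ over $\bigO(\log n/\log d)$ steps) from the geometric factors, whose logarithms sum to at most $2e^{-d/576}$ exactly as in Claim~\ref{claim:product}. The bookkeeping issue you flag (keeping $|Y_{k-1}|,|O_{k-1}|$ below the threshold until the frontier crosses $1.1n/d$) is handled no more explicitly in the paper than in your sketch, so this is not a gap relative to the paper's own argument.
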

\begin{proof}
Consider the generic $i$-th phase and assume i) $|Y_{i-1}|, |O_{i-1}| < 
1.1n/d\le n/10$ (i.e., $d\ge 11$) and ii) $|O_{i-1} - O_{i-1}|\ge 
\left(\frac{d}{48}\right)^{2i-1}$. Then, Claims \ref{cl:phase_0}, 
\ref{cl:phase_k_step_1} and \ref{cl:phase_k_step_2} imply that, conditioned 
to i) and ii) we have:

\begin{align}\label{eq:joint_prob}
	&\Prc{|Y_i - Y_{i-1}|\ge \left(\frac{d}{48}\right)^{2i}\cap 
	|O_i - O_{i-1}|\ge \left(\frac{d}{48}\right)^{2i+1}}\\
	&\ge\left(1 - \frac{2\log n}{n}\right)^2\left(1 - 
	e^{-\frac{1}{12}\left(\frac{d}{48}\right)^{2i}}\right)\left(1 - 
	e^{-\frac{1}{12}\left(\frac{d}{48}\right)^{2i+1}}\right).
\end{align}
Since $|Y_i| \ge |Y_{i} - Y_{i-1}|$ and $|O_i|\ge |O_i - O_{i-1}|$, 
using the chain rule and \eqref{eq:joint_prob}, for
\[
	k\ge\frac{\log\left(\frac{n}{20}\right)}{2\log\left(\frac{d}{48}\right)},
\]
at the end of phase $k$ we have:
\begin{equation}\label{eq:chain_rule}
	\Prc{|Y_k|\ge\frac{m}{20}\cap|O_k|\ge \frac{m}{20}}\ge\left(1 - e^{-\sqrt{n}}\right)^2\left(1 - \frac{2\log n}{n}\right)^{2k+1}\left(1 - 
	e^{-\frac{d}{128}}\right)\prod_{i=1}^{2k+1}\left(1 - 
	e^{-\frac{1}{12}\left(\frac{d}{48}\right)^{i}}\right).
\end{equation}

Let $P = \prod_{i=1}^{2k+1}\left(1 - 
e^{-\frac{1}{12}\left(\frac{d}{48}\right)^{i}}\right)$. We next show 
that $P\ge c$ for a suitable constant $c$. This is equivalent to 
showing that $-\log P\le c'$, where $c' = \log\frac{1}{c}$. We have:
\begin{align*}
	&-\log P = -\sum_{i=1}^{2k+1}\log\left(1 - e^{-\frac{1}{12}\left(\frac{d}{48}\right)^{i}}\right) = 
	\sum_{i=1}^{2k+1}\log\frac{1}{1 - 
	e^{-\frac{1}{12}\left(\frac{d}{48}\right)^{i}}}\le 2\sum_{i=1}^{2k+1}
	e^{-\frac{1}{12}\left(\frac{d}{48}\right)^{i}}\\
	& = 2\sum_{i=0}^{2k}e^{-\frac{d}{576}\left(\frac{d}{48}\right)^{i}} 
	< 2e^{-\frac{d}{576}}\sum_{i=0}^{\infty}e^{-\left(\frac{d}{48}\right)^{i}} < 2e^{-\frac{d}{576}},
\end{align*}
where the third inequality follows from $x\ge\log\frac{2}{2-x}$ for 
$0\le x\le 1$ which, considered that $i\ge 1$, in our case is certainly 
satisfied if $\left(\frac{d}{48}\right)^2\ge \frac{1}{12}$, i.e., for 
$d$ a sufficiently large, absolute constant. The fifth inequality 
follows since $\frac{d}{576} + \left(\frac{d}{48}\right)^{i}\le 
\frac{d}{576}\left(\frac{d}{48}\right)^{i}$, whenever $d/576\ge 2$, 
while the last inequality follows since the double exponential is dominated by a simple one, 
summing to a constant not exceeding $1$. Proceeding like in the final 
steps of Claim \ref{claim:product} we obtain $P > 1 - 
2e^{-\frac{d}{576}}$. Since
\[
	\Prc{|Y_k|\ge \left(\frac{m}{20}\right)\cap |O_k|\ge 
	\left(\frac{m}{20}\right)}\ge \left(1 - e^{-\sqrt{n}}\right)^2\left(1 - \frac{2\log n}{n}\right)^{2k+1}\left(1 - 
	e^{-\frac{d}{128}}\right)P,
\]
the proof follows for sufficiently large $n$.
\end{proof}

\paragraph{Information spreading via the expansion of large subsets.}
We finally show that, if at least $m/10$ nodes become informed (which 
occurs with probability at least $1 - 2e^{-\frac{d}{576}} - o(1)$ within $\bigO(\log n/\log d)$ flooding 
steps from Lemma \ref{le:flood_without_part_1}), then at least $(1 - 
e^{-\frac{d}{20}})m$ nodes become informed within a further, constant number of flooding 
steps, w.h.p. 

To prove this, we leverage Lemma \ref{lem:exp_large_subset_PDG} and we 
proceed along the same lines as Theorem \ref{apx:thm:aeflooding} and in 
particular Lemma \ref{le:big_set_exp}, albeit with the following 
difference: in each flooding step, we need to account for the fact that 
a node that was present at time $t_0$ might die before being informed 
(or right upon being informed) and thus be unable to contribute to the 
flooding process. We have the following
\begin{lemma}[Flooding completes, part 
2]\label{le:big_set_flood_without_poiss}
Under the hypotheses of Theorem \ref{thm:flood_poiss_noreg}, 
for some $\tau_2=\bigO(d)$ and for $\tau_1=\bigO(\log n/\log d)$ as in 
Lemma \ref{le:flood_without_part_1}, we have:
\begin{equation}
    \Prc{|I_{t_0+\tau_1+\tau_2}| \geq (1 - 
e^{-\frac{d}{20}})m} \geq 1 - 2e^{-\frac{d}{576}} - o(1)\,.
\end{equation}
\end{lemma}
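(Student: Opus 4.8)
The plan is to mirror the two-phase argument of Lemma \ref{le:big_set_exp} for the streaming model, upgrading it so that it tolerates the random churn of the Poisson process. First I would condition on the event of Lemma \ref{le:flood_without_part_1}, which—since $Y_k\subseteq Y$ and $O_k\subseteq O$ are disjoint—guarantees $|I_{t_0+\tau_1}|\ge |Y_k|+|O_k|\ge m/10$ with probability at least $1-2e^{-d/576}-o(1)$. On top of this I would condition on three auxiliary events, each holding with probability $1-o(1)$: (i) the large-subset expansion of Lemma \ref{lem:exp_large_subset_PDG} holds simultaneously at every one of the $\tau_2=\bigO(d)$ snapshots $G_{t_0+\tau_1+j}$ used in phase~2 (a union bound over $\bigO(d)$ indices costs only $\bigO(d)/n^2$); (ii) the node count stays in $[0.9n,1.1n]$ throughout (Lemma \ref{thm:concentration_nodes} and Fact \ref{fa:no_nodes}); and (iii) the total number of births and the total number of deaths in the interval $[t_0+\tau_1,t_0+\tau_1+\tau_2]$ are both $\bigO(\log n)$ (Lemmas \ref{le:no_births} and \ref{le:no_deaths}, using $\tau_2=\bigO(d)=\bigO(\log n)$).

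Next comes the growth phase. Working with the discretized flooding of Definition \ref{def:flooding_poisson}, I would show $|I_t|$ grows by a constant factor per step as long as $|I_t|\le m/2$. The relevant recursion is $|I_{t+1}|\ge |I_t\cap N_{t+1}|+|\partial_{out}^{t}(I_t\cap N_{t+1})\cap N_{t+1}|$. Since $|I_t|\ge m/10\ge n e^{-d/20}$ and at most $\bigO(\log n)$ informed nodes die per step, the surviving subset $I_t\cap N_{t+1}$ still exceeds the threshold $n e^{-d/20}$, so Lemma \ref{lem:exp_large_subset_PDG} applies and gives $|\partial_{out}^{t}(I_t\cap N_{t+1})|\ge 0.1\,|I_t\cap N_{t+1}|$. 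Subtracting the $\bigO(\log n)=o(|I_t|)$ deaths of boundary nodes yields $|I_{t+1}|\ge 1.05\,|I_t|$, so $m/2$ is reached in $\bigO(1)$ steps.

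Then comes the shrinking phase. I would switch to tracking the non-informed set $S_t=N_t-I_t$ and show it shrinks by a constant factor while it exceeds $n e^{-d/20}$. Exactly as in the streaming proof, every node of $\partial_{out}^{t}(S_{t+1})$ lies in $(S_t-S_{t+1})\cap N_t$, whence $|S_t|-|S_{t+1}|+B_t\ge |\partial_{out}^{t}(S_{t+1})|\ge 0.1\,|S_{t+1}|$ by expansion (valid while $|S_{t+1}|\ge n e^{-d/20}$), where $B_t=\bigO(\log n)$ bounds the newly born nodes; hence $|S_{t+1}|\le \tfrac{1}{1.1}\bigl(|S_t|+\bigO(\log n)\bigr)$. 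Starting from $|S_t|\le m/2$, this drives $|S_t|$ below $n e^{-d/20}$ in $\tau_2=\Theta(d)$ steps, at which point $|I_{t_0+\tau_1+\tau_2}|\ge (1-e^{-d/20})m$. Collecting all the conditioning events gives the claimed probability $1-2e^{-d/576}-o(1)$.

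The main obstacle is the bookkeeping of the churn: unlike the streaming model, where exactly one node enters and one leaves per step, here the numbers of births and deaths in a flooding step are random, and a node may die after being counted as informed or as a boundary node, or just before it could propagate. The quantitative insight that rescues the argument is that the informed and non-informed sets under consideration always have size $\Theta(n)$, whereas the churn per step is only $\bigO(\log n)=o(n)$ with high probability, so the additive churn corrections never overturn the multiplicative factor $1.1$ coming from expansion. I would also take care to apply Lemma \ref{lem:exp_large_subset_PDG} to the \emph{surviving} subsets $I_t\cap N_{t+1}$ and $S_{t+1}$ rather than to $I_t$ and $S_t$ themselves, which is precisely why the expansion threshold $n e^{-d/20}$—comfortably below $m/10$—is the right stopping point.
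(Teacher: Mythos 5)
Your proposal is correct and follows essentially the same route as the paper: the paper's proof of this lemma explicitly says it "proceeds along the very same lines as Lemma~\ref{le:big_set_exp}" (growth of the informed set to half the nodes via Lemma~\ref{lem:exp_large_subset_PDG}, then shrinking of the non-informed set down to $ne^{-d/20}$ over $\tau_2=\bigO(d)$ steps), with the only new ingredient being that the $\bigO(\log n)$ per-step deaths are absorbed as a worst-case removal that cannot overturn the multiplicative expansion gain on $\Omega(n)$-sized sets. Your write-up is in fact more explicit than the paper's about the conditioning events and the recursions, but the decomposition and the key quantitative point (churn per step is $o(n)$ while the tracked sets are $\Theta(n)$) are identical.
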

\begin{proof}
The proof proceeds along the very same lines as Lemma 
\ref{le:big_set_exp}, hence we only discuss the differences. 

If we start with $n/10$ informed nodes, in each flooding step the set of 
informed nodes increases by a constant factor, exactly like in the 
proof of Lemma \ref{le:big_set_exp}, but this time we use Lemma \ref{lem:exp_large_subset_PDG}
for the expansion. The main difference is that, this 
time, each newly informed node has a chance to die. Assume $S$ is the 
set of the newly informed nodes at the end of the generic 
expansion/flooding step. Since we are interested in a constant number 
of rounds, we can handle nodes' deaths in a simplified way with respect 
to what we did in the proof of Lemma \ref{le:flood_without_part_1}. In 
particular, with high probability, at most $4\log n$ nodes are removed 
from $S$ in worst-case fashion, but this still implies that, with high 
probability, $|S| - 4\log n$ new nodes have been informed, 
thus the set of informed nodes has increased by a constant factor, 
since $|S| = \Omega(n)$. If we iterate over a sufficiently large, 
\emph{constant} number $\tau_2 = \bigO(d)$ of flooding steps, we can conclude, 
like in the proof of Lemma \ref{le:big_set_exp}, that $(1 - 
e^{-\frac{d}{20}})m$ nodes have been informed within time $t_0 + \tau_1 
+ \tau_2$.
\end{proof}

This last step concludes the proof of Theorem 
\ref{thm:flood_poiss_noreg}.

\subsection{Proofs for the Poisson model with edge regeneration}

\subsubsection{Proof of Lemma \ref{lem:existence_edge_hyp}}
\label{sssec:lem:existence_edge_hyp}

We define the following event avoiding the use of 
\begin{equation}
    A_{u,v}=\{\text{a fixed request   of $u$ has  destination $v$ at time $T_r$}\}\, .
\end{equation}
Notice that we avoid to index the specific  request since the considered graph process is perfectly symmetric w.r.t. the $d$ random requests of every node. 
We first bound the probability that a fixed request of $u$ has destination $v$ when $v$ is younger than $u$.
Calling    $L_{r}$    the event 
\begin{equation}
    L_r=\{\hbox{each node in $N_{T_r}$ is born after time $T_{r-7n \log n}$}\}\cap \{|N_{T_i} \in [0.9n,1.1n] \hbox{ with } i=r-7n \log n,\dots,r\}
\end{equation}
from Lemma \ref{thm:concentration_nodes} and Lemma \ref{lem:life_of_nodes} we get that $\Prc{L_r} \geq 1-1/n^2$. We notice that the event $L_r$ means that, when each node in $N_{T_r}$ joined the network, the network was composed by at least $0.9n$ nodes and at most $1.1n$ nodes. From the law of total probability, we have
\begin{align}
    \Prc{A_{u,v}} \leq \Prc{A_{u,v} \mid L_r}+\frac{1}{n^2} \leq \frac{1}{0.9n}+\frac{1}{n^2} \leq \frac{1}{0.8n}\,,
\end{align}
where $\Prc{A_{u,v} \mid L_r} \leq 1/0.9n$ since $u$ can choose $v$ only after a death of one of its neighbours, being $v$   younger than $u$. 

We   now analyze the case in which $v$ is older than $u$, where $u$ is born at step $T_{r-i}$.
For the law of total probability,    
\begin{align}
\label{eq:A_u,v_cond}
    \Prc{A_{u,v}} \leq \Prc{A_{u,v} \mid L_r}+\frac{1}{n^2}\,.
\end{align}
So, the next step is to evaluate    $\Prc{A_{u,v} \mid L_r}$. For each $k \geq 1$ and $w \in N_{T_k}$,   define the following event:
\begin{align}
    D_{w,k}=\{w \text{ dies at time } T_k\} \, .
\end{align}
To bound   $\Prc{D_{w,k} \mid L_r}$, for each $k=r-i,\dots,r$ and $w \in N_{T_k}$, we use   Lemma \ref{lem:N_m+1_and_death} to get  $\Prc{D_{w,k}} \leq 1/(1.8n)$, and, hence,   for the Bayes' rule,
\begin{align}
    \Prc{D_{w,k} \mid L_r}= \frac{\Prc{D_{w,k} \cap L_r}}{\Prc{L_r}}\leq \frac{\Prc{D_{w,k}}}{1-1/n^2}=\frac{1/1.8n}{1-1/n^2} \leq \frac{1}{1.7n}\, .
    \label{eq:prob_v_dies_cond}
\end{align}
Now, for each $j=r-i, \dots, r$,   define the following events
\begin{equation}
    A_{u,v}^{j}=\{\text{a fixed request of $u$ connects   to $v$ at time $T_j$}\}\, ,
\end{equation}
and     write  $A_{u,v}=\cup_{j=r-i}^r A_{u,v}^j$. Notice that there is some difference between the probability distribution of $A_{u,v}^{r-i}$ and that of $A_{u,v}^j$ for each  $j >r-i$. Indeed, it holds that 
\begin{equation}
    \Prc{A_{u,v}^{r-i} \mid L_r} \leq \frac{1}{0.9n}\,,
    \label{eq:bound_A_u,v^1}
\end{equation}
since this is the probability that the request of $u$ has   destination $v$ at the time of $u$'s arrival (since $v$ is older than $u$). On the other hand, for each $j=r-i+1,\dots,r$,  thanks to  the memoryless property 
of the exponential distribution, 
\begin{equation}
\label{eq:bound_A_u,v^j}
    \Prc{A_{u,v}^j \mid L_r} \leq 1 \cdot \frac{1}{1.7n}\cdot \frac{1}{0.9n} \,.
\end{equation}
The above bound holds since   any fixed request of $u$ can choose $v$ as destination at round $T_j$ only if, at round $T_{j-1}$, $u$ is not connected to $v$. So, the first factor $1$ in the r.h.s. of \eqref{eq:bound_A_u,v^j} is an  upper bound on the probability that, at time $T_{j-1}$, $u$ is not connected to $v$. The second factor, $1/(1.7n)$, is the upper bound on the probability (conditional to $L_r$ from \eqref{eq:prob_v_dies_cond}) that the node to which $u$ is connected dies at time $T_j$. Moreover,  $1/(0.9n)$ is the probability, conditional to $L_r$, that the request of $u$ connects  to $v$ at time $T_j$, if its neighbour is died at time $T_j$.

So, recalling that $A_{u,v}=\cup_{j=r-i}^{r}A_{u,v}^j$, from  \eqref{eq:bound_A_u,v^1} and \eqref{eq:bound_A_u,v^j},  
\begin{equation}
\label{eq:A_u,v_cond_2}
    \Prc{A_{u,v} \mid L_r} \leq  \sum_{j=r-i}^{r}\Prc{A_{u,v}^j \mid L_r} \leq  \frac{1}{0.9n}\left(1+\frac{i}{1.7n}\right)\,.
\end{equation}
Finally, since conditional to $L_r$ we have that $i \leq 7n \log n$, using  \eqref{eq:A_u,v_cond_2} into  \eqref{eq:A_u,v_cond},   the proof is completed.

\subsubsection{Proof of Lemma \ref{lem:poi:exp:small1}}
\label{ssec:lem:poi:exp:small1}

We proceed as in the proof of the analogous lemma in the \SDGE \ model (Lemma \ref{lem:expansion_small_streaming}): we want to show that   two disjoint sets $S,T \subseteq N_t$, with $|S| \leq n/\log^2n$ and $|T|=0.1|S|$, such that $\partial_{out}(S) \subseteq T$, exist with negligible probability.

We recall the definition
\begin{equation}
    A_{S,T}=\{\partial_{out}(S) \subseteq T\}\,.
\end{equation} 
 Then,  as for  the event
\begin{equation}
L_r=\{\text{each node in $N_{T_r}$ is born after time $T_{r-7n \log n}$}\}\cap\{|N_{T_r}|\in [0.9n,1.1n]\}\, ,
\end{equation}
 from  Lemma \ref{thm:concentration_nodes} and Lemma \ref{lem:life_of_nodes}, we obtain   $\Prc{L_r}\geq 1-1/n^2$.
So, for the law of total probability,  
\begin{align}
    \Prc{\min_{0 \leq |S| \leq n/\log^2n}\frac{|\partial_{out}(S)|}{|S|}\leq 0.1} \leq \sum_{\substack{|S|\leq n/\log^2n \\ |T|=0.1|S|}}\Prc{A_{S,T} \mid L_r} +\frac{1}{n^2} \, .
    \label{eq:main_prob_exp1}
\end{align}

The next step of the proof is to upper bound $\Prc{A_{S,T} \mid L_r}$.
From Lemma \ref{lem:existence_edge_hyp}, since $L_r$ implies that all the active nodes  were born after time $T_{r-7n \log n}$,
\begin{equation}
\label{eq:A_s,t_mid_Lr}
    \Prc{A_{S,T} \mid L_r} \leq \left( \frac{|S\cup T|}{0.8n}\left(1+\frac{7n\log n}{1.7n}\right)\right)^{d|S|} \leq \left( \frac{|S\cup T|}{0.8n}\left(1+5n\log n\right)\right)^{d|S|} \,.
\end{equation}
Notice that, since $|S| \leq n/\log^2n$, the above equation offers a   sufficiently small bound. 
So, combining  \eqref{eq:A_s,t_mid_Lr} with  \eqref{eq:main_prob_exp1}, we obtain
\begin{align}
 &\Prc{\min_{0 \leq |S| \leq n/\log^2n}\frac{|\partial_{out}(S)|}{|S|} \leq 0.1}\leq \sum_{s=1}^{n/\log^2n}\binom{1.1n}{s}\binom{1.1n-s}{0.1s}\left(\frac{1.1s}{0.8n}(1+5n \log n)\right)^{ds}+\frac{1}{n^2} \, .
 \label{eq:final_bound_exp1}
\end{align}
In the equation above, we bounded each binomial coefficient with the inequality $\binom{n}{k} \leq \left(\frac{n\cdot e}{k}\right)^k$ for each $k \leq n$ and $n \geq 2$. Then,   by calculating the derivative of the function $f(s)$ that represents each term of the sum, we derive that       each of such  terms reaches its maximum at the extremes, i.e.   in $s=1$ or in $s=n/\log^2n$. So, we  get that the sum in \eqref{eq:final_bound_exp1}, if $d \geq 35$, is bounded by $2/n^2$.

\subsubsection{Proof of Theorem \ref{thm:flooding_terminates_poisson}} \label{sssec:thm:flooding_terminates_poisson}
As remarked in Subsection \ref{sec:Poisson}, the discretized version of the flooding process (Definition \ref{def:flooding_poisson})   is always slower than the original one (Definition \ref{def.async.flood}), so  we can  analyze the former version   along three consecutive phases.

\smallskip
\noindent \emph{Phase 1: The Boostrap.}
  The first phase lasts until   the source information 
  reaches  a subset of size $n^{\epsilon}$, for some 
  constant $\epsilon < 1$ (in our analysis we fix $\epsilon  = 1/10$).

\begin{lemma}[Phase 1: The Bootstrap]
\label{lem:flooding_poisson_first}
Under the hypotheses of Theorem \ref{thm:flooding_terminates_poisson} there is a  $\tau_1 = \bigO(\log n)$ such that, w.h.p.
\[|I_{t_0 + \tau_1}|\geq n^{1/10}\,.\]
\end{lemma}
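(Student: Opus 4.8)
The plan is to read off the geometric growth of the informed set directly from the vertex expansion guaranteed by Theorem~\ref{thm:exp:pdge}, while carefully controlling the node churn that takes place during each unit-time flooding round. The clean idea is that, as long as $|I_t|\le n^{1/10}$, the informed set is so small that essentially none of \emph{its} nodes die during an $\bigO(\log n)$-length time window; so the only effect of churn we must fight against is the loss of boundary nodes before they get informed, and this can be controlled by a Markov bound in which the boundary size conveniently cancels.

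First I would fix the following ``good'' events, each holding with probability $1-o(1)$, and condition on their intersection. (E1) The concentration $|N_{t}|\in[0.9n,1.1n]$ holds throughout, and every snapshot $G_{t-1}$ for the $\tau_1=\bigO(\log n)$ integer rounds $t$ of Phase~1 is a $0.1$-expander: this follows from Lemma~\ref{thm:concentration_nodes} and Theorem~\ref{thm:exp:pdge} together with a union bound over the $\bigO(\log n)$ events occurring in $[t_0,t_0+\tau_1]$. (E2) No informed node dies during Phase~1: conditioning on the history up to round $t-1$, each $v\in I_{t-1}$ dies in the next unit interval with probability $1-e^{-1/n}\le 1/n$ by memorylessness of the exponential lifetimes, and since $|I_{t-1}|\le n^{1/10}$ throughout the phase, the expected total number of informed deaths is at most $\tau_1\cdot n^{1/10}\cdot(1/n)=\bigO(n^{-9/10}\log n)=o(1)$, so Markov's inequality gives the claim. (E3) At every round at most half of the boundary $\partial_{out}^{t-1}(I_{t-1})$ dies before round $t$: the key observation is that $\Expcc{\#\{\text{boundary deaths}\}}\le |\partial_{out}^{t-1}(I_{t-1})|/n$, so Markov's inequality yields a per-round failure probability of at most $2/n$ \emph{independently} of the (possibly large, unknown) boundary size, since the factor $|\partial_{out}^{t-1}(I_{t-1})|$ cancels; a union bound over the $\tau_1$ rounds gives $o(1)$.

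On the intersection of these events I would run the induction. The only subtlety in the \PDGE\ topology dynamics is that an edge $(u,v)$ present in $G_{t-1}$ survives the whole interval $(t-1,t)$ if and only if neither endpoint dies, because an existing edge is destroyed only by the death of one of its endpoints (regeneration only \emph{adds} edges). Hence, using (E2) so that the informed endpoint survives, a boundary node $v\in\partial_{out}^{t-1}(I_{t-1})$ becomes informed at round $t$ exactly when $v$ survives. Combining (E2) (so $I_{t-1}\subseteq N_t$ and expansion may be applied to $I_{t-1}$ itself), (E1) (so $|\partial_{out}^{t-1}(I_{t-1})|\ge 0.1\,|I_{t-1}|$, valid since $|I_{t-1}|\le n^{1/10}\le|N_{t-1}|/2$), and (E3) (so at least half the boundary survives), I obtain
\[
|I_t|\ \ge\ |I_{t-1}|+\tfrac12\cdot 0.1\,|I_{t-1}|\ =\ 1.05\,|I_{t-1}|
\]
for every round with $20\le|I_{t-1}|\le n^{1/10}$. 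The bootstrap from the single source $v_0$ is immediate: $v_0$ creates $d\ge 35$ links to distinct nodes, all of which survive the first round with probability $\ge 1-d/n$, so $|I_{t_0+1}|\ge d\ge 20$ and the multiplicative recursion takes over. Iterating gives $|I_{t_0+m}|\ge 1.05^{\,m-1}d$, so after $\tau_1=\bigO(\log n)$ rounds the informed set exceeds $n^{1/10}$, proving the lemma.

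I expect the main obstacle to be precisely the churn bookkeeping underlying (E2) and (E3): making rigorous that, during a unit-time flooding round, deaths can harm flooding only through the boundary (and not through edge regeneration among surviving nodes), and that this harm is negligible. The cancellation of the boundary size in the Markov bound of (E3) is what makes the step go through even while $|I_t|$ is small, and the complementary fact that the informed set is too small to lose any of its own nodes (E2) is what prevents the geometric growth from being eroded. A secondary technical point is reconciling the event-indexed snapshots $G_{T_r}$ of Theorem~\ref{thm:exp:pdge} with the integer-time rounds of the discretized flooding of Definition~\ref{def:flooding_poisson}, which is handled by the union bound over the $\bigO(\log n)$ events of the interval.
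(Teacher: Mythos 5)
Your proof is correct and follows essentially the same route as the paper's: per-round vertex expansion of the snapshots (Theorem~\ref{thm:exp:pdge}) combined with the observation that a set of size at most $n^{1/10}$ loses none of its members over a unit time interval w.h.p., yielding geometric growth of $|I_t|$ over $\bigO(\log n)$ rounds. The only cosmetic difference is in how boundary survival is handled --- you use a Markov bound in which the boundary size cancels (losing a factor $1/2$ and getting growth rate $1.05$), while the paper requires that a witness subset of $0.1|I_t|$ boundary nodes all survive, each with survival probability $e^{-1/n}$ (getting growth rate $1.1$); both work, and your bookkeeping of churn and of the event-time versus real-time snapshots is, if anything, more explicit than the paper's.
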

\begin{proof}
Let $I_t$ be the set of informed nodes, with $t \geq t_0=T_{r_0}$ and $|I_t| \leq n^{1/10}$. We want to prove that, w.h.p., $I_t = I_t \cap N_{t+1}$, i.e. all the nodes in $I_t$ survive for a time interval equal to $1$. Since the life of a node follows an exponential distribution of parameter $1/n$, and since $|I_t| \leq n^{1/10}$, this event has probability $e^{-|I_t|/n} \geq 1-1/n^{9/10}$. According to  Definition \ref{def:flooding_poisson}, we let $I_{t+1}=(I_t \cup \partial_{out}^{t}(I_{t} \cap N_{t+1})) \cap N_{t+1}$ be the set of informed nodes at time $t+1$. Since the graph $G_t$ is an expander of parameter $0.1$, w.h.p. (Theorem \ref{thm:exp:pdge}) and since $I_t \cap N_{t+1}=I_t$ w.h.p., it holds w.h.p.
\[|\partial_{out}^t(I_{t} \cap N_{t+1})| \geq 0.1 |I_{t}|\,.\]
Since $|I_t| \leq n^{1/10}$, all the nodes in $\partial_{out}^{t}(I_t \cap N_{t+1})$ survive for a time interval equal to $1$ with probability $e^{-0.1|I_t|/n} \geq 1-1/n^{9/10}$, so w.h.p.
\[|I_{t+1}| \geq |(\partial_{out}^t(I_{t} \cap N_{t+1})) \cap N_{t+1}| \geq 1.1|I_t|\,.\]
It follows that, after a phase of length   $\tau_1=\bigO(\log n)$, we get    $|I_{t_0+\tau_1}|\geq n^{1/10}$,  w.h.p.
\end{proof}

\smallskip
\noindent \emph{Phase 2: Exponential growth of the informed nodes.}
   In the next lemma, we show that, after the bootstrap,  the flooding process yields an exponential increase of the number of informed nodes until it reaching  half of the nodes in the network.

\begin{lemma}[Phase 2]
\label{lem:flooding_poisson_second}
Under the same hypotheses of Theorem \ref{thm:flooding_terminates_poisson}, there is a  $\tau_2 = \bigO(\log n)$ such that, for $\tau_1=\bigO(\log n)$ (as in Lemma \ref{lem:flooding_poisson_first}), w.h.p.
\[|I_{t_0 + \tau_1+\tau_2}|\geq \frac{|N_{t_0 +\tau_1+ \tau_2}|}{2}\,.\]
\end{lemma}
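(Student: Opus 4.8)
The plan is to show that, once the bootstrap phase has informed $n^{1/10}$ nodes, the expansion of every snapshot forces the informed set to grow by a constant multiplicative factor in each of the following $\bigO(\log n)$ flooding rounds, until it reaches half of the network. First I would fix the high-probability regularity events on which the whole argument is conditioned: (i) by Lemma~\ref{thm:concentration_nodes}, $|N_t|\in[0.9n,1.1n]$ throughout the window $[t_0,t_0+\tau_1+\tau_2]$; (ii) by the churn estimates underlying Lemmas~\ref{le:no_deaths} and~\ref{le:no_births} (which depend only on the Poisson node churn and hence apply verbatim here), at most $\bigO(\log n)$ nodes are born and at most $\bigO(\log n)$ nodes die in any single unit-time flooding round; and (iii) by Theorem~\ref{thm:exp:pdge} together with a union bound over the $\bigO(\log n)$ events $T_r$ falling in the window, every snapshot $G_t=G_{T_r}$ arising during the phase is a $0.1$-expander. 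Since the number of events in a window of logarithmic length is $\bigO(\log n)$ w.h.p., this union bound costs only $o(1)$ in probability.

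The heart of the proof is a per-round growth step. Fix a round with $n^{1/10}\le |I_t|\le |N_t|/2$ and write, following Definition~\ref{def:flooding_poisson},
\[
I_{t+1}=\bigl(I_t\cup\partial_{out}^{t}(A)\bigr)\cap N_{t+1},\qquad A:=I_t\cap N_{t+1},
\]
so that $I_{t+1}=A\cup B$ with $B:=\partial_{out}^{t}(A)\cap N_{t+1}$ and $A\cap B=\emptyset$. The key observation, which lets me avoid any control on the maximum degree, is that the surviving informed set $A$ is itself a subset of $N_t$ of size at most $|N_t|/2$, so I can apply the expansion of $G_t$ directly to $A$ rather than to $I_t$, obtaining $|\partial_{out}^t(A)|\ge 0.1\,|A|$. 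Writing $D=\bigO(\log n)$ for the number of deaths in the round (by~(ii)), I get $|A|\ge |I_t|-D$ and $|B|\ge 0.1|A|-D$, whence
\[
|I_{t+1}|=|A|+|B|\ge 1.1\bigl(|I_t|-D\bigr)-D\ge 1.1|I_t|-\bigO(\log n).
\]
Because $|I_t|\ge n^{1/10}\gg\log n$, this yields $|I_{t+1}|\ge 1.05\,|I_t|$ for $n$ large enough. Newly born nodes are harmless: not being present in $G_t$ they cannot shrink $\partial_{out}^t(A)$, and they inflate $|N_{t+1}|$ by only $\bigO(\log n)$.

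Finally I would iterate the growth step. Starting from $|I_{t_0+\tau_1}|\ge n^{1/10}$ (Lemma~\ref{lem:flooding_poisson_first}), the bound $|I_{t+1}|\ge 1.05|I_t|$ holds for every round in which $|I_t|\le |N_t|/2\le 0.55n$, so after $\tau_2=\bigO(\log n)$ rounds the informed set first exceeds $0.55n\ge |N_t|/2$; since deaths per round are negligible compared to $0.55n$, it then stays above $|N_t|/2$, giving the claimed $|I_{t_0+\tau_1+\tau_2}|\ge |N_{t_0+\tau_1+\tau_2}|/2$. The main obstacle is precisely the interaction between the expansion property, which is a statement about a fixed snapshot, and the churn occurring during each one-unit transmission round; the clean way around it is the device above, namely applying expansion to the \emph{surviving} set $A\subseteq N_t$ so that dying informed nodes cost us only an additive $\bigO(\log n)$ term, which is then absorbed using $|I_t|\ge n^{1/10}$.
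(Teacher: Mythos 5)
Your proposal is correct and follows essentially the same route as the paper's proof: both apply the $0.1$-expansion of the snapshot $G_t$ to the surviving informed set $I_t\cap N_{t+1}$, bound the churn in each unit-time round by $\bigO(\log n)$ (the paper proves the $2\log n$ death bound inline via Chernoff rather than citing Lemmas~\ref{le:no_deaths} and~\ref{le:no_births}, but the content is identical), and absorb the additive $\bigO(\log n)$ loss using $|I_t|\geq n^{1/10}$ to obtain a constant multiplicative growth factor per round. The only differences are cosmetic (a growth constant of $1.05$ versus $1.09$, and your slightly more explicit handling of the union bound over snapshots and of the moment the informed set first crosses half the network).
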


\begin{proof}
Observe first that in any interval of time equal to $1$, w.h.p. at most $2\log n$ nodes leave the network. Indeed, the number of nodes that leave the network in the  time interval $[t,t+1]$ is a random variable
\[D=\sum_{v \in N_t}D_v\,,\]
where each $D_v$ is a  Bernoulli random variable, such that $\Prc{D_v=1}=1-e^{-1/n}$,  which indicates if the node $v \in N_t$ leaves the network before $t+1$. So, from  Lemma \ref{thm:concentration_nodes} and  the Chernoff Bound (Theorem \ref{thm:chernoff}),  
\begin{equation}
    \Prc{D \geq 2 \log n} \leq \frac{1}{n^{1/3}}\,.
\end{equation}

We recall that the set of infected nodes at time $t+1$ is $I_{t+1}=(I_t \cup \partial_{out}^t(I_t \cap N_{t+1}) \cap N_{t+1}$. Since we have shown that, in the interval $[t,t+1]$, at most $2 \log n$ nodes leave the network w.h.p. and since the graph $G_t$ is a expander with parameter $0.1$ w.h.p. (Theorem  \ref{thm:exp:pdge}), it holds, w.h.p.,
\[|I_{t+1}|\geq |I_t| + 0.1\left(|I_t|-2\log n\right)-2 \log n\,.\]
So, for each $t \geq t_0+\tau_1$, since $|I_{t_0+\tau_1}| \geq n^{1/10}$, w.h.p.
\[|I_{t+1}| \geq 1.09|I_t|\,.\]
We thus have  an exponential growth of the set of the informed nodes and, so,   there exists $\tau_2= \bigO(\log n)$ such that $|I_{t_0+\tau_1+\tau_2}| \geq |N_{t_0+\tau_1+\tau_2}|/2$, w.h.p.
\end{proof}

 \smallskip
\noindent \emph{Phase 3: Exponential decrease of the non-informed nodes.} The analysis of this phase considers  
the subset $S_t \subseteq N_t$ of the non-informed nodes. More precisely, we prove that $S_{t+1}$ w.h.p. decreases by a constant factor despite the node churn.

\begin{lemma}[Phase 3]
\label{lem:flooding_poisson_third}
Under the same hypotheses of Theorem \ref{thm:flooding_terminates_poisson}, there is a $\tau_3 = \bigO(\log n)$ such that, for  $\tau_1=\bigO(\log n)$ (as in Lemma \ref{lem:flooding_poisson_first}) and $\tau_2=\bigO(\log n)$ (as in Lemma \ref{lem:flooding_poisson_second}), we have  w.h.p.
\[I_{t_0+\tau_1+\tau_2+\tau_3}=N_{t_0 +\tau_1+\tau_2+\tau_3}\,.\]
\end{lemma}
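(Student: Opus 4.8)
The plan is to mirror the endgame of the streaming analysis (Theorem~\ref{thm:SDGE-flooding}): track the set of non-informed nodes $S_t = N_t \setminus I_t$ and show that $|S_t|$ shrinks by a constant factor per flooding step, until it is small enough that completion can be forced directly. By hypothesis together with Lemma~\ref{lem:flooding_poisson_second} we may start at a round $t^\star = t_0 + \tau_1 + \tau_2$ with $|S_{t^\star}| \le |N_{t^\star}|/2 \le 0.55\,n$, using the node-count concentration of Lemma~\ref{thm:concentration_nodes}. First I would import the churn control already established for Phase~2: by the same Chernoff / Poisson-tail argument, in every one of the $\bigO(\log n)$ unit intervals of this phase at most $2\log n$ nodes die and at most $2\log n$ nodes are born, w.h.p.; a union bound over these $\bigO(\log n)$ intervals keeps the event w.h.p.

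The core is a one-step recurrence for $|S_t|$. For a flooding step from $t$ to $t+1$ in the discretized process (Definition~\ref{def:flooding_poisson}), every node in $\partial_{out}^{t+1}(S_{t+1})$ is an informed node at time $t+1$ adjacent to a still non-informed node; arguing exactly as in the streaming case, and correcting for the $\bigO(\log n)$ nodes born or died during the interval, such a node must have belonged to $S_t$ and just become informed, so that $|S_t| - |S_{t+1}| + \bigO(\log n) \ge |\partial_{out}^{t+1}(S_{t+1})|$. Since $|S_{t+1}| \le |N_{t+1}|/2$, the vertex expansion of Theorem~\ref{thm:exp:pdge} gives $|\partial_{out}^{t+1}(S_{t+1})| \ge 0.1\,|S_{t+1}|$, whence $|S_{t+1}| \le \tfrac{1}{1.1}\bigl(|S_t| + \bigO(\log n)\bigr)$. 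Iterating this recurrence for $\bigO(\log n)$ steps drives $|S_t|$ down to the scale of the additive term, i.e. to $\bigO(\log n)$ non-informed nodes, which disposes of the bulk of the phase.

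The hard part will be the endgame: because on average one uninformed node joins per unit of time, the geometric decrease stalls at $\Theta(\log n)$, and the recurrence alone cannot reach $|S_t| = 0$, as the completion condition $I_t \supseteq N_t$ requires. To finish I would argue directly once $|S_t| \le c\log n$. Each surviving, long-lived non-informed node has $d$ outgoing edges to (almost) uniformly random destinations, all but $\bigO(\log n)$ of which are informed, so the probability it has no informed neighbour throughout a unit interval is at most $\bigl(\bigO(\log n)/0.9n\bigr)^d$; a union bound over the $\bigO(\log n)$ remaining non-informed nodes and the $\bigO(\log n)$ steps — invoking the small-set expansion of Lemma~\ref{lem:poi:exp:small1} to exclude degenerate configurations — shows that all such nodes become informed within $\bigO(1)$ further steps, w.h.p. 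The only residual obstruction is freshly born nodes, which stay uninformed for exactly one step; since the number of newborns surviving a given unit interval is dominated by a $\mathrm{Poisson}(1)$ variable, with constant probability an interval produces no such survivor, so within $\bigO(\log n)$ steps a ``clean'' interval occurs w.h.p., and at the corresponding round every alive node is informed, i.e. $I_t = N_t$, giving $\tau_3 = \bigO(\log n)$. I expect this last-mile step — reconciling the continual arrival of uninformed nodes with the need for a single round at which $S_t$ is empty — to be the main technical difficulty, whereas the geometric-decrease part is a fairly routine adaptation of the streaming argument.
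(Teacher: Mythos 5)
Your first phase (the geometric decrease of $S_t = N_t \setminus I_t$ via the expansion of Theorem~\ref{thm:exp:pdge}, with an additive $\bigO(\log n)$ correction for the churn in each unit interval) matches the paper's argument. The gap is in your endgame. Once $|S_t| \le c\log n$ you switch to a per-node union bound, claiming that a surviving non-informed node has no informed out-neighbour throughout a unit interval with probability at most $\left(\bigO(\log n)/0.9n\right)^d$, because its $d$ out-edges point to almost-uniform destinations and all but $\bigO(\log n)$ nodes are informed. This treats the out-edge destinations as fresh uniform samples, but they were fixed in the past (at birth or at the last regeneration), and the event you are implicitly conditioning on --- that the node is still non-informed at time $t$ --- is precisely the event that those destinations have so far avoided the informed set. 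Conditioned on $u \in S_t$, the distribution of $u$'s out-neighbours is biased towards $S_t$ itself, so the claimed bound does not follow; this is the same correlation that forces the deferred-decision/onion-skin machinery elsewhere in the paper. (The analogous per-node bound \emph{is} legitimate for newborns, whose edges are sampled fresh at arrival independently of the informed set --- the paper uses it exactly there --- but not for nodes that have survived many uninformed rounds.)

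The paper avoids this by never switching to a per-node argument: it defines $S^*_t$ as the set of non-informed nodes excluding those born after time $t_0+\tau_1+\tau_2+\tau_3'$, notes that $\partial_{out}^{t+1}(S^*_{t+1}) \subseteq (S^*_t - S^*_{t+1}) \cap N_{t+1}$ with no additive term (newcomers are excluded by construction), and applies Theorem~\ref{thm:exp:pdge} --- which holds for \emph{all} sets simultaneously w.h.p.\ and is therefore immune to the conditioning above --- to get $|S^*_{t+1}| \le |S^*_t|/1.1$, driving $|S^*_t|$ below $1$ in a further $\bigO(\log n)$ steps; newcomers are then handled by the fresh-edge argument. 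Your ``clean interval'' observation for the very last arrivals (an interval producing no surviving newborn occurs with constant probability, hence w.h.p.\ within $\bigO(\log n)$ intervals) is a reasonable, arguably more careful, way to certify a round with $I_t \supseteq N_t$ than the paper's brief closing sentence, but it does not repair the per-node step above. Replace that step with the $S^*_t$ expansion argument and your proof goes through.
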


\begin{proof}
To prove this lemma, we will consider the set $S_t \subseteq N_t$ of non informed nodes at time $t$, i.e. $S_t=N_t-I_t$. Notice that, since every node $v$ in $\partial_{out}^{t+1}(S_{t+1}) \subseteq I_{t+1}$ is reachable in 1-hop to the set of non-informed nodes at time $t+1$, $v$ was not informed at time $t$. This implies that
\[\partial_{out}^{t+1}(S_{t+1}) \subseteq (S_t-S_{t+1}) \cap N_{t+1}\,.\]
Consider the random variable $J_{t,t+1}$ that indicates the number of nodes that join the network in the time interval $[t,t+1]$. Then, the above consideration implies that
\begin{equation}
    |S_{t}|-|S_{t+1}|+J_{t,t+1} \geq |\partial_{out}^{t+1}(S_{t+1})|.
    \label{eq:S_t-S_t+1_in_partial}
\end{equation}
Since $J_{t,t+1}$ is a Poisson random variable of parameter $1$, for the tail bound for the Poisson distribution (Theorem \ref{thm:tail_bound_poisson_distribution}),    $J_{t,t+1} \leq \log n$, w.h.p.
Since $|S_{t_0+\tau_1+\tau_2}| \leq |N_{t_0+\tau_1+\tau_2}|/2$ w.h.p., from the expansion of the graph $G_{t+1}$ (Theorem \ref{thm:exp:pdge}) it holds w.h.p. that, for each $t \geq t_0+\tau_1+\tau_2$,
\[|S_{t+1}|\leq \frac{1}{1.1}\left(|S_t|+\log n\right).\]
Then, a time  $\tau_3'=\bigO(\log n)$ exists such that $|S_{t_0+\tau_1+\tau_2+\tau_3'}| \leq \log^2n$.

After the process reaches the above small number of non-informed nodes, we   consider
the set of non-informed nodes at time $t$ without including   the set of nodes that join the network after time $t_0+\tau_1+\tau_2+\tau_3'$: we call the latter set as $S_{t}^*$, for each $t \geq t_0+\tau_1+\tau_2+\tau_3'$. As in the first part of the proof, we get that $\partial_{out}^{t+1}(S_{t+1}^*) \subseteq (S_{t}^*-S_{t+1}^*)\cap N_{t+1}$. Since the graph $G_{t+1}$ is an expander w.h.p. (Theorem \ref{thm:exp:pdge}), we have that w.h.p.
\[|S_{t+1}^*| \leq \frac{1}{1.1}|S_{t}^*|.\]
Since $|S_{t_0+\tau_1+\tau_2+\tau_3'}^*| \leq \log^2n$, there is a $\tau_3=\bigO(\log n)$ such that, $|S_{t_0+\tau_1+\tau_2+\tau_3}^*|<1$ w.h.p.

In conclusion, let $J_{\tau_3',\tau_3}$ be the number of nodes that join the network from time $t_0+\tau_1+\tau_2+\tau_3'$ to time $t_0+\tau_1+\tau_2+\tau_3$. Since the arrival of the nodes during an interval of length $\tau_3-\tau_3'$ is a Poisson process of mean $\tau_3-\tau_3'$, for the tail bound for the Poisson distribution (Theorem \ref{thm:tail_bound_poisson_distribution}), w.h.p. $J_{\tau_3',\tau_3}=\bigO(\log n)$. Moreover, from Lemma \ref{thm:concentration_nodes}, each of these new nodes connect to the set of informed nodes with probability at least $(1-(2\log^2n/n)^d)(1-1/n^2)$. Moreover, each informed node to which the new nodes have connected survive for the 1-hop  transmission with probability $e^{-1/n}\geq 1-\frac{1}{n}$. So,  each   node that joins the network after time $t_0+\tau_1+\tau_2+\tau_3'$ gets informed within time $t_0+\tau_1+\tau_2+\tau_3$, w.h.p.
\end{proof}

\bibliographystyle{plain}
\bibliography{njl}

\newpage
\appendix
\begin{center}
\LARGE{\textbf{Appendix}}
\end{center}

\section{Mathematical tools} \label{apx:sec:tools}

\begin{theorem}[Chernoff Bound, \cite{dubhashipanconesi09}]
\label{thm:chernoff}
Let $X_1,\dots,X_n$ be independent Poisson trials such that $\Prc{X_i=1}=p_i$. Let $X=\sum_{i=1}^n X_i$, $\mu=\Expcc{X}$ and suppose $\mu_L \leq \mu \leq \mu_H$. Then, for all $0 <\varepsilon \leq 1$ the following Chernoff bounds hold
\begin{equation}
\Prc{X \geq (1+\varepsilon)\mu_H} \leq e^{-\frac{\varepsilon^2}{3}\mu_H}
\end{equation}
\begin{equation}
    \Prc{X \leq (1-\varepsilon)\mu_L} \leq e^{-\frac{\varepsilon^2}{2}\mu_L}\,.
\end{equation}
\end{theorem}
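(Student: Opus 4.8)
The plan is to prove both tail bounds by the standard exponential moment method (the ``Chernoff trick''): derive first the multiplicative form of each bound via Markov's inequality applied to an exponential of $X$, then reduce the resulting exponents to the stated quadratic bounds by elementary calculus. The two inequalities are mirror images of each other, so I would carry the upper tail out in full and then indicate the symmetric modifications for the lower tail.

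For the upper tail, fix any $t > 0$ and apply Markov's inequality to the nonnegative random variable $e^{tX}$:
\[
\Prc{X \geq (1+\varepsilon)\mu_H} = \Prc{e^{tX} \geq e^{t(1+\varepsilon)\mu_H}} \leq e^{-t(1+\varepsilon)\mu_H}\,\Expcc{e^{tX}}.
\]
Since the $X_i$ are independent, $\Expcc{e^{tX}} = \prod_{i=1}^n \Expcc{e^{tX_i}} = \prod_{i=1}^n \left(1 + p_i(e^t-1)\right)$, and using $1+x \leq e^x$ this is at most $\exp\!\left((e^t-1)\sum_i p_i\right) = \exp\!\left((e^t-1)\mu\right)$. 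Because $e^t - 1 > 0$ and $\mu \leq \mu_H$ by hypothesis, I may replace $\mu$ by $\mu_H$, obtaining $\Prc{X \geq (1+\varepsilon)\mu_H} \leq \exp\!\left(\mu_H\left[(e^t-1) - t(1+\varepsilon)\right]\right)$. Minimizing the bracket over $t>0$ gives the optimizer $e^t = 1+\varepsilon$, i.e.\ $t = \ln(1+\varepsilon) > 0$, and substituting yields the multiplicative bound $\Prc{X \geq (1+\varepsilon)\mu_H} \leq \left(\tfrac{e^\varepsilon}{(1+\varepsilon)^{1+\varepsilon}}\right)^{\mu_H}$.

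The lower tail is handled identically after replacing $t$ by $-t$: for $t>0$ I would apply Markov's inequality to $e^{-tX}$, compute $\Expcc{e^{-tX}} \leq \exp\!\left((e^{-t}-1)\mu\right)$, and use that $e^{-t}-1 < 0$ together with $\mu \geq \mu_L$ to replace $\mu$ by $\mu_L$. Minimizing over $t>0$ now gives $e^{-t} = 1-\varepsilon$, i.e.\ $t = -\ln(1-\varepsilon) > 0$ (here the hypothesis $0<\varepsilon \leq 1$ is what guarantees this $t$ is well defined and positive), and substituting yields $\Prc{X \leq (1-\varepsilon)\mu_L} \leq \left(\tfrac{e^{-\varepsilon}}{(1-\varepsilon)^{1-\varepsilon}}\right)^{\mu_L}$.

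What remains — and where the specific constants $1/3$ and $1/2$ enter — are the two scalar inequalities $\tfrac{e^\varepsilon}{(1+\varepsilon)^{1+\varepsilon}} \leq e^{-\varepsilon^2/3}$ and $\tfrac{e^{-\varepsilon}}{(1-\varepsilon)^{1-\varepsilon}} \leq e^{-\varepsilon^2/2}$ for $0 < \varepsilon \leq 1$. These are the only genuinely analytic part and thus the main (though routine) obstacle. Taking logarithms, the first reduces to $\varepsilon - (1+\varepsilon)\ln(1+\varepsilon) \leq -\varepsilon^2/3$ and the second to $-\varepsilon - (1-\varepsilon)\ln(1-\varepsilon) \leq -\varepsilon^2/2$. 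I would verify each by defining the difference of the two sides as a function of $\varepsilon$, checking that it vanishes at $\varepsilon = 0$, and controlling the sign of its derivative; for the lower-tail inequality the cleanest route is the series $-\ln(1-\varepsilon) = \sum_{k\geq 1}\varepsilon^k/k \geq \varepsilon$, which makes the relevant derivative have a fixed sign throughout $[0,1)$, while for the upper-tail inequality a Taylor comparison of $\ln(1+\varepsilon)$ restricted to the range $\varepsilon \leq 1$ forces the weaker constant $1/3$ (the restriction to $\varepsilon\le 1$ is precisely where the constant degrades from the ideal value). Combining these scalar bounds with the two multiplicative bounds derived above completes the proof of both displayed inequalities.
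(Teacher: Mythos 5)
This statement is not proved in the paper at all: it appears in the appendix as a mathematical tool, imported verbatim with a citation to Dubhashi--Panconesi, so there is no paper proof to compare against. Your proposal is the standard exponential-moment argument (Markov's inequality applied to $e^{tX}$, the bound $\Expcc{e^{tX}}\leq \exp((e^t-1)\mu)$, optimization at $e^t=1+\varepsilon$ resp.\ $e^{-t}=1-\varepsilon$, then the scalar inequalities yielding the constants $1/3$ and $1/2$), and it is correct, including the observation that monotonicity in $\mu$ lets you replace $\mu$ by $\mu_H$ (upper tail) or $\mu_L$ (lower tail). The only loose end is the boundary case $\varepsilon=1$ of the lower tail, where your optimizer $t=-\ln(1-\varepsilon)$ diverges; this is handled either by a direct computation of $\Prc{X\leq 0}=\prod_i(1-p_i)\leq e^{-\mu_L}$ or by taking the infimum of the bound over $\varepsilon<1$, so it does not affect the validity of the argument.
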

\begin{theorem}[Method of bounded differences,\cite{dubhashipanconesi09}]
\label{thm:bounded_differences}
Let $\mathbf{Y}=(Y_1,\dots,Y_m)$ be independent random variables, with $Y_j$
taking values in the set $A_j$. Suppose the real-valued function $f$ defined on
$\prod_j A_j$ satisfies the Lipschitz condition with coefficients $\beta_j$,
i.e.
\[
|f(\mathbf{y})-f(\mathbf{y'})| \leq \beta_j
\]
whenever vectors $\mathbf{y}$ $\mathbf{y'}$ differs only in the $j$-th
coordinate. Then, for any $M>0$, it holds that
\[
\Prc{f(\mathbf{Y}) \geq \Expcc{f(\mathbf{Y})} + M }
\leq e^{-\frac{2M^2}{\sum_{j=1}^{m}\beta_j^2}}\, ,
\]
and
\[
\Prc{f(\mathbf{Y}) \leq \Expcc{f(\mathbf{Y})} - M }
\leq e^{-\frac{2M^2}{\sum_{j=1}^{m}\beta_j^2}}\,.
\]
\end{theorem}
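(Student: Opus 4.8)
The plan is to prove this bounded-differences (McDiarmid) inequality by the standard martingale method: exhibit the Doob martingale of $f(\mathbf{Y})$ and apply an Azuma–Hoeffding-type exponential moment argument to it. It suffices to establish the upper-tail bound, since the lower-tail statement follows by applying the upper-tail bound to $-f$, which satisfies the same Lipschitz condition with the same coefficients $\beta_j$.

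First I would set up the Doob martingale. Let $\mathcal{F}_0=\{\emptyset,\Omega\}$ and $\mathcal{F}_j=\sigma(Y_1,\dots,Y_j)$, and define $Z_j=\Expcc{f(\mathbf{Y})\mid \mathcal{F}_j}$ for $j=0,\dots,m$. Then $Z_0=\Expcc{f(\mathbf{Y})}$ and $Z_m=f(\mathbf{Y})$, the sequence $(Z_j)$ is a martingale, and the increments $D_j=Z_j-Z_{j-1}$ satisfy $\Expcc{D_j\mid \mathcal{F}_{j-1}}=0$ with $Z_m-Z_0=\sum_{j=1}^m D_j$. The whole task then reduces to controlling the conditional range of each $D_j$.

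The key step is exactly this range control, and it is where the coefficients $\beta_j$ enter. Freezing the past values $Y_1=y_1,\dots,Y_{j-1}=y_{j-1}$ and using independence of the $Y_i$, I would write $Z_j=g_j(Y_j)$ and $Z_{j-1}=\int g_j\,d\mu_j$, where $g_j(y)=\Expcc{f(y_1,\dots,y_{j-1},y,Y_{j+1},\dots,Y_m)}$ averages over the future coordinates only and $\mu_j$ is the law of $Y_j$. Setting $A_j=\inf_y g_j(y)-Z_{j-1}$ and $B_j=\sup_y g_j(y)-Z_{j-1}$, these are $\mathcal{F}_{j-1}$-measurable and satisfy $A_j\le D_j\le B_j$. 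Because $f$ changes by at most $\beta_j$ when only the $j$-th coordinate changes, and this bound survives the averaging over the future, we get $|g_j(y)-g_j(y')|\le \beta_j$ for all $y,y'$, hence $B_j-A_j\le\beta_j$. Getting the conditioning right here — the past is frozen while the future is averaged out — is the main subtlety and the crux of the argument.

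Then I would run the exponential moment method. By Hoeffding's lemma applied conditionally (a centered random variable supported in an interval of length at most $\beta_j$ has conditional moment generating function at most $e^{s^2\beta_j^2/8}$), I obtain $\Expcc{e^{sD_j}\mid \mathcal{F}_{j-1}}\le e^{s^2\beta_j^2/8}$ for every $s>0$. Peeling off the increments one at a time via the tower property yields $\Expcc{e^{s(Z_m-Z_0)}}\le e^{s^2\sum_j\beta_j^2/8}$. A Markov/Chernoff step gives $\Prc{f(\mathbf{Y})-\Expcc{f(\mathbf{Y})}\ge M}\le e^{-sM+s^2\sum_j\beta_j^2/8}$, and optimizing at $s=4M/\sum_j\beta_j^2$ produces the claimed bound $e^{-2M^2/\sum_j\beta_j^2}$. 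The only auxiliary ingredient requiring its own short argument is Hoeffding's lemma, which follows from bounding $e^{sx}$ by its chord on $[A_j,B_j]$ (convexity) together with a second-order estimate on the resulting log-MGF; the remaining steps are routine bookkeeping.
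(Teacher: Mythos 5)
Your proof is correct. Note that the paper itself does not prove this statement: it appears in the appendix of mathematical tools as a black-box import, cited from Dubhashi and Panconesi \cite{dubhashipanconesi09}. Your argument --- reducing the lower tail to the upper tail via $-f$, forming the Doob martingale $Z_j=\Expcc{f(\mathbf{Y})\mid Y_1,\dots,Y_j}$, using independence to bound the conditional range of each increment by $\beta_j$, applying Hoeffding's lemma conditionally, and optimizing the Chernoff bound at $s=4M/\sum_j\beta_j^2$ --- is the standard proof of McDiarmid's bounded-differences inequality, i.e.\ essentially the proof found in the cited reference, and every step checks out.
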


\begin{theorem}[Kullback-Leibler divergence inequality]
\label{thm:kullback_inequality}
Let $p_m$ and $q_m$ be two discrete probability mass functions, with $m \in \{1,\dots,L\}$. We have that
\begin{equation}
    \sum_{r=1}^L p_m \log_2 \left(\frac{p_m}{q_m}\right) \geq 0 \,.
\end{equation}
\end{theorem}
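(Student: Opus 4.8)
The plan is to derive the inequality from the single elementary bound $\ln x \le x - 1$, valid for every real $x > 0$ with equality only at $x = 1$. Since changing the base of the logarithm merely multiplies the whole sum by the positive constant $1/\ln 2$, it suffices to prove $\sum_{m=1}^L p_m \ln(p_m/q_m) \ge 0$, or equivalently $\sum_{m=1}^L p_m \ln(q_m/p_m) \le 0$, and then divide through by $\ln 2 > 0$ at the end.

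First I would restrict attention to the support of $p$, i.e. to those indices $m$ with $p_m > 0$: the terms with $p_m = 0$ contribute nothing under the standard convention $0\log 0 = 0$, and if some index has $q_m = 0$ while $p_m > 0$ then the quantity $\sum_m p_m \log_2(p_m/q_m)$ is $+\infty$ and the claim is trivially true, so we may assume $q_m > 0$ at every $m$ where $p_m > 0$. On this restricted support each ratio $q_m/p_m$ is a well-defined positive number, so applying $\ln x \le x - 1$ with $x = q_m/p_m$ and multiplying by the nonnegative weight $p_m$ gives the pointwise bound $p_m \ln(q_m/p_m) \le p_m\bigl(q_m/p_m - 1\bigr) = q_m - p_m$.

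Summing this over the support of $p$ and using $\sum_m q_m \le 1 = \sum_m p_m$ yields $\sum_m p_m \ln(q_m/p_m) \le \sum_m q_m - \sum_m p_m \le 0$, which is exactly the required bound after rescaling by $1/\ln 2$. An entirely equivalent route, which I would mention as an alternative, is Jensen's inequality for the concave function $\log_2$: writing the sum as an expectation under $p$ gives $\sum_m p_m \log_2(q_m/p_m) \le \log_2\bigl(\sum_m p_m \cdot q_m/p_m\bigr) = \log_2\bigl(\sum_m q_m\bigr) \le \log_2 1 = 0$.

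There is essentially no hard step here: the statement is Gibbs' inequality, and the computation is a two-line application of $\ln x \le x - 1$. The only point requiring genuine care, which I would state explicitly, is the bookkeeping for vanishing entries of $p$ and $q$ together with the base-of-logarithm conversion; once the convention $0\log 0 = 0$ is fixed and the trivial case $q_m = 0 < p_m$ is dispatched, every remaining summand is well defined and the bound above goes through verbatim. In the application of this excerpt (Lemma~\ref{lem:expansion_2}) both $p_m = k_m/k$ and the $q_m$ are strictly positive on $\{1,\dots,L\}$, so in fact none of these degenerate cases even arises there.
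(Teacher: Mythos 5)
Your proof is correct: it is the standard derivation of Gibbs' inequality from the elementary bound $\ln x \le x-1$ (with the Jensen route as an equivalent alternative), and the bookkeeping for vanishing entries is handled properly. For comparison, the paper itself gives \emph{no} proof of this theorem: it is stated in the appendix as a known mathematical tool, alongside the Chernoff bound and the method of bounded differences, so there is no internal argument to measure yours against. One point in your favor is worth noting: your argument only uses $\sum_m q_m \le 1$ rather than $\sum_m q_m = 1$, i.e.\ it covers sub-probability vectors $q$. This is actually the form needed where the theorem is invoked, since in the proof of Lemma~\ref{lem:expansion_2} the paper verifies only $\sum_{m=1}^L q_m \le 1$ (and then somewhat loosely calls the $q_m$ a ``density mass function''), so your slightly more general statement closes a small gap between the tool as stated and the way it is used.
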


\section{Static Random Graphs}
\label{ssec:lemma:staticdoutgoing}

\begin{lemma}
\label{lemma:staticdoutgoing}
The static random graph in which each node picks $d$ random neighbors is a $\Theta(1)$-expander w.h.p., for each $d \geq 3$.
\end{lemma}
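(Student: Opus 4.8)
The plan is to follow exactly the union-bound strategy already used for Lemma \ref{lem:exp_large_sub_SDG} and Lemma \ref{lem:expansion_small_streaming}, specializing it to the static model, where it takes its cleanest form. Indeed, since here every node draws each of its $d$ neighbors uniformly and independently among the other $n-1$ vertices, the destination distribution is exactly uniform and carries no age dependence, so none of the technical corrections of the dynamic proofs are needed. First I would note that $h_{out}(G)\le 1$ holds deterministically (taking $S$ of size $n/2$ gives $|\partial_{out}(S)|\le n/2$), so only the lower bound $h_{out}(G)\ge\varepsilon$ requires proof, for a suitable constant $\varepsilon>0$. I would fix such an $\varepsilon$ (a small constant, to be pinned down at the end) and reduce the claim, as in the earlier lemmas, to showing that with probability $1-o(1)$ no pair of disjoint sets $S,T\subseteq N$ with $|S|=s\le n/2$ and $|T|=\varepsilon s$ satisfies the event $A_{S,T}=\{\partial_{out}(S)\subseteq T\}$.

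Next I would bound $\Prc{A_{S,T}}$ for a fixed pair. The event $A_{S,T}$ forces, in particular, every one of the $d|S|$ out-edges emanating from $S$ to land inside $S\cup T$; since these choices are mutually independent and each lands in $S\cup T$ with probability at most $|S\cup T|/(n-1)=(1+\varepsilon)s/(n-1)$, we get
\[
\Prc{A_{S,T}}\le\left(\frac{(1+\varepsilon)s}{n-1}\right)^{d s}.
\]
A union bound over all pairs then yields
\[
\Prc{h_{out}(G)<\varepsilon}\le\sum_{s=1}^{n/2}\binom{n}{s}\binom{n-s}{\varepsilon s}\left(\frac{(1+\varepsilon)s}{n-1}\right)^{d s}.
\]
Bounding both binomials by $\binom{n}{k}\le(ne/k)^k$ and writing $x=s/n$, each summand factors as $g(x)^s$ with
\[
g(x)=e^{\,1+\varepsilon}\,\varepsilon^{-\varepsilon}(1+\varepsilon)^d\,x^{\,d-1-\varepsilon}.
\]
The pleasant feature is that $g$ is a pure monomial in $x$; since $d\ge 3$ gives exponent $d-1-\varepsilon>0$, $g$ is increasing on $(0,1/2]$ and attains its maximum at $x=1/2$.

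Finally I would choose $\varepsilon$. Evaluating, $g(1/2)=e^{1+\varepsilon}\varepsilon^{-\varepsilon}(1+\varepsilon)^d 2^{-(d-1-\varepsilon)}$, and for $d=3$ this tends to $e/4<1$ as $\varepsilon\to 0^+$ (using $\varepsilon^{-\varepsilon}\to 1$); hence a small enough constant $\varepsilon$ makes $g(x)\le g(1/2)<1$ uniformly on $(0,1/2]$, and the same $\varepsilon$ works for every $d\ge 3$ since $g(1/2)=e^{1+\varepsilon}\varepsilon^{-\varepsilon}2^{1-\varepsilon}\big((1+\varepsilon)/2\big)^d$ only decreases as $d$ grows. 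With $g<1$ in hand, the sum is $o(1)$: the smallest sets dominate, and for $s=\bigO(1)$ the summand $g(s/n)^s$ is already polynomially small (for $s=1$ it is $\Theta(n^{-(2-\varepsilon)})$ when $d=3$), while for $s=\Theta(n)$ it is $g(x)^{xn}=e^{-\Theta(n)}$. The main obstacle is entirely in this last calculus step: verifying that a single constant $\varepsilon$ can be chosen so that $g(1/2)<1$ at the boundary case $d=3$, and checking that the tail sum is genuinely $o(1)$ rather than merely $\bigO(1)$ — which is precisely why the per-term polynomial smallness at small $s$ matters. A minor point to dispatch is the integrality of $\varepsilon s$, handled as in the referenced lemmas by rounding and by noting that singletons already have boundary at least $d\ge 3>\varepsilon$.
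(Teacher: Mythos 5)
Your proof is correct and follows essentially the same route as the paper: bound $\Prc{\partial_{out}(S)\subseteq T}$ by $\left(\frac{|S\cup T|}{n-1}\right)^{d|S|}$ using independence of the $d|S|$ uniform out-edges, union-bound over all pairs $(S,T)$ and all sizes $s$, estimate the binomials by $(ne/k)^k$, and check that the resulting per-$s$ term is maximized at the boundary. The one substantive difference is that you keep the expansion constant $\varepsilon$ free and shrink it until the worst term $g(1/2)$ drops below $1$ at $d=3$, whereas the paper fixes the constant $0.1$; your choice is actually the safer one, since with $\varepsilon=0.1$ and $d=3$ the crude estimate $(ne/k)^k$ yields a summand of order $\bigl(2e\cdot(20e)^{0.1}\cdot 0.55^{3}\bigr)^{s}\approx (1.35)^{s}>1$ at $s=n/2$, so the paper's route as sketched needs either a sharper binomial bound or a smaller constant at that boundary case. (A minor typo: $2^{-(d-1-\varepsilon)}=2^{1+\varepsilon}\cdot 2^{-d}$, not $2^{1-\varepsilon}\cdot 2^{-d}$; this does not affect your monotonicity-in-$d$ argument or the limit $g(1/2)\to e/4$ as $\varepsilon\to 0^{+}$.)
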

\begin{proof}
We consider the static random graph $G=(N,E)$  and $S \subset N$ a subset of the nodes with $|S|=s$. Let $T \subseteq N-S$ be an arbitrary set disjointed from $S$, with $|T|=0.1s$. In this model we know that an edge starting from a node $v$ has destination $u$ with probability $\frac{1}{n-1}$. We know that the probability that the edges from $S$ are in $S \cup T$
\[\left(\frac{|S \cup T |}{n-1}\right)^{d|S|}\,.\]
So, the probability that the outer boundary of $S$ is at most in $T$ is
\[\Prc{\partial_{out}(S)\subseteq T} \leq \left(\frac{1.1s}{n-1}\right)^{ds}.\]
From an union bound over all the set $T$ disjointed with $S$ and with $|T|=0.1s$, all the set $S$ with $s$ elements and all the possible sizes $s=1,\dots,n/2$ of $s$ we get
\begin{equation}
\label{eq:pr_not_expander}
    \Prc{\text{$G$ is not an expander}} \leq \sum_{s=1}^{n/2} \binom{n}{s} \binom{n-k}{0.1s} \left(\frac{1.1s}{n-1}\right)^{ds}\,.
\end{equation}
From standard calculus, it can be proved that, for $d \geq 3$, the equation above is upper bounded by $1/n^{d-2}$. This is obtained by bounding each binomial coefficient with the bound $\binom{n}{k}\leq \left(\frac{n \cdot e}{k}\right)^k$ and by computing the derivative of the function $f(s)$ (representing each term of the sum), obtaining that each of these terms attained its maximum in $s=1$ or in $s=n/2$.
\end{proof}

\section{Useful Tools for the Poisson Models}

\begin{definition}[Counting process]
The stochastic process $\{X(t),t \geq 0\}$ is said to be a \emph{counting process} if $X(t)$ represents the total number of events which have occurred up to time $t$.
\end{definition}

\begin{definition}
Let $\{X(t),t \geq 0\}$ be a counting process. It is a \emph{Poisson process} if
\begin{enumerate}
    \item $X(0)=0$;
    \item $X(t)$ has independent increments;
    \item The number of events in any interval of length $t$ has a Poisson distribution with mean $\lambda t$. That is, for all $s,t \geq 0$
    \[\textbf{Pr} \left(X(t+s)-X(s)=n\right)=e^{-\lambda t}\frac{(\lambda t)^n}{n!} \quad n \geq 0 \, .\]
\end{enumerate}
\end{definition}

\begin{theorem}
Let $\{X(t),t \geq 0\}$ be a Poisson process. Then, given $X(t)=n$, the $n$ arrival times $S_1,\dots,S_n$ have the same distribution as the order statistics corresponding to $n$ independent random variables uniformly distributed in the interval $(0,t)$.
\label{thm:arrival_prob}
\end{theorem}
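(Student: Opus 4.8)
The plan is to reduce the statement to a direct computation of the conditional joint density of the arrival times and to recognize the answer as the joint density of the order statistics of $n$ independent uniform variables. First I would invoke the standard fact that for a Poisson process of rate $\lambda$ the interarrival times $T_1, T_2, \ldots$, defined by $T_i = S_i - S_{i-1}$ with $S_0 = 0$, are independent and each exponentially distributed with parameter $\lambda$. Consequently the vector $(T_1, \ldots, T_{n+1})$ has joint density $\lambda^{n+1} e^{-\lambda(t_1 + \cdots + t_{n+1})}$ on the positive orthant.

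The first key step is to pass from interarrival times to arrival times. The map $(t_1, \ldots, t_{n+1}) \mapsto (s_1, \ldots, s_{n+1})$ given by $s_k = t_1 + \cdots + t_k$ is linear and lower-triangular with unit diagonal, so its Jacobian determinant equals $1$. Hence the vector $(S_1, \ldots, S_{n+1})$ has joint density $\lambda^{n+1} e^{-\lambda s_{n+1}}$ supported on the ordered region $\{0 < s_1 < \cdots < s_{n+1}\}$. The second step is to impose the conditioning: by definition of the counting process, the event $\{X(t) = n\}$ is exactly $\{S_n \le t < S_{n+1}\}$. To obtain the joint density of $(S_1, \ldots, S_n)$ on this event, I would integrate the density above over $s_{n+1} \in (t, \infty)$, which gives $\int_t^\infty \lambda^{n+1} e^{-\lambda s_{n+1}}\, ds_{n+1} = \lambda^n e^{-\lambda t}$ on $\{0 < s_1 < \cdots < s_n \le t\}$. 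Dividing by $\Prc{X(t) = n} = e^{-\lambda t}(\lambda t)^n / n!$, which is the third defining property of the Poisson process, produces the conditional density
\[
	f_{S_1, \ldots, S_n \mid X(t) = n}(s_1, \ldots, s_n) = \frac{n!}{t^n}, \qquad 0 < s_1 < \cdots < s_n \le t.
\]

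Finally I would match this against the order statistics of $n$ i.i.d.\ uniform variables. If $U_1, \ldots, U_n$ are independent and uniform on $(0,t)$, the unordered vector has density $(1/t)^n$, and summing over the $n!$ permutations that sort a given configuration yields joint density $n!/t^n$ on $\{0 < s_1 < \cdots < s_n < t\}$ for the order statistics. Since the two conditional laws have identical densities (the boundary $s_n = t$ being a null set), the arrival times given $X(t) = n$ are distributed exactly as the uniform order statistics, as claimed. I do not expect a serious obstacle: the only point deserving care is the precise translation of $\{X(t) = n\}$ into the inequalities $S_n \le t < S_{n+1}$, since it is this reformulation that forces the integration over the extra interarrival time $s_{n+1}$ and yields the cancellation of the exponential factors, leaving the clean constant density $n!/t^n$.
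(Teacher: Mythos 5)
Your proof is correct. Note that the paper states this theorem in its appendix of mathematical tools without any proof — it is quoted as a classical fact about Poisson processes — so there is no in-paper argument to compare against. Your derivation (unit-Jacobian change of variables from the i.i.d. exponential interarrival times to the arrival times, identification of $\{X(t)=n\}$ with $\{S_n \le t < S_{n+1}\}$, integration over $s_{n+1} \in (t,\infty)$, and normalization by $\mathbf{Pr}\left(X(t)=n\right) = e^{-\lambda t}(\lambda t)^n/n!$) is the standard textbook proof, and each step checks out, including the matching density $n!/t^n$ of the order statistics of $n$ i.i.d. uniform variables on $(0,t)$. The one ingredient you import without proof — that the interarrival times are i.i.d. exponential with parameter $\lambda$ — is itself a nontrivial consequence of the paper's axiomatic definition (independent increments plus Poisson-distributed counts), but invoking it as a standard fact is entirely reasonable here.
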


\begin{theorem}[Tail bound for the Poisson distribution]
Let $X$ have a Poisson distribution with mean $\lambda$. Then, for each $\varepsilon>0$,
\begin{equation}
    \textbf{Pr} \left(\left|X-\lambda \right| \geq x\right) \leq 2e^{-\frac{x^2}{2(\lambda+x)}}
\end{equation}
\label{thm:tail_bound_poisson_distribution}
\end{theorem}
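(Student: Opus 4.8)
The plan is to bound each tail separately by the exponential (Chernoff) method and then combine them with a union bound, the latter being exactly what produces the leading factor $2$. The single probabilistic ingredient is the Poisson moment generating function $\Expcc{e^{tX}} = e^{\lambda(e^t - 1)}$, which holds for every real $t$; everything else is deterministic calculus.

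First I would handle the upper tail. For any $t > 0$, Markov's inequality applied to $e^{tX}$ gives $\Prc{X \geq \lambda + x} \leq e^{\lambda(e^t - 1) - t(\lambda + x)}$. Minimizing the exponent over $t > 0$ selects $t = \ln(1 + x/\lambda)$ and yields the classical Poisson rate bound $\Prc{X \geq \lambda + x} \leq e^{-\lambda\, h(x/\lambda)}$, where $h(u) = (1+u)\ln(1+u) - u$. The lower tail $\Prc{X \leq \lambda - x}$ is treated symmetrically: for $0 < x < \lambda$ one optimizes $e^{-tX}$ over $t > 0$ (the cases $x \geq \lambda$ being trivial, since $X \geq 0$ forces the event to have probability $e^{-\lambda}$ or $0$), obtaining $\Prc{X \leq \lambda - x} \leq e^{-\lambda\, h^-(x/\lambda)}$ with $h^-(v) = v + (1-v)\ln(1-v)$.

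The heart of the argument is then a pair of deterministic inequalities stating that both Poisson rate functions dominate the target Gaussian-type exponent. For the upper tail, writing $u = x/\lambda$ and clearing $\lambda$, it suffices to show
\[
(1+u)\ln(1+u) - u \;\geq\; \frac{u^2}{2(1+u)} \qquad (u \geq 0).
\]
I would prove this by a two-stage monotonicity argument: set $g(u)$ equal to the difference of the two sides, note $g(0) = 0$, compute $g'(u) = \ln(1+u) - \frac{u(2+u)}{2(1+u)^2}$, and then observe $g'(0) = 0$ together with $g''(u) = \frac{u(u+2)}{(1+u)^3} \geq 0$; hence $g' \geq 0$ and therefore $g \geq 0$ on $[0,\infty)$. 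For the lower tail the analogous requirement on $(0,1)$ is strictly easier: it follows from the stronger and cleaner bound $v + (1-v)\ln(1-v) \geq v^2/2 \geq \frac{v^2}{2(1+v)}$, whose first difference has derivative $-\ln(1-v) - v > 0$, so the lower tail in fact admits the sharper exponent $x^2/(2\lambda)$.

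I expect the main obstacle to be precisely the upper-tail analytic inequality $(1+u)\ln(1+u) - u \geq \frac{u^2}{2(1+u)}$: a naive second-order Taylor estimate of $\ln(1+u)$ is adequate near $u = 0$ but too weak for large $u$, so one genuinely needs the global monotonicity argument above (equivalently, the tangent-type inequality $\ln(1+u) \geq \frac{u(2+u)}{2(1+u)^2}$, which $g'' \geq 0$ encodes) to cover the entire range $u \geq 0$. Once both one-sided exponents are bounded below by $x^2/(2(\lambda + x))$, the union bound
\[
\Prc{|X - \lambda| \geq x} \leq \Prc{X \geq \lambda + x} + \Prc{X \leq \lambda - x} \leq 2\, e^{-\frac{x^2}{2(\lambda + x)}}
\]
finishes the proof.
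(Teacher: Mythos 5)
Your proof is correct. Note that the paper itself offers no proof of this statement: it is quoted in the appendix as a known tool (alongside the Chernoff bound and the bounded-differences inequality), so there is nothing to compare your argument against. What you give is the standard and complete derivation: the Chernoff/MGF step correctly produces the Bennett-type exponents $\lambda h(x/\lambda)$ and $\lambda h^-(x/\lambda)$, your calculus verification that $(1+u)\ln(1+u)-u \geq \frac{u^2}{2(1+u)}$ (via $g(0)=g'(0)=0$ and $g''(u)=\frac{u(u+2)}{(1+u)^3}\geq 0$) checks out, the lower tail indeed admits the sharper exponent $x^2/(2\lambda)$, the boundary cases $x \geq \lambda$ are handled properly (since $e^{-\lambda} \leq e^{-\lambda/4}$), and the union bound accounts for the factor $2$. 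One could even get the slightly stronger Bernstein-form exponent $\frac{x^2}{2(\lambda + x/3)}$ from the same rate function, but the inequality you prove is exactly what the stated bound requires.
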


\begin{theorem}[\cite{norrismarkov98}]
\label{thm:minimum_exp}
Let $I$ be a countable set and let $T_k$, $k \in I$, be independent exponential random variables of parameter $q_k$. Let $0<q=\sum_{j \in I}q_k \leq \infty$. Set $T=\inf_k T_k$. Then this infimum is attained at a unique random value $K$ of $k$, with probability $1$. Moreover, $T$ and $K$ are independent with $T$ exponential of parameter $q$ and $\Prc{K=k}=q_k/q$.
\end{theorem}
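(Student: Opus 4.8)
The plan is to establish the statement first for a \emph{finite} index set $I$, where it reduces to an elementary computation with the joint law of the clocks, and then to lift it to countable $I$ by a monotone approximation through finite subsets. Throughout I use only independence and the fact that $\Prc{T_k > t} = e^{-q_k t}$.

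For finite $I$, I would first obtain the law of $T$ alone: by independence,
\[
\Prc{T > t} = \prod_{k \in I}\Prc{T_k > t} = \prod_{k\in I} e^{-q_k t} = e^{-qt},
\]
so $T$ is exponential of parameter $q$. To locate the minimizer, note that each pair $(T_j,T_k)$ with $j\neq k$ has a continuous joint law, so $\Prc{T_j = T_k} = 0$; a union bound over the (countably many) pairs shows that a.s.\ the $T_k$ are pairwise distinct, whence the minimum over a finite set is a.s.\ attained at a unique index $K$. The crux is the joint law of $(K,T)$: conditioning on the value $s$ of the winning clock and using independence,
\[
\Prc{K = k,\ T > t}
= \int_t^\infty q_k e^{-q_k s}\prod_{j\neq k}\Prc{T_j > s}\,ds
= \int_t^\infty q_k e^{-q_k s}\,e^{-(q - q_k)s}\,ds
= \frac{q_k}{q}\,e^{-qt}.
\]
Setting $t = 0$ gives $\Prc{K = k} = q_k/q$ (and $\sum_k q_k/q = 1$ confirms the minimum is attained a.s.), while the factorization $\Prc{K=k,\,T>t} = (q_k/q)\,e^{-qt} = \Prc{K=k}\,\Prc{T>t}$, valid for every $k$ and every $t\ge 0$, is precisely the independence of $K$ and $T$ together with the asserted marginals.

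To pass to countably infinite $I$, I would fix an increasing exhaustion $I_1\subseteq I_2\subseteq\cdots$ with $\bigcup_n I_n = I$, set $T^{(n)} = \min_{k\in I_n}T_k$ and $q^{(n)} = \sum_{k\in I_n}q_k$, and observe $T^{(n)}\downarrow T$ and $q^{(n)}\uparrow q$. From $\{T>t\}\subseteq\bigcap_n\{T^{(n)}>t\}\subseteq\{T\ge t\}$ together with a continuity-from-above argument one gets $\Prc{T>t} = \lim_n e^{-q^{(n)}t} = e^{-qt}$, so $T$ is again exponential of parameter $q$ (with $q=\infty$ forcing $T=0$ a.s.). Applying the same finite-to-infinite passage to the displayed integral yields $\Prc{K=k,\,T>t} = (q_k/q)e^{-qt}$ for each fixed $k$; summing the resulting marginals gives $\sum_k\Prc{K=k}=1$, and independence carries over verbatim.

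The main obstacle will be the measure-theoretic bookkeeping in the countable case: justifying the convergence of the infinite product $\prod_{j\neq k}\Prc{T_j>s}$ to $e^{-(q-q_k)s}$ inside the integral, and—more delicately—arguing that the infimum $T$ is genuinely \emph{attained} rather than merely approached, which does not follow from the finite-case argument but must be read off a posteriori from the identity $\sum_k q_k/q = 1$ when $q<\infty$. I would also flag the degenerate regime $q=\infty$, where $T=0$ a.s.\ and no single clock wins, so the ``unique attaining index'' clause is really a statement about the finite-rate case. This is harmless for the paper, since in every application (e.g.\ Lemma~\ref{lem:prop_Tn}) the index set is finite and $q<\infty$, so only the elementary finite-$I$ computation is actually needed.
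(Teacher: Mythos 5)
Your proof is correct. The paper does not prove this statement at all---it is imported as a black-box tool from the cited reference \cite{norrismarkov98}---and your argument (the joint computation $\Prc{K=k,\,T>t}=\int_t^\infty q_k e^{-q_k s}e^{-(q-q_k)s}\,ds=\frac{q_k}{q}e^{-qt}$, with attainment of the infimum read off from $\sum_k q_k/q=1$) is essentially the standard proof given in that reference. Your observation that the ``unique attaining index'' clause genuinely requires $q<\infty$, even though the statement as transcribed allows $q=\infty$, is a correct and worthwhile caveat, and harmless for the paper since every application (e.g.\ Lemma~\ref{lem:prop_Tn}) uses a finite index set.
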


\end{document}